\newcommand{\HilH}{\mathcal{H}}
\newcommand{\realR}{\mathbb{R}}
\newcommand{\compC}{\mathbb{C}}
\newcommand{\Prob}{\mathbb{P}}
\newcommand{\ie}{i.e.}
\newcommand{\eg}{e.g.}
\newcommand{\cf}{cf.}
\newcommand{\iid}{i.i.d.}
\newcommand{\M}{\mathcal{M}}
\newcommand{\B}{\mathcal{B}}
\newcommand{\D}{\mathcal{D}}
\newcommand{\K}{\mathcal{K}}
\newcommand{\Q}{\vec{\Omega}}
\newcommand{\R}{\mathbb{R}}
\newcommand{\A}[1][n]{\mathbf{A}_{#1}}
\newcommand{\m}{\mathbf{m}}
\newcommand{\aaa}{a}
\newcommand{\acc}{\mathbf{a}_c}
\newcommand{\redge}{\mathbf{e}}
\newcommand{\gfn}{\mathbf{g}}
\newcommand{\Gfn}{\mathbf{G}}
\newcommand{\Hfn}{\mathbf{H}}
\newcommand{\lcons}{\mathbf{\ell}}
\newcommand{\bfGamma}{\mathbf{\Gamma}}
\newcommand{\FGUE}{F_0}
\newcommand{\FGOE}{F_1}
\newcommand{\Int}{I_n^T}
\newcommand{\Intx}{J_n^T}
\newcommand{\Intxx}{\hat{J}_n^{T}}
\newcommand{\CK}{\mathbb{K}}
\newcommand{\bfgamma}{\hat{\gamma}}
\newcommand{\crit}{\Sigma}
\newcommand{\erf}{G}
\newcommand{\factor}{\frac35}	
\newcommand{\twofactor}{\frac65}
\DeclareMathOperator{\Tr}{Tr}
\DeclareMathOperator{\Freddet}{det}
\DeclareMathOperator{\diag}{diag}
\DeclareMathOperator{\ess}{ess}
\DeclareMathOperator{\Ai}{Ai}
\DeclareMathOperator{\Airy}{Airy}
\newtheorem{lemma}{Lemma}[section]
\newtheorem{thm}{Theorem}[section]
\newtheorem{prop}{Proposition}[section]
\newtheorem{cor}{Corollary}[section]
\theoremstyle{definition}
\newtheorem{defn}{Definition}[section]
\theoremstyle{remark}
\newtheorem{rmk}{Remark}[section]
\title{On the largest eigenvalue of a Hermitian random matrix model with spiked external source I. Rank one case}
\author{Jinho Baik\thanks{Department of Mathematics, University of Michigan, Ann Arbor, MI, 48109, USA \newline
email: \texttt{baik@umich.edu}} \
and 
Dong Wang\thanks{Department of Mathematics, University of Michigan, Ann Arbor, MI, 48109, USA \newline
email: \texttt{dowang@umich.edu}}}
\date{\today}
\begin{document}

\maketitle


\begin{abstract} 

Consider a Hermitian matrix model under an external potential with spiked external source.
When the external source is of rank one, we compute the limiting distribution of the largest eigenvalue for general, regular, analytic potential for all values of the external source. There is a transitional phenomenon, which is universal for convex potentials. However, for non-convex potentials, new types of transition may occur. 
The higher rank external source is analyzed in the subsequent paper. 




\end{abstract}


\maketitle

\section{Introduction and results} \label{section:introduction}

\subsection{Introduction}
Fix an $n\times n$ Hermitian matrix $\A$ and consider the following density function on the set $\HilH_n$ of $n\times n$ Hermitian matrices: 
\begin{equation}\label{eq:pdf_of_external_source_model}
	p_n(M) = \frac{1}{Z_n} e^{-n \Tr(V(M)-\A M)}
\end{equation}
where $Z_n$ is the normalization constant. 
Here the `external potential' $V(x)$
is a real-valued function which decays fast enough as $|x|\to\infty$ so that $Z_n$ is convergent.
The matrix $\A$ is called the external source: see 
\eg\ \cite{Brezin-Hikami96}, \cite{Brezin-Hikami98}, \cite{Zinn_Justin97}, \cite{Zinn_Justin98}, \cite{Bleher-Kuijlaars04}, \cite{Bleher-Kuijlaars04a}, \cite{Aptekarev-Bleher-Kuijlaars05}. Note that the distribution of eigenvalues of $M$ is unchanged if $\A$ is replaced by $U\A U^{-1}$ for any unitary matrix $U$. Since we are only concerned on eigenvalues of $M$, we assume without loss of generality that $\A$ is a diagonal matrix.

A special case is when for all $n$, the external source has a \emph{fixed} number $\m$, called the \underline{rank of $\A$}, 
of fixed non-zero eigenvalues. 
In this case, 
the sequence of probability spaces $(\HilH_n, p_n)$ is called  a Hermitian matrix model with \emph{spiked} external source,  \underline{spiked source model} for short.
In this paper we only consider the case when $\m=1$. 
The higher rank case when $\m>1$ will be analyzed in the upcoming companion paper.  Throughout this paper, we assume that $n\ge 1$ and
\begin{equation} \label{eq:defination_of_ex_source_A}
	\A= \diag (a, \underbrace{0,\cdots,0}_{n-1} ),
\end{equation}
where $a$ is a real number, independent of $n$. 

There are two important special cases. 
When $V(x)=x^2/2$, the spiked source model is called the GUE spiked model. The density $p_n(M)$ is that of  $M=H + \A$  where $H$ is an $n\times n$ GUE (Gaussian unitary ensemble) matrix. 
When  $V(x) = ((1+c)x - c\log x) \chi_{(0,\infty)}(x)$, $c  = (m-n)/n \geq 0$, the spiked source model is the complex Wishart spiked model. In this case, setting  $\Sigma := (1-(1+c)^{-1}\A)^{-1}$, the density $p_n(M)$ is that of  $M=\Sigma^{1/2}XX^{\dagger}\Sigma^{1/2}$ where $X$ is an $n \times m$ complex rectangular matrix with \iid\ standard complex Gaussian entries.\footnote{For the complex Wishart spiked model, $V$ is not real analytic at $x=0$. Throughout this paper, we only consider $V$ which is real analytic in the whole line. However, the method can be generalized to the Wishart-type potentials in a straightforward way.} 
For these two cases, the limit of the largest eigenvalue $\xi_{\max}(n)$ of $M$ was studied in great detail in 
 \cite{Baik-Ben_Arous-Peche05} and \cite{Peche06}. 
An important feature is the following phase transition phenomenon. 
Let $\redge$ denote the right-end point of the limiting empirical distribution of the eigenvalues of the Hermitian matrix model with no external source (see~\eqref{eq:rightedge} below).\footnote{The limiting empirical distribution in the spiked source model is the same as the Hermitian model with no external source.} 
It was shown in both the GUE and the complex Wishart spiked models that as $n\to\infty$, with probability $1$, 
\begin{equation}\label{eq:GUE1}
	\xi_{\max}(n) \to \begin{cases}
	\redge, \qquad &\text{if $\aaa \le \frac12 V'(\redge)$,} \\
	x_0(\aaa), &\text{if $\aaa> \frac12 V'(\redge)$,} 
	\end{cases}
\end{equation}
for some continuous, increasing function $x_0(\aaa)$ in $\aaa\in (\frac12 V'(\redge), \infty)$ satisfying $\lim_{a\downarrow \frac12 V'(\redge)} x_0(a)=\redge$. 
Moreover, there exists $\beta>0$ (see \eqref{eq:definition_of_beta} below) such that for each $T\in\R$, 
\begin{equation}\label{eq:GUE2}
	\Prob_n\big( (\xi_{\max}(n)- \redge)\beta n^{2/3} \le T\big)  \to \begin{cases}
	\FGUE(T), \qquad &\text{if $\aaa < \frac12 V'(\redge)$,} \\
	\FGOE(T),&\text{if $\aaa= \frac12 V'(\redge)$,} 
	\end{cases}
\end{equation}
and there exists $\gamma(a)$ such that for each $T\in\R$, 
\begin{equation}\label{eq:GUE3}
	\Prob_n\big( (\xi_{\max}(n)- x_0(\aaa)) \gamma(a) n^{1/2} \le T\big)  	\to 
	\erf(T), \quad \text{if $\aaa> \frac12 V'(\redge)$.} 
\end{equation}
Here the function $\erf(T)=\frac1{\sqrt{2\pi}} \int_{-\infty}^T e^{-\frac12 \xi^2} d\xi$ is the cumulative distribution function of the standard normal distribution, and $\FGUE$ and $\sqrt{\FGOE}$ are the  GUE and GOE Tracy-Widom distribution functions,  respectively. They are defined in~\eqref{eq:defn_of_F_TW} and~\eqref{eq:defn_of_F_1(T)} below, respectively. A limit theorem was also proven for the double scaling case when $a= \frac12 V'(\redge)+\frac{\alpha}{n^{1/3}}$. 

The purpose of this paper is to extend the results~\eqref{eq:GUE1}--\eqref{eq:GUE3} 
to the spiked source model with general potential $V$.
It turns out that if $V(x)$ is convex in the interval $x\in (\redge, \infty)$, then all of~\eqref{eq:GUE1}--\eqref{eq:GUE3} still hold. Especially, the  `critical value' of $\aaa$ is again given by $\frac12 V'(\redge)$. However, if $V$ is not convex in $(\redge, \infty)$, new features may occur. Two key new features are the followings.

\begin{itemize}
\item The critical value of $\aaa$ may be smaller than $\frac12 V'(\redge)$. See Lemma \ref{lem:Gprop} and Theorem~\ref{thm:thm_rank_1}. For such a case, when $\aaa$ equals this critical value, $\xi_{\max}(n)$  does not converge with probability 1. Instead it converges to two or more values, each with non-zero probability.  In this case, the fluctuation of $\xi_{\max}(n)$ is generically $\FGUE$ at the smallest limiting value and $\erf$ at the larger limiting values. See Theorem  \ref{thm:critical_traditional_split}.

\item There may be a discrete set of `secondary critical values' of $a$, which are greater than the critical value.  
If $\aaa$ is at a secondary critical value, then $\xi_{\max}(n)$ converge to 
two or more values, each with non-zero probability. In this case, the fluctuation of $\xi_{\max}(n)$ is generically $\erf$ at each of the limiting values.  See Theorem \ref{thm:supercritical_split}.

\end{itemize}

The exact assumptions on the potential $V$ is given in Subsection~\ref{sec:assumV}. The universality result for convex potentials is in Subsection~\ref{subsection:convex}. 
In Subsection~\ref{subsection:critical_values} we define the critical and the secondary critical values for non-convex potentials. The limit laws for the non-convex potentials are given in Subsection~\ref{subsection:main_results_rank_1}. 

\bigskip

While we were preparing for this paper and the companion paper for the higher rank case, we learned that M. Bertola, R. Buckingham, S. Y. Lee and V. Pierce were also working on the spiked source models (see \cite{Bertola-Buckingham-Lee-Pierce11} for the first part of their work). While we focus, especially in the second paper, on the limit laws when $\aaa_1, \cdots, \aaa_\m$ are distinct, Bertola, Buckingham, Lee and Pierce focus on the case when $\aaa_1=\cdots = \aaa_\m$ and $\m\to \infty$ slower than $n$. Also we use the asymptotics of usual orthogonal polynomials but Bertola, Buckingham, Lee and Pierce use asymptotics of multiple orthogonal polynomials via Riemann-Hilbert problem of size larger than $2$. 


\bigskip
Before closing this subsection, we mention that the spiked real symmetric matrix model is much more difficult. Even for the GOE and the real Wishart case, the limiting distribution at the critical value is not yet known. For the quaternionic case, the limiting distribution is obtained when the rank $\m=1$  (see \cite{Wang08} for the Wishart model; Gaussian model is also similar).

We also mention that there are several results for the spiked Wigner ensembles and spiked sample covariance matrices.
See, for example, \cite{Baik-Silverstein06}, \cite{Paul08},  \cite{Feral-Peche07}, \cite{Capitaine-Donati_Martin-Feral09},  \cite{Nadakuditi-Silverstein10}, \cite{Benaych_Georges-Nadakuditi11} and \cite{Benaych_Georges-Guionnet-Maida11}. 

\subsection{Assumptions on external potential $V$.}\label{sec:assumV}

Throughout this paper, we assume the following three conditions on $V$: 
\begin{gather}
 V(x) \textnormal{ is real analytic in } \R, \label{eq:condition_of_V_1} \\
 \frac{V(x)}{\sqrt{x^2+1}} \to +\infty \textnormal{ as } |x|\to\infty, \label{eq:condition_of_V_2} \\
V \textnormal{ is `regular'.} \label{eq:condition_of_V_3}
\end{gather}
The second condition is to ensure the convergence of the density function: compare this with the condition on $V$ in \cite{Deift-Kriecherbauer-McLaughlin-Venakides-Zhou99}.  
The third condition on being `regular' is a technical condition as defined in\ \cite{Deift-Kriecherbauer-McLaughlin-Venakides-Zhou99}. We need a few definitions to state it. 



First, recall the equilibrium measure and the so-called $\gfn$-function. General references are \cite{Saff-Totik97} and \cite{Deift-Kriecherbauer-McLaughlin98}. 
For a given potential $V$, the empirical distribution of the eigenvalues of the matrix model with no external source converges to the associated equilibrium measure $\mu$. The equilibrium measure is characterized by a certain variational problem. If $V$ is real analytic, $\mu$ is supported on a finite union of intervals, 
\begin{equation}\label{eq:Jend}
	J = \bigcup^N_{j=0} (b_j, a_{j+1}), \quad \textnormal{with} \quad b_0 < a_1 < b_1 < \dots < a_{N+1},
\end{equation}
for some $N\ge 0$. 
We denote the right-most edge 
of the support by 
\begin{equation}\label{eq:rightedge}
	\redge :=a_{N+1} . 
\end{equation}
On $J$, $d\mu$ has the form $d\mu = \Psi(x)dx$,  
\begin{equation}\label{eq:eqmeasure}
	\Psi(x) = \frac{1}{2\pi i} R^{1/2}_+(x)h(x), \quad \textnormal{where} \quad R(z) = \prod^N_{j=0} (z-b_j)(z-a_{j+1}).
\end{equation}
The function $R(z)^{1/2}$ is defined to be analytic in $\mathbb{C}\setminus J$ and satisfy $R(z)^{1/2} \sim z^N$ as $z\to\infty$. The notation $R^{1/2}_+(x)$ for $x\in J$ denotes the limit of $R^{1/2}(z)$, $z\in \mathbb{C}_+$, as $z\to x$ from above. The function $h(x)$ is real analytic in $\R$ and is given by \cite[Formula (3.18)]{Deift-Kriecherbauer-McLaughlin-Venakides-Zhou99}.

The equilibrium measure, $d\mu(x)=\Psi(x)dx$, is characterized by the following conditions: there is a constant (called the Robin constant) $\lcons$ such that 
\begin{align}
	2\int_J \log \lvert x-s \rvert \Psi(s)ds -V(x) = \lcons & \quad \textnormal{for $x \in \bar{J}$}, \label{eq:first_eq_characterize_dmu} \\
2\int_J \log \lvert x-s \rvert \Psi(s)ds -V(x) \leq \lcons & \quad \textnormal{for $x \in \realR \setminus \bar{J}$.} \label{eq:second_eq_characterize_dmu}
\end{align}
The so-called $\gfn$-function is defined by 
\begin{equation} \label{eq:definition_of_g}
	\gfn(z) := \int_J \log(z-s)\Psi(s)ds, \quad \textnormal{for $z \in \compC \setminus (-\infty, \redge)$}.
\end{equation}

The potential $V$ is said to be regular (see \cite{Deift-Kriecherbauer-McLaughlin-Venakides-Zhou99}) if
\begin{itemize}
\item $h(x)\neq 0$ for $x\in \bar{J}$,  
\item the inequality in~\eqref{eq:second_eq_characterize_dmu} is strict.
\end{itemize}
The first condition implies that the function $\Psi(x)>0$ for all $x\in J$, and also that $\Psi(x)$ vanishes like a square-root at each end of the interval of the support. This in turn implies, in particular, that for the model with $\A=0$, the largest eigenvalue has the limiting distribution given by $\FGUE$ (see \eg\ \cite{Deift-Gioev07a} for the non-varying weight; varying weight case is similar using the analysis of \cite{Deift-Kriecherbauer-McLaughlin-Venakides-Zhou99}.) Note that the second condition restricted to the domain $x>\redge$ implies that 
\begin{equation} \label{eq:consequence_or_regularity}
	2\gfn(x)-V(x)<\lcons, \qquad x>\redge.
\end{equation}
We will use this fact later.

\subsection{Statement of results: convex potentials} \label{subsection:convex}



Let $\FGUE(T)$ be the GUE Tracy-Widom distribution defined by 
\begin{equation} \label{eq:defn_of_F_TW}
	\FGUE(T) := \det(1 - \chi_{[T,\infty)}K_{\Airy}\chi_{[T,\infty)}),
\end{equation}
where  $\chi_{[T,\infty)}$ denotes the projection operator on $[T,\infty)$, and $K_{\Airy}$ is the Airy operator  defined by the kernel 
\begin{equation}
	K_{\Airy}(x,y) = \frac{\Ai(x)\Ai'(y) - \Ai'(x)\Ai(y)}{x-y}.
\end{equation}
Here $\Ai$ is the Airy function. 

For $\alpha \in \realR$, define the function 
\begin{equation}\label{eq:Calphadef}
	C_{\alpha}(\xi) := \frac1{2\pi} \int e^{i\frac13 z^3+i\xi z} \frac{dz}{\alpha+iz},
\end{equation}
where the contour is from $\infty e^{5\pi i/6}$ to $\infty e^{\pi i/6}$ and the pole $z=-i\alpha$ lies above the contour in the complex plane: see Figure \ref{figure:contour_above_-ia}. 

\begin{figure}
\centering
\includegraphics{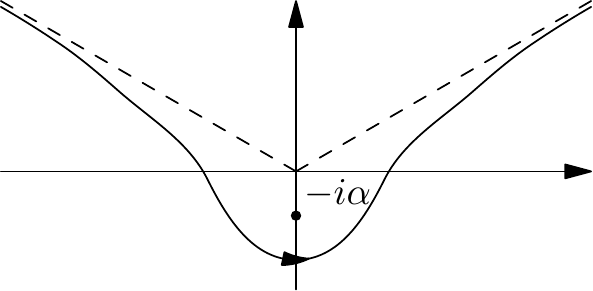}
\caption{The contour from $\infty e^{5\pi i/6}$ to $\infty e^{\pi i/6}$.}
\label{figure:contour_above_-ia}
\end{figure}
Define
\begin{equation} \label{eq:defn:of_F_1(T;alpha)}
	\FGOE(T;\alpha) := \FGUE(T)\cdot 
	\bigg(1 - \langle (1-\chi_{[T,\infty)} K_{\Airy} \chi_{[T,\infty)})^{-1} C_{\alpha},  \Ai  \rangle_{[T,\infty)}\bigg),
\end{equation}
where $\langle f,g \rangle_E$ denotes the real inner product over $E$, $\int_E f(x)g(x)dx$. (See \cite[Definition 1.3]{Baik-Ben_Arous-Peche05}.) When $\alpha = 0$, 
\begin{equation} \label{eq:defn_of_F_1(T)}
	\FGOE(T) := \FGOE(T;0)
\end{equation}
equals the square of the GOE Tracy-Widom distribution (see \cite[Formula (24)]{Baik-Ben_Arous-Peche05}). 

\medskip
Fix a potential $V$ satisfying the assumptions~\eqref{eq:condition_of_V_1}--\eqref{eq:condition_of_V_3}. 
In the companion paper on the higher rank case, we need to consider the spiked source model of rank $1$ whose density function is same as in~\eqref{eq:pdf_of_external_source_model} but with the change that the matrix $M$ is now of size $n-j+1$ and $\A$ is replaced by $\mathbf{A}_{n-j+1}$ for fixed $j$:
\begin{equation} \label{eq:generalized_pdf}
	\frac{1}{Z_{n-j+1,n}} e^{-n\Tr(V(M)-\mathbf{A}_{n-j+1}M)}, \qquad M\in\mathcal{H}_{n-j+1}.
\end{equation}
Note that the factor $n$ in front of the potential is unchanged. 
For a subset $E\subset \R$, let $\Prob_{n-j+1,n}(a; E)$ denote the gap probability that there are no eigenvalues of $M$ in the set $E$, where $a$ represents the unique non-zero eigenvalue of $\mathbf{A}_{n-j+1}$. Hence $\Prob_{n-j+1,n}(\aaa; [t, \infty))$ is the probability that the largest eigenvalue of $M$ is less than $t$. 

Let $\redge$ be as in~\eqref{eq:rightedge}.
Set (recall~\eqref{eq:eqmeasure})
\begin{equation} \label{eq:definition_of_beta}
	\beta := \bigg( \lim_{x\uparrow \redge} \frac{\pi \Psi(x)}{\sqrt{\redge-x}} \bigg)^{2/3} = \left( \frac{h(\redge)}{2}\right)^{2/3}\left( \frac{R(z)}{z-\redge} \right)^{1/3}\bigg|_{z=\redge}
\end{equation}
so that $\Psi(x)\sim \frac{\beta^{3/2}}{\pi}\sqrt{\redge-x}$ for $x\uparrow \redge$.
For $T\in\R$, define the intervals
\begin{equation}\label{eq:interval}
	\Int:=\left[ \redge + \frac{T}{\beta n^{2/3}}, \infty \right)
\end{equation}
and
\begin{equation}\label{eq:interval2}
	\Intx(x_*):=\left[ x_* + \frac{T}{\sqrt{(V''(x_*)-\gfn''(x_*))n}}, \infty \right)
\end{equation}
for $x_*>\redge$, 
if $V''(x_*)-\gfn''(x_*)> 0$. Note that if $V(x)$ is convex in $x\ge \redge$, then $V''(x)-\gfn''(x)>0$ for all $x>\redge$. For later reference we note that $V''(x_*)-\gfn''(x_*)=-\Gfn''(x_*)$ in terms of the notation~\eqref{eq:definition_of_GH} that is defined below. 

The following is the first main result of this paper. Let $V(x)$ be a potential that is convex in $x\in ( \redge, \infty)$. 
For $a>\frac12 V'(\redge)$, let  $x_0(a)$ be the unique maximizer of the function $\gfn(x)-V(x)+\aaa x$ in $x\in (\redge, \infty)$. Such a maximizer exists since $\gfn(x)-V(x)$ in $x\in (\redge, \infty)$ 
is strictly concave and $\gfn'(\redge)-V'(\redge)+\aaa=-\frac12 V'(\redge)+a>0$ (see~\eqref{eq:properties_of_gfn})  and $\gfn'(x)-V'(x)+\aaa<0$ for all large enough $x$. This $x_0(\aaa)$ is same as in Lemma~\ref{lem:x0}.

\begin{thm}[convex potential]
\label{thm:convex}
Let  $V(x)$ be a potential that is convex in $x\in ( \redge, \infty)$. 
Set 
\begin{equation}
	\acc:= \frac12 V'(\redge).
\end{equation}
The following holds for each $T\in\R$ as $n\to\infty$ and $j=O(1)$.
\begin{enumerate}[label=(\alph*)]
\item \label{enu:thm:convex:a} For $\aaa<\acc$,  
\begin{equation} 
	\lim_{n \rightarrow \infty} \Prob_{n-j+1,n} \left( \aaa; \Int \right) = \FGUE(T).
\end{equation}
\item \label{enu:thm:convex:b}
For
\begin{equation}
	\aaa=\acc+\frac{\beta\alpha}{ n^{1/3}},
\end{equation}
where $\alpha$ is in a compact subset of $\R$, we have 
\begin{equation}
	\lim_{n \rightarrow \infty} \Prob_{n-j+1,n} \left( \aaa; \Int\right) = \FGOE(T; -\alpha).
\end{equation}
\item \label{enu:thm:convex:c} For $\aaa>\acc$, 
\begin{equation} 
	\lim_{n \rightarrow \infty} \Prob_{n-j+1,n} \left( \aaa; \Intx(x_0(a)) \right) = \erf(T).
\end{equation}
\end{enumerate}
\end{thm}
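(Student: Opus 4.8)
The plan is to start from an exact Fredholm determinant representation of the gap probability $\Prob_{n-j+1,n}(a; E)$ and then run the Deift--Kriecherbauer--McLaughlin--Venakides--Zhou steepest descent analysis \cite{Deift-Kriecherbauer-McLaughlin-Venakides-Zhou99} for the monic orthogonal polynomials $p_k$ associated with the weight $e^{-nV}$. The first ingredient is algebraic: since $\A$ has rank one, specializing the Harish-Chandra--Itzykson--Zuber integral to $\A=\diag(a,0,\dots,0)$ shows that the eigenvalue process of $M$ in \eqref{eq:generalized_pdf} is a rank-one modification of a determinantal process, so that
\[
\Prob_{n-j+1,n}(a; E) = \Freddet\big(1 - K_{n-j}\,\chi_E\big)\cdot\Big(1 - \big\langle (1 - K_{n-j}\,\chi_E)^{-1} u_n,\, v_n\big\rangle_E\Big),
\]
where $K_{n-j}$ is the Christoffel--Darboux kernel of the $p_k$'s of degree $n-j$ and $u_n,v_n$ are explicit functions built from $p_{n-j}$, its Cauchy transform, and the ``$a$-deformed'' integrals $\int p_{n-j}(s)\,e^{n(as-V(s))}\,ds$ and the conjugate Cauchy-type integral --- this is precisely the template of \eqref{eq:defn:of_F_1(T;alpha)}. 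I would quote or re-derive this identity (it is available through the multiple-orthogonal-polynomial formalism and, more directly, in the rank-one case as in \cite{Baik-Ben_Arous-Peche05}). The convexity hypothesis is used only in the way isolated just before the theorem: on $(\redge,\infty)$ the function $\gfn(x)-V(x)+ax$ is strictly concave, its slope at $\redge$ is positive iff $a>\acc$, so the maximizer $x_0(a)$ is unique, tends to $\redge$ as $a\downarrow\acc$, and has nondegenerate Hessian $-\Gfn''(x_0(a))=V''(x_0(a))-\gfn''(x_0(a))>0$, the normalization in \eqref{eq:interval2}.

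Next I would input the RH asymptotics. In the window of size $n^{-2/3}$ about $\redge$, $p_{n-j}$ and its Cauchy transform are governed by the Airy parametrix, which gives the edge universality $\Freddet(1-K_{n-j}\chi_{\Int})\to\FGUE(T)$ already recorded in Subsection~\ref{sec:assumV}. For $x>\redge$ the relevant objects have exponential order controlled, respectively, by $\gfn(x)$ and by $\gfn(x)-V(x)+ax$; by \eqref{eq:consequence_or_regularity} the first stays subcritical (so the unperturbed kernel contributes nothing outside the edge), while the second is governed by $\max_{s\ge\redge}\big(\gfn(s)-V(s)+as\big)$, attained at $\redge$ when $a\le\acc$ and at $x_0(a)$ when $a>\acc$. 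Finally, replacing the matrix size $n$ by $n-j+1$ with $j=O(1)$ only perturbs lower-order terms in all parametrices, so it leaves every limit unchanged; I would carry it along but it is not a genuine difficulty.

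The three parts then reduce to steepest descent on the deformation functional. In part~\ref{enu:thm:convex:a}, for $a<\acc$ the $a$-deformed quantities are exponentially smaller than the edge-scale normalization, so $\langle (1-K_{n-j}\chi_{\Int})^{-1}u_n, v_n\rangle_{\Int}\to 0$ and only $\FGUE(T)$ survives; this is also what one gets by sending the argument of $\FGOE(T;\,\cdot\,)$ to $+\infty$ in part~\ref{enu:thm:convex:b}, where $C_{+\infty}=0$. In part~\ref{enu:thm:convex:b}, for $a=\acc+\beta\alpha n^{-1/3}$ the saddle of $\gfn-V+a(\cdot)$ merges with $\redge$ at exactly the critical rate; after extracting the Airy-parametrix prefactor and rescaling $x=\redge+\xi/(\beta n^{2/3})$, the $a$-deformed quantity converges to the integral \eqref{eq:Calphadef} defining $C_{-\alpha}(\xi)$, with the simple pole lying relative to the contour as in Figure~\ref{figure:contour_above_-ia}, while $K_{n-j}\chi_{\Int}$ converges in trace norm to $\chi_{[T,\infty)}K_{\Airy}\chi_{[T,\infty)}$; inserting these into the factorization yields $\FGOE(T;-\alpha)$ as in \eqref{eq:defn:of_F_1(T;alpha)}. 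In effect this step recovers the local edge model of \cite{Baik-Ben_Arous-Peche05}, whose inner-product computation I would reuse. In part~\ref{enu:thm:convex:c}, for fixed $a>\acc$ the dominant saddle is the interior point $x_0(a)>\redge$; here $\Freddet(1-K_{n-j}\chi_{\Intx(x_0(a))})\to1$ because $\Intx(x_0(a))$ sits at a fixed positive distance from the spectrum, and a Laplace expansion of the $a$-deformed integral about $x_0(a)$, with Hessian $-\Gfn''(x_0(a))$, converts the deformation term $1-\langle\cdots\rangle$ into a Gaussian tail, giving $\erf(T)$ at the scale $\sqrt{(V''(x_0(a))-\gfn''(x_0(a)))n}$ of \eqref{eq:interval2}.

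I expect the main obstacle to be part~\ref{enu:thm:convex:b}: proving trace-norm (not merely pointwise) convergence of the rescaled kernels uniformly in $\alpha$ over compact sets, controlling the tails of the Fredholm expansion near the edge, and handling the residue bookkeeping so that the scaling limit of the deformation term is exactly the inner product against $C_{-\alpha}$ with the correct sign and the pole on the correct side of the contour in \eqref{eq:Calphadef}. Deriving and justifying the finite-$n$ factorization, and verifying that the subcritical correction in part~\ref{enu:thm:convex:a} genuinely vanishes rather than leaving a nonzero constant, are secondary but still require care.
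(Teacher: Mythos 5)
Your proposal follows essentially the same route as the paper: the paper deduces Theorem~\ref{thm:convex} from Theorems~\ref{thm:thm_rank_1} and~\ref{thm:critical_traditional_split}\ref{enu:thm:critical_traditional_split:a}, whose proofs use exactly your factorization $\det(1-\chi_E K_{n-j,n}\chi_E)\cdot(1-\langle(1-\chi_E K_{n-j,n}\chi_E)^{-1}\tilde{\psi}_{n-j},\psi_{n-j}\rangle_E)$, the DKMVZ parametrices, and the competition between the edge saddle of $\Hfn$ at $c(a)$ and the interior maximizer $x_0(a)$ of $\Gfn$, with the Airy-scale merging at $a=\acc+\beta\alpha n^{-1/3}$ producing $C_{-\alpha}$ and hence $\FGOE(T;-\alpha)$. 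The only technical wrinkle you do not spell out (but correctly flag as the obstacle) is that $\|\tilde{\psi}_{n-j}\|_{L^2(E)}$ is not uniformly bounded, which the paper handles by rewriting $(1-\chi_EK\chi_E)^{-1}=1+(1-\chi_EK\chi_E)^{-1}\chi_EK\chi_E$ and estimating $K_{n-j,n}\chi_E\tilde{\psi}_{n-j}$ instead, together with the contour-deformation of the Cauchy transform to kill the oscillation in $\bfGamma_{n-j}(a)$.
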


Hence the transition phenomenon is universal for convex potentials. The next two subsections are about non-convex potentials 

\subsection{Critical value and secondary critical values} \label{subsection:critical_values}


In this Subsection, we define critical values and the secondary critical values of $\aaa$. 

By definition~\eqref{eq:definition_of_g}, $\gfn(x)$  is real analytic in $(\redge, \infty)$, is continuously differentiable in $[\redge, \infty)$ and satisfies 
\begin{equation} \label{eq:properties_of_gfn}
\begin{gathered}
\gfn'(x)>0, \quad \gfn''(x)<0, \qquad \text{for } x\in (\redge, \infty),\\ 
\gfn(\redge) = \frac{V(\redge)+\ell}{2}, \quad  \gfn'(\redge) = \frac{V'(\redge)}{2}, \quad \lim_{x \rightarrow \infty} \gfn'(x) = 0.
\end{gathered}
\end{equation}

\begin{defn}\label{def:ca}
For $a\in (0, \frac12 V'(\redge))$, define $c=c(a)$ as the unique point in $(\redge, \infty)$ satisfying 
\begin{equation} \label{eq:definition_of_c}
\gfn'(c(a)) = a.  
\end{equation}
For $a\ge \frac12V'(\redge)$, define $c(a):= \redge$.  
\end{defn}

Note that $c(a)$ decreases strictly in $a\in (0, V'(\redge)/2)$ and continuous in $a\in (0,\infty)$.

Define two auxiliary functions 
\begin{equation} \label{eq:definition_of_GH}
\begin{split}
	\Gfn(z) &=\Gfn(z;a):= \gfn(z)- V(z)+az, \\ 
	\Hfn(z) &=\Hfn(z;a):= -\gfn(z) + az +\ell
\end{split}
\end{equation}
for $z \in \compC \setminus (-\infty, \redge)$. 
Observe the following Lemma. The proof follows straightforwardly from the definition of $\gfn$, the variational condition~\eqref{eq:first_eq_characterize_dmu}, the assumption~\eqref{eq:condition_of_V_2} on $V$ and~\eqref{eq:consequence_or_regularity}. We omit the details.

\begin{lemma} \label{fact:first}
Let $a > 0$. We have the following properties. 
\begin{enumerate}[label=(\alph*)]
\item \label{enu:fact:first:a} $\Hfn(x)$ is a convex function in $x\in [\redge, \infty)$ with the unique minimum attained at $x=c(a)$.
\item \label{enu:fact:first:b} $\Hfn(x)> \Gfn(x)$ for all $x\in (\redge, \infty)$.
\item \label{enu:fact:first:c} $\Hfn(\redge)= \Gfn(\redge) = -\frac12 V(\redge)+a\redge+ \frac12 \ell$.
\item \label{enu:fact:first:d} $\displaystyle\lim_{x\downarrow \redge} \Hfn'(x)=\lim_{x\downarrow \redge} \Gfn'(x)= a-\frac12 V'(\redge)$.
\item \label{enu:fact:first:e} As $x\to +\infty$, $\Hfn(x)\to +\infty$, $\Hfn(x)/x \to a$, $\Gfn(x)\to -\infty$ and $\Gfn(x)/x \to -\infty$.
\end{enumerate}
\end{lemma}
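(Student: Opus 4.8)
The plan is to verify the five items one at a time; each is a short consequence of the properties of $\gfn$ collected in~\eqref{eq:properties_of_gfn}, the variational identity~\eqref{eq:first_eq_characterize_dmu}, the growth hypothesis~\eqref{eq:condition_of_V_2}, and the strict inequality~\eqref{eq:consequence_or_regularity}. Items~\ref{enu:fact:first:c} and~\ref{enu:fact:first:d} are direct substitutions: putting $x=\redge$ into~\eqref{eq:definition_of_GH} and using $\gfn(\redge)=\frac12(V(\redge)+\ell)$ gives $\Hfn(\redge)=\Gfn(\redge)=-\frac12 V(\redge)+a\redge+\frac12\ell$; differentiating~\eqref{eq:definition_of_GH} gives $\Hfn'(x)=a-\gfn'(x)$ and $\Gfn'(x)=\gfn'(x)-V'(x)+a$, and since $\gfn$ is $C^1$ on $[\redge,\infty)$ with $\gfn'(\redge)=\frac12 V'(\redge)$, letting $x\downarrow\redge$ yields both one-sided derivatives equal to $a-\frac12 V'(\redge)$.

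For item~\ref{enu:fact:first:a}, I would note from the last display that $\Hfn''(x)=-\gfn''(x)>0$ on $(\redge,\infty)$ by~\eqref{eq:properties_of_gfn}, so $\Hfn$ is strictly convex there and continuous on $[\redge,\infty)$; since $\gfn'$ decreases strictly from $\gfn'(\redge)=\frac12 V'(\redge)$ to $0$, the equation $\Hfn'(x)=0$, that is $\gfn'(x)=a$, has a unique solution in $(\redge,\infty)$ when $a<\frac12 V'(\redge)$, namely the point $c(a)$ of Definition~\ref{def:ca}; whereas for $a\ge\frac12 V'(\redge)$ we have $\Hfn'\ge0$ throughout $(\redge,\infty)$, so the minimum of the convex function $\Hfn$ is attained at the left endpoint $\redge$, which again equals $c(a)$ by Definition~\ref{def:ca}. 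For item~\ref{enu:fact:first:b}, subtracting the two lines of~\eqref{eq:definition_of_GH} gives $\Hfn(x)-\Gfn(x)=\ell-\big(2\gfn(x)-V(x)\big)$, and this is strictly positive for $x>\redge$ precisely by~\eqref{eq:consequence_or_regularity}.

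For item~\ref{enu:fact:first:e}, I would use that $\gfn(z)=\int_J\log(z-s)\,\Psi(s)\,ds$ with $\int_J\Psi(s)\,ds=1$, so $\gfn(x)=\log x+O(1/x)$ as $x\to+\infty$; then $\Hfn(x)=ax-\log x+\ell+O(1/x)\to+\infty$ and $\Hfn(x)/x\to a$ since $a>0$, while $\Gfn(x)=ax-V(x)+\log x+O(1/x)\to-\infty$ and $\Gfn(x)/x\to-\infty$ because~\eqref{eq:condition_of_V_2} forces $V(x)/x\to+\infty$ (using $\sqrt{x^2+1}\sim x$ as $x\to\infty$). All of these steps are routine, which is why the authors omit them; the only places that demand any care are the use of the $C^1$-regularity of $\gfn$ up to the edge $\redge$ in item~\ref{enu:fact:first:d} (this is part of the standing setup recalled just above~\eqref{eq:properties_of_gfn}) and the translation of~\eqref{eq:condition_of_V_2} into superlinear growth of $V$ in item~\ref{enu:fact:first:e}.
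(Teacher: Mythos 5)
Your proof is correct and follows exactly the route the paper indicates: the authors omit the details but state that the lemma follows from the definition of $\gfn$, the variational condition, the growth assumption on $V$, and~\eqref{eq:consequence_or_regularity}, and your argument uses precisely these ingredients in the expected way. No issues.
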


\begin{figure}[htb]
\begin{minipage}[t]{0.45\textwidth}
\begin{center}
\includegraphics[width=\textwidth]{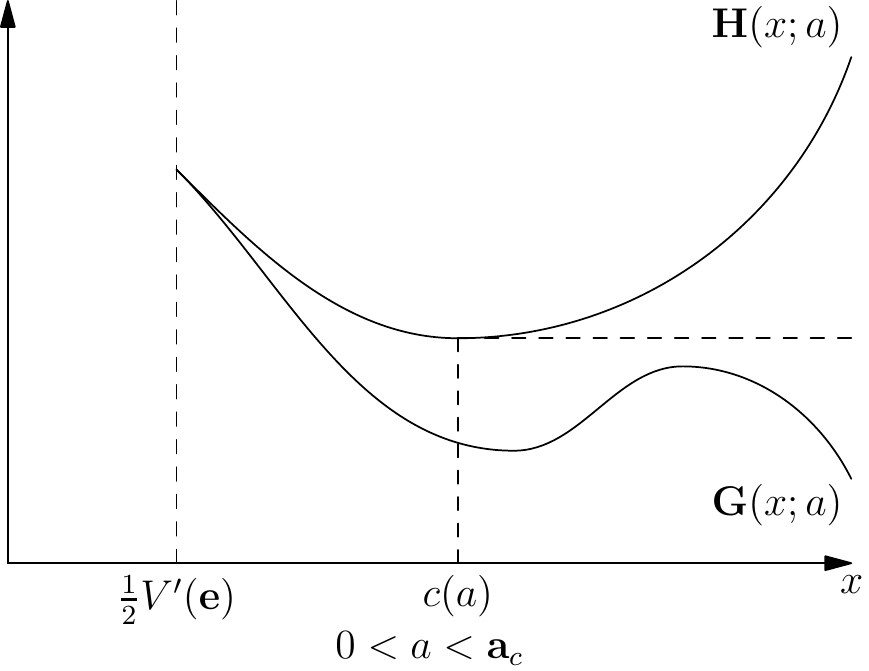}
\end{center}
\end{minipage}
\begin{minipage}[t]{0.1\textwidth}
\
\end{minipage}
\begin{minipage}[t]{0.45\textwidth}
\begin{center}
\includegraphics[width=\textwidth]{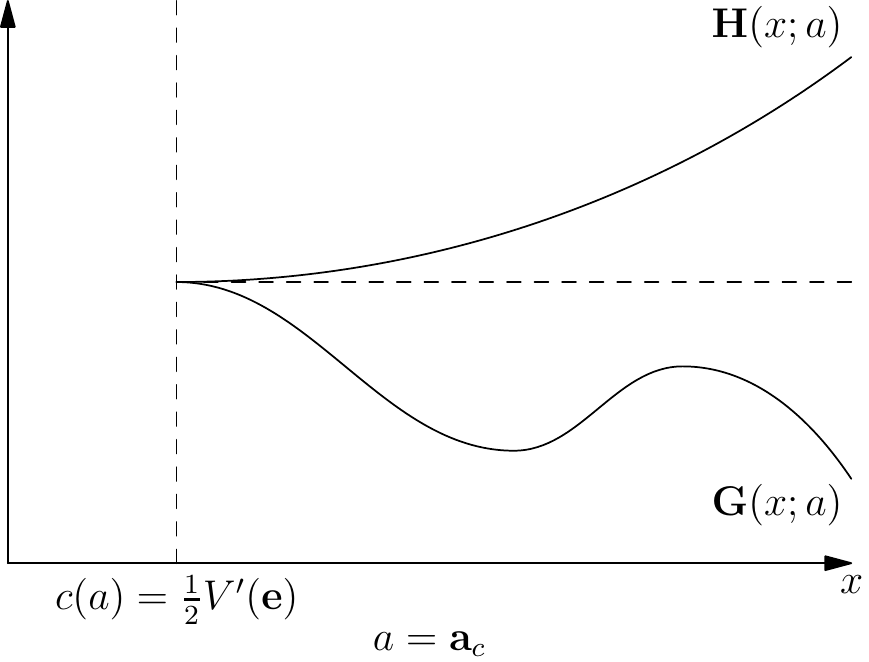}
\end{center}
\end{minipage}

\bigskip

\begin{minipage}[t]{0.45\textwidth}
\begin{center}
\includegraphics[width=\textwidth]{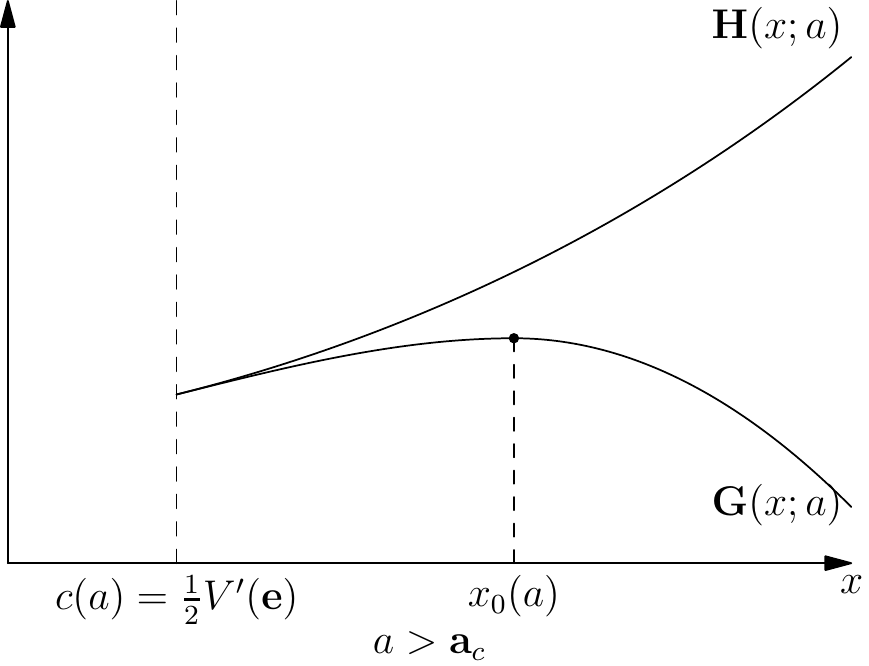}
\end{center}
\end{minipage}
\begin{minipage}[t]{0.1\textwidth}
\
\end{minipage}
\begin{minipage}[t]{0.45\textwidth}
\begin{center}
\
\end{center}
\end{minipage}

\caption{Schematic graphs of the functions $\Hfn(x;a)$ and $\Gfn(x;a)$ for a potential $V$ such that  $\acc = \frac{1}{2}V'(\redge)$, assuming that $a \not\in \mathcal{J}_V$. 
} \label{figure:newGH_1}
\end{figure}

\begin{figure}[htb]
\begin{minipage}[t]{0.45\textwidth}
\begin{center}
\includegraphics[width=\textwidth]{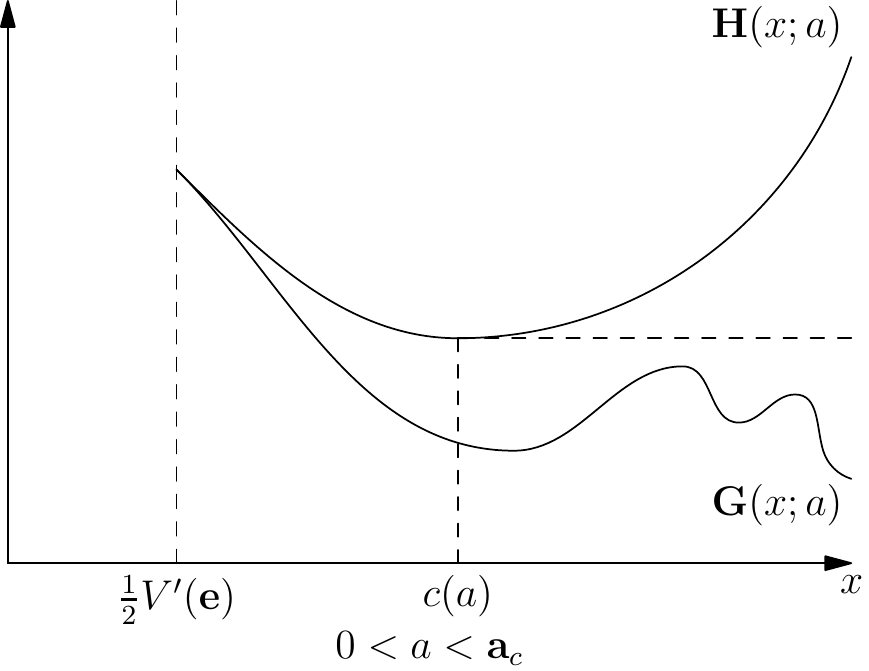}
\end{center}
\end{minipage}
\begin{minipage}[t]{0.1\textwidth}
\
\end{minipage}
\begin{minipage}[t]{0.45\textwidth}
\begin{center}
\includegraphics[width=\textwidth]{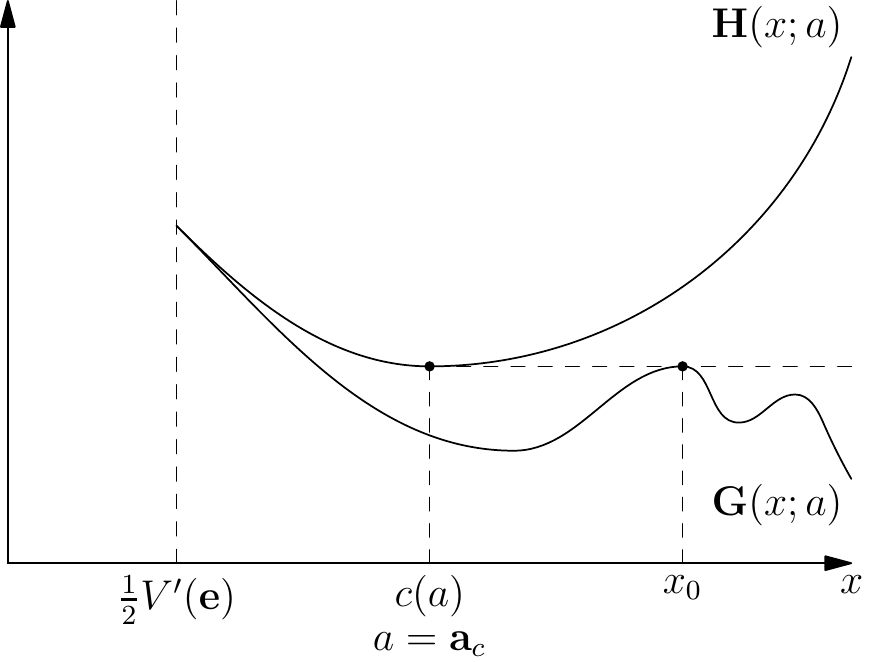}
\end{center}
\end{minipage}

\bigskip

\begin{minipage}[t]{0.45\textwidth}
\begin{center}
\includegraphics[width=\textwidth]{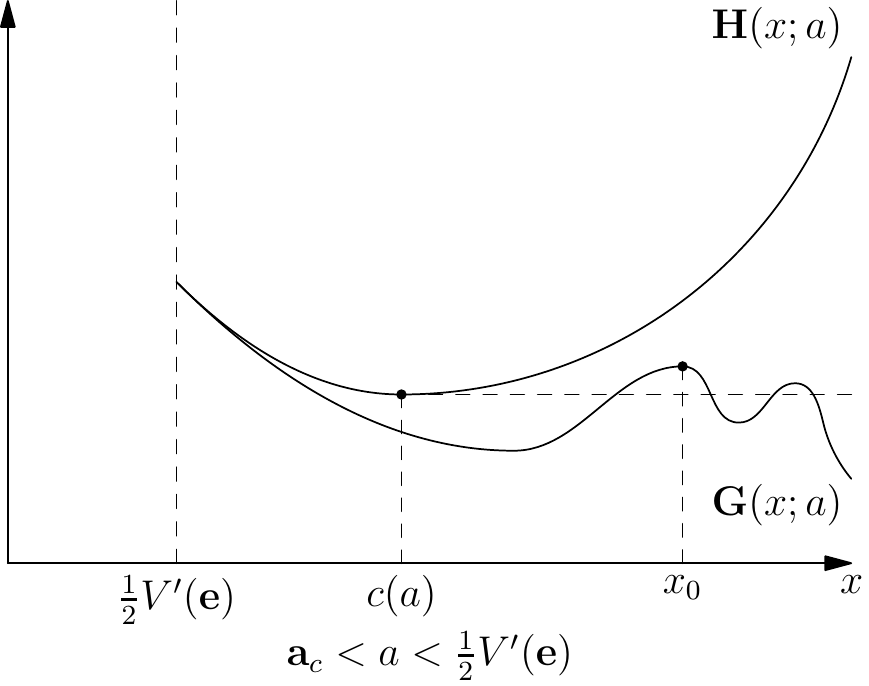}
\end{center}
\end{minipage}
\begin{minipage}[t]{0.1\textwidth}
\
\end{minipage}
\begin{minipage}[t]{0.45\textwidth}
\begin{center}
\includegraphics[width=\textwidth]{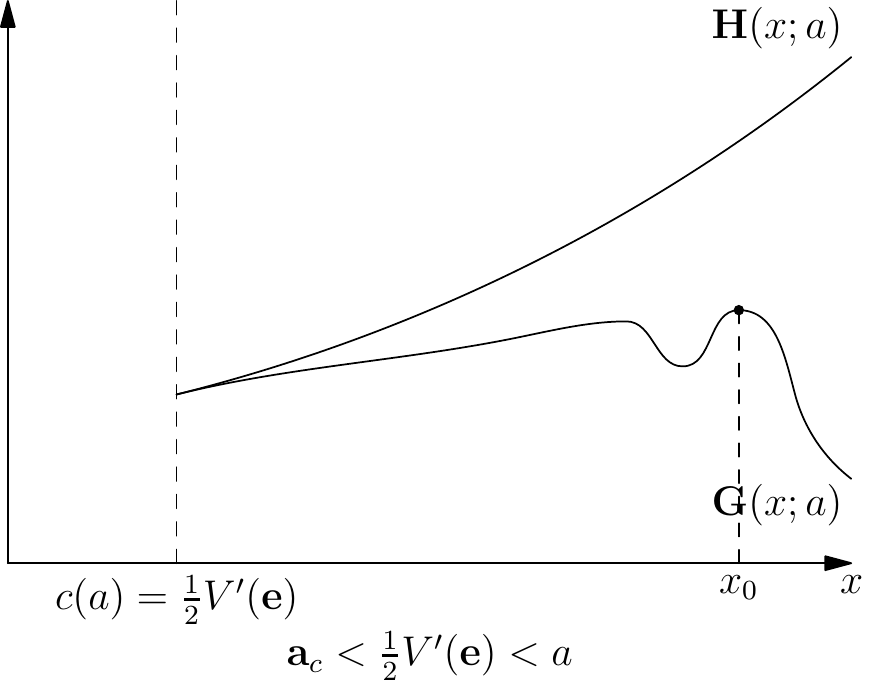}
\end{center}
\end{minipage}
\caption{Schematic graphs of the functions $\Hfn(x;a)$ and $\Gfn(x;a)$ for a potential $V$ such that  $\acc < \frac{1}{2}V'(\redge)$, assuming that $a \not\in \mathcal{J}_V$.} \label{figure:newGH_2}
\end{figure}

\begin{figure}[htb]
\begin{minipage}[t]{0.45\textwidth}
\begin{center}
\includegraphics[width=\textwidth]{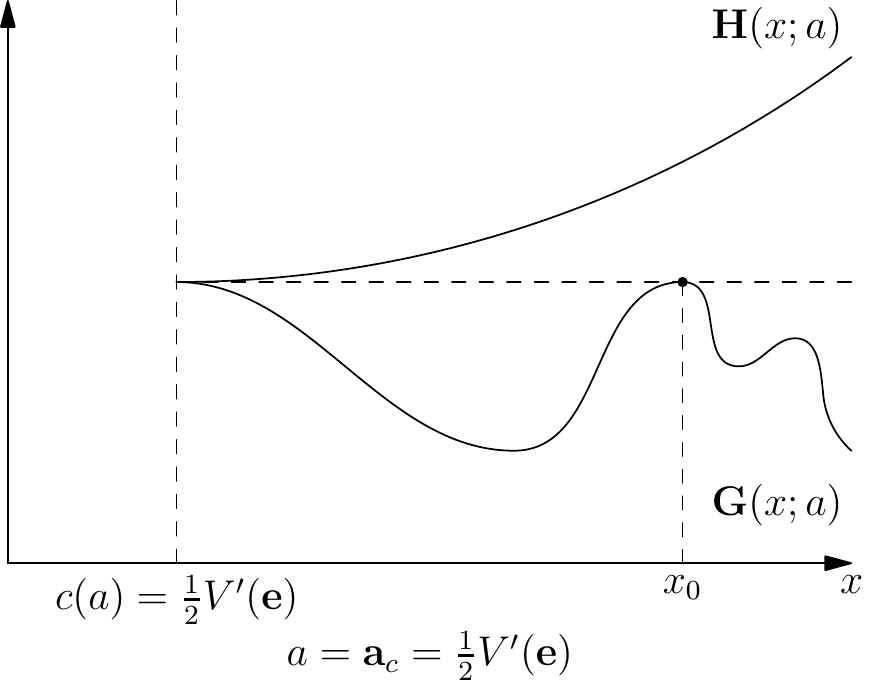}
\end{center}
\end{minipage}
\begin{minipage}[t]{0.1\textwidth}
\
\end{minipage}
\begin{minipage}[t]{0.45\textwidth}
\begin{center}
\includegraphics[width=\textwidth]{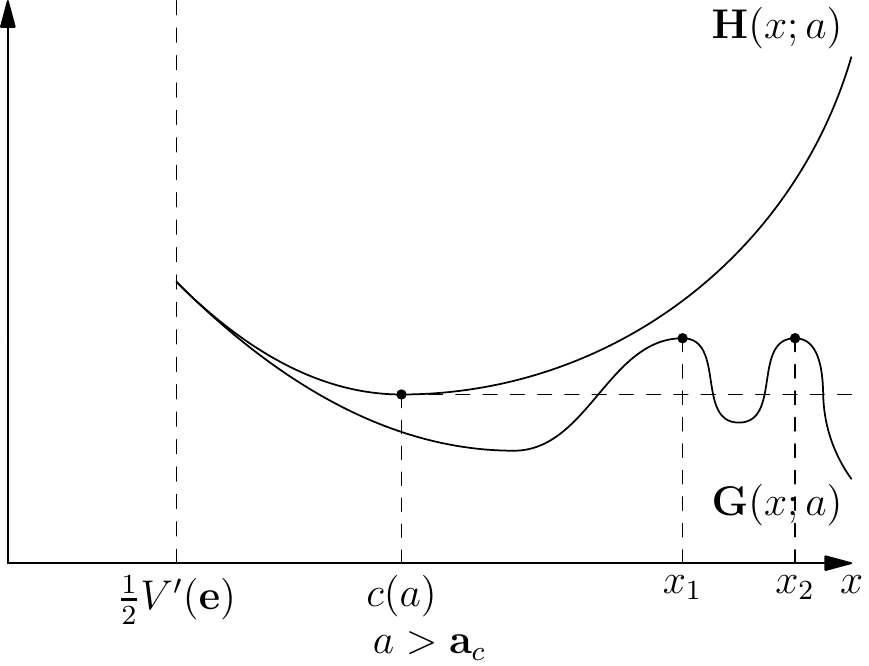}
\end{center}
\end{minipage}
\caption{Schematic graphs of functions $\Hfn(x;a)$ and $\Gfn(x;a)$ 
when $a=\acc\in \mathcal{J}_V$ and $\acc>a\in \mathcal{J}_V$.} \label{figure:newGH_3}
\end{figure}
See Figures \ref{figure:newGH_1}, \ref{figure:newGH_2} and \ref{figure:newGH_3} for a few examples of the graphs of $\Gfn$ and $\Hfn$.

\bigskip

Define the set
\begin{equation} \label{eq:definition_of_calA}
	\mathcal{A}_V:= \{ a \in (0,\infty) | \text{ there exists $\bar{x}\in (c(a), \infty)$} 
	\text{such that $\Gfn(\bar{x};a)> \Hfn(c(a);a)$}\}.
\end{equation}
 
\begin{defn} \label{defn:definition_of_acc}
The \underline{critical value} for the spiked source model with potential $V$ is defined as 
\begin{equation} \label{eq:definition_of_acc}
	\acc:=\inf\mathcal{A}_V. 
\end{equation}
\end{defn}

\begin{lemma}\label{lem:Gprop}
We have the following properties. 
\begin{enumerate}[label=(\alph*)]
\item \label{lemma_enu:a} $(\frac12V'(\redge), \infty)\subset \mathcal{A}_V$. Hence $\acc\le \frac12 V'(\redge)$.
\item \label{lemma_enu:b} The set $\mathcal{A}_V$ is an open, semi-infinite interval. Hence $\mathcal{A}_V = (\acc, \infty)$. 
\item \label{lemma_enu:c} $\acc>0$.
\item \label{lemma_enu:d} For $0 < a<\acc$, we have $\Gfn(x;a)<\Hfn(c(a);a)$ for all $x\in (c(a),\infty)$.
\item \label{lemma_enu:e} If the potential $V(x)$ is convex for $x\ge \redge$, then $\acc= \frac12 V'(\redge)$, and $\Gfn(x; \acc)<\Hfn(\redge; \acc)$ for all $x>\redge$. (Note that $c(\acc)=\redge$ and $\Gfn(\redge;\acc)=\Hfn(\redge, \acc)$.)
\item \label{lemma_enu:f} If the potential $V$ is such that $\acc<\frac12 V'(\redge)$, then $\Gfn(x; \acc)\le \Hfn(c(\acc); \acc)$ for all $x\in (c(a),\infty)$, and the equality is attained at least at  one point. 
\end{enumerate}
\end{lemma}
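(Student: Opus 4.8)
\emph{Proof strategy.} I would prove the six parts in the order (a), (b), (c), (d), (e), (f), all built on one identity. First put $\delta(x):=\ell+V(x)-2\gfn(x)$, so $\delta(\redge)=0$ by~\eqref{eq:properties_of_gfn} and $\delta(x)>0$ for $x>\redge$ by~\eqref{eq:consequence_or_regularity}. A direct computation from~\eqref{eq:definition_of_GH}, using $\gfn'(c(a))=a$ (valid exactly when $a<\tfrac12 V'(\redge)$, i.e.\ $c(a)>\redge$), gives
\[
	\Gfn(x;a)-\Hfn(c(a);a)=\int_{c(a)}^{x}\left(\gfn'(c(a))-\gfn'(t)\right)dt-\delta(x),\qquad x\ge c(a).
\]
Since $\gfn'$ is positive and strictly decreasing on $(\redge,\infty)$ by~\eqref{eq:properties_of_gfn}, for $x>c(a)$ the integral lies in $(0,\,a(x-c(a)))$, and at $x=c(a)$ the left side equals $-\delta(c(a))<0$ when $c(a)>\redge$. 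I will also use the trivial shift identities $\Gfn(x;a')-\Gfn(x;a)=\Hfn(x;a')-\Hfn(x;a)=(a'-a)x$. For part~\ref{lemma_enu:a}: when $a>\tfrac12 V'(\redge)$ we have $c(a)=\redge$, so $\Gfn(\redge;a)=\Hfn(\redge;a)=\Hfn(c(a);a)$ by Lemma~\ref{fact:first}\ref{enu:fact:first:c}, while $\lim_{x\downarrow\redge}\Gfn'(x;a)=a-\tfrac12 V'(\redge)>0$ by Lemma~\ref{fact:first}\ref{enu:fact:first:d}; hence $\Gfn(\bar x;a)>\Hfn(c(a);a)$ for $\bar x$ slightly larger than $\redge$, so $a\in\mathcal{A}_V$, whence $\acc\le\tfrac12 V'(\redge)$.

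For part~\ref{lemma_enu:b}, openness of $\mathcal{A}_V$ follows from joint continuity of $\Gfn,\Hfn$ and continuity of $c(\cdot)$, since a witness $\bar x>c(a)$ for $a$ remains a witness for all $a'$ near $a$. For upward closedness, if $\bar x>c(a)$ witnesses $a$ and $a'>a$, then $\Hfn(c(a');a')\le\Hfn(c(a);a')$ (as $c(a')$ minimizes $\Hfn(\cdot;a')$, Lemma~\ref{fact:first}\ref{enu:fact:first:a}), and the shift identities give
\[
	\Gfn(\bar x;a')-\Hfn(c(a');a')\ \ge\ \left(\Gfn(\bar x;a)-\Hfn(c(a);a)\right)+(a'-a)\left(\bar x-c(a)\right)\ >\ 0,
\]
together with $\bar x>c(a)\ge c(a')$; hence $a'\in\mathcal{A}_V$. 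Being open, upward closed, and nonempty (by~\ref{lemma_enu:a}), $\mathcal{A}_V=(\acc,\infty)$. For part~\ref{lemma_enu:c}: as $a\downarrow 0$, $c(a)\to\infty$ because $\gfn'$ is a decreasing bijection of $(\redge,\infty)$ onto $(0,\tfrac12 V'(\redge))$ with $\lim_{x\to\infty}\gfn'(x)=0$. Moreover $\delta(x)/x\to+\infty$, since $V(x)/x\to+\infty$ by~\eqref{eq:condition_of_V_2} and $\gfn(x)/x\to0$ (l'H\^opital, using $\lim_{x\to\infty}\gfn'(x)=0$). Fix $X>\max(\redge,0)$ with $\delta(x)\ge x$ for all $x\ge X$. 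Then for any $a\le\min(1,\gfn'(X))$ we have $c(a)\ge X$, and for every $x>c(a)$ the identity and the bound on the integral give
\[
	\Gfn(x;a)-\Hfn(c(a);a)\ <\ a(x-c(a))-\delta(x)\ \le\ ax-\delta(x)\ \le\ ax-x\ \le\ 0,
\]
so $a\notin\mathcal{A}_V$; hence $\acc\ge\min(1,\gfn'(X))>0$.

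For part~\ref{lemma_enu:d}: $a<\acc$ means $a\notin\mathcal{A}_V$, i.e.\ $\Gfn(x;a)\le\Hfn(c(a);a)$ for all $x\in(c(a),\infty)$; if equality held at some $x_*$, then choosing $a'\in(a,\acc)$ the computation of part~\ref{lemma_enu:b} would give $\Gfn(x_*;a')-\Hfn(c(a');a')\ge(a'-a)(x_*-c(a))>0$, so $a'\in\mathcal{A}_V=(\acc,\infty)$, a contradiction. For part~\ref{lemma_enu:e}: the bound $\acc\le\tfrac12 V'(\redge)$ is~\ref{lemma_enu:a}; for the reverse take $a<\tfrac12 V'(\redge)$ and $c=c(a)$. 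Applying~\eqref{eq:consequence_or_regularity}, i.e.\ $\gfn(y)<\tfrac12(V(y)+\ell)$, at $y=x$ and at $y=c$ reduces the desired $\Gfn(x;a)<\Hfn(c;a)$ to the inequality $V(x)-V(c)\ge2a(x-c)$, which holds for $x>c$ by convexity ($V(x)-V(c)\ge V'(c)(x-c)$) together with $V'(c)>2\gfn'(c)=2a$, the latter because $\delta'=V'-2\gfn'$ is strictly increasing on $(\redge,\infty)$ (as $\delta''=V''-2\gfn''>0$ there) with $\delta'(\redge)=0$. Hence $a\notin\mathcal{A}_V$ and $\acc=\tfrac12 V'(\redge)$. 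Finally, with $\acc=\tfrac12 V'(\redge)$ we have $c(\acc)=\redge$, and $\Gfn(\cdot;\acc)$ is strictly concave on $(\redge,\infty)$ ($\Gfn''=\gfn''-V''<0$) with $\lim_{x\downarrow\redge}\Gfn'(x;\acc)=0$ by Lemma~\ref{fact:first}\ref{enu:fact:first:d}, hence strictly decreasing on $[\redge,\infty)$; so $\Gfn(x;\acc)<\Gfn(\redge;\acc)=\Hfn(\redge;\acc)$ for $x>\redge$.

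For part~\ref{lemma_enu:f}: since $\acc<\tfrac12 V'(\redge)$ we have $c(\acc)>\redge$, and $\Gfn(x;\acc)\le\Hfn(c(\acc);\acc)$ on $(c(\acc),\infty)$ because $\acc\notin\mathcal{A}_V$ (the set being open). To produce an equality point, set $\phi(a):=\sup_{x\ge c(a)}\left(\Gfn(x;a)-\Hfn(c(a);a)\right)$ on a compact neighborhood $[a_1,a_2]\subset(0,\tfrac12 V'(\redge))$ of $\acc$ (available by~\ref{lemma_enu:c}). Since $\Gfn(x;a)\le\Gfn(x;a_2)\to-\infty$ while $\Hfn(c(a);a)$ stays bounded for $a\in[a_1,a_2]$, this supremum equals a maximum over a compact $x$-interval $[c(a),X]$ with $X$ independent of $a$, hence $\phi$ is continuous on $[a_1,a_2]$. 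By~\ref{lemma_enu:d} (and the value $-\delta(c(a))<0$ at $x=c(a)$) $\phi<0$ on $(0,\acc)$, and $\phi>0$ on $(\acc,\tfrac12 V'(\redge))$ since there $\mathcal{A}_V=(\acc,\infty)$; therefore $\phi(\acc)=0$. This maximum is attained at some $x_*\ge c(\acc)$, and $x_*\neq c(\acc)$ because $\Gfn(c(\acc);\acc)-\Hfn(c(\acc);\acc)=-\delta(c(\acc))<0$; so $\Gfn(x_*;\acc)=\Hfn(c(\acc);\acc)$ with $x_*\in(c(\acc),\infty)$, as claimed. The step I expect to be the main obstacle is part~\ref{lemma_enu:c}: the strict variational inequality~\eqref{eq:consequence_or_regularity} alone is too weak — it produces only the $-\delta(x)$ term in the identity, which the integral can outweigh — so one must genuinely quantify the growth of $V$ via~\eqref{eq:condition_of_V_2} and pair it with $c(a)\to\infty$; the only other delicate point is the uniform reduction of the supremum to a compact maximum in part~\ref{lemma_enu:f}.
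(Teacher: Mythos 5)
Your proof is correct and follows essentially the same route as the paper's: openness plus the upward shift inequality $\Gfn(\bar x;a')-\Hfn(c(a');a')\ge(\Gfn(\bar x;a)-\Hfn(c(a);a))+(a'-a)(\bar x-c(a))$ for~\ref{lemma_enu:b}, the superlinear growth of $V$ combined with $c(a)\to+\infty$ for~\ref{lemma_enu:c}, the perturbation-to-$a'$ contradiction for~\ref{lemma_enu:d}, and a limiting/continuity argument (which you spell out in more detail than the paper's one-line remark) for~\ref{lemma_enu:f}. The only real deviation is in~\ref{lemma_enu:e}, where you deduce $a\notin\mathcal{A}_V$ from the variational inequality~\eqref{eq:consequence_or_regularity} together with convexity of $V$ and the monotonicity of $V'-2\gfn'$, whereas the paper simply observes that $\Gfn(\cdot\,;a)$ is concave with $\lim_{x\downarrow\redge}\Gfn'(x;a)<0$, hence decreasing, so $\Gfn(x)<\Gfn(c(a))<\Hfn(c(a))$; both arguments are valid.
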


\begin{proof}
\begin{itemize}
\item[\ref{lemma_enu:a}] Let $a\in (\frac12 V'(\redge), \infty)$. Since $\lim_{x\downarrow \redge}\Gfn'(x)=a-\frac12V'(\redge)>0$, there is $\bar{x}>\redge$ such that $\Gfn(\bar{x}) > \Gfn(\redge)= \Hfn(\redge)$. Thus $a\in \mathcal{A}_V$.

\item[\ref{lemma_enu:b}] The continuity of $\Gfn$ and $\Hfn$ in $a$ implies that $\mathcal{A}_V$ is an open set. 
Now we show that $\mathcal{A}$ is a semi-infinite interval. Suppose that $a\in \mathcal{A}_V$ and $a< \frac12 V'(\redge)$. Let $\bar{x}\in (c(a), \infty)$ be the point such that $\Gfn(\bar{x};a)>\Hfn(c(a);a)$. Let $a'\in (a, \frac12 V'(\redge)]$. From Definition~\ref{def:ca} of $c(a)$, we see that $c(a')<c(a)$, and hence $\bar{x}\in (c(a'), \infty)$.
Moreover, 
\begin{equation}\label{eq:GandHineq}
	\Gfn(\bar{x}; a')-\Hfn(c(a');a') = 
[\Gfn(\bar{x};a)-\Hfn(c(a);a)] 
 + [ \Hfn(c(a);a') - \Hfn(c(a');a')]+ [(a'-a)(\bar{x}-c(a))]
\end{equation} 
is strictly positive since each term in bracket is strictly positive. Thus $a'\in \mathcal{A}_V$, and this, together with (a),  implies that $\mathcal{A}_V$ is a semi-infinite interval.

\item[\ref{lemma_enu:c}] We have $\Gfn(x)-\Hfn(c(a)) 
= -V(x)+a(x-c(a))+ \gfn(x)+\gfn(c(a))-\ell\le -V(x)+ax + 2\gfn(x)-\ell$. This tends to $-\infty$ as $x\to +\infty$ due to the growth condition~\eqref{eq:condition_of_V_2} on $V$. 
Also $c(a)\to +\infty$ as $a\downarrow 0$.  Therefore when $a$ is close to $0$, $\Gfn(x)-\Hfn(c(a)) <0$ for $x>c(a)$. Hence, $a \not\in \mathcal{A}_V$ if $a$ is small enough.

\item[\ref{lemma_enu:d}] Let $0 < a<\acc$. Suppose that there is $\bar{x}\in (c(a), \infty)$ such that  $\Gfn(\bar{x};a) = \Hfn(c(a);a)$. For any $a'\in (a, \acc)$, we have $c(a)>c(a')$ since $a<a'<\frac12V'(\redge)$. Thus we find from~\eqref{eq:GandHineq} that $\Gfn(\bar{x}; a')-\Hfn(c(a');a')>0$. This implies that $a'\in\mathcal{A}_V$ which is a contradiction. 

\item[\ref{lemma_enu:e}] Let $0< a <\frac12V'(\redge)$. We will show that $a\notin\mathcal{A}$. Since $V$ is convex, $\Gfn(x)$ is concave in $x\in (\redge, \infty)$. As $\Gfn'(\redge;a)<0$, this implies that $\Gfn(x)$ is decreasing in $x\in (\redge, \infty)$. Thus for $x\in (c(a), \infty)$, $\Gfn(x)< \Gfn(c(a))<\Hfn(c(a))$. Hence $a\notin\mathcal{A}_V$. 
When $a=\frac12V'(\redge)$, a similar argument implies that $\Gfn(x)<\Hfn(x)$ for all $x>\redge$.

\item[\ref{lemma_enu:f}] This follows from the continuity of $\Gfn$ and $\Hfn$ in $a$ and the fact that $\acc=\inf\mathcal{A}_V$. 
\end{itemize}
\end{proof}

See typical graphs of $\Gfn$ and $\Hfn$ for $\acc=\frac{1}{2}V'(\redge)$ in Figure \ref{figure:newGH_1}, and typical graphs of $\Gfn$ and $\Hfn$ for $0 < \acc < \frac{1}{2}V'(\redge)$ in Figure \ref{figure:newGH_2}.
\begin{rmk} \label{rmk:nonconvexity_and_sec_crit}
When $V$ is non-convex, there  may exist $\bar{x}>\redge$ such that $\Gfn(\bar{x}; \acc)= \Hfn(\redge, \acc)$ even if $\acc=\frac12 V'(\redge)$.
\end{rmk}

By Definition~\ref{defn:definition_of_acc} of $\acc$, when $a>\acc$, $\Gfn(x;a)>\Hfn(c(a);a)$ for some $x>c(a)$. The point $x$ at which $\Gfn(x;a)$ attains its maximum plays an important role. Indeed, we will show in the below  that if the maximum is attained at a unique point, then $\xi_{\max}(n)$ converges to this point (see Theorem~\ref{thm:thm_rank_1}). However, it may happen that for some $a$'s, the function $\Gfn(x;a)$ attain its maximum at more than one point. 
Let 
\begin{equation}
	\Gfn_{\max}(a) := \max_{x\in [c(a), \infty)} \Gfn(x;a),
\end{equation}
and define
\begin{align} 
\mathcal{J}_V:= & \{  a\in [\acc, \infty) | \text{ $\Gfn_{\max}(a)$ is attained at more than one point}\}. \label{eq:definition_of_calJ}
\end{align}
This set is discrete since $\Gfn(x;a)$ is analytic in both $x$ and $a$. 
Note that when $V$ is convex, 
$\acc\notin\mathcal{J}_V$ from Lemma~\ref{lem:Gprop}\ref{lemma_enu:e}. 
For a non-convex $V$, as indicated in Remark \ref{rmk:nonconvexity_and_sec_crit}, $\acc$ may or may not be in $\mathcal{J}_V$ no matter if $\acc=\frac12V'(\redge)$. See typical graphs of $\Gfn$ and $\Hfn$ for $a \in \mathcal{J}_V$ in Figure \ref{figure:newGH_3}.

We have the following Lemma.

\begin{lemma}\label{lem:x0} 
\begin{itemize}
\item[(a)]
For $a\in [\acc, \infty)$ such that $a \not\in \mathcal{J}_V$, let $x_0(a)$ be the unique point in $[c(a), \infty)$ at which $\Gfn(x;a)$ attains its maximum. Then 
$x_0(a)$ is a continuous, strictly increasing function in $a\in [\acc, \infty)\setminus \mathcal{J}_V$. 
\item[(b)]
If $a_0\in \mathcal{J}_V$ and $a_0>\acc$, then 
\begin{equation}
	\lim_{a\uparrow a_0} x_0(a)< \lim_{a\downarrow a_0} x_0(a) .
\end{equation}
\end{itemize}
\end{lemma}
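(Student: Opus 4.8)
The plan is to regard $x_0(a)$ as the maximiser (unique precisely when $a\notin\mathcal{J}_V$) of $\Gfn(\cdot\,;a)$ over $[c(a),\infty)$ and to analyse its one-sided limits as $a$ crosses a fixed value $a_0\in[\acc,\infty)$; both parts will then follow. I would begin with three preliminary observations. The maximum defining $\Gfn_{\max}(a)$ is attained, since $\Gfn(x;a)\to-\infty$ as $x\to+\infty$ (Lemma~\ref{fact:first}\ref{enu:fact:first:e}), and this coercivity, uniform for $a$ near $a_0$, gives a uniform upper bound on $x_0(a)$. Next, $\Gfn_{\max}$ is continuous in $a$: the only subtlety is at values of $a_0$ where $c(\cdot)$ is strictly decreasing, and there the feasible interval $[c(a),\infty)$ differs from $[c(a_0),\infty)$ only by the vanishing piece $[c(a),c(a_0))$, whose supremum of $\Gfn(\cdot\,;a)$ tends to $\Gfn(c(a_0);a_0)\le\Gfn_{\max}(a_0)$, while on the fixed part one uses uniform convergence on compacts together with the coercive tail. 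Finally, for $a>\acc$ the maximiser is interior: by Definition~\ref{defn:definition_of_acc} and Lemma~\ref{fact:first}\ref{enu:fact:first:b}, $\Gfn_{\max}(a)>\Hfn(c(a);a)\ge\Gfn(c(a);a)$, so the left endpoint $c(a)$ is not a maximiser; consequently the set $M(a)$ of maximisers of $\Gfn(\cdot\,;a)$ on $[c(a),\infty)$ satisfies $M(a)\subset(c(a),\infty)$ and $\Gfn'(\cdot\,;a)$ vanishes at each of its points.

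The core of the argument is the claim
\[
	\lim_{a\uparrow a_0}x_0(a)=\min M(a_0),\qquad \lim_{a\downarrow a_0}x_0(a)=\max M(a_0)
\]
for every $a_0\in[\acc,\infty)$ — the first limit only when $a_0>\acc$, both taken inside $[\acc,\infty)\setminus\mathcal{J}_V$, which contains a punctured one-sided neighbourhood of $a_0$ because $\mathcal{J}_V$ is discrete. To see it, extract a convergent subsequence $x_0(a_n)\to\ell$ (possible by the uniform bound); then $\ell\ge\lim c(a_n)=c(a_0)$ and $\Gfn(\ell;a_0)=\lim\Gfn(x_0(a_n);a_n)=\lim\Gfn_{\max}(a_n)=\Gfn_{\max}(a_0)$, so $\ell\in M(a_0)$; thus every subsequential limit lies in $M(a_0)$. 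For $a_0>\acc$, where $M(a_0)\subset(c(a_0),\infty)$, I would sharpen this with the linear comparison $\Gfn(x;a)=\Gfn(x;a_0)+(a-a_0)x$: a fixed $z\in M(a_0)$ is feasible at $a$ once $a$ is near $a_0$, so $\Gfn_{\max}(a_0)+(a-a_0)z\le\Gfn_{\max}(a)=\Gfn(x_0(a);a_0)+(a-a_0)x_0(a)\le\Gfn_{\max}(a_0)+(a-a_0)x_0(a)$ (the last step using the routine fact that $x_0(a)\ge c(a_0)$ for $a$ near $a_0$, since all subsequential limits lie in $M(a_0)$), and hence $x_0(a)\ge z$ for $a>a_0$ and $x_0(a)\le z$ for $a<a_0$; choosing $z=\max M(a_0)$, resp.\ $z=\min M(a_0)$, pins the limit. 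Granting the claim, part (a)'s continuity is immediate, because $M(a_0)=\{x_0(a_0)\}$ for $a_0\notin\mathcal{J}_V$, so both one-sided limits (or the right-hand one at $a_0=\acc$) equal $x_0(a_0)$; and part (b) is exactly the inequality $\min M(a_0)<\max M(a_0)$, valid since $|M(a_0)|\ge2$ for $a_0\in\mathcal{J}_V$.

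For the strict monotonicity in (a) I would run a rearrangement argument. Take $\acc\le a_1<a_2$ in $[\acc,\infty)\setminus\mathcal{J}_V$, set $x_i:=x_0(a_i)$, and note $x_1\ge c(a_1)\ge c(a_2)$. If $x_2\ge c(a_1)$, then the optimality inequalities $\Gfn(x_1;a_1)\ge\Gfn(x_2;a_1)$ and $\Gfn(x_2;a_2)\ge\Gfn(x_1;a_2)$ add up, via $\Gfn(x;a)=\Gfn(x;a_1)+(a-a_1)x$, to $(a_2-a_1)(x_2-x_1)\ge0$, hence $x_2\ge x_1$; equality $x_1=x_2$ is impossible because $\Gfn'(\cdot\,;a_1)$ and $\Gfn'(\cdot\,;a_2)$ differ by the nonzero constant $a_2-a_1$ — with a one-line separate check if $x_1=c(a_1)$, in which case necessarily $c(a_1)=c(a_2)=\redge$ and $x_2=\redge$ is excluded since $a_2>\acc$ forces $\Gfn_{\max}(a_2)>\Gfn(\redge;a_2)$. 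The remaining case $x_2<c(a_1)$ forces $c(a_2)<c(a_1)$, hence $a_1<\tfrac12V'(\redge)$, and I would exclude it by establishing $\Gfn_{\max}(a_2)>\Hfn(c(a_1);a_2)$: indeed $\Gfn_{\max}(a_2)\ge\Gfn(x_1;a_2)=\Gfn_{\max}(a_1)+(a_2-a_1)x_1>\Hfn(c(a_1);a_1)+(a_2-a_1)c(a_1)=\Hfn(c(a_1);a_2)$, using $\Gfn_{\max}(a_1)\ge\Hfn(c(a_1);a_1)$ and $x_1>c(a_1)$, both supplied by Lemma~\ref{lem:Gprop}\ref{lemma_enu:f} together with Lemma~\ref{fact:first}\ref{enu:fact:first:b} when $a_1=\acc$, and by the definition of $\acc$ when $a_1>\acc$. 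Since $\Hfn(\cdot\,;a_2)$ is increasing on $[c(a_2),\infty)$ with $\Gfn<\Hfn$ there (Lemma~\ref{fact:first}\ref{enu:fact:first:a},\ref{enu:fact:first:b}), this yields $\Gfn(x;a_2)<\Hfn(c(a_1);a_2)<\Gfn_{\max}(a_2)$ for all $x\in[c(a_2),c(a_1)]$, so the $a_2$-maximiser lies strictly to the right of $c(a_1)$ — a contradiction. Hence $x_2>x_1$.

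I expect the real work to be bookkeeping rather than insight: tracking the moving endpoint $c(a)$ throughout (the continuity of $\Gfn_{\max}$, the feasibility of points of $M(a_0)$ at nearby $a$, the vanishing-piece estimate, the $x_0(a)\ge c(a_0)$ point in the linear comparison), and dealing with the boundary value $a=\acc$, where $x_0(\acc)$ might a priori equal $c(\acc)$; this last is settled via Lemma~\ref{lem:Gprop}\ref{lemma_enu:e},\ref{lemma_enu:f} — $x_0(\acc)$ is interior when $\acc<\tfrac12V'(\redge)$, while $c(\cdot)\equiv\redge$ near $\acc$ when $\acc=\tfrac12V'(\redge)$, so in both cases the arguments above apply on the right of $\acc$.
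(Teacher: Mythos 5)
Your proposal is correct and follows essentially the same route as the paper: the strict monotonicity comes from the same exchange inequality based on $\Gfn(x;a_2)=\Gfn(x;a_1)+(a_2-a_1)x$, and the continuity and one-sided limits in (b) come from continuity of $\Gfn$ in $(x,a)$, which the paper states without elaboration. The only difference is that you supply the bookkeeping the paper treats as immediate — the subsequential-limit identification of the one-sided limits with $\min M(a_0)$ and $\max M(a_0)$, the exclusion of the equality case $x_0(a_1)=x_0(a_2)$, and the case where $x_0(a_2)$ might fall into the shrinking gap $[c(a_2),c(a_1))$ — all of which is sound.
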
 


Note that if $a\in \mathcal{J}_V$ satisfies $a>\acc$ or $a=\acc<\frac12 V'(\redge)$, then there exist points $ x_1(a) < x_2(a) < \dots < x_r(a)$ in $(c(a), \infty)$, for some $r\ge 2$, such that 
\begin{equation} \label{eq:the_r_maxima_not_traditional_critical}
	\Gfn_{\max}(a)=\Gfn(x_1(a);a)= \cdots = \Gfn(x_r(a); a) . 
\end{equation}
On the other hand, if $V$ is a potential such that $\acc= \frac12 V'(\redge)$ and $\acc\in \mathcal{J}_V$, then there exist, for some $r\ge 1$,  $x_1(\acc) < x_2(\acc) < \dots < x_r(\acc)$ in $(\redge, \infty)$ such that 
\begin{equation} \label{eq:the_r_maxima_traditional_critical}
	\Gfn(\redge;a)= \Gfn_{\max}(a) = \Gfn(x_1(\acc);\acc) = \cdots = \Gfn(x_r(\acc); \acc).
\end{equation}

\begin{proof}[Proof of Lemma~\ref{lem:x0}]
The continuity of $x_0(a)$ for $a \not\in \mathcal{J}_V$ is a direct consequence of the continuity of $\Gfn(x;a)$ in both $x$ and $a$. Let $\acc \leq a_1 < a_2$ and $a_1, a_2 \not\in \mathcal{J}_V$. If we assume $x_0(a_1) \geq x_0(a_2)$, then since $\Gfn(x_0(a_1);a_1) > \Gfn(x_0(a_2);a_1)$, we have 
\begin{equation}
\Gfn(x_0(a_1);a_2) = \Gfn(x_0(a_1);a_1) + (a_2-a_1)x_0(a_1) > 
\Gfn(x_0(a_2);a_1) + (a_2-a_1)x_0(a_2) = \Gfn(x_0(a_2);a_2).
\end{equation}
This is contradictory to the assumption that $x_0(a_2)$ is the maximizer of $\Gfn(x;a_2)$. Thus $x_0(a_1) < x_0(a_2)$. 

If $a_0 \in \mathcal{J}_V$  and $a_0>\acc$, then $\Gfn(x;a_0)$ attains its maximum in $[c(a_0), \infty)$ at $x_1(a_0), \dots, x_r(a_0)$ for some $r\ge 2$ as in \eqref{eq:the_r_maxima_not_traditional_critical}. It is easy to check from the continuity of $\Gfn(x;a)$ in $a$ that $\lim_{a\uparrow a_0} x_0(a) = x_1(a_0)$ and $\lim_{a\downarrow a_0} x_0(a) = x_r(a_0)$.
\end{proof}

\begin{defn} \label{defn:definition_of_secondary_critical_values}
The  \underline{secondary critical values} for the spiked model are defined as the points $a\in\mathcal{J}_V$ such that $a>\acc$.
\end{defn}

\begin{rmk}\label{rmk:convJ}
For a potential $V$ such that $V(x)$ is convex for $x\ge \redge$, $\mathcal{J}_V=\emptyset$ since $\Gfn'(x;a)$ is a decreasing function in $x\ge \redge$. Hence there is no secondary critical value. 
\end{rmk}


\subsection{Statement of results: non-convex potentials} \label{subsection:main_results_rank_1}

Let $V(x)$ be a potential satisfying the conditions~\eqref{eq:condition_of_V_1}--\eqref{eq:condition_of_V_3}.
Let $\Int$ and $\Intx(x_*)$ be the intervals defined in~\eqref{eq:interval} and~\eqref{eq:interval2}, respectively. 

\begin{thm}[away from critical values]
\label{thm:thm_rank_1}
The following holds for each $T\in\R$ as $n\to\infty$ and $j=O(1)$.
\begin{enumerate}[label=(\alph*)]
\item \label{enu:thm:thm_rank_1:a} For $\aaa<\acc$,  
\begin{equation} 
	\lim_{n \rightarrow \infty} \Prob_{n-j+1,n} \left( \aaa; \Int \right) = \FGUE(T).
\end{equation}

\item \label{enu:thm:thm_rank_1:b} For $\aaa>\acc$ such that $\aaa\notin\mathcal{J}_V$, if $\Gfn''(x_0(\aaa))\neq 0$, then 
\begin{equation}\label{eq:main2erf}
	\lim_{n \rightarrow \infty} \Prob_{n-j+1,n} \left( \aaa; \Intx(x_0(a)) \right) = \erf(T).
\end{equation}

\end{enumerate}
\end{thm}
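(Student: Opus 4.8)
The plan is the standard route for spiked ensembles: write the gap probability as the Fredholm determinant of an explicit kernel, insert the Riemann--Hilbert asymptotics for the orthogonal polynomials of the weight $e^{-nV}$, and carry out a steepest-descent/Laplace analysis of the finite-rank correction produced by the spike. First I would record that the eigenvalues under~\eqref{eq:generalized_pdf} form a determinantal process, so $\Prob_{n-j+1,n}(\aaa;E)=\det\bigl(1-\chi_E\,\K_{n-j+1,n}\,\chi_E\bigr)$ for a correlation kernel $\K_{n-j+1,n}$. The Harish-Chandra--Itzykson--Zuber integral in the degenerate rank-one limit (all but one external eigenvalue equal to $0$) turns~\eqref{eq:generalized_pdf} into a biorthogonal ensemble, and a Christoffel--Darboux type computation exhibits $\K_{n-j+1,n}$ as the orthogonal polynomial kernel $\CK$ of the weight $e^{-nV}$ (of degree $\sim n$) plus a finite-rank correction built from $e^{n(\aaa x-V(x)/2)}$, a monic orthogonal polynomial of degree $\sim n$, and their Cauchy transforms; a residue calculation recasts the correction as a double contour integral whose exponents are exactly $\Gfn(\cdot\,;\aaa)$ and $\Hfn(\cdot\,;\aaa)$ of~\eqref{eq:definition_of_GH}. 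Because $j=O(1)$, shifting the polynomial degree by a bounded amount perturbs only lower-order terms of the large-$n$ asymptotics (the equilibrium measure, $\gfn$, and $\beta$ are defined with the fixed weight $e^{-nV}$), so it is enough to run the analysis for $j=1$ and track the bounded degree shift; I suppress $j$ below.

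For part~\ref{enu:thm:thm_rank_1:a} I would conjugate and rescale at the right edge, $x=\redge+u/(\beta n^{2/3})$, $y=\redge+v/(\beta n^{2/3})$ with $u,v\ge T$. By the Deift--Kriecherbauer--McLaughlin--Venakides--Zhou steepest-descent analysis of orthogonal polynomials with a regular analytic weight --- this is where assumptions~\eqref{eq:condition_of_V_1}--\eqref{eq:condition_of_V_3} and the normalization~\eqref{eq:definition_of_beta} of $\beta$ are used --- the rescaled $\CK$ converges in trace norm on $L^2([T,\infty))$ to $K_{\Airy}$. The work is in the correction: a saddle-point analysis of its contour integral must show that, for $\aaa<\acc$, the pole at $\aaa$ stays on the inactive side of the edge saddle, so that the correction is of strictly lower order than the Airy scale, uniformly on the rescaled window. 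Here I would use Lemma~\ref{fact:first} together with Lemma~\ref{lem:Gprop}\ref{lemma_enu:d}, which give $\Gfn(x;\aaa)<\Gfn(\redge;\aaa)$ for every $x>\redge$ when $\aaa<\acc$; combined with the exponential tail bounds in $u$ furnished by the RH parametrices (needed to upgrade pointwise to trace-norm convergence on the half-line), this gives $\Prob_{n-j+1,n}(\aaa;\Int)\to\det(1-\chi_{[T,\infty)}K_{\Airy}\chi_{[T,\infty)})=\FGUE(T)$.

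For part~\ref{enu:thm:thm_rank_1:b} I would instead rescale at $x_0(\aaa)$, writing $x=x_0(\aaa)+u/\sqrt{(V''(x_0(\aaa))-\gfn''(x_0(\aaa)))n}$ and $y$ similarly, $u,v\ge T$; here $V''(x_0(\aaa))-\gfn''(x_0(\aaa))=-\Gfn''(x_0(\aaa))>0$ because $x_0(\aaa)$ maximizes $\Gfn(\cdot\,;\aaa)$ and $\Gfn''(x_0(\aaa))\neq0$ by hypothesis. Since $x_0(\aaa)>\redge$ lies outside the support of the equilibrium measure, $2\gfn(x)-V(x)<\lcons$ there by~\eqref{eq:consequence_or_regularity}, and as $\CK$ is a positive-definite kernel its trace over $\Intx(x_0(\aaa))$ is $O(e^{-cn})$, so $\CK$ drops out of the determinant in trace norm. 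Now the correction dominates: since $\aaa>\acc$ and $\aaa\notin\mathcal{J}_V$, the function $\Gfn(\cdot\,;\aaa)$ has a unique, non-degenerate maximizer $x_0(\aaa)$ over $[c(\aaa),\infty)$ (Lemma~\ref{lem:x0}), with $\Gfn(x_0(\aaa))>\Hfn(c(\aaa))$ by Definition~\ref{defn:definition_of_acc}; a Laplace expansion of the contour integral at this simple saddle shows the rescaled correction converges in trace norm on $L^2([T,\infty))$ to a rank-one kernel $K_\infty$ with diagonal $K_\infty(u,u)=\tfrac1{\sqrt{2\pi}}e^{-u^2/2}$. Since the Fredholm determinant of a rank-one operator is $1$ minus its trace, this yields
\begin{equation*}
\lim_{n\to\infty}\Prob_{n-j+1,n}\bigl(\aaa;\Intx(x_0(\aaa))\bigr)=1-\frac1{\sqrt{2\pi}}\int_T^\infty e^{-u^2/2}\,du=\frac1{\sqrt{2\pi}}\int_{-\infty}^T e^{-u^2/2}\,du=\erf(T).
\end{equation*}

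The main obstacle in both parts is the uniform steepest-descent/Laplace analysis of the correction term's contour integral: deforming the contour of the variable dual to $\aaa$ through the correct saddle (the edge $\redge$, with the pole $\aaa$ just off it, in part~\ref{enu:thm:thm_rank_1:a}; the point $x_0(\aaa)$ in part~\ref{enu:thm:thm_rank_1:b}), gluing it to the RH parametrices, and extracting decay strong enough for trace-norm rather than merely pointwise convergence on the unbounded intervals $\Int$ and $\Intx(x_0(\aaa))$. A second delicate point in part~\ref{enu:thm:thm_rank_1:b} is showing quantitatively that the orthogonal-polynomial part is genuinely subdominant near $x_0(\aaa)$: this is precisely where the strict inequality $\Gfn(x_0(\aaa))>\Hfn(c(\aaa))$, that is, $\aaa>\acc$, is used, and it must be applied uniformly over the rescaled window and not just at $x_0(\aaa)$. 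The hypothesis $\Gfn''(x_0(\aaa))\neq0$ keeps the saddle simple so that a plain Gaussian (and not an Airy- or Pearcey-type) limit arises, while $\aaa\notin\mathcal{J}_V$ excludes a competing saddle of equal height.
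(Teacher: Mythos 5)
Your overall skeleton matches the paper's: the gap probability is the Fredholm determinant of the Christoffel--Darboux kernel $K_{n-j,n}$ plus a rank-one correction $\tilde{\psi}_{n-j}\otimes\psi_{n-j}$, the correction is analyzed by rewriting it through Cauchy transforms so that the exponents $\Gfn(\cdot;a)$ and $\Hfn(\cdot;a)$ appear, part~\ref{enu:thm:thm_rank_1:a} reduces to showing the correction is negligible at the Airy edge, and part~\ref{enu:thm:thm_rank_1:b} reduces to a Laplace expansion at the nondegenerate maximizer $x_0(a)$ producing the Gaussian. The final computations ($\FGUE(T)$ and $1-\frac1{\sqrt{2\pi}}\int_T^\infty e^{-u^2/2}du=\erf(T)$) are the paper's.

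There is, however, a genuine gap in how you pass to the limit in the determinant: in both parts you assert that the (rescaled) rank-one correction converges \emph{in trace norm} on $L^2$ of the unbounded interval ($\Int$ or $\Intx(x_0)$), to $0$ in part~\ref{enu:thm:thm_rank_1:a} and to a Gaussian rank-one kernel in part~\ref{enu:thm:thm_rank_1:b}. For a rank-one operator the trace norm is $\lVert\tilde{\psi}_{n-j}\rVert_{L^2(E)}\lVert\psi_{n-j}\rVert_{L^2(E)}$, and $\lVert\tilde{\psi}_{n-j}\rVert_{L^2(E)}$ is \emph{not} uniformly bounded: away from the saddle one only has $\tilde{\psi}_{n-j}(x)=O(\sqrt{n}\,e^{n(\Gfn(x)+\Hfn(x)-2\Hfn(c))/2})$ (resp.\ with $\Hfn(c)$ replaced by $\Gfn(x_0)$), and the exponent, while tending to $-\infty$ as $x\to+\infty$, can be positive on part of $E$; only the pointwise product $\tilde{\psi}_{n-j}(x)\psi_{n-j}(x)$ is small, not the product of the $L^2$ norms. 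So the "exponential tail bounds in $u$" you invoke do not upgrade pointwise to trace-norm convergence, and the claimed operator limits can fail. The paper circumvents this by never comparing the correction to $K_{n-j,n}$ in trace norm: it factors the Fredholm determinant as $\det(1-\chi_E K_{n-j,n}\chi_E)\cdot\bigl[1-\langle\tilde{\psi}_{n-j},\psi_{n-j}\rangle_E-\langle(1-\chi_E K_{n-j,n}\chi_E)^{-1}\chi_E K_{n-j,n}\chi_E\tilde{\psi}_{n-j},\psi_{n-j}\rangle_E\bigr]$, so that only the scalar $\langle\tilde{\psi}_{n-j},\psi_{n-j}\rangle_E$ and the function $K_{n-j,n}\chi_E\tilde{\psi}_{n-j}$ need estimating, and both are controlled by the fast decay of $\psi_{n-j}(y)$ and of $K_{n-j,n}(x,y)$ in $y\in E$. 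You need this (or an equivalent) device; without it the operator-convergence step of your argument does not go through, even though your saddle-point input (Lemmas~\ref{fact:first}, \ref{lem:Gprop}, \ref{lem:x0} and the asymptotics of $\bfGamma_{n-j}(a)$) is the right one.
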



When $a$ is at or near the critical value $\acc$, we have the following result. The case when $a=\acc$ is attained by setting $\alpha=0$.

\begin{thm}[at or near the critical value] 
\label{thm:critical_traditional_split}
We have the following for each $T\in\R$.
\begin{enumerate}[label=(\alph*)]
\item \label{enu:thm:critical_traditional_split:a} Suppose that  $V$ is a potential such that $\acc=\frac12 V'(\redge)$ and $\acc\notin \mathcal{J}_V$. Then for
\begin{equation}
	\aaa=\acc+\frac{\beta\alpha}{ n^{1/3}},
\end{equation}
where $\alpha$ is in a compact subset of $\R$, we have 
\begin{equation}
	\lim_{n \rightarrow \infty} \Prob_{n-j+1,n} \left( \aaa; \Int\right) = \FGOE(T; -\alpha).
\end{equation}

\item \label{enu:thm:critical_traditional_split:b} Let $V$ be  a potential such that $\acc<\frac12 V'(\redge)$. If $\acc\notin \mathcal{J}_V$ and $\Gfn''(x_0(\acc);\acc)\neq 0$, then for
\begin{equation}
	\aaa=\acc+\frac{\alpha}{ n},
\end{equation}
where $\alpha$ is in a compact subset of $\R$, we have, as $n\to\infty$, 
\begin{equation} \label{eq:part_1_of_thm_4}
	 \Prob_{n-j+1,n} \left( \aaa; \Int\right)  = p_{j,n}(\alpha) \FGUE(T) +o(1)
\end{equation}
and 
\begin{equation} \label{eq:part_2_of_thm_4}
	\Prob_{n-j+1,n} \left( \aaa; \Intx(x_0(\acc))\right)  = p_{j,n}(\alpha) + (1-p_{j,n})(\alpha)\erf(T) +o(1), 
\end{equation}
where the constant $p_{j,n}(\alpha)  \in (0,1)$ is defined by~\eqref{eq:defn_of_p(alpha)_pseudo_crit_1}. As a function of $\alpha$, $p_{j,n}(\alpha)$ is decreasing and satisfies $p_{j,n}(\alpha)\to 0$ as $\alpha\to \infty$ and $p_{j,n}(\alpha)\to 1$ as $\alpha\to -\infty$ for each fixed $n$. Also for each fixed $\alpha$, $p_{j,n}(\alpha)$ lies in a compact subset of $(0,1)$ for all large $n$. 

\end{enumerate}
\end{thm}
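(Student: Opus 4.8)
The plan is to reduce the statement to the Deift--Kriecherbauer--McLaughlin--Venakides--Zhou (DKMVZ) steepest-descent asymptotics of the orthogonal polynomials $\pi_k$ for the varying weight $e^{-nV}$, using the fact that for a rank-one external source the correlation kernel of \eqref{eq:generalized_pdf} is an explicit rank-one perturbation of a Christoffel--Darboux (CD) kernel. Concretely, conjugating $M$ to diagonal form and carrying out the (confluent) Harish-Chandra--Itzykson--Zuber integral over the unitary part, the eigenvalue density of \eqref{eq:generalized_pdf} becomes a product of a Vandermonde determinant and a second determinant one of whose rows is $e^{nay}$ while the others are polynomial in $y$. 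Hence $\Prob_{n-j+1,n}(a;E)=\det\big(1-(\mathcal K^{(0)}_{n,j}+\mathcal R_{n,j,a})\chi_E\big)$ with $\mathcal K^{(0)}_{n,j}$ the CD kernel of the undeformed model and $\mathcal R_{n,j,a}=\mathbf u_{n,j,a}\otimes\mathbf v_{n,j,a}$ rank one, where $\mathbf u_{n,j,a}$ is built from $\pi_{n-j}\,e^{-nV/2}$ and $\mathbf v_{n,j,a}$ from the $L^2(e^{-nV})$-projection of $e^{nay}$ onto the orthogonal complement of the lower-degree polynomials. The standard rank-one determinant identity then gives the exact factorization
\begin{equation*}
  \Prob_{n-j+1,n}(a;E)=\det\big(1-\mathcal K^{(0)}_{n,j}\chi_E\big)\,\Big(1-\big\langle(1-\chi_E\mathcal K^{(0)}_{n,j}\chi_E)^{-1}\mathbf u_{n,j,a},\,\mathbf v_{n,j,a}\big\rangle_E\Big);
\end{equation*}
since $\det\big(1-\mathcal K^{(0)}_{n,j}\chi_{\Int}\big)\to\FGUE(T)$ by edge universality for regular potentials, everything reduces to the uniform asymptotics of the correction $\big\langle(1-\chi_E\mathcal K^{(0)}_{n,j}\chi_E)^{-1}\mathbf u_{n,j,a},\mathbf v_{n,j,a}\big\rangle_E$ in the two scaling windows.

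For this I would run the DKMVZ analysis for $\pi_{n-j},\pi_{n-j+1}$ (the degree differs from the exponent parameter $n$ by only $O(1)$, which perturbs only lower-order data): this yields the Airy parametrix near $\redge$, so $\pi_{n-j}e^{-nV/2}$ is an Airy-scale function in $\xi=\beta n^{2/3}(x-\redge)$; away from the support, for $x$ in a compact subset of $(\redge,\infty)$, $\pi_{n-j}(x)e^{-nV(x)/2}$ grows like $e^{n(\gfn(x)-V(x)/2)}$ with no oscillation, so $\mathbf u_{n,j,a}\mathbf v_{n,j,a}$ grows like $e^{n\Gfn(x;a)}$; and the normalization making $e^{nay}$ orthogonal to the lower polynomials is governed by a saddle at $c(a)$ (using that $\Hfn(\cdot;a)$ is convex with minimum at $c(a)$, Lemma~\ref{fact:first}), carrying the scale $e^{n\Hfn(c(a);a)}$. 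Laplace's method at the maximizer $x_0(a)$ of $\Gfn(\cdot;a)$ --- nondegenerate since $\Gfn''(x_0(a))\ne 0$ --- then shows the correction is, to leading order, $e^{n(\Gfn_{\max}(a)-\Hfn(c(a);a))}$ times an $O(1)$ factor determined by the outer parametrix and $\sqrt{-\Gfn''(x_0(a))}$. The envelope identity $\frac{d}{da}\big[\Gfn_{\max}(a)-\Hfn(c(a);a)\big]=x_0(a)-c(a)$, together with Lemma~\ref{lem:Gprop}, shows this exponent vanishes exactly at $a=\acc$ and increases across it --- the driver of the whole transition.

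Now match scalings. In case~\ref{enu:thm:critical_traditional_split:a}, $\acc=\tfrac12V'(\redge)$ forces $c(a)=\redge$ and $x_0(a)\downarrow\redge$ as $a\downarrow\acc$, so at the window $a=\acc+\beta\alpha n^{-1/3}$ both the saddle and the maximizer live at the edge scale; the rescaled correction collapses to an Airy-regime computation in which $\mathbf v_{n,j,a}$, after $\xi=\beta n^{2/3}(x-\redge)$, converges to the function $C_{-\alpha}(\xi)$ of~\eqref{eq:Calphadef} (the pole at $-i\alpha$ arising from the $\alpha n^{-1/3}$ shift of $a$ meeting the $n^{-2/3}$ edge scale), the resolvent converges to $(1-\chi_{[T,\infty)}K_{\Airy}\chi_{[T,\infty)})^{-1}$ and $\mathbf u_{n,j,a}$ to $\Ai$; comparing with~\eqref{eq:defn:of_F_1(T;alpha)} gives $\FGOE(T;-\alpha)$, and $\alpha=0$ recovers $a=\acc$. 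In case~\ref{enu:thm:critical_traditional_split:b}, $\acc<\tfrac12V'(\redge)$ forces $\redge<c(\acc)<x_0(\acc)$, so $x_0(\acc)$ sits strictly inside $\Int$ for large $n$; there $\mathcal K^{(0)}_{n,j}$ is exponentially small, the resolvent is essentially the identity near $x_0(\acc)$, and the Laplace mass localizes at $x_0(\acc)$, hence is $T$-independent on $\Int$. With $a=\acc+\alpha/n$ one has $n\big(\Gfn_{\max}(a)-\Hfn(c(a);a)\big)\to\alpha\,(x_0(\acc)-c(\acc))$, finite and nonzero, so the correction converges to an $O(1)$ quantity. Organizing the determinant accordingly: on $\Int$ the largest eigenvalue is either pinned at $\redge$ (probability $p_{j,n}(\alpha)$, conditional gap $\to\FGUE(T)$) or pulled out to $x_0(\acc)$ (conditional gap $\to0$, since $x_0(\acc)\in\Int$), giving~\eqref{eq:part_1_of_thm_4}; on $\Intx(x_0(\acc))$ the same dichotomy yields conditional gaps $1$ and $\erf(T)$ (Laplace at $x_0(\acc)$ cut off at $x_0(\acc)+T/\sqrt{-\Gfn''(x_0(\acc))\,n}$ producing a Gaussian tail), giving~\eqref{eq:part_2_of_thm_4}. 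Reading off $p_{j,n}(\alpha)$ from the leading constant, it takes the self-normalizing shape $\big(1+\kappa_{j,n}\,e^{\alpha(x_0(\acc)-c(\acc))}\big)^{-1}$, whence its monotonicity in $\alpha$, the limits as $\alpha\to\pm\infty$, and membership in a compact subset of $(0,1)$ for fixed $\alpha$ and large $n$ ($\kappa_{j,n}$ being built from quasi-periodic-in-$n$ outer-parametrix data, it need not converge).

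The hard part will be the uniform asymptotics of the correction term across the transition. First, one must show that all the exponentially large and small prefactors --- the norms $h_k$, the split between $e^{-nV}$ and $e^{-nV/2}$, the $n$ versus $n-j$ degree shift, and crucially the \emph{near-cancellation} in the normalization of $\mathbf v_{n,j,a}$ when $a$ is close to $\acc$ --- combine to leave exactly the scale $e^{n(\Gfn_{\max}(a)-\Hfn(c(a);a))}$ with a bounded (indeed self-normalizing, hence $(0,1)$-valued) coefficient rather than an unbounded one; this near-cancellation is precisely what produces $p_{j,n}(\alpha)\in(0,1)$. Second, in case~\ref{enu:thm:critical_traditional_split:a} one must carefully track the interaction of the Airy parametrix with the simple pole at $a$ to recover $C_{-\alpha}$ uniformly in $T$, and in case~\ref{enu:thm:critical_traditional_split:b} justify replacing the resolvent by the identity on a neighborhood of $x_0(\acc)$ and controlling the tail of the Laplace expansion there uniformly in $\alpha$ on compacts. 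Once these estimates are in place, the precise formula for $p_{j,n}(\alpha)$ and its stated properties follow by routine bookkeeping.
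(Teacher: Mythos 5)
Your proposal is correct and follows essentially the same route as the paper: the same rank-one determinant factorization, the same competition between the saddle-point contribution $e^{n\Hfn(c(a);a)}$ at $c(a)$ and the Laplace contribution $e^{n\Gfn_{\max}(a)}$ at $x_0(a)$ in the normalization of the residual function (which, under $a=\acc+\alpha/n$, produces exactly the self-normalized $p_{j,n}(\alpha)=\big(1+\kappa_{j,n}e^{\alpha(x_0(\acc)-c(\acc))}\big)^{-1}$ of \eqref{eq:defn_of_p(alpha)_pseudo_crit_1}), and the same collapse to an Airy-parametrix computation yielding $C_{-\alpha}$ and hence $\FGOE(T;-\alpha)$ in case \ref{enu:thm:critical_traditional_split:a}. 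The difficulties you flag at the end are precisely the ones the paper resolves (via the rearrangement $\langle(1-K)^{-1}\tilde\psi,\psi\rangle=\langle\tilde\psi,\psi\rangle+\langle(1-K)^{-1}K\tilde\psi,\psi\rangle$ to tame the non-uniformly-bounded $L^2$ norm of the residual, and the Airy identities leading to $e^{\alpha^3/3}$ and $C_{-\alpha}$).
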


\begin{rmk}
When the potential $V(x)$ is convex for $x\ge\redge$, then $\acc = \frac12 V'(\redge)$, $\mathcal{J}_V=\emptyset$ (see Remark~\ref{rmk:convJ}) and $\Gfn''(x)<0$ for all $x>\redge$. Hence Theorem~\ref{thm:thm_rank_1} and Theorem~\ref{thm:critical_traditional_split}\ref{enu:thm:critical_traditional_split:a} imply Theorem~\ref{thm:convex}.  
\end{rmk}

For $a$ at or near the  secondary critical values $\mathcal{J}_V\setminus\{\acc\}$, we have the following result. 

\begin{thm}[at or near the secondary critical values]
\label{thm:supercritical_split}
Let $V$ be a potential such that $\mathcal{J}_V\neq \emptyset$.
Let $\aaa_0\in \mathcal{J}_V\setminus\{\acc\}$ be a secondary critical point. If $\Gfn(x; \aaa_0)$ attains its maximum $\Gfn_{\max}(a_0)$ at two points $x_1(\aaa_0)<x_2(\aaa_0)$ in $(c(a), \infty)$ and if $\Gfn''(x_1(\aaa_0); \aaa_0)\neq 0$ and $\Gfn''(x_2(\aaa_0); \aaa_0)\neq 0$, then for
\begin{equation}
	\aaa = \aaa_0+ \frac{\alpha}{n},
\end{equation}
where $\alpha$ is in a compact subset of $\realR$, we have 
\begin{equation}
\Prob_{n-j+1,n} \left( \aaa; \Intx(x_1(a_0)) \right) = p^{(1)}_{j,n}(\alpha)\erf(T) +o(1)
\end{equation}
and 
\begin{equation}
\Prob_{n-j+1,n} \left( \aaa; \Intx(x_2(a_0)) \right) = p^{(1)}_{j,n}(\alpha) + p^{(2)}_{j,n}(\alpha)\erf(T) +o(1),
\end{equation} 
where $p^{(1)}_{j,n}(\alpha)$ and $p^{(2)}_{j,n}(\alpha) \in (0,1)$ are defined in~\eqref{eq:defpj1} and~\eqref{eq:defpj2}, and $p^{(1)}_{j,n}(\alpha) + p^{(2)}_{j,n}(\alpha) = 1$. As a function of $\alpha$, $p^{(1)}_{j,n}(\alpha)$ is decreasing and satisfies $p^{(1)}_{j,n}(\alpha) \to 0$ as $\alpha\to \infty$ and $p^{(1)}_{j,n}(\alpha) \to 1$ as $\alpha\to -\infty$ for each fixed $n$. Also for each fixed $\alpha$, $p^{(1)}_{j,n}(\alpha)$ is in a compact subset of $(0,1)$ independent of $n$. 
\end{thm}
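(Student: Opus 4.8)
The plan is to run the Riemann--Hilbert and Fredholm-determinant analysis already used in this paper for the super-critical regime (Theorem~\ref{thm:thm_rank_1}\ref{enu:thm:thm_rank_1:b} and Theorem~\ref{thm:critical_traditional_split}\ref{enu:thm:critical_traditional_split:b}), the only new feature being that \emph{two} saddle points now compete on an equal footing. First I would invoke the representation of the gap probability $\Prob_{n-j+1,n}(\aaa;E)$ of the rank-one spiked model as a Fredholm determinant of a rank-one deformation $\widetilde{K}_{n,j}=K+\mathbf{q}_{\aaa}\otimes\mathbf{p}_{\aaa}$ of the Christoffel--Darboux kernel $K$ of the weight $e^{-nV}$, together with the ensuing scalar identity
\[
	\Prob_{n-j+1,n}(\aaa;E)=\det\bigl(1-\chi_E K\chi_E\bigr)\Bigl(1-\bigl\langle(1-\chi_E K\chi_E)^{-1}\mathbf{q}_{\aaa},\,\mathbf{p}_{\aaa}\bigr\rangle_E\Bigr),
\]
where $\mathbf{p}_{\aaa},\mathbf{q}_{\aaa}$ carry a contour integral whose integrand, after the standard chain of RH transformations, has exponential factor $e^{n\Gfn(z;\aaa)}$ with $\Gfn$ as in~\eqref{eq:definition_of_GH}; this is the same mechanism that produces the rank-one correction of~\cite{Baik-Ben_Arous-Peche05} at the edge.

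Next I would substitute $\aaa=\aaa_0+\alpha/n$ and use $\Gfn(z;\aaa)=\Gfn(z;\aaa_0)+(\alpha/n)z$, so that the contribution of a neighbourhood of a point $x_*$ to the contour integral picks up the factor $e^{\alpha x_*}$ relative to the common value $e^{n\Gfn_{\max}(\aaa_0)}$. Since $\aaa_0\in\mathcal{J}_V\setminus\{\acc\}$, the function $\Gfn(\cdot;\aaa_0)$ attains its maximum exactly at $x_1(\aaa_0)<x_2(\aaa_0)$, both $>\redge$; and since $\aaa_0>\acc$ forces $\aaa_0\in\mathcal{A}_V$ by Lemma~\ref{lem:Gprop}, one has $\Gfn_{\max}(\aaa_0)>\Hfn(c(\aaa_0);\aaa_0)$, so that the contribution near the band edge $\redge$---which in Theorem~\ref{thm:critical_traditional_split}\ref{enu:thm:critical_traditional_split:b} produced the $\FGUE$ factor $\det(1-\chi_E K\chi_E)$---is here exponentially subdominant and disappears. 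The hypotheses $\Gfn''(x_i(\aaa_0);\aaa_0)\neq0$ guarantee that for $\aaa$ near $\aaa_0$ each $x_i$ persists as a non-degenerate critical point $x_i(\aaa)$ of $\Gfn(\cdot;\aaa)$, with $x_i(\aaa)-x_i(\aaa_0)=O(1/n)$ (negligible after rescaling), which is why the fluctuation scale in~\eqref{eq:interval2} is $\sqrt{(V''(x_i)-\gfn''(x_i))n}=\sqrt{-\Gfn''(x_i;\aaa_0)\,n}$ and the limit law is Gaussian rather than Tracy--Widom.

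I would then perform a steepest-descent analysis of $\widetilde{K}_{n,j}$ on a single descent contour crossing the real axis transversally at both $x_1(\aaa)$ and $x_2(\aaa)$ and avoiding the other stationary points of $\Gfn(\cdot;\aaa_0)$. On the window $E=\Intx(x_i(\aaa_0))$, after rescaling $x=x_i+\xi/\sqrt{-\Gfn''(x_i;\aaa_0)n}$, the factor $\det(1-\chi_E K\chi_E)$ tends to $1$ (as $x_i>\redge$), and the scalar correction is evaluated by a Laplace expansion of the two saddle contributions: the Gaussian integrals yield $\erf$, and the ratio $e^{\alpha x_1(\aaa_0)}:e^{\alpha x_2(\aaa_0)}$, together with the orthogonal-polynomial/external-source prefactors and the $O(1)$ shift coming from the matrix size $n-j+1$, assembles into
\[
	p^{(1)}_{j,n}(\alpha)=\frac{w_{1,n}\,e^{\alpha x_1(\aaa_0)}}{w_{1,n}\,e^{\alpha x_1(\aaa_0)}+w_{2,n}\,e^{\alpha x_2(\aaa_0)}},\qquad p^{(2)}_{j,n}(\alpha)=1-p^{(1)}_{j,n}(\alpha),
\]
with $w_{i,n}>0$ tending to positive limits as $n\to\infty$ and staying in a fixed compact subset of $(0,\infty)$; these are~\eqref{eq:defpj1}--\eqref{eq:defpj2}. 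On the window $E=\Intx(x_2(\aaa_0))$ one additionally keeps the \emph{entire} $x_1$-saddle contribution (whose rescaled interval recedes to $+\infty$), which contributes precisely the additive constant $p^{(1)}_{j,n}(\alpha)$, giving the second displayed asymptotics. Because $x_1(\aaa_0)<x_2(\aaa_0)$, the monotonicity of $\alpha\mapsto p^{(1)}_{j,n}(\alpha)$, the limits $0$ and $1$, the uniform confinement to a compact subset of $(0,1)$, and $p^{(1)}_{j,n}+p^{(2)}_{j,n}=1$ (equivalently, consistency with $\Prob_{n-j+1,n}(\aaa;\Intx(x_2(\aaa_0)))\to1$ as $T\to\infty$) are all immediate.

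The main obstacle is the uniform two-saddle bookkeeping: proving that when the spectral window localises at $x_1(\aaa_0)$ the distant saddle at $x_2(\aaa_0)$ enters \emph{only} as a bounded multiplicative constant and does not deform the limiting shape; that when it localises at $x_2(\aaa_0)$ the $x_1(\aaa_0)$-saddle contributes exactly the additive constant $p^{(1)}_{j,n}(\alpha)$; and that all $o(1)$ errors are uniform for $\alpha$ and $T$ in compact sets. This calls for a steepest-descent deformation simultaneously valid at both saddles together with control of the resolvent $(1-\chi_E K\chi_E)^{-1}$ near points exceeding $\redge$, both of which are available from the one-saddle analysis underlying Theorems~\ref{thm:thm_rank_1} and~\ref{thm:critical_traditional_split}; the remaining estimates are then routine.
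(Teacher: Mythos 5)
Your proposal is correct and follows essentially the same route as the paper's proof in Subsection~\ref{subsection:generic_case_5} (extended to $a_0\ge\frac12 V'(\redge)$ in Sections~\ref{subsection:generic_case_3} and~\ref{subsection:The_critical_case_rank_1}): the rank-one Fredholm reduction~\eqref{eq:maind}, the reduction to $1-\langle\tilde{\psi}_{n-j},\psi_{n-j}\rangle_{E}+o(1)$ on windows above $\redge$, and a two-saddle Laplace evaluation producing weights proportional to $\M_{j,n}(x_i)(-\Gfn''(x_i))^{-1/2}e^{\alpha x_i}$, exactly the $A_i(\alpha)$ of~\eqref{eq:defpj2}. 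The only cosmetic difference is that the paper obtains the competition between the two maximizers through the real-line Laplace evaluation of the normalization $\bfGamma_{n-j}(a)$ in~\eqref{eq:bla} (the complex contour passes through the saddle $c(a)$ of $\Hfn$ and is subdominant since $a_0>\acc$), rather than through a single complex steepest-descent contour crossing both $x_1$ and $x_2$, but the resulting bookkeeping is identical.
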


The above three theorems describe the `generic' cases. 
The next part describes the three  `exceptional cases'. 

As the first exceptional case, 
suppose that in Theorem~\ref{thm:supercritical_split},  the maximum of $\Gfn$ is attained at more than two points. 
Let $x_1(a)<x_2(a)<\cdots<x_r(a)$ be these maximizers. If $\Gfn''(x_j(a_0);a_0)\neq 0$ for all $k=1,\cdots, r$, then we have, for each $k=1,\cdots,r$,
\begin{equation}\label{eq:threeormoremax}
	\Prob_{n-j+1,n} \left( \aaa; \Intx(x_k(a_0)) \right) = \sum_{i=1}^{k-1} p^{(i)}_{j,n}(\alpha) + p^{(k)}_{j,n}(\alpha)\erf(T)+o(1), 
\end{equation} 
for some $p_{j,n}^{(i)}(\alpha)\in (0,1)$ such that $p^{(1)}_{j,n}(\alpha)+\cdots +p^{(r)}_{j,n}(\alpha)=1$. 
Explicitly, $p^{(i)}_{j,n}(\alpha)  :=\frac{A_i(\alpha)}{A_1(\alpha)+\cdots + A_r(\alpha)}$ where $A_i(\alpha)$ is defined in~\eqref{eq:defpj2}.
The situation when $\acc\in \mathcal{J}_V$ in Theorem~\ref{thm:critical_traditional_split}\ref{enu:thm:critical_traditional_split:b} is similar. In this case, the maximum of $\Gfn(x;\acc)$ in $(c(\acc), \infty)$ is attained at $x_1(\acc)<x_2(\acc)<\cdots<x_r(\acc)$ for some $r\ge 2$ (see~\eqref{eq:the_r_maxima_not_traditional_critical}). Assume that $\Gfn''(x_i(\acc); \acc)\neq 0$ for all $i=1, \cdots, r$. Then with $C_i(\alpha)$, $i=1,\cdots, r$, defined by~\eqref{eq:defn_of_C_j(alpha)_pseudo_crit_1} with
$x_0(\acc)$ replaced by $x_i(\acc)$, set $p_{j,n}^{(i)}(\alpha):= \frac{C_i(\alpha)}{C_0+C_1(\alpha)+\cdots+ C_r(\alpha)}$, $i=1,\cdots, r$, where $C_0$ is defined by \eqref{eq:defn_of_C_j(alpha)_pseudo_crit_0}. Then~\eqref{eq:part_1_of_thm_4} holds with $p_{j,n}(\alpha)$ replaced by $p^{(0)}_{j,n}(\alpha) := 1 - p^{(1)}_{j,n}(\alpha) - \dots - p^{(r)}_{j,n}(\alpha)$. The limit in~\eqref{eq:part_2_of_thm_4} is replaced by 
\begin{equation}\label{eq:ppaa11}
	 \Prob_{n-j+1,n} \left( \aaa; \Intx(x_k(a_0)) \right) = \sum_{i=0}^{k-1} p_{j,n}^{(i)}(\alpha) + p_{j,n}^{(k)}(\alpha)\erf(T) +o(1)
\end{equation} 
for $k=1,\cdots,r$.

The second exceptional case is when $\acc\in \mathcal{J}_V$ in Theorem~\ref{thm:critical_traditional_split}, case 1. This case is given in the following Theorem. In this case, there are two natural scalings in $a$.

\begin{thm} \label{thm:rank_1_transit_crit}
Let $V$ be a potential such that $\acc= \frac12 V'(\redge)$. Suppose that $\acc\in \mathcal{J}_V$. Assume that $\Gfn(x; \acc)$ attains its maximum at the unique point $x_0(\acc)\in (c(\acc), \infty)$ and $\Gfn''(x_0(\acc); \acc)\neq 0$. Then the following holds.
\begin{enumerate}[label=(\alph*)]
\item \label{enu:thm:rank_1_transit_crit:a}
For 
\begin{equation}\label{eq:aina}
	\aaa=\acc+\frac{\beta \alpha}{ n^{1/3}},
\end{equation}
where $\alpha$ is in a compact subset of $(-\infty, 0)$, we have 
\begin{equation}
	\lim_{n \rightarrow \infty} \Prob_n \left( \aaa; \Int\right) = F_1(T; -\alpha).
\end{equation}
\item \label{enu:thm:rank_1_transit_crit:b}
For 
\begin{equation}\label{eq:ainb}
	\aaa = \acc+ \frac{\alpha'}{n},
\end{equation}
where $\alpha'$ is in a compact subset of $\realR$, we have 
\begin{equation}\label{eq:ppaarr}
	 \Prob_{n-j+1,n} \left( \aaa; \Int \right) = p_{j,n}(\alpha') \FGOE(T) +o(1)
\end{equation}
where $\FGOE(T)=\FGOE(T; 0)$, 
and 
\begin{equation}\label{eq:ppaa}
	\Prob_{n-j+1,n} \left( \aaa;\Intx(x_0(\acc)) \right) = p_{j,n}(\alpha') + (1-p_{j,n}(\alpha'))\erf(T)
\end{equation}
where $p_{j,n}(\alpha')$ is defined in \eqref{eq:tq30}. As a function of $\alpha$, $p_{j,n}(\alpha')$ is decreasing and satisfies $p_{j,n}(\alpha')\to 0$ as $\alpha'\to \infty$ and $p_{j,n}(\alpha')\to 1$ as $\alpha'\to -\infty$ for each fixed $n$. Also for each fixed $\alpha$, $p_{j,n}(\alpha)$ is in a compact subset of $(0,1)$ independent of $n$. 

\end{enumerate}
\end{thm}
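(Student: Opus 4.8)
The plan is to reduce everything to the known asymptotic analysis of the correlation kernel of the spiked source model, which is already developed in the body of this paper for Theorems~\ref{thm:thm_rank_1}--\ref{thm:supercritical_split}. Recall that the gap probability $\Prob_{n-j+1,n}(\aaa;E)$ is a Fredholm determinant $\det(1-\K_n\chi_E)$ where $\K_n$ is built from the orthogonal polynomials for the weight $e^{-nV}$ together with the rank-one perturbation coming from $\A$. The effect of $\acc\in\mathcal{J}_V$ is that, in addition to the hard-edge saddle at $\redge$ (which, because $\acc=\frac12 V'(\redge)$, is exactly at the BBP transition and produces the $\FGOE$/deformed-Airy behavior), there is a competing \emph{real outlier saddle} at $x_0(\acc)>\redge$ with $\Gfn(x_0(\acc);\acc)=\Gfn(\redge;\acc)=\Gfn_{\max}(\acc)$. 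The two scales in $a$ in parts~\ref{enu:thm:rank_1_transit_crit:a} and~\ref{enu:thm:rank_1_transit_crit:b} reflect the two different ``speeds'' at which perturbing $\acc$ resolves this degeneracy: the $n^{-1/3}$ scale is the natural scale at the edge $\redge$ (it moves $a$ across the Airy-type window there while the outlier contribution stays exponentially subdominant), whereas the $n^{-1}$ scale is the natural scale for the exponent difference $\Gfn(x_0;\aaa)-\Gfn(\redge;\aaa)$, since by Lemma~\ref{fact:first}\ref{enu:fact:first:d} (and $\partial_a\Gfn(x;a)=x$) the $a$-derivative of this difference is $x_0(\acc)-\redge\neq 0$, so that $a=\acc+\alpha'/n$ produces an $O(1)$ factor $e^{\alpha'(x_0(\acc)-\redge)}$ weighting the outlier against the edge.

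For part~\ref{enu:thm:rank_1_transit_crit:a}, I would run the steepest-descent/Riemann--Hilbert analysis exactly as in the proof of Theorem~\ref{thm:critical_traditional_split}\ref{enu:thm:critical_traditional_split:a}. With $\aaa=\acc+\beta\alpha n^{-1/3}$ and the spectral window $\Int$ of width $O(n^{-2/3})$ around $\redge$, the local parametrix near $\redge$ is the Airy parametrix deformed by the pole at $z=-i\alpha$ coming from the factor $(\aaa-\gfn'+\cdots)$, exactly the mechanism producing the kernel whose Fredholm determinant is $\FGOE(T;-\alpha)$. The only new point is to check that the contribution of the second saddle at $x_0(\acc)$ is negligible on this scale: since $\Gfn(x_0(\acc);\aaa)-\Gfn(\redge;\aaa)=O(n^{-1/3})$ times $(x_0(\acc)-\redge)$ and $\alpha<0$, this difference is \emph{negative} for large $n$, so the outlier saddle is exponentially dominated and contributes $o(1)$; one also uses $\Gfn''(x_0(\acc);\acc)\neq0$ to ensure the Gaussian behavior at that saddle is nondegenerate so the subdominant term is genuinely exponentially small, not merely algebraically small. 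Hence $\Prob_n(\aaa;\Int)\to\FGUE(T)(1-\langle\cdots C_{-\alpha},\Ai\rangle_{\Int})=F_1(T;-\alpha)$, as claimed.

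For part~\ref{enu:thm:rank_1_transit_crit:b}, with $\aaa=\acc+\alpha'/n$ the two saddles are genuinely in competition. Following the decomposition used in Theorems~\ref{thm:critical_traditional_split}\ref{enu:thm:critical_traditional_split:b} and~\ref{thm:supercritical_split}, the kernel $\K_n$ splits (up to $o(1)$) into an edge piece localized near $\redge$ and an outlier piece localized near $x_0(\acc)$, with respective ``weights'' $C_0$ and $C_1(\alpha')$ extracted from the residue/Laplace contributions at the two saddles; because $\acc=\frac12 V'(\redge)$ the edge piece is at the critical coupling and its finite-dimensional perturbation is the one giving $\FGOE(T)=\FGOE(T;0)$ rather than the plain Airy $\FGUE$, while the outlier piece is a rank-one (Gaussian-fluctuation) perturbation giving $\erf$. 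Writing $p_{j,n}(\alpha'):=C_0/(C_0+C_1(\alpha'))$ with $C_0,C_1$ as in~\eqref{eq:defn_of_C_j(alpha)_pseudo_crit_0}--\eqref{eq:defn_of_C_j(alpha)_pseudo_crit_1} (this is the quantity called $p_{j,n}(\alpha')$ via~\eqref{eq:tq30}), the Fredholm determinant factorizes in the limit as the ``probability $p_{j,n}(\alpha')$ of landing at the edge'' times $\FGOE(T)$ plus the complementary event, giving~\eqref{eq:ppaarr} and~\eqref{eq:ppaa}. The monotonicity and limiting behavior of $p_{j,n}(\alpha')$ in $\alpha'$ follow from the explicit form of $C_1(\alpha')$: it carries the factor $e^{\alpha'(x_0(\acc)-\redge)}$ (plus lower-order corrections), which is increasing in $\alpha'$ and tends to $0$ resp.\ $\infty$ as $\alpha'\to\mp\infty$, while $C_0$ is essentially $\alpha'$-independent; the ``compact subset of $(0,1)$'' statement is the non-vanishing of both $C_0$ and $C_1(\alpha')$ for fixed $\alpha'$, which uses $\Gfn''(x_0(\acc);\acc)\neq0$ again.

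The main obstacle I expect is bookkeeping the \emph{uniformity in $j$ and $\alpha'$} while simultaneously keeping both saddles: one must show the two local contributions genuinely decouple (no cross terms of size $\Omega(1)$ in the Fredholm expansion) and that the error terms from the global parametrix, the non-critical parts of the contour, and the mismatch on the boundary circles are all $o(1)$ uniformly — this is precisely the delicate part of the analyses underlying Theorems~\ref{thm:critical_traditional_split} and~\ref{thm:supercritical_split}, and here one additionally has to interpolate between the $n^{-1/3}$ regime of part~\ref{enu:thm:rank_1_transit_crit:a} and the $n^{-1}$ regime of part~\ref{enu:thm:rank_1_transit_crit:b}, i.e.\ to verify that the two stated limits are consistent in the overlap $\alpha'\to-\infty$ (where $p_{j,n}\to1$ and one recovers $\FGOE(T)$). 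Everything else is a direct adaptation of the steepest-descent machinery already set up earlier in the paper.
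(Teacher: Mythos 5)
Your overall strategy coincides with the paper's: the two scalings are resolved exactly as you describe (for $a=\acc+\beta\alpha n^{-1/3}$ with $\alpha<0$ the outlier contribution carries the factor $e^{n(\Gfn(x_0(a);a)-\Gfn(\redge;a))}=e^{n^{2/3}\beta\alpha(x_0(\acc)-\redge)+O(n^{1/3})}\to 0$, so the proof of Theorem~\ref{thm:critical_traditional_split}\ref{enu:thm:critical_traditional_split:a} carries over; for $a=\acc+\alpha'/n$ the same exponent is $\alpha'(x_0(\acc)-\redge)+O(n^{-1})$, so both saddles contribute at order one and the determinant factorizes into $p\cdot\FGOE(T)$ and $p+(1-p)\erf(T)$ pieces). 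One small correction to part~\ref{enu:thm:rank_1_transit_crit:a}: the nondegeneracy $\Gfn''(x_0(\acc);\acc)\neq 0$ is not what makes the outlier term small there — the factor $e^{-cn^{2/3}}$ kills it regardless of the order of the saddle; the nondegeneracy is only needed in part~\ref{enu:thm:rank_1_transit_crit:b} to produce the Gaussian fluctuation at $x_0$.

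There is, however, a genuine gap in part~\ref{enu:thm:rank_1_transit_crit:b}: you identify $p_{j,n}(\alpha')$ with $C_0/(C_0+C_1(\alpha'))$, where $C_0$ from~\eqref{eq:defn_of_C_j(alpha)_pseudo_crit_0} is the \emph{Gaussian} saddle amplitude $-i\tilde{\M}_{j,n}(c(\acc))/\sqrt{\Hfn''(c(\acc);\acc)}$ at the interior saddle $c(\acc)>\redge$ of $\Hfn$. That formula is only valid in the subcritical regime $\acc<\frac12V'(\redge)$ of Theorem~\ref{thm:critical_traditional_split}\ref{enu:thm:critical_traditional_split:b}. Here $\acc=\frac12V'(\redge)$ forces $c(\acc)=\redge$, and near $\redge$ one has $\Hfn(z;\acc)-\Hfn(\redge;\acc)\sim\kappa(z-\redge)^{3/2}$, so $\Hfn''$ blows up and the Gaussian saddle-point formula does not apply. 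The edge weight must instead be extracted from the Airy local parametrix: the paper's~\eqref{eq:tq30} uses $p_{j,n}(\alpha')=D_0/(D_0+D_1(\alpha'))$ with $D_0=\B_{j,n}(\redge)/\beta$ obtained by combining the three contour pieces $\Sigma_\pm$ and $(\redge,\infty)$ through the identity $\int_0^\infty\Ai(\xi)\big(e^{\alpha\xi}+e^{\omega\alpha\xi}+e^{\omega^2\alpha\xi}\big)d\xi=e^{\alpha^3/3}$ (Lemma~\ref{lem:GammaBQQ}), while $D_1(\alpha')$ is the Laplace contribution at $x_0(\acc)$ carrying the factor $e^{\alpha'(x_0(\acc)-\redge)}$. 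Your qualitative conclusions about monotonicity and the limits of $p_{j,n}(\alpha')$ survive because only $D_1$ depends on $\alpha'$, but the value of $p_{j,n}(\alpha')$ and the computation producing it are not the ones you propose; without the Airy-type evaluation of the edge contribution to $\bfGamma_{n-j}(a)$ and to $\tilde\psi_{n-j}$, the decomposition into $p_{j,n}(\alpha')\FGOE(T)$ plus the $\erf$ piece cannot be completed.
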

 
If the maximum of $\Gfn(x; \acc)$ is attained at more than one point, then~\eqref{eq:ppaa} should be changed in a natural way as in~\eqref{eq:ppaa11}.

The third exceptional case is when the double derivative of $\Gfn(x;\aaa)$ vanishes at its maximizers. Then the function $\erf(x)$ is replaced by its higher analogue, and the scalings in the interval and $a$ are also changed accordingly. Concretely, in 
Theorem~\ref{thm:thm_rank_1}\ref{enu:thm:thm_rank_1:b},  if $\Gfn''(x_0(\aaa))=0$, then  since $x_0(\aaa)$ is the maximum point, there exists $k > 1$ such that  $\Gfn^{(2k)}(x_0(\aaa))<0$ and $\Gfn^{(j)}(x_0(\aaa))=0$ for all $j=1,2,\cdots, 2k-1$. Then~\eqref{eq:main2erf} is changed to  
\begin{equation}\label{eq:higher1010}
	\lim_{n \to \infty} \Prob_{n-j+1,n}\left( \aaa; \Intxx(x_0(a);k) \right) = 
 \frac{\int_{-\infty}^T e^{-x^{2k}}dx}{\int_{-\infty}^{\infty} e^{-x^{2k}}dx},
\end{equation} 
where the interval $\Intxx(x_*;k)$ is defined by 
\begin{equation}\label{eq:highererf}
\Intxx(x_*;k):= \left[ x_* + \left( \frac{n(V^{(2k)}(x_*)) - \gfn^{(2k)}(x_*))}{(2k)!}  \right)^{-1/(2k)}T, \infty \right).
\end{equation}
In Theorem~\ref{thm:critical_traditional_split}\ref{enu:thm:critical_traditional_split:b}, Theorem~\ref{thm:supercritical_split} and Theorem~\ref{thm:rank_1_transit_crit}, $a$ is scaled as $a=a_0+\frac{\alpha}{n}$. When $\Gfn''(x_i(a_0);a_0)=0$, ($i=0$ in Theorems \ref{thm:critical_traditional_split}\ref{enu:thm:critical_traditional_split:b} and \ref{thm:rank_1_transit_crit}, and $i=1,2$ in Theorem~\ref{thm:supercritical_split},) then this scaling also needs to be changed. For example, Theorem~\ref{thm:supercritical_split} is changed to the following theorem. 

\begin{thm}\label{thm:noname}
Let $\aaa_0\in \mathcal{J}_V\setminus\{\acc\}$ be a secondary critical value. Assume that $\Gfn(x; \aaa_0)$ attains its maximum $\Gfn_{\max}(a_0)$ at two points $x_1(\aaa_0)<x_2(\aaa_0)$ in $(c(a), \infty)$. 
Suppose that $\Gfn''(x_1(\aaa_0); \aaa_0)\neq 0$, and for some $k>1$, and suppose that $\Gfn^{(2k)}(x_2(\aaa_0); \aaa_0) \neq 0$ and $\Gfn^{(i)}(x_2(\aaa_0); \aaa_0) = 0$ for all $i = 1, \dots, 2k-1$. Then for
\begin{equation}
	\aaa = \aaa_0 - q\frac{\log n}{n} + \frac{\alpha}{n}, \quad \textnormal{where} \quad q := \frac{\frac{1}{2}-\frac{1}{2k}}{x_2(\aaa_0)-x_1(\aaa_0)}
\end{equation}
where $\alpha$ is in a compact subset of $\realR$, we have 
\begin{equation}
	 \Prob_{n-j+1,n} \left( \aaa; \Intx(x_1(a_0)) \right) = p^{(1)}_{j,n}(\alpha)\erf(T) +o(1),
\end{equation}
and 
\begin{equation}
	\Prob_{n-j+1,n} \left( \aaa; \Intxx(x_2(a_0);k) \right) = p^{(1)}_{j,n}(\alpha) + p^{(2)}_{j,n}(\alpha)  \frac{\int_{-\infty}^T e^{-x^{2k}}dx}{\int_{-\infty}^{\infty} e^{-x^{2k}}dx} +o(1),
\end{equation} 
where the interval $\Intxx(x;k)$ is defined by~\eqref{eq:highererf}, $p^{(1)}_{j,n}(\alpha)$ and $p^{(2)}_{j,n}(\alpha)$ are defined in~\eqref{eq:defpj1_2}, and $p^{(1)}_{j,n}(\alpha) + p^{(2)}_{j,n}(\alpha) = 1$. As a function of $\alpha$, $p^{(1)}_{j,n}(\alpha)$ is decreasing and satisfies $p^{(1)}_{j,n}(\alpha) \to 0$ as $\alpha\to \infty$ and $p^{(1)}_{j,n}(\alpha) \to 1$ as $\alpha\to -\infty$ for each fixed $n$. Also for each fixed $\alpha$, $p^{(1)}_{j,n}(\alpha)$ is in a compact subset of $(0,1)$ independent of $n$.
\end{thm}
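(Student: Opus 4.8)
The plan is to follow the scheme used for Theorem~\ref{thm:supercritical_split}, modifying only the steps affected by the degenerate maximizer $x_2(a_0)$. Recall that in the supercritical regime the gap probability $\Prob_{n-j+1,n}(a;[t,\infty))$ factors, up to $o(1)$ uniformly for $t$ in the windows under consideration, into an ``edge'' Fredholm determinant — which equals $1+o(1)$ on any window bounded away from $\redge$, since all but at most one eigenvalue accumulate on $\bar J$ — and a ``spike'' factor, a ratio of contour integrals whose exponential rate on $(\redge,\infty)$ is governed by $\Gfn(\cdot;a)$ and $\Hfn(\cdot;a)$. I would carry out a steepest descent analysis of this spike factor; its saddle points in $(c(a),\infty)$ are precisely the critical points of $\Gfn(x;a)$, hence lie near $x_1(a_0)$ and $x_2(a_0)$ for $a$ close to $a_0$.

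First I would use the exact identity $\Gfn(z;a)-\Gfn(z;a_0)=(a-a_0)z$ (linearity in $a$), so that $\Gfn(x_1(a_0);a)-\Gfn(x_2(a_0);a)=(a-a_0)(x_1(a_0)-x_2(a_0))$; with $a=a_0-q\frac{\log n}{n}+\frac{\alpha}{n}$ and $q(x_2(a_0)-x_1(a_0))=\tfrac12-\tfrac1{2k}$ this gives
\[
n\bigl[\Gfn(x_1(a_0);a)-\Gfn(x_2(a_0);a)\bigr]=\Bigl(\tfrac12-\tfrac1{2k}\Bigr)\log n-\alpha\,(x_2(a_0)-x_1(a_0))+o(1).
\]
Next I would localize the spike integral near each saddle. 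At $x_1(a_0)$, where $\Gfn''\neq0$, the usual Gaussian (Laplace) estimate produces a contribution of order $n^{-1/2}e^{n\Gfn(x_1(a_0);a)}$ with prefactor $\propto|\Gfn''(x_1(a_0);a_0)|^{-1/2}$ and, upon parametrizing the left endpoint of $\Intx(x_1(a_0))$ by $T$, the factor $\erf(T)$ exactly as in Theorem~\ref{thm:thm_rank_1}\ref{enu:thm:thm_rank_1:b}. At $x_2(a_0)$, where $\Gfn^{(i)}(x_2(a_0);a_0)=0$ for $i=1,\dots,2k-1$ and $\Gfn^{(2k)}(x_2(a_0);a_0)\neq0$, the leading behavior is a degree-$2k$ Laplace integral $\int e^{\,n\Gfn^{(2k)}(x_2(a_0);a_0)(x-x_2(a_0))^{2k}/(2k)!}\,dx$, which after the rescaling built into the definition~\eqref{eq:highererf} of $\Intxx(x_2(a_0);k)$ gives a contribution of order $n^{-1/(2k)}e^{n\Gfn(x_2(a_0);a)}$ with prefactor $\propto\Gamma(1+\tfrac1{2k})\bigl(\Gfn^{(2k)}(x_2(a_0);a_0)/(2k)!\bigr)^{-1/(2k)}$ and the CDF $\bigl(\int_{-\infty}^T e^{-x^{2k}}dx\bigr)\big/\bigl(\int_{-\infty}^\infty e^{-x^{2k}}dx\bigr)$.

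The role of the $-q\frac{\log n}{n}$ shift is then visible: $e^{n[\Gfn(x_1(a_0);a)-\Gfn(x_2(a_0);a)]}=n^{1/2-1/(2k)}e^{-\alpha(x_2(a_0)-x_1(a_0))}(1+o(1))$ exactly compensates the power mismatch $n^{-1/2}$ versus $n^{-1/(2k)}$, so both saddle contributions have the same order; the ratio of their leading coefficients — including the $\alpha$-dependent factor $e^{-\alpha(x_2(a_0)-x_1(a_0))}$ and the $j$-dependence inherited from the reduction $n\mapsto n-j+1$ — yields the weights $p^{(1)}_{j,n}(\alpha),p^{(2)}_{j,n}(\alpha)$ of~\eqref{eq:defpj1_2}, with $p^{(1)}_{j,n}+p^{(2)}_{j,n}=1$. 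Assembling: on the window $\Intx(x_1(a_0))$ the interval contains a full neighborhood of $x_2(a_0)$, so that saddle contributes its whole mass with no $T$-dependence and the total is $p^{(1)}_{j,n}(\alpha)\erf(T)+o(1)$; on the window $\Intxx(x_2(a_0);k)$ the neighborhood of $x_1(a_0)$ lies entirely inside, contributing the mass $p^{(1)}_{j,n}(\alpha)$, while the $x_2$-saddle contributes $p^{(2)}_{j,n}(\alpha)$ times the higher CDF. The monotonicity of $p^{(1)}_{j,n}(\alpha)$, its limits as $\alpha\to\pm\infty$, and the uniform-in-$n$ bounds follow from its explicit form, just as in Theorem~\ref{thm:supercritical_split}.

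The hard part will be the degenerate saddle: there is no off-the-shelf local parametrix, so one must establish the order-$2k$ Laplace estimate with error terms uniform in $\alpha$ on compact sets and in $T$ on compact sets, while keeping the $\log n$ balance exact — an $o(1)$ error here must genuinely be $o(1)$ rather than, say, $O(\log\log n/\log n)$, which forces one to retain the next nonzero Taylor coefficient of $\Gfn(\cdot;a_0)$ at $x_2(a_0)$ and verify it does not disturb the leading balance. A subsidiary technical point is checking, using the global shape of $\Gfn$ and $\Hfn$ supplied by Lemma~\ref{fact:first} and Lemma~\ref{lem:Gprop}, that the spike contour can be deformed so as to descend through both $x_1(a_0)$ and $x_2(a_0)$ along their steepest-descent directions simultaneously, with the remainder of the contour contributing only exponentially small terms.
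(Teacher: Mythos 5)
Your overall route is the paper's route: reduce the gap probability to $1-\langle\tilde{\psi}_{n-j},\psi_{n-j}\rangle_E+o(1)$ (the ``spike factor''), evaluate $\bfGamma_{n-j}(a)$ and the inner product by Laplace's method localized at $x_1(a_0)$ and $x_2(a_0)$, and observe that the tilt $-q\frac{\log n}{n}$ is calibrated so that $e^{n[\Gfn(x_1;a)-\Gfn(x_2;a)]}=n^{1/2-1/(2k)}e^{-\alpha(x_2-x_1)}(1+o(1))$ exactly cancels the order mismatch between the Gaussian contribution $\sim n^{-1/2}$ at $x_1$ and the degree-$2k$ contribution $\sim n^{-1/(2k)}$ at $x_2$. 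That computation, and the identification of the weights $p^{(i)}_{j,n}(\alpha)=B_i(\alpha)/(B_1(\alpha)+B_2(\alpha))$, match Subsection~\ref{subsection:generic_case_5_2} of the paper.

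There is, however, a concrete error in your assembly step for the second window. Since $x_1(a_0)<x_2(a_0)$ and $\Intxx(x_2(a_0);k)$ is a right half-line starting at $x_2(a_0)+O(n^{-1/(2k)})$, the neighborhood of $x_1(a_0)$ lies entirely \emph{outside} $\Intxx(x_2(a_0);k)$, not inside as you assert. Consequently the $x_1$-saddle contributes \emph{nothing} to $\langle\tilde{\psi}_{n-j},\psi_{n-j}\rangle_{\Intxx(x_2;k)}$; one gets $\langle\tilde{\psi}_{n-j},\psi_{n-j}\rangle_{\Intxx(x_2;k)}=p^{(2)}_{j,n}(\alpha)\bigl(1-\int_{-\infty}^{T}e^{-x^{2k}}dx/\int_{-\infty}^{\infty}e^{-x^{2k}}dx\bigr)+o(1)$, and the constant $p^{(1)}_{j,n}(\alpha)$ in the limit arises from $1-p^{(2)}_{j,n}(\alpha)=p^{(1)}_{j,n}(\alpha)$, i.e.\ from the mass that is excluded from the window. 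If you instead followed your stated premise literally (full mass $p^{(1)}$ contributed to the inner product by an $x_1$-neighborhood ``inside'' the window), you would obtain $p^{(2)}_{j,n}(\alpha)\int_{-\infty}^{T}e^{-x^{2k}}dx/\int_{-\infty}^{\infty}e^{-x^{2k}}dx$ for the gap probability, contradicting the statement you are proving. The fix is purely one of set membership, but as written the step is wrong.

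Two further remarks on framing. First, the ``hard part'' you identify --- the absence of an off-the-shelf local parametrix at the degenerate saddle --- is not actually an obstacle here: $x_2(a_0)$ lies in the region $B_\delta$ strictly to the right of $\redge$, where Proposition~\ref{prop:asy_for_mult_cut} already gives $\varphi_{n-j}(y)=M_{j,n}(y)e^{n(\Gfn(y;a)-\ell/2)-nay}$ with $M_{j,n}$, $M'_{j,n}$, $1/M_{j,n}$ uniformly controlled, so the order-$2k$ expansion is a scalar Laplace argument (retaining the $(2k)$-th Taylor coefficient, as in~\eqref{eq:tilde_gamma_case_1_super11}), uniform in $\alpha$ and $T$ on compacts; no new Riemann--Hilbert construction is needed. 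Second, no simultaneous steepest-descent deformation through $x_1$ and $x_2$ is required: in the paper's decomposition~\eqref{eq:division_of_psi_n_exponent} the maximizers of $\Gfn$ are interior maxima of a real integral over $(c,\infty)$, and the only contour deformation is for the Cauchy-transform piece, which passes through the saddle $c(a)$ of $\Hfn$ and is exponentially subdominant since $\Gfn_{\max}(a)>\Hfn(c(a);a)$ for $a>\acc$.
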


The changes needed for Theorem~\ref{thm:critical_traditional_split}\ref{enu:thm:critical_traditional_split:b},  and Theorem~\ref{thm:rank_1_transit_crit} are analogous. Also it may happen that the two or more of the exceptional cases occur simultaneously. Then one needs simply combine the results together in a straightforward way, and we skip the details. 

\bigskip
We remark that if the support $J$ of the equilibrium measure is of one interval i.e. $N=0$ (see~\eqref{eq:Jend}), then the probabilities $p_{j,n}$ and $p_{j,n}^{(j)}$ in Theorem \ref{thm:critical_traditional_split}\ref{enu:thm:critical_traditional_split:b},  \ref{thm:supercritical_split}, \ref{thm:rank_1_transit_crit}\ref{enu:thm:rank_1_transit_crit:b} and \ref{thm:noname} do not depend on $n$. This follows from Remark \ref{rmk:when_N=0} on $\M_{j,n}$, $\tilde{\M}_{j,n}$ and $\B_{j,n}(\redge)$ and the definition of these probabilities. When $N>0$, the dependence of these probabilities on $n$ is from the theta function formula of  $\M_{j,n}$, $\tilde{\M}_{j,n}$ and $\B_{j,n}(\redge)$ in Section \ref{section:Result_of_RHP}, and is 
in a quasi-periodic way.

\bigskip

We also remark that one can obtain the convergence in probability 1 as in~\eqref{eq:GUE1} from the above theorems together with the fact that all of the  limiting distributions decay rapidly at the tails. 

\bigskip

An explicit example of a potential such that $\acc<\frac12 V'(\redge)$ can be constructed as follows. 
We use the potential defined in \cite[Formula (4.14)]{Eynard06} (we change the original notation $e$ into $\bar{e}$ here): 
\begin{equation}
V_{\bar{e},\epsilon}(x) = \frac{1-\epsilon}{1+\bar{e}\tilde{e}} \left( \frac{1}{4}x^4 - \frac{\bar{e}+\tilde{e}}{3}x^3 + \frac{\bar{e}\tilde{e}-2}{2}x^2 + 2(\bar{e}+\tilde{e})x \right),
\end{equation}
where $\epsilon$ is a very small positive number, $\bar{e}$ is any number $>2$ and $\tilde{e}$ is determined by $e$ from the condition that 
\begin{equation}
\int^{\bar{e}}_2 (x-\bar{e})(x-\tilde{e}) \sqrt{x^2-4} dx = 0.
\end{equation}
From results in \cite[Section 4]{Eynard06}, it is known that $V_{\bar{e},\epsilon}$ is a regular potential with the support of the equilibrium measure given by $[-2 + O(\epsilon), 2 + O(\epsilon)]$. For all $x > \redge =2 + O(\epsilon)$, \eqref{eq:consequence_or_regularity} holds. However, at $x = \bar{e}$,
\begin{equation} \label{eq:birth_of_cut}
2\gfn(\bar{e}) -V(\bar{e}) - \ell = -E(\epsilon)
\end{equation}
for some $E(\epsilon)$ satisfying $E(\epsilon)=O(\epsilon)$ and $E(\epsilon)>0$. 
Hence for any $a$, $\Gfn(\bar{e};a) - \Hfn(\bar{e};a) = O(\epsilon)$. 
Now there exists $a\in (0, \frac12 V'(\redge))$ such that $c(a) \in (\redge, \bar{e})$ since the minimizer $c(\aaa)$ of $\Hfn(x;\aaa)$ is continuous in $a\in (0, \infty)$,  decreases strictly in $\aaa\in (0, \frac12 V'(\redge))$,  $\lim_{a\downarrow 0} c(a)=+\infty$ and $c(\frac12 V'(\redge))= \redge$ (see the sentence after Definition~\ref{def:ca}).
Since $\Hfn(c(a);a) < \Hfn(\bar{e};a)$, we have
$\Gfn(\bar{e};a) > \Hfn(c(a);a)$ if $\epsilon$ is small enough. Then $a\in \mathcal{A}_V$ and $\acc<\frac12 V_{\bar{e},\epsilon}'(\redge)$ for each fixed $\bar{e}>2$ if $\epsilon>0$ is small enough.



\bigskip

The paper is organized as follows. The outline of the proof of theorems is given in Section~\ref{subsection:outline_of_the_proof}. The results on the orthonormal polynomials and the kernel $K_{n-j,n}$ are summarized in Section \ref{section:Result_of_RHP}. The proofs of the theorems are given in Sections~\ref{subsection:generic_case_1},~\ref{subsection:generic_case_3} and~\ref{subsection:The_critical_case_rank_1}. We consider three cases, $\acc<\frac12 V'(\redge)$, $\acc>\frac12 V'(\redge)$ and $\acc=\frac12 V'(\redge)$ separately. Throughout this paper we only consider $a > 0$. The $a < 0$ case is discussed briefly in the end of Section \ref{subsection:outline_of_the_proof}.

\subsubsection*{Acknowledgments}
We would like to thank Marco Bertola, Robbie Buckingham, Seung-Yeop Lee and Virgil Pierce for keeping us informed of the progress of their work. 
The work of Jinho Baik was supported in part by NSF grants DMS075709.


\section{Outline of the proof} \label{subsection:outline_of_the_proof}

Let 
\begin{equation} \label{eq:definition_of_p_and_gamma}
p_j(x;n) = \gamma_j(n)x^j + \cdots
\end{equation}
be the orthonormal polynomial of degree $j$ with respect to the weight $e^{-nV(x)}$. Here take $\gamma_j(n) > 0$ to make $p_j$ unique. Set
\begin{equation} \label{eq:definition_of_psi_and_varphi}
\psi_j(x;n) := p_j(x;n)e^{-\frac{n}{2}V(x)}, \qquad \varphi_j(x;n) := p_j(x;n)e^{-nV(x)}.
\end{equation}
Let
\begin{equation} \label{eq:Christoffel_Darboux_K_n-1,n}
	K_{j, n}(x,y)  :=  \sum_{i=0}^{j-1} \psi_i(x;n)\psi_i(x;n) 
	=  \frac{\gamma_{j-1}(n)}{\gamma_{j}(n)} \frac{\psi_{j}(x;n)\psi_{j-1}(y;n)-\psi_{j-1}(x;n)\psi_{j}(y;n)}{x-y}
\end{equation}
be the Christoffel-Darboux kernel.
Define the constant 
\begin{equation} \label{eq:new_formula_of_tilde_gamma} 
	\bfGamma_j(a;n) := \langle e^{n(ax-V(x)/2)}, \psi_j(x;n) \rangle_{\realR}, 
\end{equation}
and the function
\begin{equation} \label{eq:new_formula_of_tilde_psi}
\begin{split}
	\tilde{\psi}_j(x;a;n) 
	:= & \frac1{\bfGamma_j(a;n)} \left( e^{n(ax-V(x)/2)} - \sum^{j-1}_{i=0} 
	\bfGamma_i(a;n) \psi_i(x;n) \right)\\
	= & \frac1{\bfGamma_j(a;n)} \left( e^{n(ax-V(x)/2)} - \int_{\R} K_{j,n}(x, y) e^{n(ay-V(y))} dy \right).
\end{split}
\end{equation}
It is easy to check that $\tilde{\psi}_j(x;a;n)$ is characterized by the orthonormality conditions
\begin{equation}
\langle \tilde{\psi}_j(x;a;n), \psi_k(x;n) \rangle_{\realR} = \delta_{jk}, \quad \textnormal{for $k = 0, 1, \dots, j$,}
\end{equation}
in the vector space spanned by $\{e^{-\frac{n}{2}V(x)}, xe^{-\frac{n}{2}V(x)}, \cdots, x^{j-1}e^{-\frac{n}{2}V(x)}, e^{n(ax-V(x)/2)}\}$.
In the multiple orthogonal polynomial theory, $\tilde{\psi}_j(x;a;n)e^{nV(x)/2}$ is the multiple orthogonal polynomial of type I with potentials $e^{-nV(x)}$ and $e^{ax}$, see \eg\ \cite{Bleher-Kuijlaars04a}. From this follows the well-definedness of $\tilde{\psi}_j(x;a;n)$, \ie, $\bfGamma_j(a;n) \neq 0$.

We sometimes drop the dependence on $n$ or $a$ in $\psi_j(x;n)$, $\bfGamma_j(a;n)$ and $\tilde{\psi}_j(x;a;n)$ and write $\psi_j(x)$, $\bfGamma_j(a)$ and $\tilde{\psi}_j(x)$ for convenience. 

The starting point of our analysis is the following basic result in the theory of Hermitian matrix model with external source, specialized to the spiked source model of rank $1$(see \eg\ \cite{Zinn_Justin97}, \cite{Bleher-Kuijlaars04a}):  for any $E \subset \realR$, 
\begin{equation}\label{eq:Fredholm_det_formulahr1}
	\Prob_{n-j+1,n}(a; E) = \Freddet \left( 1 - \chi_{E} \tilde{K}_{n-j+1, n} \chi_{E} \right),
\end{equation}
where 
\begin{equation}\label{eq:Fredholm_det_formulahr2}
	\tilde{K}_{n-j+1,n} := K_{n-j, n} + \tilde{\psi}_{n-j}\otimes \psi_{n-j}.
\end{equation}
Here $\chi_{E}$ denotes the projection operator on $E$ and $K_{j,n}$ is the operator on $L^2(\R)$, defined by the kernel~\eqref{eq:Christoffel_Darboux_K_n-1,n}. 
Note that the only term in~\eqref{eq:Fredholm_det_formulahr2} that depends on $a$ is $\tilde{\psi}_{n-j}(x;a;n)$.
The kernel $K_{n-j,n}(x,y)$ is precisely  the reproducing kernel in the Hermitian random matrix model of size $n-j$ with weight $e^{-nV}$ with no external source. 
Hence for the rank 1 spiked Hermitian model, the reproducing kernel is a rank $1$ perturbation of $K_{n-j, n}$. For the higher rank spiked Hermitian model, the reproducing kernel is a rank $r$ perturbation of $K_{n-j, n}$ which will be studied in the subsequent paper. 

For the asymptotic result for this paper, we need asymptotics of $K_{n-j, n}$, $\psi_{n-j}$ and $\tilde{\psi}_{n-j}$. 
The asymptotics of orthogonal polynomials and the Christoffel-Darboux kernel with respect to a varying weight have been studied extensively. Most notably precise strong asymptotics were obtained for a general class of potentials using the Deift-Zhou steepest-descent method for the associated Riemann-Hilbert problem (see \cite{Deift-Kriecherbauer-McLaughlin-Venakides-Zhou99} and \cite{Deift-Gioev07a}). We use the results of \cite{Deift-Kriecherbauer-McLaughlin-Venakides-Zhou99} extensively. However, in \cite{Deift-Kriecherbauer-McLaughlin-Venakides-Zhou99}, only the case of $j=0,1$ are stated explicitly. For more general $j=O(1)$, the same analysis of \cite{Deift-Kriecherbauer-McLaughlin-Venakides-Zhou99} can be carried out after a few changes. These have been studied in various other papers (see e.g. \cite{Baik-Kriecherbauer-McLaughlin-Miller07} for the discrete orthogonal polynomials case). We summarize the necessary changes and state the explicit asymptotic formulas in Section~\ref{section:Result_of_RHP} below.

The main part of this paper is the asymptotic analysis of $\tilde{\psi}_{n-j}(x;a;n)$ defined in~\eqref{eq:new_formula_of_tilde_psi}. For this purpose, we first evaluate the asymptotics of $\bfGamma_{n-j}(a;n)$. This can be achieved in principle by plugging in the asymptotics of $\psi_{n-j}(x;n)$ in the definition~\eqref{eq:new_formula_of_tilde_gamma} of $\bfGamma_{n-j}(a;n)$, and then evaluating the inner product asymptotically. However, the oscillatory nature of $\psi_{n-j}(x;n)$ in the support of the equilibrium measure makes it cumbersome to evaluate the inner product in this way. Instead we re-express $\bfGamma_{n-j}(a;n)$ in terms of a sum of integrals involving both $\psi_{n-j}(x;n)$ and its Cauchy transform (see~\eqref {eq:division_of_psi_n_exponent}). This removes the oscillation  and the asymptotic analysis becomes more straightforward. A similar trick is also used in evaluating $\int_{\R} K_{n-j,n}(x,y) e^{n(ay-V(y))}dy$ (see Lemma~\ref{lem:tpsinew}). The analysis is divided into several cases depending on the location of the critical points. Each of these cases correspond to the theorems in Introduction. 

For the  set $E=\Int$ or $E=\Intx(x_*)$ of interest in each theorem, we can show that $\tilde{\psi}_{n-j}(x;a;n)$ is in $L^2(E)$ for each $n$, and hence $\chi_E\tilde{\psi}_{n-j}\otimes \psi_{n-j}\chi_E$ in~\eqref{eq:Fredholm_det_formulahr2} is a trace class operator. However, the $L^2$-norm of $\tilde{\psi}_{n-j}$ is not uniformly bounded in $n$ nor there is a simple estimate on the $L^2$ norm. For example, we will find that in~\eqref{eq:tildepsiwhenalessacc} that 
$\tilde{\psi}_{n-j}(x;a;n) = O( \sqrt{n} e^{n(\Gfn(x)+\Hfn(x)-2\Hfn(c))/2} )$ for $x>c$ when $a<\min\{\acc, \frac12 V'(\redge)\}$. The function $\Gfn(x)+\Hfn(x)-2\Hfn(c)= -V(x)+2a(x-c)+2\gfn(c)-\ell$ tends to $-\infty$ as $x\to +\infty$, but it may be positive for some value $x>c$. This implies that we do not have a good trace norm of $\chi_E\tilde{\psi}_{n-j}\otimes \psi_{n-j}\chi_E= \|\tilde{\psi}_{n-j}\|_{L^2(E)} \|\psi_{n-j}\|_{L^2(E)}$, and we cannot compare the size of $\chi_E K_{n-j,n}\chi_E$ and $\chi_E\tilde{\psi}_{n-j}\otimes \psi_{n-j}\chi_E$. However, the rapid decay of the operator $(1-\chi_E K_{n-j,n}\chi_E)^{-1}$ can be used to control the estimates. We proceed as follows.


From~\eqref{eq:Fredholm_det_formulahr1}
and~\eqref{eq:Fredholm_det_formulahr2}, 
\begin{equation}\label{eq:maind}
\begin{split}
	& \Prob_{n-j+1,n}(a; E)= \det \big( 1 - \chi_E \tilde{K}_{n-j+1, n} \chi_E \big) \\ 
	= & \det \left( 1 - \chi_E K_{n-j,n} \chi_E \right) \cdot \det \big( 1 - \left(1 - \chi_E K_{n-j,n} \chi_E \right)^{-1}  \chi_E\tilde{\psi}_{n-j}  \otimes \psi_{n-j}  \chi_E \big)\\
	= & \det \left( 1 - \chi_E K_{n-j,n} \chi_E \right) \cdot  \big[ 1 - \langle \left( 1 - \chi_E K_{n-j,n} \chi_E \right)^{-1}  \tilde{\psi}_{n-j} ,  \psi_{n-j}   \rangle_{E} \big] \\
 	= & \det \left( 1 - \chi_E K_{n-j,n} \chi_E \right) \cdot  \big[ 1 -  \langle \tilde{\psi}_{n-j} ,  \psi_{n-j}   \rangle_{E} \\
	& \phantom{\det \left( 1 - \chi_E K_{n-j,n} \right) \cdot } -  \langle \left( 1 - \chi_E K_{n-j,n} \chi_E \right)^{-1} \chi_E K_{n-j,n} \chi_E\tilde{\psi}_{n-j} ,  \psi_{n-j}   \rangle_{E} \big].
\end{split}
\end{equation}
The advantage of using this formula is that $\tilde{\psi}_{n-j}$ appears in the inner product $\langle \tilde{\psi}_{n-j}, \psi_{n-j} \rangle_E$ and the function $(K_{n-j,n}\chi_E \tilde{\psi}_{n-j})(x)$. We will see that we have good estimates on both of  these quantities due to the fast decay of $\psi_{n-j}(y;n)$ and $K_{n-j,n}(x,y)$ as $n\to \infty$ for all $y\in E$. 

We study two kinds of intervals $E=\Int$ and $E=\Intx(x_*)$, $x_*>\redge$. 

\begin{enumerate}[label=(\alph*)]

\item
For $E=\Intx(x_*)$ where $x_*$, which may depend on $n$, is in a compact subset of $(\redge, \infty)$,  
from~\eqref{eq:Fred_det_of_K_n-1,n_2}, 
$ \det \big( 1 - \chi_{\Intx(x_*)} K_{n-j,n} \chi_{\Intx(x_*)} \big)  \to 1$.
For the last inner product in~\eqref{eq:maind}, note that the operator norm of  $( 1 - \chi_{\Intx(x_*)} K_{n-j,n} \chi_{\Intx(x_*)})^{-1}$ is uniformly bounded  from~\eqref{eq:Kinverse2}, and $\psi_{n-j}(x) \to 0$ in $L^2(\Intx(x_*))$ from~\eqref{eq:psiL20}. We will show that the $L^2(\Intx(x_*))$ norm of $K_{n-j,n}\chi_{\Intx(x_*)} \tilde{\psi}_{n-j}$ is uniformly bounded. Hence we will have 
\begin{equation}\label{eq:maind2}
\begin{split}
	 \Prob_{n-j+1,n}(a; \Intx(x_*))
 	= 1 -  \langle \tilde{\psi}_{n-j} ,  \psi_{n-j}   \rangle_{\Intx(x_*)} +o(1) 
\end{split}
\end{equation}
Therefore, we need 
\begin{enumerate}[label=(\roman*)]
\item asymptotic evaluation of $\langle \tilde{\psi}_{n-j} ,  \psi_{n-j}   \rangle_{\Intx(x_*)}$,
\item uniform boundedness of $L^2(\Intx(x_*))$-norm of $K_{n-j,n} \chi_{\Intx(x_*)}\tilde{\psi}_{n-j}$. 
\end{enumerate}

\item
For $E=\Int$, 
from~\eqref{eq:Fred_det_of_K_n-1,n_1}, 
$ \det \big( 1 - \chi_{\Int} K_{n-j,n} \chi_{\Int} \big)  \to \FGUE(T)$. 
For the cases in Section~\ref{subsection:generic_case_1}, we will show that $K_{n-j,n} \chi_{\Int}\tilde{\psi}_{n-j}$ is uniformly bounded in $L^2(\Int)$ but for the cases in Section~\ref{subsection:The_critical_case_rank_1}, we will see that the $L^2(\Int)$-norm of $K_{n-j,n} \chi_{\Int}\tilde{\psi}_{n-j}$ is $O(n^{1/6})$. We here state the necessary estimates separately. 

\begin{enumerate}[label=(\alph{enumi}\arabic*)]
\item \label{emu:sub_method_b1}
For Sections~\ref{subsection:generic_case_1},
we need  
\begin{enumerate}[label=(\roman*)]
\item asymptotic evaluation of $\langle \tilde{\psi}_{n-j} ,  \psi_{n-j}   \rangle_{\Int}$,
\item uniform $L^2(\Int)$ boundedness of $K_{n-j,n} \chi_{\Int}\tilde{\psi}_{n-j}$. 
\end{enumerate}
Then it follows that, since the operator norm of $( 1 - \chi_{\Int} K_{n-j,n} \chi_{\Int})^{-1}$ is uniformly bounded from Corollary~\ref{lemma:trace_norm_convergence_K_n-1,n}
~\ref{enu:lemma:uniformbddinverse_K_n-1,n:c}, and $\psi_{n-j}\to 0$ in $L^2(\Int)$ from~\eqref{eq:L^2_norm_of_psi_1}, that 
\begin{equation}\label{eq:maind1}
	\Prob_{n-j+1,n}(a; \Int) 
= (\FGUE(T)+o(1)) \cdot  \big[ 1 -  \langle \tilde{\psi}_{n-j} ,  \psi_{n-j}   \rangle_{\Int} +o(1) \big].
\end{equation}

\item \label{emu:sub_method_b2}
For Section~\ref{subsection:The_critical_case_rank_1}, 
we need 
\begin{enumerate}[label=(\roman*)]
\item asymptotic evaluation of $\langle \tilde{\psi}_{n-j} ,  \psi_{n-j}   \rangle_{\Int}$,
\item asymptotics evaluation of 
\begin{equation}\label{eq:defu}
	u_{j,n}(\xi): =\frac1{\sqrt{n}} (K_{n-j,n} \chi_{\Int}\tilde{\psi}_{n-j})(\redge+\beta^{-1}n^{-2/3}\xi) 
\end{equation}
in $L^2([T, \infty))$. 
\end{enumerate}
Then since $( 1 - \chi_{[T, \infty)} \K_{n-j,n} \chi_{[T,\infty)})^{-1}\to (1-\chi_{[T, \infty)} K_{\Airy}\chi_{[T, \infty)})^{-1}$ in operator norm from~\eqref{eq:Kinverse3} and $n^{-1/6}\psi_{n-j}(\redge+\beta^{-1}n^{-2/3}\xi) - \B_{j,n}(\redge) \Ai(\xi)\to 0 $ in $L^2([T, \infty))$ by Corollary \ref{cor:only_one_last_sect}\ref{enu:cor:only_one_last_sect:d}, we find that 
if $u_{j,n}- u_n\to 0$ in $L^2([T, \infty))$, then 
\begin{equation}\label{eq:maind3}
\begin{split}
	 \Prob_{n-j+1,n}(a; \Int) 
 	= & (\FGUE(T)+o(1)) \cdot  \big[ 1 -  \langle \tilde{\psi}_{n-j} ,  \psi_{n-j}   \rangle_{\Int} \\
	&-  \frac1{\beta} \B_{j,n}(\redge) \langle   ( 1 - \chi_{[T, \infty)} K_{\Airy} \chi_{[T, \infty)} )^{-1} u_n,   \Ai   \rangle_{[T, \infty)} + o(1) \big].
\end{split}
\end{equation}


\end{enumerate}
\end{enumerate}



\bigskip

In Sections \ref{subsection:generic_case_1}, \ref{subsection:generic_case_3} and \ref{subsection:The_critical_case_rank_1} we only consider $a>0$. When $a < 0$, the largest eigenvalue in the spiked source model defined by \eqref{eq:generalized_pdf} has the same distribution as the negative value of the smallest eigenvalue of the spiked source model that is defined by the same formula but with the potential function $\hat{V}(x) = V(-x)$ and the external source matrix $-\A[n-j+1]$. Since $\hat{V}(x)$ is regular as long as $V(x)$ is, and the non-zero eigenvalue of $−\A[n-j+1]$ is positive, the analysis in this paper applies for that spiked source model. We need to keep track of the smallest eigenvalue in the new spiked source model, and it can be done in the same way that we analyze the largest one. It can be checked that the limiting distribution of the smallest eigenvalue is not affected by the positive external source eigenvalue $a$, corresponding to the $a < 0$ case of Theorem \ref{thm:thm_rank_1}\ref{enu:thm:thm_rank_1:a}. We skip any further remarks.

\section{When $0 < a< \frac12 V'(\redge)$} \label{subsection:generic_case_1} 

As outlined in Section~\ref{subsection:outline_of_the_proof}, we need to show that the $L^2$ norm of $K_{n-j,n} \chi_{\Int}\tilde{\psi}_{n-j}$ is uniformly bounded in $n$, and need to evaluate $\langle \tilde{\psi}_{n-j} ,  \psi_{n-j}   \rangle_{E}$ asymptotically for appropriate choices of the interval $E$. The key part is the asymptotic evaluation of the function $\tilde{\psi}_{n-j}(x)$. 
In Subsection~\ref{sec:GammalessV} we first evaluate $\bfGamma_{n-1}(a)$ asymptotically and then use this in Subsection~\ref{sec:sub21} to evaluate $\tilde{\psi}_{n-j}(x)$. The remaining subsections are devoted to the proof of the main theorems in each sub-case.

\subsection{Asymptotic evaluation of $\bfGamma_{n-j}(a):=\bfGamma_{n-j}(a; n)$}\label{sec:GammalessV}

From the definition~\eqref{eq:new_formula_of_tilde_gamma},  
\begin{equation}\label{eq:eq80}
\begin{split}
	\bfGamma_{n-j}(a) 
	&= \int_{-\infty}^c \varphi_{n-j}(y) e^{nay} dy + \int_c^\infty \varphi_{n-j}(y) e^{nay} dy
\end{split}
\end{equation}
for any $c\in \R$. We take $c=c(a)$ as in Definition~\ref{def:ca}. Note that $c(a)>\redge$ since $a<\frac12 V'(\redge)$. 
The reason that we split the integral at $y=c$ will be clear in the below, particularly the paragraph before~\eqref{eq:estimation_of_contour_integral_case_1}. 

Let 
\begin{equation} \label{eq:formula_of_Cauchy_transform}
	(Cf)(z):= \frac1{2\pi i} \int_{\R} \frac{f(y)}{y-z}dy
\end{equation}
denote the Cauchy transform of function $f\in L^2(\R)$. Using $C_+-C_-=1$ and noting that $\varphi_{n-j}$ is analytic, we have 
\begin{equation}
\begin{split}
	 \int_{-\infty}^c \varphi_{n-j}(y) e^{nay} dy 
	 = \int_{-\infty}^c \big( (C_+\varphi_{n-j})(y) - (C_-\varphi_{n-j})(y)\big)  e^{nay} dy.  
\end{split}
\end{equation}
Note that $(C\varphi_{n-j})(z)e^{naz} \to 0$ exponentially as $\Re(z)\to -\infty$ since $a>0$. Therefore, we can deform the contour and obtain 
\begin{equation}
	\int^c_{-\infty} \varphi_{n-j}(y) e^{nay} dy =  - \int_{\Gamma_+\cup\Gamma_-} (C\varphi_{n-j})(z) e^{naz} dz,
\end{equation}
where, 
with a constant $C_{\Gamma}>\frac1{2a}$,
\begin{equation} \label{eq:definition_of_contour_Gamma}
\begin{split}
	\Gamma_+ := & \{ c+ it \mid 0<t<C_\Gamma \}\cup \{ c+iC_{\Gamma} - t \mid t \ge 0 \} , \\
	\Gamma_- := & \textnormal{ complex conjugation of }{\Gamma}_+.
\end{split}
\end{equation}
The contours are oriented as  indicated in Figure~\ref{fig:Gamma_complex}.
Therefore we find 
\begin{equation} \label{eq:division_of_psi_n_exponent}
\bfGamma_{n-j}(a) =  - \int_{\Gamma_+\cup\Gamma_-} (C\varphi_{n-j})(z)e^{naz} dz + \int^{\infty}_c \varphi_{n-j}(y)e^{nay} dy.
\end{equation}
\begin{figure}[htp]
\centering
\includegraphics{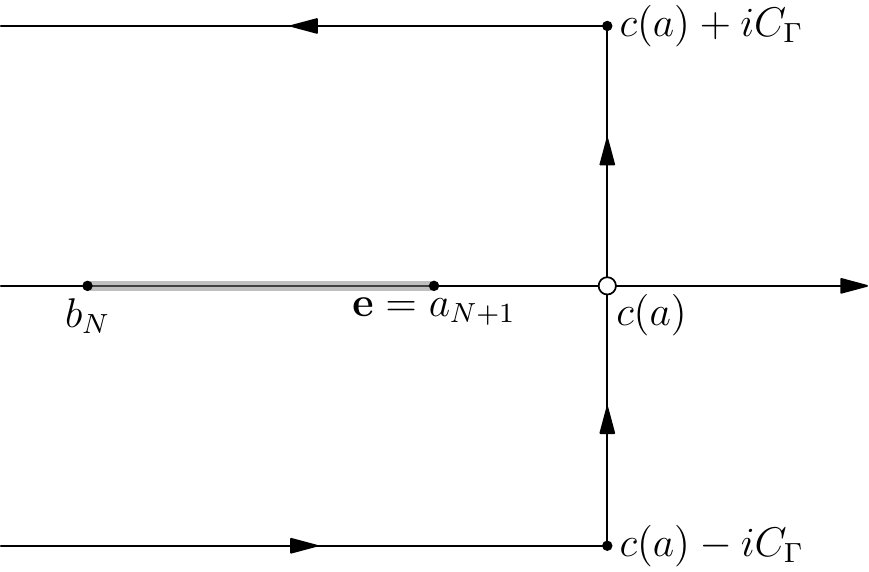}
\caption{The contours $\Gamma_+$ and $\Gamma_-$.} \label{fig:Gamma_complex}
\end{figure}

\bigskip

Let $\delta$ be given in Proposition~\ref{prop:asy_for_mult_cut}.
Let $\epsilon<\min\{c(a)-\redge, 2\delta\}$ be a small enough positive constant, independent of $n$, such that all maximizers of $\Gfn(x;a)$ in $[c, \infty)$ are in the interval $(\redge+\epsilon, \infty)$.
Recall the asymptotics of $(C\varphi_{n-j})(z)$ summarized in Section~\ref{section:Result_of_RHP}. Since the contours $\Gamma_{\pm}$ lie in $B_\delta$ (in Figure~\ref{fig:AB_delta}) in Section~\ref{section:Result_of_RHP}, from the asymptotic formula~\eqref{eq:asy_of_Cvarphi_n-j} for $(C\varphi_{n-j})(z)$, 
\begin{equation}\label{eq:CphyMremark}
\int_{\Gamma_+\cup\Gamma_-} (C\varphi_{n-j})(z)e^{naz} dz  = \int_{\Gamma_+\cup\Gamma_-} \tilde{M}_{j,n} (z) e^{n(\Hfn(z;a) - \ell/2)}  dz.
\end{equation}
Here we recalled the definition~\eqref{eq:definition_of_GH}, $\Hfn(z;a):=-\gfn(z)+az+\ell$.


We now use the method of steepest-descent to evaluate the integral asymptotically. By Lemma~\ref{fact:first}, $\Hfn'(c(a);a)=0$ and $\Hfn''(c(a);a)> 0$. 
It is straightforward to check, with the help of the formula of $\gfn(x)$ in \eqref{eq:definition_of_g}, that for $z(t) = c+it$, $t>0$,  the function $\Re\Hfn(z(t);a)$ in $t$ satisfies
\begin{equation} \label{eq:decreasing_counterclockwise_1}
\frac{d}{dt} \Re\Hfn(z(t);a) = -\int \frac{t}{(c-s)^2+t} \Psi(s) ds < 0.
\end{equation}
Also 
for $z(t) = c+iC_{\Gamma}-t \in \Gamma_+$, $t\ge 0$, 
\begin{equation} \label{eq:decreasing_counterclockwise_2}
\frac{d}{dt} \Re\Hfn(z(t);a) = -\int \frac{t-c+s}{(t-c+s)^2+C^2_{\Gamma}} \Psi(s) ds - a,
\end{equation}
is negative for all $t\ge 0$ if $C_{\Gamma}>1/(2a)$. Hence $\Re\Hfn(z;a)$ decreases as $z$ moves along $\Gamma_+$ counterclockwise. Similarly $\Re\Hfn(z;a)$ increases as $z$ moves along $\Gamma_-$ counterclockwise. Therefore $\overline{\Gamma_+\cup\Gamma_-}$ is a curve of steep-descent for $\Hfn$ with the saddle point at $z=c$. The fact that $z=c$ is a saddle point of $\Hfn$ is the reason that we have split the integral in~\eqref{eq:eq80} at $c$.

From Proposition~\ref{prop:asy_for_mult_cut}\ref{enu:prop:asy_for_mult_cut:b}, $\tilde{M}_{j,n}=\tilde{\M}_{j,n}(z)(1+O(n^{-1}))$ and $\tilde{\M}_{j,n}(z)$ is analytic in $B_{\delta}$. Moreover,  $\tilde{M}_{j,n}$, $\tilde{M}'_{j,n}$ and $1/\tilde{M}_{j,n}$ are uniformly $O(1)$ in a neighborhood of $c(a)$ and $\tilde{M}_{j,n}=O(z^{j-1})$ uniformly in $n$ as $z\to\infty$. Thus the method of steepest-descent can be applied to~\eqref{eq:CphyMremark}  and we obtain 
\begin{equation} \label{eq:estimation_of_contour_integral_case_1}
\begin{split}
	 \int_{\Gamma_+\cup\Gamma_-} (C\varphi_{n-j})(z)e^{naz} dz  
= & \frac{i\sqrt{2\pi} \tilde{\M}_{j,n}(c) e^{n(\Hfn(c(a);a)-\ell/2)}}{\sqrt{n\Hfn''(c(a);a)}} (1+o(1)).
\end{split}
\end{equation}


Now consider the second integral in~\eqref{eq:division_of_psi_n_exponent}. From the asymptotics~\eqref{eq:asy_of_varphi_n-j} for $\varphi_{n-j}$ in $B_{\delta}$,  
\begin{equation}\label{eq:GtoMthen0}
	\int^{\infty}_c \varphi_{n-j}(y)e^{nay} dy 
	= \int^{\infty}_c M_{j,n}(y) e^{n(\Gfn(y;a)-\ell/2)}dy
\end{equation}
Using that $M_{j,n}$ is uniformly bounded in any compact subset of $[\redge+\epsilon, \infty)$, and $M_{j,n}(y) = O(y^{-j})$ as $y \to \infty$, and using that $\Gfn(y;a) \to -\infty$ at least linearly, we obtain the trivial estimate  that
\begin{equation}\label{eq:GtoMthen}
	\int^{\infty}_c \varphi_{n-j}(y)e^{nay} dy = O(e^{n(\Gfn_{\max}(a)-\ell/2)})
\end{equation}
where $\Gfn_{\max}(a):= \max\{ \Gfn(y;a) \mid y\ge c(a)\}$. Together with~\eqref{eq:estimation_of_contour_integral_case_1}, we obtain the following result. 
Recall the properties of $\Gfn$ and $\Hfn$ in Subsection~\ref{subsection:critical_values}.

\begin{itemize}
\item
Suppose that $a<\acc$. Then $\Hfn(c;a)> \Gfn_{\max}(a)$. Therefore,~\eqref{eq:estimation_of_contour_integral_case_1} is exponentially larger than~\eqref{eq:GtoMthen} and  we obtain
\begin{equation} \label{eq:tilde_gamma_case_1_sub}
	\bfGamma_{n-j}(a) = \sqrt{\frac{2\pi}{n}}e^{-n\ell/2} \frac{-i \tilde{\M}_{j,n}(c(a))}{\sqrt{\Hfn''(c(a);a)}} e^{n\Hfn(c(a);a)} (1+o(1)).
\end{equation}

\item 
Suppose that $\acc< a< \frac12V'(\redge)$ (assuming that $V$ is such that $\acc<\frac12 V'(\redge)$). Then $\Gfn_{\max}(a)> \Hfn(c;a)$ and hence~\eqref{eq:GtoMthen} is exponentially larger than~\eqref{eq:estimation_of_contour_integral_case_1}. 
Suppose that $a\notin\mathcal{J}_V$ and let $x_0=x_0(a)\in (c(a),\infty)$ be the unique point $\Gfn_{\max}(a)$ is attained. 
If $\Gfn''(x_0;a)\neq 0$, using the Laplace's method applied to~\eqref{eq:GtoMthen0} (using the properties of $M_{j,n}$ in Proposition~\ref{prop:asy_for_mult_cut} \ref{enu:prop:asy_for_mult_cut:a}), we obtain
\begin{equation} \label{eq:tilde_gamma_case_1_super}
\begin{split}
	\bfGamma_{n-j}(a) 
	&= \sqrt{\frac{2\pi}{-n \Gfn''(x_0;a)}} \M_{j,n}(x_0) e^{n\Gfn(x_0;a)-n\ell/2} (1+o(1)).
\end{split}
\end{equation}
If $\Gfn''(x_0;a)=0$ and $\Gfn^{(2k)}(x_0;a)\neq 0$, $\Gfn^{(j)}(x_0;a) = 0$ for $j=1, \cdots, 2k-1$, then the Laplace's method implies that 
\begin{equation} \label{eq:tilde_gamma_case_1_super11}
	\bfGamma_{n-j}(a) = \bigg(\frac{(2k)!}{-n\Gfn^{(2k)}(x_0;a)}\bigg)^{1/(2k)} \int_{-\infty}^\infty e^{-x^{2k}}dx 
\M_{j,n}(x_0) e^{n\Gfn(x_0;a)-n\ell/2} (1+o(1)).
\end{equation}
When $a\in \mathcal{J}_V$, the contributions to~\eqref{eq:GtoMthen} at each maximizer should be added. Examples of this case are in~\eqref{eq:bla} and~\eqref{eq:calculation_of_Gamma_thm_7}.

\item 
If $a=\acc$ (note that since we assumed $a<\frac12 V'(\redge)$, this implies that $V$ is such that $\acc<\frac12 V'(\redge)$), then $\Hfn(c;\acc)=\Gfn_{\max}(\acc)$. If we further assume that $\acc\notin \mathcal{J}_V$ and $\Gfn''(x_0(\acc); \acc)\neq 0$, \eqref{eq:estimation_of_contour_integral_case_1} and~\eqref{eq:GtoMthen} are of same order. Then by Laplace's method applied to \eqref{eq:GtoMthen},
\begin{equation} 
	\bfGamma_{n-j}(\acc) = \sqrt{\frac{2\pi}{n}}e^{-n\ell/2} \left( \frac{\M_{j,n}(x_0(\acc))}{\sqrt{-\Gfn''(x_0(\acc))}} + \frac{-i \tilde{\M}_{j,n}(c(\acc))}{\sqrt{\Hfn''(c(\acc))}}   \right) 
e^{n\Gfn(x_0(\acc);\acc)} (1+ o(1)).
\end{equation}

We can consider a double scaling case when 
\begin{equation} \label{eq:a_is_around_acc<halfV'}
	a=\acc+ \frac{\alpha}{n}
\end{equation}
where $\alpha$ is in a compact subset of $\R$.  By the definition of $c$ and $x_0$, a direct computation shows that
\begin{align}
	\Gfn(x_0(a);a) = & \Gfn(x_0(\acc);\acc) + \frac{\alpha x_0(\acc)}{n} + O(n^{-2}), \label{eq:fluctuation_of_G(x_0,a)} \\
\Hfn(c(a);a) = & \Hfn(c(\acc);\acc) + \frac{\alpha c(\acc)}{n} + O(n^{-2}).
\end{align}
This implies that 
\begin{multline} \label{eq:tilde_gamma_case_1_critical_II}
	\bfGamma_{n-j}(a) = \\ \sqrt{\frac{2\pi}{n}}e^{-n\ell/2} \left( \frac{\M_{j,n}(x_0(\acc))}{\sqrt{-\Gfn''(x_0(\acc);\acc)}} \right. 
+ \left. \frac{-i \tilde{\M}_{j,n}(c(\acc))}{\sqrt{\Hfn''(c(\acc);\acc)}} e^{\alpha(c(\acc) - x_0(\acc))} \right) e^{n\Gfn(x_0(a);a)} (1 + o(1)).
\end{multline}
If $\Gfn''(x_0(\acc))=0$, then the term~\eqref{eq:GtoMthen} is greater than~\eqref{eq:estimation_of_contour_integral_case_1} by a fractional power of $n$ (see~\eqref{eq:tilde_gamma_case_1_super11}), and hence the term involving $\tilde{\M}_{j,n}(c(\acc))$ disappears in~the expression of $\bfGamma_{n-j}(\acc)$. On the other hand, if $\acc\in \mathcal{J}_V$, then there are more than one maximizers of $\Gfn(y;\acc)$ making contributions in~\eqref{eq:GtoMthen}. We do not state the formulas explicitly here but instead state them in the appropriate subsections where they arise. 

\end{itemize}



\subsection{Asymptotic evaluation of $\tilde{\psi}_{n-j}(x):=\tilde{\psi}_{n-j}(x;a)$}\label{sec:sub21}

\subsubsection{Algebraic formula}

From~\eqref{eq:new_formula_of_tilde_psi}, 
\begin{equation}\label{eq:bfGamma_temp1}
	\bfGamma_{n-j}(a)\tilde{\psi}_{n-j}(x)
	=  e^{n(ax-V(x)/2)} - \sum^{n-j-1}_{i=0}\psi_i(x) 
	\int_{-\infty}^\infty \varphi_i(y)e^{nay} dy. 
\end{equation}
This can be written in the following way. Let $c$ be any constant such that $c>\redge$. 
We will take $c=c(a)$ as in Definition~\ref{def:ca} in the subsequent sections for asymptotic analysis, but the following result holds for any $c>\redge$. 

\begin{lemma}\label{lem:tpsinew}
For $0<a<\frac12 V'(\redge)$, we have, with $\Gamma_{\pm}$ given in~\eqref{eq:definition_of_contour_Gamma},
for $x \in \R\setminus\{ c\}$, 
\begin{equation} \label{eq:formula_of_tilde_psi_less_than_c}
\begin{split}
	&\bfGamma_{n-j}(a) \tilde{\psi}_{n-j}(x) = e^{n( ax - V(x)/2)} 1_{(c, \infty)}(x) \\
&\quad + \int_{\Gamma_+\cup\Gamma_-} \CK_{n-j,n}(x,z) e^{naz} dz  - \int^{\infty}_{c} K_{n-j,n}(x,y)  e^{n(ay-V(y)/2)} dy \\
\end{split}
\end{equation}
where
\begin{equation} \label{eq:CK}
\begin{split}
	&\CK_{k,n}(x,z):= \frac{\gamma_{k-1}}{\gamma_{k}} \frac{\psi_{k}(x)(C\varphi_{k-1})(z) - \psi_{k-1}(x)(C\varphi_{k})(z)}{x-z} , \quad x\neq z
\end{split}
\end{equation}
and $K_{k,n}(x,y)$ is defined in~\eqref{eq:Christoffel_Darboux_K_n-1,n}.
\end{lemma}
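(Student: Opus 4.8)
The plan is to start from the algebraic identity~\eqref{eq:bfGamma_temp1} and reorganize the sum $\sum_{i=0}^{n-j-1}\psi_i(x)\int_\R \varphi_i(y)e^{nay}\,dy$ into the Christoffel--Darboux form. First I would split the inner integral $\int_\R \varphi_i(y)e^{nay}\,dy$ at $y=c$ exactly as in~\eqref{eq:eq80}, writing it as $\int_{-\infty}^c + \int_c^\infty$. For the piece on $(c,\infty)$, summing against $\psi_i(x)$ and using the Christoffel--Darboux identity~\eqref{eq:Christoffel_Darboux_K_n-1,n} immediately produces $\int_c^\infty K_{n-j,n}(x,y)\varphi_{n-j}$-style kernel terms; more precisely, since $\varphi_i(y)e^{nay} = \psi_i(y)e^{n(ay-V(y)/2)}$, one gets $\int_c^\infty K_{n-j,n}(x,y)e^{n(ay-V(y)/2)}\,dy$, which accounts for the last term in~\eqref{eq:formula_of_tilde_psi_less_than_c}.

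For the piece on $(-\infty,c)$, I would repeat the Cauchy-transform manipulation used to derive~\eqref{eq:division_of_psi_n_exponent}: write $\varphi_i = C_+\varphi_i - C_-\varphi_i$ (valid since $\varphi_i$ is the boundary value difference of its Cauchy transform and $\varphi_i$ is entire), note $(C\varphi_i)(z)e^{naz}\to 0$ as $\Re z\to-\infty$ because $a>0$, and deform the contour to $-\int_{\Gamma_+\cup\Gamma_-}(C\varphi_i)(z)e^{naz}\,dz$. Summing $\psi_i(x)$ times this over $i=0,\dots,n-j-1$ and invoking the Christoffel--Darboux-type identity for the mixed kernel~\eqref{eq:CK} — namely $\sum_{i=0}^{n-j-1}\psi_i(x)(C\varphi_i)(z) = \CK_{n-j,n}(x,z)$, which follows from the telescoping structure of the Cauchy transforms of $\varphi_i$ in the same way the usual Christoffel--Darboux formula follows — yields $-\int_{\Gamma_+\cup\Gamma_-}\CK_{n-j,n}(x,z)e^{naz}\,dz$. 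Carrying the overall minus sign from~\eqref{eq:bfGamma_temp1} through flips this to $+\int_{\Gamma_+\cup\Gamma_-}\CK_{n-j,n}(x,z)e^{naz}\,dz$, matching the middle term. The leftover term $e^{n(ax-V(x)/2)}$ from~\eqref{eq:bfGamma_temp1} combines with the $(c,\infty)$ bookkeeping: writing $e^{n(ax-V(x)/2)} = e^{n(ax-V(x)/2)}1_{(c,\infty)}(x) + e^{n(ax-V(x)/2)}1_{(-\infty,c)}(x)$, the indicator-on-$(c,\infty)$ piece is exactly the first term in~\eqref{eq:formula_of_tilde_psi_less_than_c}, and one checks that the indicator-on-$(-\infty,c)$ piece is absorbed when the contour $\Gamma_+\cup\Gamma_-$ is deformed across the point $x$ (for $x<c$ the pole of $\CK_{n-j,n}(x,z)$ at $z=x$ contributes a residue equal to $-e^{n(ax-V(x)/2)}$, for $x>c$ it does not), so the formula is uniform on $\R\setminus\{c\}$ once the indicator $1_{(c,\infty)}(x)$ is inserted. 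The condition $x\ne c$ is needed precisely so that $x$ does not lie on the saddle where the contour passes.

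I expect the main obstacle to be the careful justification of the contour deformation in the presence of the pole of $\CK_{n-j,n}(x,z)$ at $z=x$: one must track whether $x$ lies to the left or right of (and on which side of) $\Gamma_+\cup\Gamma_-$, account for the residue $\mathrm{Res}_{z=x}\CK_{n-j,n}(x,z)e^{naz}$, and verify this residue is exactly what converts $e^{n(ax-V(x)/2)}$ (unrestricted) into $e^{n(ax-V(x)/2)}1_{(c,\infty)}(x)$. The growth of $(C\varphi_i)(z)e^{naz}$ at the two infinite ends of $\Gamma_\pm$ must also be controlled to justify convergence and the contour move; this uses $a>0$ together with $C_\Gamma > 1/(2a)$ as in~\eqref{eq:definition_of_contour_Gamma}. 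The other ingredients — the two Christoffel--Darboux identities and interchanging the finite sum with the integrals — are routine. Finally, I would record that the identity holds for every $c>\redge$, not just $c=c(a)$, since nothing in the derivation used the defining property $\gfn'(c)=a$; only convergence of the $(-\infty,c)$ contour integral, guaranteed by $a>0$, and analyticity of the $\varphi_i$ were used.
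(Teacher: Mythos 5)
Your overall architecture matches the paper's: split the integral defining $\bfGamma_{n-j}(a)$ and each $\int_\R\varphi_i(y)e^{nay}dy$ at $y=c$, convert the $(-\infty,c)$ piece into a contour integral of Cauchy transforms over $\Gamma_+\cup\Gamma_-$, and apply Christoffel--Darboux to both pieces. The $(c,\infty)$ term, the justification of the contour move via $a>0$ and $C_\Gamma>1/(2a)$, and your closing observation that the identity holds for every $c>\redge$ are all correct.

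The gap is the mixed Christoffel--Darboux identity. The claim $\sum_{i=0}^{n-j-1}\psi_i(x)(C\varphi_i)(z)=\CK_{n-j,n}(x,z)$ is false. Writing the left side as $\frac{1}{2\pi i}\int_\R\frac{1}{w-z}\sum_i\psi_i(x)\varphi_i(w)\,dw$, applying Christoffel--Darboux inside, and using the partial fraction $\frac{1}{(w-z)(x-w)}=\frac{1}{x-z}\bigl(\frac{1}{w-z}-\frac{1}{w-x}\bigr)$ produces, besides $\CK_{n-j,n}(x,z)$, a second term proportional to $\bigl(\psi_{n-j}(x)(C\varphi_{n-j-1})(x)-\psi_{n-j-1}(x)(C\varphi_{n-j})(x)\bigr)/(x-z)$, which by the Wronskian identity $\det Y_{n-j}\equiv 1$ (i.e.\ \eqref{eq:algebraic_property_of_RHP}) equals $\frac{1}{2\pi i}\frac{e^{-nV(x)/2}}{x-z}$. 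It is the contour integral of \emph{this} extra term, $\frac{e^{-nV(x)/2}}{2\pi i}\int_{\Gamma_+\cup\Gamma_-}\frac{e^{naz}}{x-z}dz$, that equals $-e^{n(ax-V(x)/2)}$ for $x<c$ and $0$ for $x>c$ and thereby converts the standalone $e^{n(ax-V(x)/2)}$ into $e^{n(ax-V(x)/2)}1_{(c,\infty)}(x)$. Your substitute mechanism --- a residue of $\CK_{n-j,n}(x,z)$ at $z=x$ picked up ``when the contour is deformed across $x$'' --- cannot work as stated: no such deformation occurs (the final formula retains $\int_{\Gamma_+\cup\Gamma_-}\CK_{n-j,n}e^{naz}dz$ on the original contour), and the full sum $\sum_i\psi_i(x)(C\varphi_i)(z)$ has no pole at $z=x$ at all (the poles of $\CK_{n-j,n}$ and of the extra Cauchy-kernel term cancel), so with your identity the derivation terminates with $e^{n(ax-V(x)/2)}$ unrestricted, contradicting the lemma. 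A secondary point: since the manipulation involves the boundary value $(C\varphi_i)(x)$ at the real point $x$, where the Cauchy transform jumps, the identity should first be proved for $x\in\compC\setminus(\R\cup\Gamma_+\cup\Gamma_-)$ and then extended to $\R\setminus\{c\}$ by analytic continuation, as the paper does.
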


\begin{proof}
By the same calculation that leads to~\eqref{eq:division_of_psi_n_exponent}, ~\eqref{eq:bfGamma_temp1} equals 
\begin{equation} \label{eq:first_integral_formula_of_gamma_psi}
\begin{split}
	&\bfGamma_{n-j}(a;n)\tilde{\psi}_{n-j}(x)
	= e^{n(ax-V(x)/2)}\\
	&+ \sum^{n-j-1}_{i=0} \psi_i(x) \int_{\Gamma_+\cup\Gamma_-}  (C\varphi_i) (z)e^{naz} dz 
	 - \sum^{n-j-1}_{i=0} \psi_i(x) \int^{\infty}_c \varphi_i(y)e^{nay} dy.
\end{split}
\end{equation}
We exchange the sum and the integral in both terms. 
The second sum can be simplified by using the Christoffel-Darboux formula 
and becomes the last integral in~\eqref{eq:formula_of_tilde_psi_less_than_c}. 
To analyze the first sum, we first take $x \in \compC \setminus (\realR \cup \Gamma_+ \cup \Gamma_-)$. By using the definition of Cauchy operator and from the Christoffel-Darboux formula, we have
\begin{align}
	&\sum^{n-j-1}_{i=0} \psi_i(x) (C\varphi_i)(z) 
	 =  \frac{1}{2\pi i} \int_{\realR} \frac{1}{w-z}  \sum^{n-j-1}_{i=0}  \psi_i(x)\varphi_i(w)  dw \notag \\
	= & \frac{\gamma_{n-j-1}}{\gamma_{n-j}} \frac{1}{2\pi i} \int_{\realR} \frac{\psi_{n-j}(x)\varphi_{n-j-1}(w) - \psi_{n-j-1}(x)\varphi_{n-j}(w)}{(w-z)(x-w)} dw. \notag 
\end{align}
Using the partial fraction formula and the definition of the Cauchy transformation again, this equals 
\begin{align}
& \frac{\gamma_{n-j-1}}{\gamma_{n-j}} \frac{1}{x-z}  \frac{1}{2\pi i} \bigg[ \int_{\realR} \frac{\psi_{n-j}(x)\varphi_{n-j-1}(w) - \psi_{n-j-1}(x)\varphi_{n-j}(w)}{w-z} dw \notag \\
& \phantom{\frac{1}{x-z} \frac{\gamma_{n-j-1}}{\gamma_{n-j}} \frac{1}{2\pi i}} -  \int_{\realR} \frac{\psi_{n-j}(x)\varphi_{n-j-1}(w) - \psi_{n-j-1}(x)\varphi_{n-j}(w)}{w-x} dw \bigg] \notag \\
= & \frac{\gamma_{n-j-1}}{\gamma_{n-j}} \bigg[ \frac{\psi_{n-j}(x)(C\varphi_{n-j-1})(z)  - \psi_{n-j-1}(x)(C\varphi_{n-j})(z)}{x-z}  \notag \\
& \phantom{\frac{1}{x-z} \frac{\gamma_{n-j-1}}{\gamma_{n-j}} \frac{1}{2\pi i}} - \frac{\psi_{n-j}(x)(C\varphi_{n-j-1})(x) - \psi_{n-j-1}(x)(C\varphi_{n-j})(x)}{x-z} \bigg] \notag \\
= & \frac{\gamma_{n-j-1}}{\gamma_{n-j}} \bigg[ \frac{\psi_{n-j}(x)(C\varphi_{n-j-1})(z)  - \psi_{n-j-1}(x)(C\varphi_{n-j})(z)}{x-z} \bigg] \notag \\
& 
+ \frac1{2\pi i}\frac{e^{-nV(x)/2}}{x-z}  , \label{eq:Christoffel_Darboux_Cauchy}
\end{align}
where the identity~\eqref{eq:algebraic_property_of_RHP} is used in the last line.
Hence the first sum on the right-hand-side of~\eqref{eq:first_integral_formula_of_gamma_psi} satisfies 
\begin{equation}
\begin{split}
	& \sum^{n-j-1}_{i=0} \psi_i(x) \int_{\Gamma_+\cup\Gamma_-}  (C\varphi_i) (z)e^{naz} dz  \\
	= & \frac{\gamma_{n-j-1}}{\gamma_{n-j}}   \int_{\Gamma_+\cup\Gamma_-} 
	\frac{\psi_{n-j}(x)(C\varphi_{n-j-1})(z)  - \psi_{n-j-1}(x)(C\varphi_{n-j})(z)}{x-z} e^{naz}dz  \\
	& + \frac{e^{-nV(x)/2} } {2\pi i} \int_{\Gamma_+\cup\Gamma_-}  \frac{e^{naz}}{x-z} dz.
\end{split}
\end{equation}
Note that this was proven for $x\in \compC\setminus(\R\cup\Gamma_+\cup\Gamma_-)$, but the identity holds for $x\in \R\setminus\{c\}$ as well by analytic continuation. 
The last integral equals $-2\pi i e^{nax}$ for $x\in (-\infty, c)$ and equals $0$ for $x\in (c,\infty)$ by Cauchy integral formula. Therefore we obtain~\eqref{eq:formula_of_tilde_psi_less_than_c}.
\end{proof}

\subsubsection{For $x\ge \redge+\epsilon$:}


Take $c=c(a)$ as in Definition~\ref{def:ca} in the formula of Lemma~\ref{lem:tpsinew}.
Fix $\epsilon>0$ small enough so that $[c, \infty)\subset [\redge+\epsilon, \infty)$. 


\begin{lemma} \label{eq:approx_of_tilde_psi_far}
For $x \in [\redge+\epsilon, \infty)$, 
\begin{equation} \label{eq:tilde_psi_n-1_less_than_c_1}
\tilde{\psi}_{n-j}(x) = e^{n (\Gfn(x)-\Hfn(x))/2}  \bigg\{ \frac1{e^{n\ell/2} \bfGamma_{n-j}(a)}e^{n\Hfn(x)} 1_{(c,\infty)}(x)  + O(\sqrt{n}(1 + \lvert x \rvert)^{-j}) \bigg\}
\end{equation}
as $n\to\infty$ and $j=O(1)$.
\end{lemma}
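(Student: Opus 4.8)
The plan is to start from the algebraic formula in Lemma~\ref{lem:tpsinew}, specialized to $c=c(a)$, and estimate each of the three terms on the right-hand side of~\eqref{eq:formula_of_tilde_psi_less_than_c} for $x\in[\redge+\epsilon,\infty)$. Dividing through by $\bfGamma_{n-j}(a)$, the first term produces exactly $\frac{1}{e^{n\ell/2}\bfGamma_{n-j}(a)}e^{n(ax-V(x)/2)}1_{(c,\infty)}(x)$; since $ax-V(x)/2 = \tfrac12(\Gfn(x)+\Hfn(x)) + (\text{stuff with }\gfn)$, one checks with~\eqref{eq:definition_of_GH} that $ax-V(x)/2 = \tfrac12(\Gfn(x)-\Hfn(x)) + \Hfn(x) - \ell/2$, which after multiplying by $e^{n\ell/2}$ inside the brace matches the stated main term $e^{n(\Gfn(x)-\Hfn(x))/2}\cdot\frac{1}{e^{n\ell/2}\bfGamma_{n-j}(a)}e^{n\Hfn(x)}1_{(c,\infty)}(x)$. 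So the content of the lemma is really that the two integral terms, divided by $\bfGamma_{n-j}(a)$, are $O(\sqrt n\,(1+|x|)^{-j})\cdot e^{n(\Gfn(x)-\Hfn(x))/2}$ uniformly on $[\redge+\epsilon,\infty)$.

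For the contour integral $\int_{\Gamma_+\cup\Gamma_-}\CK_{n-j,n}(x,z)e^{naz}\,dz$, I would insert the strong asymptotics of Section~\ref{section:Result_of_RHP}: $\psi_{n-j}(x) = e^{n\Gfn(x)/2}\cdot O((1+|x|)^{-j})$ (more precisely $M_{j,n}(x)e^{n(\Gfn(x)-\ell/2)/2}$-type behavior, with $M_{j,n}=O(x^{-j})$ at infinity) on $[\redge+\epsilon,\infty)$, while $(C\varphi_{n-j})(z) = \tilde M_{j,n}(z)e^{n(\Hfn(z;a)-\ell/2)}$ on the contours $\Gamma_\pm\subset B_\delta$, with $\tilde M_{j,n}=O(z^{j-1})$. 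Because $\overline{\Gamma_+\cup\Gamma_-}$ is a steep-descent curve for $\Hfn$ through the saddle $c$ (this was established in~\eqref{eq:decreasing_counterclockwise_1}--\eqref{eq:decreasing_counterclockwise_2}), the $z$-integral is dominated by a neighborhood of $z=c$ and contributes $e^{n(\Hfn(c;a)-\ell/2)}$ up to a factor $n^{-1/2}$ from the Gaussian — exactly as in~\eqref{eq:estimation_of_contour_integral_case_1}, except now there is an extra factor $\psi_{n-j}(x)/(x-z)$. The factor $1/(x-z)$ is harmless since $x\ge\redge+\epsilon$ is bounded away from $\Gamma_\pm$ except near $c$ where it is $O(1/|x-c|)$, still $O(1)$ uniformly on $[\redge+\epsilon,\infty)$ away from $c$ and integrable across $c$. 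Pulling out $e^{n\Gfn(x)/2}(1+|x|)^{-j}$ from $\psi_{n-j}(x)$, this integral contributes $e^{n\Gfn(x)/2}\cdot O((1+|x|)^{-j})\cdot O(n^{-1/2})\,e^{n\Hfn(c;a)}$ (absorbing $e^{-n\ell/2}$). Similarly, for the last integral $\int_c^\infty K_{n-j,n}(x,y)e^{n(ay-V(y)/2)}\,dy$, I would use the Christoffel–Darboux form of $K_{n-j,n}$ and the asymptotics of $\psi_{n-j},\psi_{n-j-1}$ to get $K_{n-j,n}(x,y) = O((1+|x|)^{-j})e^{n\Gfn(x)/2}\cdot(\text{bounded in }y\text{ on compacts})$, and the $y$-integral $\int_c^\infty e^{n(ay-V(y)/2)}e^{-n\Gfn(y)/2}\cdots$ — here $ay-V(y)/2 - \Gfn(y)/2 = -\Hfn(y)/2+\ell/2+\cdots$ — reduces to a Laplace-type integral whose size is $O(e^{n(\text{something}\le\Hfn(c;a))})$ times at most a polynomial in $n$; the key is $\Hfn$ has its minimum on $[c,\infty)$ at $c$ by Lemma~\ref{fact:first}\ref{enu:fact:first:a}, so this is $O(e^{n\Hfn(c;a)})$ up to powers of $n$, again absorbed into the $O(\sqrt n)$.

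Finally, I combine: both integral terms are bounded by $e^{n\Gfn(x)/2}\cdot O(\sqrt n\,(1+|x|)^{-j})\cdot e^{-n\ell/2}e^{n\Hfn(c;a)}$, and dividing by $\bfGamma_{n-j}(a)$ — whose leading behavior involves $e^{-n\ell/2}e^{n\Hfn(c;a)}/\sqrt n$ in the subcritical regime~\eqref{eq:tilde_gamma_case_1_sub}, and in all regimes $|\bfGamma_{n-j}(a)|\ge c_0 e^{-n\ell/2}e^{n\max\{\Hfn(c;a),\Gfn_{\max}(a)\}}/\mathrm{poly}(n)\ge c_0 e^{-n\ell/2}e^{n\Hfn(c;a)}/\mathrm{poly}(n)$ — one obtains the bound $e^{n\Gfn(x)/2}\cdot O(\sqrt n\,(1+|x|)^{-j})$. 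To reach the stated form with the $e^{-n\Hfn(x)/2}$ prefactor, note $e^{n\Gfn(x)/2} = e^{n(\Gfn(x)-\Hfn(x))/2}\cdot e^{n\Hfn(x)/2}$, and then observe that $e^{n\Hfn(x)/2}\cdot(\text{the crude bound above})$ must still be reconciled; the cleaner route is to factor $e^{n(\Gfn(x)-\Hfn(x))/2}$ out first and track that both the error estimate and $\Hfn$'s growth are consistent — I would verify uniformity on $[\redge+\epsilon,\infty)$ by splitting into a compact piece $[\redge+\epsilon,R]$, where everything is $O(1)$ in $x$, and the tail $x\ge R$, where the $(1+|x|)^{-j}$ decay from $M_{j,n}$ and the linear growth of $\Hfn$ and the faster-than-linear decay of $\Gfn$ (Lemma~\ref{fact:first}\ref{enu:fact:first:e}) give the claimed decay. \textbf{The main obstacle} is establishing the uniformity in $x$ over the \emph{unbounded} interval $[\redge+\epsilon,\infty)$ simultaneously with the correct powers of $(1+|x|)$: the asymptotic formulas for $\psi_{n-j}$ and $K_{n-j,n}$ from Section~\ref{section:Result_of_RHP} must be controlled with error terms uniform as $x\to\infty$, and one must carefully match the exponential rates so that the factored-out $e^{n(\Gfn(x)-\Hfn(x))/2}$ is exactly what survives — this is where the algebraic identity $ax-V(x)/2 = \tfrac12(\Gfn(x)-\Hfn(x))+\Hfn(x)-\ell/2$ and the lower bound on $|\bfGamma_{n-j}(a)|$ do the essential bookkeeping.
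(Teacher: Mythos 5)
Your overall architecture matches the paper's: start from Lemma~\ref{lem:tpsinew} with $c=c(a)$, estimate the contour integral by steepest descent at the saddle $z=c$, estimate the $(c,\infty)$ integral by Laplace, and divide by a lower bound on $\bfGamma_{n-j}(a)$. But there are two genuine gaps. First, your claim that the factor $1/(x-z)$ is ``harmless \dots still $O(1)$ \dots and integrable across $c$'' is exactly the point where the argument is delicate: the saddle point $z=c$ lies \emph{on} the contour $\Gamma_+\cup\Gamma_-$, and $x$ ranges over $[\redge+\epsilon,\infty)\ni c$, so for $x$ within $o(1)$ of $c$ the Cauchy kernel is not uniformly controlled and the steepest-descent estimate $O(n^{-1/2}e^{n\Hfn(c)})$ degrades. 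The paper handles this by treating $|x-c|\ge\epsilon'$ and $|x-c|<\epsilon'$ separately; in the second case it re-derives the formula with the split point moved to $c\pm n^{-1/2}$ and the contour deformed by a semicircle of radius $n^{-1/2}$, so that $|x-z|\ge n^{-1/2}$, at the cost of losing the factor $n^{-1/2}$ in the contour estimate. That loss is precisely the origin of the $O(\sqrt{n})$ in~\eqref{eq:tilde_psi_n-1_less_than_c_1} (for $|x-c|\ge\epsilon'$ the error is actually $O((1+|x|)^{-j})$ with no $\sqrt n$); your $\sqrt n$ appears in the final bookkeeping without a source, while the step that actually produces it is the one you waved away.

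Second, your bound on $\int_c^\infty K_{n-j,n}(x,y)e^{n(ay-V(y)/2)}\,dy$ is wrong in sign and in substance. Since $ay-V(y)/2=(\Gfn(y)+\Hfn(y)-\ell)/2$ and $\psi_{n-j}(y)=O(e^{n(\Gfn(y)-\Hfn(y))/2}(1+|y|)^{-j})$, the relevant exponent is $\Gfn(y)-\ell/2$, so the integral is $O((1+|x|)^{-j}\int_c^\infty e^{n(\Gfn(y)-\ell/2)}(1+|y|)^{-j}dy)=O(e^{n(\Gfn_{\max}(a)-\ell/2)})$ up to powers of $n$ --- not $O(e^{n\Hfn(c)})$ via the minimum of $\Hfn$ as you assert (your identity $ay-V(y)/2-\Gfn(y)/2=-\Hfn(y)/2+\ell/2$ should read $+\Hfn(y)/2-\ell/2$, and in any case a growing exponent is not controlled by its minimum). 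This matters because the lemma covers all $0<a<\tfrac12V'(\redge)$, including $\acc<a<\tfrac12V'(\redge)$ where $\Gfn_{\max}(a)>\Hfn(c;a)$, so comparing both integrals only against $e^{n\Hfn(c)}$ and lower-bounding $\bfGamma_{n-j}(a)$ only by $e^{n\Hfn(c)-n\ell/2}/\mathrm{poly}(n)$ fails there. The paper instead records the two-term estimate~\eqref{eq:bfGammainother}, $\bfGamma_{n-j}(a)=O(n^{-1/2}e^{n\Hfn(c)})+O(\int_c^\infty e^{n(\Gfn(y)-\ell/2)}(1+|y|)^{-j}dy)$ \emph{with a matching lower bound}, so that each of the two integral terms in $Q_{j,n}(x)$ is dominated by the corresponding term of $\bfGamma_{n-j}(a)$ whichever exponential rate wins. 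You need that two-sided, two-term comparison to make the division by $\bfGamma_{n-j}(a)$ legitimate across the whole range of $a$.
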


\begin{proof}
Fix $\epsilon'\in (0, c-\redge-\epsilon)$.
Assume that $x$ satisfies $|x-c|\ge \epsilon'$. Noting $ax-V(x)/2 = (\Gfn(x)+ \Hfn(x)-\ell)/2$, by Lemma \ref{lem:tpsinew} we have
\begin{equation} \label{eq:Gammapsiall}
	\tilde{\psi}_{n-j}(x) =  e^{n (\Gfn(x)-\Hfn(x))/2} 
	\left\{\frac1{e^{n\ell/2} \bfGamma_{n-j}(a)}e^{n\Hfn(x)} 1_{(c,\infty)}(x) + Q_{j,n}(x) \right\},
\end{equation}
where
\begin{multline} \label{eq:expression_of_I_n(xa)}
	Q_{j,n}(x) = \frac{1}{\bfGamma_{n-j}(a)} \left[ \int_{\Gamma_+ \cup \Gamma_-} e^{n(\Hfn(x)-\Gfn(x))/2}\CK_{n-j, n}(x,z)e^{naz} dz \right. \\
 - \left. \int^{\infty}_c e^{n(\Hfn(x)-\Gfn(x))/2}K_{n-j,n}(x,y) e^{n(ay - V(y)/2)} dy \right].
\end{multline}
From~\eqref{eq:lemma_enu:estimate_K_n-1,n:1},  the second integral over $(c,\infty)$ in~\eqref{eq:expression_of_I_n(xa)} is
\begin{equation} \label{eq:estimate_of_Q_2}
 O\bigg( (1 + \lvert x \rvert)^{-j}\int_c^\infty e^{n(\Gfn(y)-\ell/2)} (1 + \lvert y \rvert)^{-j} dy\bigg).
\end{equation}
On the other hand, for the integral over $\Gamma_+\cup\Gamma_-$, \eqref{eq:asy_of_varphi_n-j} and~\eqref{eq:asy_of_Cvarphi_n-j} imply that 
\begin{equation}
\CK_{n-j, n}(x,z)e^{naz} = \frac{(M_{j,n}(x)\tilde{M}_{j+1,n}(z) - M_{j+1,n}(x)\tilde{M}_{j,n}(z))}{x-z}  e^{n(\Gfn(x)-\Hfn(x))/2}  e^{n(\Hfn(z)-\ell/2)}.
\end{equation}
By Proposition \ref{prop:asy_for_mult_cut}\ref{enu:prop:asy_for_mult_cut:a}\ref{enu:prop:asy_for_mult_cut:b}, 
\begin{equation}
\frac{(M_{j,n}(x)\tilde{M}_{j+1,n}(z) - M_{j+1,n}(x)\tilde{M}_{j,n}(z))}{x-z} = O \left( \frac{(1 + \lvert x \rvert)^{-j}(1 + \lvert z \rvert)^j}{\lvert x-z \rvert} \right).
\end{equation}
Thus using the fact that $c$ is the saddle point of $\Hfn(z)$, we have that the first integral over $\Gamma_+ \cup \Gamma_-$  in~\eqref{eq:expression_of_I_n(xa)} is
\begin{equation} \label{eq:estimate_of_Q_1001}
O \left( (1+ \lvert x \rvert)^{-j} \int_{\Gamma_+ \cup \Gamma_-} e^{n(\Hfn(z)-\ell/2)} \frac{(1 + \lvert z \rvert)^j}{\lvert x-z \rvert} dz \right) = 
O \left( (1 + \lvert x \rvert)^{-j}  n^{-1/2} e^{n\Hfn(c)} \right)
\end{equation}
On the other hand, for $\Gamma_{n-j}(a)$, we have from \eqref{eq:division_of_psi_n_exponent}, \eqref{eq:CphyMremark},~\eqref{eq:estimation_of_contour_integral_case_1} and \eqref{eq:GtoMthen0} that 
\begin{equation}\label{eq:bfGammainother}
\bfGamma_{n-j}(a) = O \left( n^{-1/2} e^{n\Hfn(c)} \right) + O \left( \int_c^\infty e^{n(\Gfn(y)-\ell/2)} (1 + \lvert y \rvert)^{-j} dy \right).
\end{equation}
Note that we also have a matching lower bound.
Comparing the estimate \eqref{eq:bfGammainother} of $\bfGamma_{n-j}(a)$ and two estimates \eqref{eq:estimate_of_Q_2} and \eqref{eq:estimate_of_Q_1001}, we find that  $Q_{j,n}(x) = O((1 + \lvert x \rvert)^{-j})$ uniformly for $x\ge \redge+\epsilon$ if $\lvert x-c \rvert \geq \epsilon'$ for a positive constant $\epsilon'$. Note that in this case the error term in~\eqref{eq:tilde_psi_n-1_less_than_c_1} does not contain $\sqrt{n}$.

Now let $x$ satisfy $\lvert x-c \rvert < \epsilon'$. In this case, we start with the formula~\eqref{eq:formula_of_tilde_psi_less_than_c} with a different choice of $c$. We replace $c$ by $c\pm n^{-1/2}$ and let $(\Gamma_+)^{\pm} \cup (\Gamma_-)^{\pm}$ be a contour  deformed from $\Gamma_+ \cup \Gamma_-$ by a semicircle of radius $n^{-1/2}$ to the right/left, respectively, as illustrated in Figures \ref{figure:x-c_in_epsilon'} and \ref{x-c_in_-epsilon'}. Here we take  the $+$ sign if $x-c \geq 0$ and take the $-$ sign if $x-c < 0$.
Then 
\begin{equation}
	\tilde{\psi}_{n-j}(x) =  e^{n (\Gfn(x)-\Hfn(x))/2} 
	\left\{\frac1{e^{n\ell/2} \bfGamma_{n-j}(a)}e^{n\Hfn(x)} 1_{(c \pm n^{-1/2},\infty)}(x) + Q^{\pm}_{j,n}(x) \right\},
\end{equation}
and
\begin{multline} \label{eq:second_Q}
Q^{\pm}_{j,n}(x) = \frac{1}{\bfGamma_{n-j}(a)}\bigg[  \int_{(\Gamma_+)^{\pm}\cup(\Gamma_-)^{\pm}} e^{n(\Hfn(x)-\Gfn(x))}\CK_{n-j, n}(x,z)e^{naz} dz  \\
-  \int_{c \mp n^{-1/2}}^\infty e^{n(\Hfn(x)-\Gfn(x))}K_{n-j,n}(x,y) e^{n(ay - V(y)/2)}  dy \bigg].
\end{multline}
The second integral has the same estimation as that in \eqref{eq:expression_of_I_n(xa)}. 
For the first integral, note that $|x-z|\ge n^{-1/2}$. Using this, and by recalling the asymptotics of a Cauchy-type integral $\int_{(\Gamma_+)^{\pm}\cup(\Gamma_-)^{\pm}} \frac{1}{x-z}  e^{n\Hfn(z)}dz = O(e^{n\Hfn(c)})$ for such $x$, we find that the first integral of~\eqref{eq:second_Q} is $O \left( (1 + \lvert x \rvert)^{-j} e^{n\Hfn(c)} \right)$
instead of $O \left( (1 + \lvert x \rvert)^{-j} n^{-1/2} e^{n\Hfn(c)} \right)$ in the case when $|x-c|\ge \epsilon'$.
Hence we obtain~\eqref{eq:tilde_psi_n-1_less_than_c_1} for $\lvert x - c \rvert < \epsilon$ by  noting that $e^{n\Hfn(c)}/(e^{n\ell} \bfGamma_{n-j}) = O(\sqrt{n})$ for $|x-c|\le n^{-1/2}$  from~\eqref{eq:bfGammainother}.
\begin{figure}[htp]
\begin{minipage}[t]{0.35\linewidth}
\centering
\includegraphics{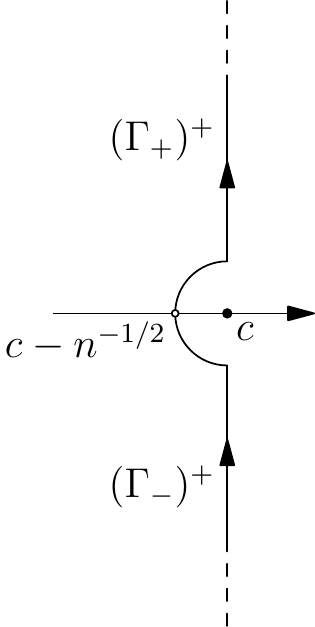}
\caption{The deformed $(\Gamma_+)^+ \cup (\Gamma_-)^+$ for $x-c \in [0,\epsilon')$.}
\label{figure:x-c_in_epsilon'}
\end{minipage}
\begin{minipage}[t]{0.2\linewidth}
\
\end{minipage}
\begin{minipage}[t]{0.35\linewidth}
\centering
\includegraphics{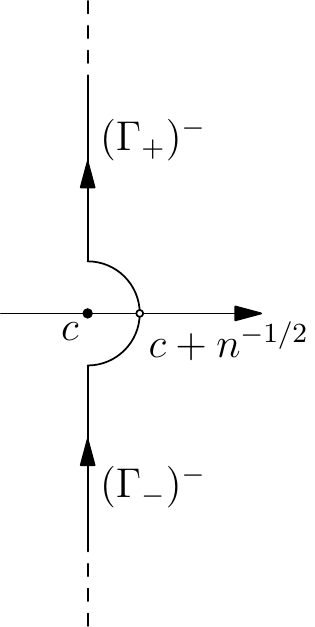}
\caption{The deformed $(\Gamma_+)^- \cup (\Gamma_-)^-$ for $x-c \in (-\epsilon',0)$.}
\label{x-c_in_-epsilon'}
\end{minipage}
\end{figure}
\end{proof}

\subsubsection{For $x$ near $\redge$:}

Let $T$ be a fixed constant and let $\epsilon$ be a small positive constant such that $0 < \epsilon< \min \{c-\redge, \delta_0\}$ where $\delta_0$ is the constant in Proposition \ref{prop:asy_for_mult_cut} and its corollaries in Section~\ref{section:Result_of_RHP}. Define the interval
\begin{equation} \label{eq:defn_of_E_T_epsilon}
E_{T,\epsilon} := \Int \setminus (\redge+\epsilon, \infty) = [\redge + \beta^{-1}n^{-2/3}T, \redge+\epsilon].
\end{equation}
For a given $x \in E_{T,\epsilon}$, define $\xi$ by the relation
\begin{equation} \label{eq:scaling_of_x_around_a_N+1}
	x := \redge+\beta^{-1}n^{-2/3}\xi.
\end{equation} 

\begin{lemma}\label{lem:tilpsiedge}
We have for all $0 < a < V'(\redge)/2$,
\begin{equation} \label{eq:estimation_of_tilde_psi_near}
	\tilde{\psi}_{n-j}(x) = O(n^{1/6} e^{-\factor \lvert \xi \rvert^{3/2}}),
\end{equation}
uniformly in $x \in E_{T,\epsilon}$  and in $n$. 
\end{lemma}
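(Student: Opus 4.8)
The plan is to re-run the analysis of Subsection~\ref{sec:GammalessV} and of Lemma~\ref{eq:approx_of_tilde_psi_far}, but with $x$ now in the Airy region near $\redge$ rather than in $[\redge+\epsilon,\infty)$. I would apply Lemma~\ref{lem:tpsinew} with $c=c(a)$; since $a<\tfrac12V'(\redge)$ we have $c(a)>\redge$, and since $\epsilon<c(a)-\redge$ every $x\in E_{T,\epsilon}$ satisfies $x<c$, so the term $e^{n(ax-V(x)/2)}1_{(c,\infty)}(x)$ in~\eqref{eq:formula_of_tilde_psi_less_than_c} drops out. Inserting~\eqref{eq:CK} and the Christoffel--Darboux form~\eqref{eq:Christoffel_Darboux_K_n-1,n} of $K_{n-j,n}$ and collecting the dependence on $x$, one obtains
\[
\bfGamma_{n-j}(a)\,\tilde{\psi}_{n-j}(x)=\frac{\gamma_{n-j-1}}{\gamma_{n-j}}\Bigl(\psi_{n-j}(x)\,\mathcal{A}_{j,n}(x)-\psi_{n-j-1}(x)\,\mathcal{B}_{j,n}(x)\Bigr),
\]
where $\mathcal{A}_{j,n}(x)$ is the difference of $\int_{\Gamma_+\cup\Gamma_-}(x-z)^{-1}(C\varphi_{n-j-1})(z)e^{naz}\,dz$ and $\int_c^\infty(x-y)^{-1}\psi_{n-j-1}(y)e^{n(ay-V(y)/2)}\,dy$, and $\mathcal{B}_{j,n}(x)$ is the same with $n-j-1$ replaced by $n-j$. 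For $x\in E_{T,\epsilon}$ the factors $(x-z)^{-1}$ and $(x-y)^{-1}$ are uniformly $O(1)$, since $z\in\Gamma_\pm$ and $y\in[c,\infty)$ stay a fixed distance from $x\le\redge+\epsilon<c$; in particular the local deformation near $c$ used in the proof of Lemma~\ref{eq:approx_of_tilde_psi_far} is not needed here.

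The quantities $\mathcal{A}_{j,n}(x)$, $\mathcal{B}_{j,n}(x)$ and $\bfGamma_{n-j}(a)$ are then estimated exactly as in Subsection~\ref{sec:GammalessV}: the integrals over $\Gamma_+\cup\Gamma_-$ are handled by steepest descent on $\overline{\Gamma_+\cup\Gamma_-}$, whose saddle is at $c$ by Lemma~\ref{fact:first} (cf.~\eqref{eq:decreasing_counterclockwise_1}--\eqref{eq:estimation_of_contour_integral_case_1}), contributing $O(n^{-1/2}e^{n(\Hfn(c)-\ell/2)})$; the integrals over $[c,\infty)$ are handled by Laplace's method at the maximizers of $\Gfn(\cdot;a)$ (cf.~\eqref{eq:GtoMthen0}--\eqref{eq:GtoMthen}), contributing $O(e^{n(\Gfn_{\max}(a)-\ell/2)})$ up to a negative power of $n$. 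Hence $\mathcal{A}_{j,n}(x),\mathcal{B}_{j,n}(x)=O\bigl(n^{-1/2}e^{n(\Hfn(c)-\ell/2)}+n^{-\kappa}e^{n(\Gfn_{\max}(a)-\ell/2)}\bigr)$ uniformly in $x$, which by the two-sided form of~\eqref{eq:bfGammainother} — the matching lower bound coming from the explicit leading-order formulas~\eqref{eq:tilde_gamma_case_1_sub}, \eqref{eq:tilde_gamma_case_1_super}, \eqref{eq:tilde_gamma_case_1_super11}, \eqref{eq:tilde_gamma_case_1_critical_II} (and, at $a=\acc$, the observation that the two competing contributions are respectively real and purely imaginary, so no cancellation occurs) — is $O(\bfGamma_{n-j}(a))$. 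For $\psi_{n-j}(x)$ and $\psi_{n-j-1}(x)$, which in Lemma~\ref{eq:approx_of_tilde_psi_far} were controlled by the exterior asymptotics, I would instead use the Airy-region asymptotics of Proposition~\ref{prop:asy_for_mult_cut} and Corollary~\ref{cor:only_one_last_sect} (legitimate because $E_{T,\epsilon}\subset\{|x-\redge|<\delta_0\}$ for large $n$) to obtain $|\psi_{n-j}(x)|,|\psi_{n-j-1}(x)|=O(n^{1/6}e^{-\factor|\xi|^{3/2}})$ uniformly on $E_{T,\epsilon}$. Substituting these two bounds into the displayed identity and dividing by $\bfGamma_{n-j}(a)$ gives~\eqref{eq:estimation_of_tilde_psi_near}.

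The step I expect to be the real obstacle is the last one: getting the bound $|\psi_{n-j}(x)|=O(n^{1/6}e^{-\factor|\xi|^{3/2}})$ \emph{uniformly} over all of $E_{T,\epsilon}$, which stretches to $x=\redge+\epsilon$, i.e.\ to $\xi$ as large as $\beta\epsilon n^{2/3}$. There the leading Airy term already decays like $e^{-\frac23\xi^{3/2}}$, but the algebraic prefactors and the $O(n^{-1})$ relative errors in the parametrix produce extra powers of $n$ and of $\xi$; choosing the decay exponent $\factor<\tfrac23$ (so that $\twofactor<\tfrac43$ appears in the companion $L^2$-estimate) is exactly what creates the room $e^{(\frac23-\factor)\xi^{3/2}}$ needed to absorb those factors. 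This, together with the uniformity in $n$ of the $\bfGamma_{n-j}(a)$ estimate from Subsection~\ref{sec:GammalessV}, yields the claim for all $0<a<\tfrac12V'(\redge)$, including the degenerate sub-cases ($a\in\mathcal{J}_V$, or $\Gfn''$ vanishing at a maximizer of $\Gfn(\cdot;a)$), for which the only change is the fractional power of $n$ multiplying $e^{n(\Gfn_{\max}(a)-\ell/2)}$.
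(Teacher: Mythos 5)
Your proposal is correct and follows essentially the same route as the paper: both arguments start from Lemma~\ref{lem:tpsinew} with $c=c(a)$, use that $|x-z|\ge c-\redge-\epsilon>0$ for $x\in E_{T,\epsilon}$, bound the $x$-dependence by the Airy-region estimate \eqref{eq:psin102}, estimate the $\Gamma_{+}\cup\Gamma_{-}$ integral by steepest descent at the saddle $c$ and the $[c,\infty)$ integral by Laplace's method, and then compare with the two-sided bound \eqref{eq:bfGammainother} on $\bfGamma_{n-j}(a)$; your factoring of $\psi_{n-j}(x)$ and $\psi_{n-j-1}(x)$ out of the kernels is only a cosmetic regrouping of the paper's direct kernel estimates \eqref{eq:tilde_psi_n-1_linear_part_around_a_N+1} and \eqref{eq:estimate_of_Q_1}. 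One small correction to a parenthetical: at $a=\acc$ the absence of cancellation in $\bfGamma_{n-j}(\acc)$ is not because one contribution is real and the other purely imaginary — after the saddle-point evaluation both contributions are positive reals, since $\M_{j,n}(x_0)>0$ and $-i\tilde{\M}_{j,n}(c)>0$ by Proposition~\ref{prop:asy_for_mult_cut}.
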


\begin{proof}
We use the formula~\eqref{eq:formula_of_tilde_psi_less_than_c}.
From~\eqref{eq:lemma_enu:estimate_K_n-1,n:4} and \eqref{eq:psin101}, the integral over $(c,\infty)$ in~\eqref{eq:formula_of_tilde_psi_less_than_c} is
\begin{equation} \label{eq:tilde_psi_n-1_linear_part_around_a_N+1}
	O\bigg( n^{1/6} e^{-\factor|\xi|^{3/2}} \int_c^\infty e^{n(\Gfn(y)-\ell/2)} (1 + \lvert y \rvert)^{-j} dy \bigg).
\end{equation}
On the other hand, substituting \eqref{eq:psin102} and \eqref{eq:asy_of_Cvarphi_n-j} into \eqref{eq:CK}, the integrand of the first integral over $\Gamma_+ \cup \Gamma_-$ in~\eqref{eq:formula_of_tilde_psi_less_than_c} is
\begin{equation} \label{eq:tilde_psi_n-1_contour_part_around_a_N+1}
	\CK_{n-j,n}(x,z) e^{naz}= O\big( n^{1/6} e^{-\frac23|\xi|^{3/2}}  (1 + |z|)^j e^{n(\Hfn(z)-\ell/2)} \big),
\end{equation}
for all  $x \in E_{T,\epsilon}$ and $z \in \Gamma_{\pm}$ since $|x-z|>c-\redge-\epsilon>0$. Thus 
the the first integral over $\Gamma_+ \cup \Gamma_-$ in~\eqref{eq:formula_of_tilde_psi_less_than_c} is
\begin{equation} \label{eq:estimate_of_Q_1}
O \left( n^{1/6} e^{-\factor|\xi|^{3/2}} \int_{\Gamma_+ \cup \Gamma_-} e^{n(\Hfn(z)-\ell/2)} (1 + \lvert z \rvert)^j dz \right) = 
O \left( n^{1/6} e^{-\factor|\xi|^{3/2}}  n^{-1/2} e^{n\Hfn(c)} \right)
\end{equation}
Substituting \eqref{eq:tilde_psi_n-1_linear_part_around_a_N+1} and \eqref{eq:estimate_of_Q_1} into \eqref{eq:formula_of_tilde_psi_less_than_c} and noting that $1_{(c,\infty)}(x)=0$ for $x \in E_{T,\epsilon}$, we obtain
\begin{equation} \label{eq:estimation_of_tilde_psi_around_a_N+1_subcritical}
 	\bfGamma_{n-j}(a)\tilde{\psi}_{n-j}(x) = 
	n^{1/6} e^{-\factor \lvert \xi \rvert^{3/2}} \left[ O \left( n^{-1/2} e^{n\Hfn(c)} \right) + O \left( \int^{\infty}_c  e^{n(\Gfn(y)-\ell/2)} (1 + \lvert y \rvert)^{-j} dy \right) \right].
\end{equation}
Comparing with~\eqref{eq:bfGammainother} as in the previous subsection, we obtain~\eqref{eq:estimation_of_tilde_psi_near}.
\end{proof}

\subsection{Proof of Theorem~\ref{thm:convex}\ref{enu:thm:convex:a} and Theorem \ref{thm:thm_rank_1}\ref{enu:thm:thm_rank_1:a}} \label{subsubsection:a<acc}

Recall the outline of the proof described in Section~\ref{subsection:outline_of_the_proof}. 
The proof proceed exactly same for both convex and non-convex potentials. The only important assumption is that $0 < a < \acc$.

We first evaluate $\tilde{\psi}_{n-j}(x)$. 
Fix $0<\epsilon<\delta_0$ to satisfy the conditions in Subsection \ref{sec:sub21} where $\delta_0$ is the constant in Proposition \ref{prop:asy_for_mult_cut} and its corollaries in Section~\ref{section:Result_of_RHP}. 
Since $0 < a<\acc$, the asymptotics \eqref{eq:tilde_gamma_case_1_sub} implies that 
\begin{equation} \label{eq:bound_of_Gamma_first}
	 \frac1{e^{n\ell/2} \bfGamma_{n-j}(a)}e^{n\Hfn(x)} = O\big( \sqrt{n} e^{n(\Hfn(x)-\Hfn(c))} \big).
\end{equation}
Since $\Hfn(x)> \Hfn(c)$ for all $x > c$ and $\Hfn(x) \to \infty$ fast by Lemma \ref{fact:first}, this term is larger than $O(\sqrt{n}(1+|x|)^{-j})$. Inserting this into~\eqref{eq:tilde_psi_n-1_less_than_c_1},  
we obtain 
\begin{equation}\label{eq:tildepsiwhenalessacc}
  \tilde{\psi}_{n-j}(x) = 
\begin{cases}
  O\big( \sqrt{n} e^{n(\Gfn(x)+\Hfn(x)-2\Hfn(c))/2} \big), &x>c, \\
  O\big( \sqrt{n} e^{n(\Gfn(x)-\Hfn(x))/2} \big), & \redge+\epsilon\le x\le c. 
\end{cases}
\end{equation}
On the other hand, for $x\in E_{T, \epsilon}:= [\redge + \frac{T}{\beta_{N+1}n^{2/3}}, \redge+\epsilon]$ (see~\eqref{eq:defn_of_E_T_epsilon}), we have from Lemma~\ref{lem:tilpsiedge} that 
\begin{equation} \label{eq:estimation_of_tilde_psi_near212}
	\tilde{\psi}_{n-j}(x) = O(n^{1/6} e^{-\factor \lvert \xi \rvert^{3/2}}),
\end{equation}
where $\xi$ is defined by~\eqref{eq:scaling_of_x_around_a_N+1}.

Now evaluate the inner product $\langle \tilde{\psi}_{n-j}, \psi_{n-j} \rangle_{\Int}$. We divide the interval $\Int$ into two parts: $(\redge+\epsilon, \infty)$ and  $E_{T, \epsilon}$. From the asymptotics \eqref{eq:tildepsiwhenalessacc} of $\tilde{\psi}_{n-j}$ and \eqref{eq:psin101} of $\psi_{n-j}$, 
\begin{equation} \label{eq:product_of_two_psis_sub_far}
\begin{split}
\langle \tilde{\psi}_{n-j}, \psi_{n-j} \rangle_{(\redge+\epsilon, \infty)} = & \int^{c}_{\redge+\epsilon} O(\sqrt{n} e^{n(\Gfn(x)-\Hfn(x))} (1 + \lvert x \rvert)^{-j} ) dx \\
& + \int^{\infty}_{c} O(\sqrt{n} e^{n(\Gfn(x)-\Hfn(c))} (1 + \lvert x \rvert)^{-j} ) dx \\
= & O(e^{-\epsilon' n})
\end{split}
\end{equation}
for a constant $\epsilon'>0$ since when $a<\acc$, $\Gfn(x)<\Hfn(c)$ for all $x>c$ and $\Gfn(x)-\Hfn(x)<0$ for all $x>\redge$ (see Lemmas~\ref{fact:first} and~\ref{lem:Gprop}). 
On the other hand, by the asymptotics \eqref{eq:estimation_of_tilde_psi_near212} of $\tilde{\psi}_{n-j}$ and \eqref{eq:psin102} of $\psi_{n-j}$, we find, after the change of variables $x\mapsto\xi$ defined in~\eqref{eq:scaling_of_x_around_a_N+1}, that 
\begin{equation} \label{eq:product_of_two_psis_sub_near}
	\langle \tilde{\psi}_{n-j}, \psi_{n-j} \rangle_{E_{T,\epsilon}} = \int^{\beta n^{2/3}\epsilon}_T O(n^{1/3} e^{-\twofactor \lvert \xi \rvert^{3/2}}) \frac{d\xi}{\beta n^{2/3}} = O(n^{-1/3}).
\end{equation}
Therefore, 
\begin{equation} \label{eq:inner_prod_of_two_psi_sub}
	\langle \tilde{\psi}_{n-j}, \psi_{n-j} \rangle_{\Int} = O(n^{-1/3}).
\end{equation}

Finally, we show the uniform boundedness of $K_{n-j,n}\chi_{\Int}\tilde{\psi}_{n-j}$ in $L^2(\Int)$. 
From the asymptotics of $K_{n-j, n}$ given in Corollary~\ref{lemma:various_estimates_of_K_n-1,n} and the asymptotics~\eqref{eq:tildepsiwhenalessacc}  and~\eqref{eq:estimation_of_tilde_psi_near212} of $\tilde{\psi}_{n-j}$, we find  for $x\in E_{T,\epsilon}$ that 
\begin{equation} \label{eq:Kpsitildesub0011}
\begin{split}
	(K_{n-j,n}\chi_{\Int}\tilde{\psi}_{n-j})(x) 
	= & \int_{E_{T, 2\epsilon}} K_{n-j,n}(x,y)\tilde{\psi}_{n-j}(y) dy 
	+ \int_{\redge+2\epsilon}^\infty K_{n-j,n}(x,y)\tilde{\psi}_{n-j}(y) dy \\
	= & \int_{T}^{2\epsilon \beta n^{2/3}} O(n^{2/3} e^{-\factor|\xi|^{3/2}-\factor |\eta|^{3/2}} 
	n^{1/6} e^{-\factor |\eta|^{3/2}}) \frac{d\eta}{\beta n^{2/3}} \\
	& + \int_{\redge+2\epsilon}^{c} O(n^{1/6}e^{-\factor |\xi|^{3/2}} e^{n(\Gfn(y)-\Hfn(y))} (1 + \lvert y \rvert)^{-j}) dy\\
	& + \int_{c}^\infty O(n^{1/6}e^{-\factor |\xi|^{3/2}} e^{n(\Gfn(y)-\Hfn(c))} (1 + \lvert y \rvert)^{-j}) dy \\
	= & O(n^{1/6}e^{-\factor |\xi|^{3/2}})
\end{split}
\end{equation}
where $\xi$ is defined by~\eqref{eq:scaling_of_x_around_a_N+1}.
Similarly, for $x\ge\redge+\epsilon$, 
\begin{equation}\label{eq:Kpsitildesub0012}
\begin{split}
	(K_{n-j,n}\chi_{\Int}\tilde{\psi}_{n-j})(x)  
	= & \int_{E_{T, \epsilon/2}} K_{n-j,n}(x,y)\tilde{\psi}_{n-j}(y) dy 
	+ \int_{\redge+\epsilon/2}^\infty K_{n-j,n}(x,y)\tilde{\psi}_{n-j}(y) dy \\
	= & \int_{T}^{\frac12\epsilon \beta n^{2/3}} O(n^{1/6} e^{n(\Gfn(x)-\Hfn(x))/2} e^{-\factor |\eta|^{3/2}} 
	n^{1/6} e^{-\factor |\eta|^{3/2}}) \frac{d\eta}{\beta n^{2/3}} \\
	& + \int_{\redge+\epsilon/2}^{c} O(\sqrt{n}e^{n(\Gfn(x)-\Hfn(x))/2} e^{n(\Gfn(y)-\Hfn(y))} ) dy\\
	& + \int_{c}^\infty O(\sqrt{n} e^{n(\Gfn(x)-\Hfn(x))/2} e^{n(\Gfn(y)-\Hfn(c))} ) dy \\
	= & O(e^{n(\Gfn(x)-\Hfn(x))/2}).
\end{split}
\end{equation}
From these,  we find that 
\begin{equation}
	\| K_{n-j,n}\chi_{\Int}\tilde{\psi}_{n-j}\|_{L^2(\Int)}^2 
	=  \int_{T}^\infty O(n^{1/3}e^{-\twofactor|\xi|^{3/2}}) \frac{d\xi}{\beta n^{2/3}}
	 	+ \int_{\redge+\epsilon}^\infty O(e^{n(\Gfn(x)-\Hfn(x))})dx 
	=  O(n^{-1/3}).
\end{equation}

Hence from~\eqref{eq:maind1}, we obtain 
\begin{equation}
\lim_{n \to \infty} \det \left( 1 - \chi_{\Int} \tilde{K}_{n-j+1,n} \chi_{\Int} \right) = \FGUE(T).
\end{equation}
Theorem~\ref{thm:convex}\ref{enu:thm:convex:a} and Theorem \ref{thm:thm_rank_1}\ref{enu:thm:thm_rank_1:a} are proved.

\subsection{Proof of  Theorem~\ref{thm:thm_rank_1}\ref{enu:thm:thm_rank_1:b} when $a<\frac12 V'(\redge)$}
\label{sec:abigacclesshalfVp}

The proof of Theorem~\ref{thm:thm_rank_1}\ref{enu:thm:thm_rank_1:b} is divided into three cases, $a<\frac12 V'(\redge)$, $a>\frac12 V'(\redge)$ and $a=\frac12 V'(\redge)$. The first case is in this subsection. The second case is in Section~\ref{subsection:generic_case_3}. The third case is discussed at the beginning of Section~\ref{subsection:The_critical_case_rank_1}.

We assume that $a\in (\acc, \frac12 V'(\redge))$ and $a\notin\mathcal{J}_V$. 
Let $x_0=x_0(a)$ be the unique maximizer of $\Gfn(x)$ in $(c, \infty)$ as in Lemma~\ref{lem:x0}. We assume that $\Gfn''(x_0)\neq 0$. See Remark~\ref{rmk:higherin25} at the end of this subsection for a discussion when $\Gfn''(x_0)=0$ (see~\eqref{eq:higher1010}). 

Recall the definition of the interval $\Intx(x_0)$ in~\eqref{eq:interval2}. Note that $\Intx(x_0)\subset (c,\infty)$. 
We first evaluate $\tilde{\psi}_{n-j}(x)$. By using~\eqref{eq:tilde_gamma_case_1_super}, we have 
\begin{equation}
	\frac1{e^{n\ell/2} \bfGamma_{n-j}(a)}e^{n\Hfn(x)} = O\big( \sqrt{n} e^{n(\Hfn(x)-\Gfn(x_0))} \big), \quad x>c.
\end{equation}
Lemma \ref{fact:first}\ref{enu:fact:first:a} and \ref{enu:fact:first:b} imply that $\Hfn(x)$ increases monotonically in $x>c$ and $\Hfn(x)>\Gfn(x)$ for all $x>\redge$. Hence
there exists $\epsilon'>0$ such that $\Hfn(x)>\Gfn(x_0)$ for all $x>x_0-\epsilon'$. In particular, $\Hfn(x)>\Gfn(x_0)$ for $x\in \Intx(x_0)$. 
Therefore~\eqref{eq:tilde_psi_n-1_less_than_c_1} yields, noting that $\Hfn(x) \to \infty$ fast enough by Lemma \ref{fact:first}\ref{enu:fact:first:e},
\begin{equation} \label{eq:exact_approx_tilde_psi_super}
	\tilde{\psi}_{n-j}(x) = e^{n(\Gfn(x)-\Hfn(x))/2} \frac{e^{n\Hfn(x)}}{e^{n\ell/2}\bfGamma_{n-j}(a)}(1+o(1)), 
	\qquad x\in \Intx(x_0).
\end{equation}
Inserting the explicit asymptotics \eqref{eq:tilde_gamma_case_1_super} for $\bfGamma_{n-j}(a)$ into \eqref{eq:exact_approx_tilde_psi_super}, we have for $x > x_0 - \epsilon'$ where $\epsilon'$ is the positive constant mentioned above, and in particular for $x\in \Intx(x_0)$ that 
\begin{equation}\label{eq:spc1}
	\tilde{\psi}_{n-j}(x)= 
	\sqrt{\frac{-n\Gfn''(x_0)}{2\pi}} \frac{1}{\M_{j,n}(x_0)} e^{n(\Gfn(x)+\Hfn(x)-2\Gfn(x_0))/2} \big(1+o(1) \big).
\end{equation}

The inner product $\langle \tilde{\psi}_{n-j}, \psi_{n-j} \rangle_{\Intx(x_0)}$ is evaluated by using~\eqref{eq:spc1} and \eqref{eq:psin101}. For $x\in \Intx(x_0)$, 
\begin{equation} \label{eq:exact_formula_of_the_product}
\begin{split}
	&\tilde{\psi}_{n-j}(x)\psi_{n-j}(x) = \sqrt{\frac{-n\Gfn''(x_0)}{2\pi}} \frac{M_{j,n}(x)}{\M_{j,n}(x_0)} e^{n(\Gfn(x)-\Gfn(x_0))} \big(1+o(1) \big).
\end{split}
\end{equation}
From the assumptions for the Theorem \ref{thm:thm_rank_1}, $\Gfn(x)$ in $(c, \infty)$ has the unique maximum at $x=x_0$ and $\Gfn(x)= \Gfn(x_0)+\frac12\Gfn''(x_0) (x-x_0)^2+O(|x-x_0|^3)$ for $x$ close to $x_0$ where  $\Gfn''(x_0)< 0$. Also $M_{j,n}(x)$, $M'_{j,n}(x)$ and $1/M_{j,n}(x)$ are bounded uniformly in $n$ for $x$ in a compact subset of $(\redge, \infty)$ and $M_{j,n}(x) = \M_{j,n}(x)(1 + o(1))$ from Proposition~\ref{prop:asy_for_mult_cut}\ref{enu:prop:asy_for_mult_cut:a},. Hence the standard Laplace's method applies and we obtain  
\begin{equation}\label{eq:abigaccGa}
	\langle \tilde{\psi}_{n-j}, \psi_{n-j} \rangle_{\Intx(x_0)} = \frac1{\sqrt{2\pi}} \int_T^\infty e^{-\frac12 \xi^2} d\xi +o(1). 
\end{equation}

We now show that $K_{n-j,n}\chi_{\Intx(x_0)} \tilde{\psi}_{n-j}$ is uniformly bounded in $L^2(\Intx(x_0))$. From~\eqref{eq:spc1} and the part \ref{lemma_enu:estimate_K_n-1,n:1} of Corollary~\ref{lemma:various_estimates_of_K_n-1,n}, for $x\in \Intx(x_0)$, 
\begin{equation} \label{eq:pointwise_est_of_K_tilde_psi_super}
\begin{split}
	 & (K_{n-j,n}\chi_{\Intx(x_0)} \tilde{\psi}_{n-j})(x)  \\
	 = & \int_{\Intx(x_0)} O(\sqrt{n} e^{n(\Gfn(x)-\Hfn(x))/2} (1 + \lvert x \rvert)^{-j} e^{n(\Gfn(y)- \Gfn(x_0))} (1 + \lvert y \rvert)^{-j}) dy \\
	 = & O(e^{n(\Gfn(x)-\Hfn(x))/2} (1 + \lvert x \rvert)^{-j}).
\end{split}
\end{equation}
Therefore, 
\begin{equation} \label{eq:the_end_of_est_of_K_tilde_psi_super}
	\lVert K_{n-j,n}\chi_{\Intx(x_0)} \tilde{\psi}_{n-j}\rVert_{L^2(\Intx(x_0))} = O(e^{-\epsilon' n}),
\end{equation}
for some $\epsilon'>0$. 

Therefore, from~\eqref{eq:maind2}, we obtain 
\begin{equation} \label{eq:final_calculation_super}
\lim_{n \to \infty} \det \left( 1 - \chi_{\Intx(x_0))} \tilde{K}_{n-j+1,n} \chi_{\Intx(x_0))}  \right) = 1 - \frac1{\sqrt{2\pi}} \int_T^\infty e^{-\frac12 \xi^2} d\xi = \erf(T),
\end{equation}
and Theorem \ref{thm:thm_rank_1}\ref{enu:thm:thm_rank_1:b} is proved.

\begin{rmk}\label{rmk:higherin25}
When $\Gfn''(x_0)= 0$, the Gaussian function $e^{-\frac12 \xi^2}$ in \eqref{eq:abigaccGa} is replaced by a higher-order function such as $e^{-\xi^{2k}}$($k > 1$). The rest of the proof is very similar. The result is the limit theorem as in~\eqref{eq:higher1010}.
\end{rmk}


\subsection{Proof of Theorem~\ref{thm:critical_traditional_split}\ref{enu:thm:critical_traditional_split:b}} \label{subsection:proof_of_thm:critical_traditional_split}

Let $V$ be a potential such that $\acc<\frac12 V'(\redge)$ and $\acc \not\in \mathcal{J}_V$. 
We assume that $\Gfn''(x_0(\acc);\acc) \neq 0$.
Let
\begin{equation} \label{eq:defn_of_a_pseudo_crit}
	a = \acc+ \frac{\alpha}{n},
\end{equation}
where $\alpha$ is in a compact subset of $\realR$. 

First, consider $\tilde{\psi}_{n-j}(x)$. Note that the estimates~\eqref{eq:tildepsiwhenalessacc} and~\eqref{eq:estimation_of_tilde_psi_near212} still hold. However, $\Gfn(x_0(\acc); \acc)+\Hfn(x_0(\acc);\acc)-2\Hfn(c(\acc);\acc)=0$ at $x=x_0$, so when $a = \acc$, $\langle \tilde{\psi}_{n-j}, \psi_{n-j} \rangle_{(\redge+\epsilon, \infty)}$ is no longer exponentially small.  We need an asymptotic formula of $\tilde{\psi}(x)$ like \eqref{eq:spc1}. By 
inserting  the asymptotics \eqref{eq:tilde_gamma_case_1_critical_II} of $\bfGamma_{n-1}(a)$ into \eqref{eq:tilde_psi_n-1_less_than_c_1}, 
similar to \eqref{eq:spc1} we obtain for $x > x_0 - \epsilon'$ where $\epsilon'$ is a positive constant defined similarly as the $\epsilon'$ in \eqref{eq:spc1}, and in particular $x\in \Intx$, that 
\begin{equation} \label{eq:pseudo_spc1} 
	\tilde{\psi}_{n-j}(x)= \frac{C_1(\alpha)}{C_0 + C_1(\alpha)}  
\sqrt{\frac{-n\Gfn''(x_0(a))}{2\pi}} \frac{1}{\M_{j,n}(x_0(a))} e^{n(\Gfn(x;a)+\Hfn(x;a)-2\Gfn(x_0(a);a))/2} \big(1+o(1) \big),
\end{equation}
where (we omit the dependence of $C_0$ and $C_1(\alpha)$ on $n$ and $j$ to make the notations simple)
\begin{align} 
	C_0 = & \frac{-i \tilde{\M}_{j,n}(c(\acc))}{\sqrt{\Hfn''(c(\acc);\acc)}}, \label{eq:defn_of_C_j(alpha)_pseudo_crit_0} \\
C_1(\alpha) = & \frac{\M_{j,n}(x_0(\acc))}{\sqrt{-\Gfn''(x_0(\acc);\acc)}} e^{\alpha (x_0(\acc) - c(\acc))}. \label{eq:defn_of_C_j(alpha)_pseudo_crit_1}
\end{align}
The constants $C_0$ and $C_1(\alpha)$ are positive from Proposition~\ref{prop:asy_for_mult_cut}. If $\alpha$ is fixed, $C_0$, $C_1(\alpha)$, $C^{-1}_0$ and $C^{-1}_1(\alpha)$ are uniformly bounded in $n$.  
Set
\begin{equation} \label{eq:defn_of_p(alpha)_pseudo_crit_1}
	p_{j,n}(\alpha) := \frac{C_0}{C_0 + C_1(\alpha)}. 
\end{equation}
From the definition, $p_{j,n}(\alpha)$ is a decreasing function in $\alpha$, $p_{j,n}(\alpha) \to 0$ as $\alpha \to \infty$ and $p_{j,n}(\alpha) \to 1$ as $\alpha \to -\infty$ for each fixed $n$. 
Also for a fixed $\alpha$, $p_{j,n}(\alpha)$ is in a compact subset of $(0,1)$ uniformly in $n$.
Note that when the support of the equilibrium consists of one interval, then $C_0$ and $C_1(\alpha)$ are independent of $n$, and hence so is $p_{j,n}(\alpha)$.
We prove formulas \eqref{eq:part_2_of_thm_4} and \eqref{eq:part_1_of_thm_4} in Theorem \ref{thm:critical_traditional_split} separately.

\bigskip


The proof of~\eqref{eq:part_2_of_thm_4}
is similar to that in Subsubsection \ref{sec:abigacclesshalfVp}.  
For the evaluation of $\langle \tilde{\psi}_{n-j}, \psi_{n-j} \rangle_{\Intx(x_0)}$, we repeat the arguments of \eqref{eq:exact_formula_of_the_product}--\eqref{eq:abigaccGa}. Noting that the formula of $\tilde{\psi}_{n-j}$ in \eqref{eq:pseudo_spc1} is the same as that in \eqref{eq:spc1} except for the multiplicative factor $C_1(\alpha)/(C_0 + C_1(\alpha))$, we obtain similar to \eqref{eq:abigaccGa} that
\begin{equation} \label{eq:evaluation_of_inner_prod_tilde_psi_psi}
	\langle \tilde{\psi}_{n-j}, \psi_{n-j} \rangle_{\Intx(x_0)} = (1-p_{j,n}(\alpha)) \int^{\infty}_T e^{-\frac{1}{2}\xi^2} \frac{d\xi}{\sqrt{2\pi}} (1 + o(1)),
\end{equation}
where $p^{(0)}_{j,n}(\alpha)$ is defined in \eqref{eq:defn_of_p(alpha)_pseudo_crit_1}.
The estimate~\eqref{eq:the_end_of_est_of_K_tilde_psi_super} follows from the same calculations in Subsection \ref{sec:abigacclesshalfVp}, and we obtain from~\eqref{eq:maind2} that 
\begin{equation}
	\det\left( 1- \chi_{\Intx(x_0))} \tilde{K}_{n-j+1,n} \chi_{\Intx(x_0))}\right) = p_{j,n}(\alpha) + (1-p_{j,n}(\alpha)) \erf(T) +o(1),
\end{equation}
and \eqref{eq:part_2_of_thm_4} is proved.

\bigskip
We now prove~\eqref{eq:part_1_of_thm_4}. When $a$ is given by \eqref{eq:defn_of_a_pseudo_crit}, the estimate \eqref{eq:tildepsiwhenalessacc} still holds. Similar to \eqref{eq:evaluation_of_inner_prod_tilde_psi_psi}, we obtain by estimates \eqref{eq:pseudo_spc1}, \eqref{eq:tildepsiwhenalessacc} and \eqref{eq:estimation_of_tilde_psi_near} that 
\begin{equation} \label{eq:evaluation_of_inner_prod_tilde_psi_psi_full}
	\langle \tilde{\psi}_{n-j}, \psi_{n-j} \rangle_{\Int} 
	=  (1-p_{j,n}(\alpha)) \int_{-\infty}^\infty \frac1{\sqrt{2\pi}} e^{-\frac12 \xi^2} d\xi (1+ o(1)) 
	=  (1-p_{j,n}(\alpha)) (1+ o(1)).
\end{equation}
To show that $K_{n-j,n}\chi_{\Int}\tilde{\psi}_{n-j}$ is uniformly bounded in $L^2(\Int)$,  
we proceed as in~\eqref{eq:Kpsitildesub0011} and~\eqref{eq:Kpsitildesub0012}, and obtain 
\begin{equation} \label{eq:Kpsitildecri0011}
\begin{split}
	(K_{n-j,n}\chi_{\Int}\tilde{\psi}_{n-j})(x)
	= \begin{cases} O(n^{1/6}e^{-\factor |\xi|^{3/2}}), \quad &x\in E_{T, \epsilon}\\
	O(e^{n(\Gfn(x)-\Hfn(x))/2}) & x\ge \redge+\epsilon \end{cases}
\end{split}
\end{equation}
where $\xi$ is defined by~\eqref{eq:scaling_of_x_around_a_N+1}. Hence $\| K_{n-j,n}\chi_{\Int}\tilde{\psi}_{n-j}\|_{L^2(\Int)}= O(n^{-1/6})$.
Therefore, by~\eqref{eq:maind1}, we obtain
\begin{equation}
	\det\left( 1- \chi_{\Int}\tilde{K}_{n-j+1, n}  \chi_{\Int}\right) = p_{j,n}(\alpha)\FGUE(T) +o(1)
\end{equation}
and \eqref{eq:part_1_of_thm_4} is proved.


\subsection{Proof of Theorem~\ref{thm:supercritical_split} when $a_0\in (\acc, \frac12 V'(\redge))$} 
 \label{subsection:generic_case_5}

We prove Theorem~\ref{thm:supercritical_split}  when $a_0<\frac12V'(\redge)$. The case when 
$a_0\ge \frac12 V'(\redge)$ will be discussed in Sections~\ref{subsection:generic_case_3} and~\ref{subsection:The_critical_case_rank_1}.

Let $a_0\in (\acc, \frac12 V'(\redge))$ and $a_0\in \mathcal{J}_V$. Hence $a_0$ is a secondary critical point. In this case, the maximum of $\Gfn(x; a_0)$, $x\in (c,\infty)$, is attained at more than one point. The case when the maximum of $\Gfn(x;a_0)$ is attained at more than two points can be attained by a straightforward extension and this yields~\eqref{eq:threeormoremax}. We omit the details in that case. 

Denote the two maximizers of $\Gfn(x;a_0)$ by $x_1:=x_1(a_0)$ and $x_2:=x_2(a_0)$. Let  $x_1(a_0)<x_2(a_0)$. Assume that 
\begin{equation}\label{eq:twoder}
\Gfn''(x_1(a_0);a_0)\neq 0, \quad \Gfn''(x_2(a_0);a_0)\neq 0.
\end{equation}
The case when one of the derivative vanishes 
is discussed in Subsection~\ref{subsection:generic_case_5_2}.
Let
\begin{equation}\label{eq:aa0sc1}
	a=a_0+ \frac{\alpha}{n}
\end{equation}
where $\alpha$ is in a compact subset of $\R$.




First we evaluate $\tilde{\psi}_{n-j}(x)$. The asymptotics of $\bfGamma_{n-j}(a)$ in this case is not explicitly computed in Subsection~\ref{sec:GammalessV}, hence we first compute this by extending the formula~\eqref{eq:tilde_gamma_case_1_super} (see  the paragraph following~\eqref{eq:tilde_gamma_case_1_super11}).
There are two differences from the case leading to \eqref{eq:tilde_gamma_case_1_super} in Subsection \ref{sec:GammalessV}. The first is that there are two maximizers of $\Gfn(x;a_0)$ and the second is that $a$ scales in $n$ as in~\eqref{eq:aa0sc1}. The first difference simply results in adding the contributions from the both maximizers since both term are of the same order due to the condition~\eqref{eq:twoder}. Regarding the second difference, note that 
since $\Gfn(x;a_0)$ has  maximum at $x_1(a_0)$ and $x_2(a_0)$, $\Gfn(x; a)$ has two local maxima at two points, denoted by $x_1(a)$ and $x_2(a)$, which are close to $x_1(a_0)$ and $x_2(a_0)$, respectively. (Indeed, one can easily check that $x_i(a)= x_i(a_0)+ \frac{\alpha}{-\Gfn''(x_i(a_0))n} + O(n^{-2})$.) 
Using the definition of $\Gfn$ and the fact that $x_j(a_0)$ is a critical value of $\Gfn(x; a_0)$, we find  
\begin{equation}
	\frac{d}{da}\bigg|_{a=a_0} \Gfn(x_i(a); a) = x_i(a_0), \quad i=1,2.
\end{equation}
Hence 
\begin{equation} \label{eq:relation_between_G(xa)_G(xa_0)}
	\Gfn(x_i(a); a)= \Gfn(x_i(a_0); a_0) + x_i(a_0) \frac{\alpha}{n} + O(n^{-2}), \quad i=1,2.
\end{equation}
Therefore, as in~\eqref{eq:tilde_gamma_case_1_super} we obtain as $n\to\infty$ (note that $\Gfn(x_1; a_0)= \Gfn(x_2; a_0)$ and $x_i:=x_i(a_0)$)
\begin{equation}\label{eq:bla}
\begin{split}
 	e^{n\ell/2} \bfGamma_{n-j}(a) = 
 & e^{n\Gfn(x_1; a_0)} \Bigg[ \sqrt{\frac{2\pi}{-n\Gfn''(x_1)}}  \M_{j,n}(x_1)  e^{x_1\alpha}   
 + \sqrt{\frac{2\pi}{-n\Gfn''(x_2)}}  \M_{j,n}(x_2)  e^{x_2\alpha}  \Bigg]  (1+o(1)).
\end{split}
\end{equation}
With this asymptotics of $\bfGamma_{n-j}(a)$, the rest of the analysis is similar to~\eqref{eq:spc1}, and we obtain for $x\in \Intx(x_1)$, 
\begin{multline} \label{eq:asy_formula_of_tilde_psi_tow_outliers}
\tilde{\psi}_{n-j}(x) = \Bigg[ \sqrt{\frac{2\pi}{-n\Gfn''(x_1)}}  \M_{j,n}(x_1)  e^{x_1\alpha} + \sqrt{\frac{2\pi}{-n\Gfn''(x_2)}}  \M_{j,n}(x_2)  e^{x_2\alpha}  \Bigg]^{-1} \\
\times e^{n(\Gfn(x;a)+\Hfn(x;a)-2\Gfn(x_1; a_0))/2} (1+o(1)).
\end{multline}

We now compute the inner products $\langle \tilde{\psi}_{n-j}, \psi_{n-j} \rangle_{\Intx(x_i)}$, $i=1,2$. Using~\eqref{eq:psin101}, (cf.~\eqref{eq:exact_formula_of_the_product})
\begin{equation} \label{eq:asy_of_prod_psi_two_crit_gen}
	\tilde{\psi}_{n-j}(x) \psi_{n-j}(x)= 
	\sqrt{\frac{n}{2\pi}}\frac{\M_{n,j}(x) e^{n(\Gfn(x;a)-\Gfn(x_1; a_0))}}
	{A_1(\alpha)+A_2(\alpha)} \big(1+o(1) \big)
\end{equation}
where
\begin{equation}\label{eq:defpj2}
	A_i(\alpha):=\frac{\M_{j,n}(x_i(a_0); a_0) }{\sqrt{-\Gfn''(x_i(a_0); a_0)}} e^{x_i(a_0)\alpha}, \qquad i=1,2.
\end{equation}
Like $C_0$ in~\eqref{eq:defn_of_C_j(alpha)_pseudo_crit_0} and $C_1(\alpha)$ in \eqref{eq:defn_of_C_j(alpha)_pseudo_crit_1}, 
$A_i(\alpha)$ is positive and is of finite distance away from $0$ uniformly in $n$. 
For each $i=1,2$, if we set 
\begin{equation}\label{eq:nongensupphalfV1}
	x= x_i+ \frac{\xi}{\sqrt{-n\Gfn''(x_i)}},
\end{equation}
then for $\xi$ in a compact subset of $\R$,  we find using the Taylor expansion in $x$, and~\eqref{eq:relation_between_G(xa)_G(xa_0)} that 
\begin{equation} \label{simple_relation_between_x(a_x(a_0))}
	\Gfn(x; a)= \Gfn(x_i; a_0) + x_i\frac{\alpha}{n}- \frac{\xi^2}{2n} + O(n^{-3/2}).
\end{equation}
Thus we find 
\begin{equation} \label{eq:asym_oftildepsi_psi_two_peaks}
	\tilde{\psi}_{n-j}(x) \psi_{n-j}(x)= 
	\sqrt{\frac{n \Gfn''(x_i; a_0)}{2\pi}}\frac{A_i(\alpha)}
	{A_1(\alpha)+A_2(\alpha)} e^{- \frac12 \xi^2} \big(1+o(1) \big)
\end{equation}
for $x$ given in~\eqref{eq:nongensupphalfV1} and $\xi$ in a compact subset of $\realR$, for each $i=1,2$.
Together with an easy estimate when $x$ is away from $x_1(a_0)$ and $x_2(a_0)$, this implies, as in Subsection~\ref{sec:abigacclesshalfVp}, that 
\begin{align}
	\langle \tilde{\psi}_{n-j}, \psi_{n-j} \rangle_{\Intx(x_2)} = & p^{(2)}_{j,n}(\alpha) \int_T^\infty \frac1{\sqrt{2\pi}} e^{-\frac12 \xi^2}d\xi +o(1), \label{eq:probability_right_peak_2} \\
	\langle \tilde{\psi}_{n-j}, \psi_{n-j} \rangle_{\Intx(x_1)} = & p^{(1)}_{j,n}(\alpha) \int_T^\infty \frac1{\sqrt{2\pi}} e^{-\frac12 \xi^2}d\xi + p^{(2)}_{j,n}(\alpha) +o(1) \label{eq:probability_left_peak_1}
\end{align}
for any fixed $T$, where for $i = 1,2$
\begin{equation}\label{eq:defpj1}
	p^{(i)}_{j,n}(\alpha) = \frac{A_i(\alpha)}{A_1(\alpha)+A_2(\alpha)}.
\end{equation}
The properties of $p_{j,n}^{(i)}(\alpha)$ stated in Theorem~\ref{thm:supercritical_split} can be easily checked. 



The $L^2$ norm $\lVert K_{n-j,n}\chi_{\Intx(x_0)} \tilde{\psi}_{n-j}\rVert_{L^2(\Intx(x_i))}$ is estimated by the same argument as that for~\eqref{eq:the_end_of_est_of_K_tilde_psi_super} above  in  Subsection \ref{sec:abigacclesshalfVp} by using the asymptotics \eqref{eq:asy_formula_of_tilde_psi_tow_outliers} of $\tilde{\psi}_{n-j}$. The result is the same exponentially decaying bound.

Therefore, by~\eqref{eq:maind2}, we obtain 
\begin{align}
	 \det \left( 1 - \chi_{\Intx(x_1)} \tilde{K}_{n-j+1, n} \chi_{\Intx(x_1)} \right) = & p^{(1)}_{j,n}(\alpha)\erf(T) + o(1), \\
	 \det \left( 1 - \chi_{\Intx(x_2)} \tilde{K}_{n-j+1, n} \chi_{\Intx(x_2)}\right) = & p^{(1)}_{j,n}(\alpha) + p^{(2)}_{j,n}(\alpha))\erf(T) +o(1).
\end{align}
Thus Theorem \ref{thm:supercritical_split} when $a<\frac12 V'(\redge)$ is proved.

\subsection{Proof of Theorem~\ref{thm:noname} when $a\in (\acc, \frac12 V'(\redge))$}
\label{subsection:generic_case_5_2}

We prove Theorem~\ref{thm:noname}  when $a_0<\frac12V'(\redge)$. The case when 
$a_0\ge \frac12 V'(\redge)$ will be discussed in Sections~\ref{subsection:generic_case_3} and~\ref{subsection:The_critical_case_rank_1}.

Under the assumption of Theorem~\ref{thm:noname}, for some $k > 1$
\begin{equation}\label{eq:Gpppr}
\begin{split}
	\Gfn''(x_1(a_0); a_0)\neq & 0,  \\
 	\Gfn^{(i)}(x_2(a_0); a_0)= & 0 \quad \textnormal{for $i=1, \dots, 2k-1$, }\\
 	\Gfn^{(2k)}(x_2(a_0); a_0) \neq & 0.
\end{split}
\end{equation}
We consider the double-scaling situation when 
\begin{equation}
	a=a_0 - q \frac{\log n}{n} + \frac{\alpha}{n}, \quad \textnormal{where} \quad q := \frac{\frac12 - \frac{1}{2k}}{x_2(a_0)-x_1(a_0)},
\end{equation}
for $\alpha$  in a compact subset of $\R$. 

The analysis is similar to Subsection~\ref{subsection:generic_case_5}.
For each $i=1,2$, we have, as in~\eqref{eq:relation_between_G(xa)_G(xa_0)}, 
\begin{equation}
	\Gfn(x_j(a);a) = \Gfn(x_j(a_0);a_0) + x_j(a_0)\frac{-q\log n + \alpha}{n} + o(n^{-1}).
\end{equation}
For 
\begin{align}
	x = & x_1(a_0) + \frac{\xi_1}{\sqrt{-n\Gfn''(x_1(a_0);a_0)}}, \label{eq:defn_of_xi_1}
\end{align}
we obtain, as in~\eqref{simple_relation_between_x(a_x(a_0))}, 
\begin{align}
	\Gfn(x;a) & = \Gfn(x_1(a_0); a_0) + x_1(a_0)\frac{-q\log n + \alpha}{n} - \frac{\xi^2_1}{2n} + o(n^{-1})
\end{align}
for $\xi_1$ in a compact subset of $\R$.
Similarly, for 
\begin{align}
	x = & x_2(a_0) + \left( \frac{(2k)!}{-n\Gfn^{(2k)}(x_2(a_0);a_0)} \right)^{1/(2k)}\xi_2, \label{eq:defn_of_xi_2}
\end{align}
we have (using~\eqref{eq:Gpppr})
\begin{align}
\Gfn(x;a) & = \Gfn(x_2(a_0); a_0) + x_2(a_0)\frac{-q\log n + \alpha}{n} - \frac{\xi^{2k}_2}{n} + o(n^{-1})
\end{align}
for $\xi_2$ in a compact subset of $\R$. Therefore, as in~\eqref{eq:bla} above (cf. \eqref{eq:tilde_gamma_case_1_super11}),
\begin{multline} \label{eq:calculation_of_Gamma_thm_7}
	e^{n\ell/2} \bfGamma_{n-j}(a) = e^{n\Gfn(x_1; a_0)} \Bigg[ \sqrt{\frac{2\pi}{-n\Gfn''(x_1)}}  \M_{j,n}(x_1) n^{-x_1 q} e^{x_1\alpha}   \\
+ \bigg(\frac{(2k)!}{-n\Gfn^{(2k)}(x_2;a_0)}\bigg)^{1/(2k)}  \M_{j,n}(x_2) n^{-x_2 q} e^{x_2\alpha}\int_{-\infty}^\infty e^{-\xi^{2k}}d\xi  \Bigg] (1+o(1)).
\end{multline}
Since
\begin{equation}
	\frac{1}{2} + x_1q = \frac{1}{2k} + x_2q = \frac{\frac{x_2}{2}-\frac{x_1}{2k}}{x_2-x_1},
\end{equation}
we can write~\eqref{eq:calculation_of_Gamma_thm_7} as 
\begin{equation} 
	e^{n\ell/2} \bfGamma_{n-j}(a) = n^{-\frac{\frac{x_2}{2}-\frac{x_1}{2k}}{x_2-x_1}} e^{n\Gfn(x_1; a_0)} \big[ B_1(\alpha)+B_2(\alpha) \big] (1+o(1))
\end{equation}
where
\begin{align}
B_1(\alpha) := & \sqrt{\frac{2\pi}{-\Gfn''(x_1)}}  \M_{j,n}(x_1)  e^{x_1\alpha}, \label{eq:definition_of_B1} \\
B_2(\alpha) := & \bigg( \frac{(2k)!}{-\Gfn^{(2k)}(x_2;a)}\bigg)^{1/(2k)}  \M_{j,n}(x_2)  e^{x_2\alpha}\int_{-\infty}^\infty e^{-\xi^{2k}_2}d\xi_2, \label{eq:definition_of_B2}.
\end{align}
As in~\eqref{eq:asy_formula_of_tilde_psi_tow_outliers}, for $x\in 
\Intx(x_1)$, 
\begin{equation} \label{eq:asy_formula_of_tilde_psi_tow_outliers333}
\begin{split}
 	\tilde{\psi}_{n-j}(x) = 
 &  n^{\frac{\frac{x_2}{2}-\frac{x_1}{2k}}{x_2-x_1}} \frac{e^{n(\Gfn(x;a)+\Hfn(x;a)-2\Gfn(x_1(a_0); a_0))/2}}{B_1(\alpha)+B_2(\alpha)}   (1+o(1)).
\end{split}
\end{equation}
From this we obtain, as in~\eqref{eq:asym_oftildepsi_psi_two_peaks}, that 
\begin{equation}
\tilde{\psi}_{n-j}(x)\psi_{n-j}(x) = p^{(1)}_{j,n}(\alpha) \sqrt{\frac{-n\Gfn''(x_1;a_0)}{2\pi}} e^{-\frac{1}{2}\xi^2_1} (1+o(1))
\end{equation}
for $x$ given in~\eqref{eq:defn_of_xi_1} and $\xi_1$ in a compact subset of $\R$, 
and 
\begin{equation}
\tilde{\psi}_{n-j}(x)\psi_{n-j}(x) = 
p^{(2)}_{j,n}(\alpha) \left( \frac{-n\Gfn^{(2k)}(x_2;a_0)}{(2k)!} \right)^{1/(2k)} \frac{e^{-\xi^{2k}_2}}{\int^{\infty}_{-\infty} e^{-\xi^{2k}_2} d\xi_2} (1+o(1))
\end{equation}
for $x$ given in~\eqref{eq:defn_of_xi_2} and $\xi_1$ in a compact subset of $\R$, where for $i = 1,2$
\begin{equation}\label{eq:defpj1_2}
	p^{(i)}_{j,n}(\alpha) := \frac{B_i(\alpha)}{B_1(\alpha)+B_2(\alpha)}.
\end{equation}
From the definition, the properties of $p^{(i)}_{j,n}(\alpha)$ in Theorem~\ref{thm:noname} follow easily.

Thus it follows as in~\eqref{eq:probability_right_peak_2} and~\eqref{eq:probability_left_peak_1} that 
\begin{align}
\langle \tilde{\psi}_{n-j}(x), \psi_{n-j}(x) \rangle_{\Intxx(x_2;k)} = & p^{(2)}_{j,n}(\alpha) \frac{1}{\int_{-\infty}^\infty e^{- \xi^{2k}}d\xi}\int_T^\infty e^{- \xi^{2k}}d\xi +o(1), \label{eq:inner_prod_two_max_nongen_1} \\
\langle \tilde{\psi}_{n-j}(x), \psi_{n-j}(x) \rangle_{\Intx(x_1)} = & p^{(1)}_{j,n}(\alpha) \int_T^\infty \frac1{\sqrt{2\pi}} e^{-\frac12 \xi^2}d\xi + p^{(2)}_{j,n}(\alpha) +o(1). \label{eq:inner_prod_two_max_nongen_2}
\end{align}

As in the proof of Theorem \ref{thm:thm_rank_1}\ref{enu:thm:thm_rank_1:b} in Subsections \ref{sec:abigacclesshalfVp} and in the proof of Theorem \ref{thm:supercritical_split} in Subsection \ref{subsection:generic_case_5}, we have $\lVert K_{n-j,n} \chi_{\Intx(x_i)} \tilde{\psi}_{n-j} \rVert_{L^2(\Intx(x_i))} \to 0$.

Thus we obtain, from~\eqref{eq:maind2}, that 
\begin{align}
 	\det \left( 1 - \chi_{\Intx(x_1)} \tilde{K}_{n-j+1,n} \chi_{\Intx(x_1)} \right) = & p^{(1)}_{j,n}(\alpha)\erf(T) +o(1), \\
	\det \left( 1 - \chi_{\Intxx(x_2;k)} \tilde{K}_{n-j+1,n} \chi_{\Intxx(x_2;k)}  \right) = & p^{(1)}_{j,n}(\alpha) + p^{(2)}_{j,n}(\alpha)) \frac{\int^T_{-\infty} e^{\xi^{2k}} d\xi}{\int^{\infty}_{-\infty} e^{\xi^{2k}} d\xi} +o(1), 
\end{align}
and Theorem \ref{thm:noname} when $a_0< \frac12 V'(\redge)$ is proven.


\section{When $a>\frac12 V'(\redge)$}\label{subsection:generic_case_3}

Note that if $a>\frac12 V'(\redge)$, then $a>\acc$. In this section, we prove Theorem~\ref{thm:convex}\ref{enu:thm:thm_rank_1:b} and  Theorems~\ref{thm:thm_rank_1}\ref{enu:thm:thm_rank_1:b},~\ref{thm:supercritical_split} and~\ref{thm:noname} for the case when $a$ (or $a_0$) $>\frac12 V'(\redge)$.
After a small change at the first step, the analysis is the same as in the case when $\acc<a<\frac12 V'(\redge)$ discussed in Subsection~\ref{sec:abigacclesshalfVp},\ref{subsection:generic_case_5} and~\ref{subsection:generic_case_5_2}. The proof of  Theorem~\ref{thm:convex} \ref{enu:thm:convex:b} is identical to the proof of Theorem~\ref{thm:thm_rank_1} \ref{enu:thm:thm_rank_1:b}. 


Note that $c(a)=\redge$ in this case (see Definition~\ref{def:ca}).
Since $\Gfn'(\redge)>0$ when $a>\frac{1}{2}V'(\redge)$, $\Gfn_{\max}(a):= \max\{ \Gfn(x;a): x\in [\redge, \infty)\}$ satisfies $\Gfn_{\max}(a)>\Gfn(\redge)=\Hfn(\redge)$. Let $\epsilon>0$ be small enough so that all the maximizers of $\Gfn$ are in $(\redge+2\epsilon, \infty)$ and 
\begin{equation}\label{eq:GfnmaxHfn}
	\Gfn_{\max}(a)>\Hfn(\redge + 2\epsilon)>\Hfn(\redge+\epsilon).
\end{equation}
We have the following formula of $\tilde{\psi}_{n-j}(x)$. This is the analogue of Lemma~\ref{eq:approx_of_tilde_psi_far}.

\begin{lemma}\label{lem:tildepsilargea}
Let $a > \frac{1}{2}V'(\redge)$. As $n\to\infty$ while $j=O(1)$, 
\begin{equation} \label{eq:expression_of_Gamma_a>half_Vprime}
	\bfGamma_{n-j}(a) = \int^{\infty}_{\redge+\epsilon} \varphi_{n-j}(y)e^{nay} dy (1 + o(1))
\end{equation}
and
\begin{equation} \label{eq:tilde_psi_n-1_less_than_c0_a_greater}
\begin{split}
	\tilde{\psi}_{n-j}(x) = e^{n (\Gfn(x)-\Hfn(x))/2} 
	\bigg\{ \frac1{e^{n\ell/2}  \bfGamma_{n-j}(a)}e^{n\Hfn(x)}   + O((1+|x|)^{-j}) \bigg\}
\end{split}
\end{equation}
for $x \in [\redge+2\epsilon, \infty)$. 
\end{lemma}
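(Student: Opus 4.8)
The plan is to follow the same template as the proof of Lemma~\ref{eq:approx_of_tilde_psi_far} and the computations in Subsection~\ref{sec:GammalessV}, specialized to the case $c(a)=\redge$ with the integral in~\eqref{eq:eq80} split at $c=\redge+\epsilon$ rather than at $c(a)$. Since $a>\frac12 V'(\redge)$, the saddle-point structure near $\redge$ is different: $\Hfn$ is increasing on $[\redge,\infty)$ (Lemma~\ref{fact:first}\ref{enu:fact:first:a} together with $c(a)=\redge$), so there is no interior saddle on the real axis, and the dominant contribution to $\bfGamma_{n-j}(a)$ comes from the Laplace-type integral $\int_{\redge+\epsilon}^\infty \varphi_{n-j}(y)e^{nay}\,dy$ near the maximizer(s) of $\Gfn(x;a)$.

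First I would re-derive the analogue of~\eqref{eq:division_of_psi_n_exponent}: write $\bfGamma_{n-j}(a)=\int_{-\infty}^{\redge+\epsilon}\varphi_{n-j}(y)e^{nay}\,dy+\int_{\redge+\epsilon}^\infty\varphi_{n-j}(y)e^{nay}\,dy$, and for the first piece use $C_+-C_-=1$, the analyticity of $\varphi_{n-j}$, and the exponential decay of $(C\varphi_{n-j})(z)e^{naz}$ as $\Re z\to-\infty$ (valid since $a>0$) to deform the contour onto $\Gamma_\pm$ based at $\redge+\epsilon$. On these contours one substitutes the asymptotics~\eqref{eq:asy_of_Cvarphi_n-j}, giving an integrand of size $O((1+|z|)^j e^{n(\Hfn(z)-\ell/2)})$; since $\Re\Hfn(z)$ is strictly decreasing along $\Gamma_+$ away from $\redge+\epsilon$ (this follows from~\eqref{eq:decreasing_counterclockwise_1}--\eqref{eq:decreasing_counterclockwise_2} with $c$ replaced by $\redge+\epsilon$, using $C_\Gamma>1/(2a)$), the contour integral is $O(e^{n\Hfn(\redge+\epsilon)})$. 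Meanwhile the second piece is, by~\eqref{eq:GtoMthen}, of order $e^{n(\Gfn_{\max}(a)-\ell/2)}$, and by~\eqref{eq:GfnmaxHfn} we have $\Gfn_{\max}(a)>\Hfn(\redge+\epsilon)$, so the contour contribution is exponentially negligible: this gives~\eqref{eq:expression_of_Gamma_a>half_Vprime}.

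For~\eqref{eq:tilde_psi_n-1_less_than_c0_a_greater}, I would use Lemma~\ref{lem:tpsinew} with $c=\redge+\epsilon$, which is legitimate since that lemma holds for any $c>\redge$. Factor out $e^{n(\Gfn(x)-\Hfn(x))/2}$ (using $ax-V(x)/2=(\Gfn(x)+\Hfn(x)-\ell)/2$) exactly as in~\eqref{eq:Gammapsiall}, producing the indicator term $\frac1{e^{n\ell/2}\bfGamma_{n-j}(a)}e^{n\Hfn(x)}1_{(\redge+\epsilon,\infty)}(x)$ — and since we restrict to $x\ge\redge+2\epsilon$ the indicator is identically $1$, so it reads simply $\frac1{e^{n\ell/2}\bfGamma_{n-j}(a)}e^{n\Hfn(x)}$. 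The error term $Q_{j,n}(x)$ is controlled exactly as in the proof of Lemma~\ref{eq:approx_of_tilde_psi_far}: the integral over $(\redge+\epsilon,\infty)$ of $e^{n(\Hfn(x)-\Gfn(x))/2}K_{n-j,n}(x,y)e^{n(ay-V(y)/2)}\,dy$ is $O((1+|x|)^{-j}\int_{\redge+\epsilon}^\infty e^{n(\Gfn(y)-\ell/2)}(1+|y|)^{-j}\,dy)$ by~\eqref{eq:lemma_enu:estimate_K_n-1,n:1}, and the $\Gamma_\pm$ integral is $O((1+|x|)^{-j}e^{n\Hfn(\redge+\epsilon)})$ (with an extra semicircular deformation of radius $n^{-1/2}$ near $x=\redge+\epsilon$ as in Figures~\ref{figure:x-c_in_epsilon'}--\ref{x-c_in_-epsilon'} if $x$ is within $\epsilon'$ of $\redge+\epsilon$, but since we only claim the estimate for $x\ge\redge+2\epsilon$ this is not needed). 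Dividing by the matching lower bound $\bfGamma_{n-j}(a)=\Omega(e^{n(\Gfn_{\max}(a)-\ell/2)})$ — which dominates both error pieces — yields $Q_{j,n}(x)=O((1+|x|)^{-j})$ uniformly for $x\ge\redge+2\epsilon$, establishing~\eqref{eq:tilde_psi_n-1_less_than_c0_a_greater}.

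The main obstacle, as in the earlier lemmas, is bookkeeping the competition of exponential scales: one must verify carefully that the strict inequality~\eqref{eq:GfnmaxHfn} (obtainable by choosing $\epsilon$ small, using continuity of $\Hfn$ at $\redge$ and $\Gfn_{\max}(a)>\Hfn(\redge)$) makes the $\Gamma_\pm$-contour contribution strictly subdominant in $\bfGamma_{n-j}(a)$, so that $\bfGamma_{n-j}(a)$ is genuinely asymptotic to the Laplace integral, and that the same inequality propagates through the $Q_{j,n}(x)$ estimate uniformly in $x$. No new analytic input beyond Proposition~\ref{prop:asy_for_mult_cut} and the steepest-descent bounds already invoked in Subsection~\ref{sec:GammalessV} is required; the proof is a routine adaptation, which is presumably why only a sketch is expected.
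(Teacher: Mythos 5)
Your proposal is correct and follows essentially the same route as the paper's proof: the contour $\bar{\Gamma}_\pm$ based at $\redge+\epsilon$, the monotonicity of $\Re\Hfn$ along it (via \eqref{eq:decreasing_counterclockwise_1}--\eqref{eq:decreasing_counterclockwise_2} with $c$ replaced by $\redge+\epsilon$), the comparison of the scales $e^{n\Hfn(\redge+\epsilon)}$ versus $e^{n\Gfn_{\max}(a)}$ through \eqref{eq:GfnmaxHfn}, and the analogue of Lemma~\ref{lem:tpsinew} with $c=\redge+\epsilon$ together with the observation that restricting to $x\ge\redge+2\epsilon$ keeps $|x-z|\ge\epsilon$ and yields the improved error $O((1+|x|)^{-j})$. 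No gaps.
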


\begin{proof}
As in~\eqref{eq:division_of_psi_n_exponent}, we write $\bfGamma_{n-j}(a) = \bfGamma_{n-j}(a;n)$ as
\begin{equation} \label{eq:integral_expression_of_tilde_gamma_in_case_1_prime}
	\bfGamma_{n-j}(a) = - \int_{\bar{\Gamma}_+\cup\bar{\Gamma}_-} (C\varphi_{n-j})(z)e^{naz} dz + \int^{\infty}_{\redge+\epsilon} \varphi_{n-j}(y)e^{nay} dy,
\end{equation}
where, for a large enough but fixed positive constant $C_{\bar{\Gamma}}$, (\cf\ the contour $\Gamma$ defined in \eqref{eq:definition_of_contour_Gamma})
\begin{equation} \label{eq:definition_of_contour_Gamma_bar}
	\bar{\Gamma}_+: =  \{ \redge+\epsilon + it \mid 0 < t  \leq C_{\bar{\Gamma}} \}\cup  \{ \redge+\epsilon + iC_{\bar{\Gamma}} - t \mid t \geq 0 \}
\end{equation}
and $\bar{\Gamma}_-$ is the reflected image of $\bar{\Gamma}_+$ about the real axis. The contours are oriented as indicated in Figure~\ref{fig:Gamma_bar}.
\begin{figure}[htp]
\centering
\includegraphics{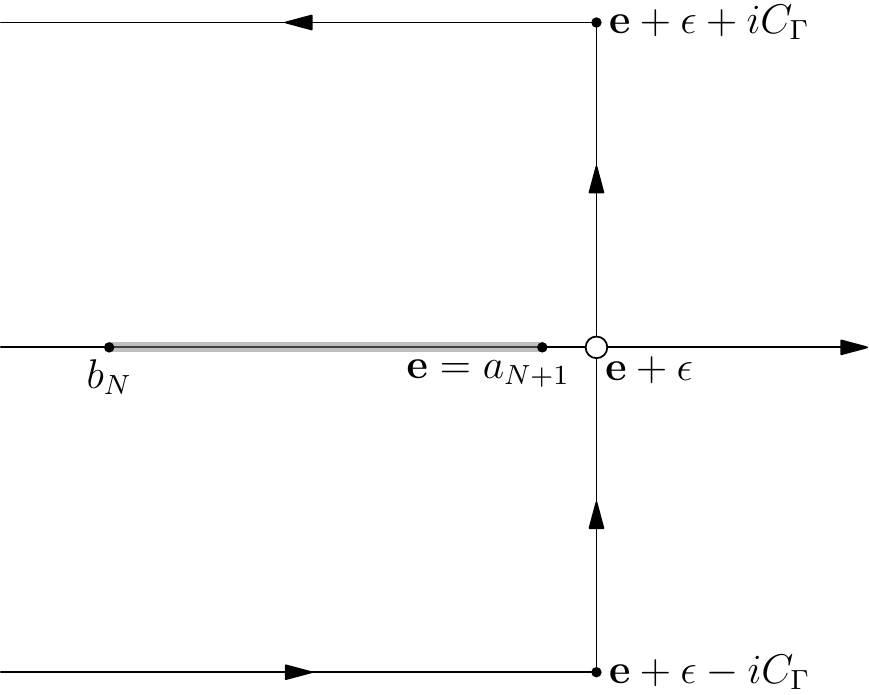}
\caption{The contours $\bar{\Gamma}_+$ and $\bar{\Gamma}_-$}\label{fig:Gamma_bar}
\end{figure}

As in \eqref{eq:CphyMremark}, by using \eqref{eq:asy_of_Cvarphi_n-j} 
for $(C\varphi_{n-j})(z)$, the contour integral over $\bar{\Gamma}$ in \eqref{eq:integral_expression_of_tilde_gamma_in_case_1_prime} satisfies
\begin{equation}\label{eq:CphyMremark_bar}
\begin{split}
	&\int_{\bar{\Gamma}_+\cup\bar{\Gamma}_-}  (C\varphi_{n-j})(z)e^{naz} dz  = \int_{\bar{\Gamma}_+\cup\bar{\Gamma}_-} \tilde{M}_{j,n} (z)  e^{n(\Hfn(z;a) - \ell/2)}  dz.
\end{split}
\end{equation}
Now $\Hfn(z)$ has no saddle point in $\bar{\Gamma}_{\pm}$ since $\Hfn(z) \neq 0$ for all $z \in \bar{\Gamma}$. However, 
it is easy to check that for $z(t) = \redge+\epsilon + it$, $0<t< C_{\bar{\Gamma}}$ and  $z(t) = \redge + \epsilon + iC_{\bar{\Gamma}}-t$, the formulas \eqref{eq:decreasing_counterclockwise_1} and \eqref{eq:decreasing_counterclockwise_2} still hold verbatim except that $c$ becomes $\redge+\epsilon$, provided that $C_{\bar{\Gamma}}$ is large enough, say, $C_{\bar{\Gamma}} > \frac{1}{2a}$. Hence $\Re \Hfn(z)$ decreases strictly as $z$ travels along $\bar{\Gamma}_+$ in the direction of the orientation. Similarly, $\Re \Hfn(z)$ increases strictly as $z$ travels along $\bar{\Gamma}_-$ in the direction of the orientation. Noting that for fixed $n$, $\tilde{M}_{j,n}(z) \to z^{j-1}$ as $z \to \infty$ and for fixed $z$, $\tilde{M}_{j,n}(z)$ is uniformly bounded in $n$ from Proposition~\ref{prop:asy_for_mult_cut}, we obtain 
\begin{equation} \label{eq:case_3_contour_integral_Cpsi}
\begin{split}
	 \int_{\bar{\Gamma}_+\cup\bar{\Gamma}_-}  (C\varphi_{n-j})(z)e^{naz} dz
	= O\big( e^{n\Hfn(\redge+\epsilon)-n\ell/2} \big).
\end{split}
\end{equation}
On the other hand, consider the second 
integral in \eqref{eq:integral_expression_of_tilde_gamma_in_case_1_prime}:
\begin{equation}\label{eq:GtoMthen0101}
	\int^{\infty}_{\redge+\epsilon} \varphi_{n-j}(y)e^{nay} dy 
	= \int^{\infty}_{\redge+\epsilon}  M_{j,n}(y) e^{n(\Gfn(y;a)-\ell/2)}dy.
\end{equation}
By using  the Laplace's method, we find an estimate similar to~\eqref{eq:GtoMthen}. Hence we find that~\eqref{eq:GtoMthen0101} is exponentially larger than~\eqref{eq:case_3_contour_integral_Cpsi} due to the assumption~\eqref{eq:GfnmaxHfn}. Thus \eqref{eq:expression_of_Gamma_a>half_Vprime} is proven.

Now consider $\tilde{\psi}_{n-j}(x)$.
Analogous to \eqref{eq:formula_of_tilde_psi_less_than_c} in Lemma \ref{lem:tpsinew}, we have, for $x \in \realR \setminus \{ \redge + \epsilon \}$, 
\begin{equation}
\begin{split}
	&\bfGamma_{n-j}(a) \tilde{\psi}_{n-j}(x) = e^{n( ax - V(x)/2)} 1_{(\redge + \epsilon, \infty)}(x) \\
&\quad +  \int_{\bar{\Gamma}_+\cup\bar{\Gamma}_-}   \CK_{n-j,n}(x,z) e^{naz} dz   
- \int^{\infty}_{\redge + \epsilon} K_{n-j,n}(x,y)  e^{n(ay-V(y)/2)} dy .
\end{split}
\end{equation}
Using this formula, due to the property of the $\Hfn(x;a)$ on $\bar{\Gamma}_{\pm}$ and $\Gfn(x;a)$ on $(\redge+\epsilon, \infty)$, the analysis of the proof of Lemma~\ref{eq:approx_of_tilde_psi_far} applies without any changes. If we restrict $x\ge \redge+2\epsilon$, then the error term $O(\sqrt{n}(1+|x|)^{-j})$ in~\eqref{eq:tilde_psi_n-1_less_than_c_1} can be replaced by $O((1+|x|)^{-j})$ since $|x-z|\ge \epsilon$ for $z\in \bar{\Gamma}_{\pm}$ as in the first part of the proof of Lemma~\ref{lem:tpsinew}. We skip the details. 
\end{proof}

Since $\tilde{\psi}_{n-j}$ is the only term that depends on $a$ and its asymptotic formula for $a>\frac12V'(\redge)$ is same 
as the case when $\acc <a<\frac12V'(\redge)$ in  Section~\ref{subsection:generic_case_1}, all the analysis in Section~\ref{subsection:generic_case_1} hold without any changes. Therefore, we obtain the proof of Theorem~\ref{thm:convex}\ref{enu:thm:convex:b} and  Theorems~\ref{thm:thm_rank_1}\ref{enu:thm:thm_rank_1:b},~\ref{thm:supercritical_split} and~\ref{thm:noname} for the case when $a$ (or $a_0$) $>\frac12 V'(\redge)$.


\section{When $a=\frac12 V'(\redge)$} \label{subsection:The_critical_case_rank_1}

First, suppose that $a=\frac12V'(\redge)>\acc$. Then $\Gfn_{\max}(a):= \max\{ \Gfn(x;a) \mid x\in [\redge, \infty)\}$ satisfies $\Gfn_{\max}(a)>\Hfn(\redge)=\Gfn(\redge)$ (recall Definition~\ref{eq:definition_of_acc} and~\eqref{eq:definition_of_calA}). This property is enough to prove Lemma~\ref{lem:tildepsilargea} and the analysis of  Section~\ref{subsection:generic_case_3} applies without any change. Hence we obtain the proof of Theorem \ref{thm:thm_rank_1}\ref{enu:thm:thm_rank_1:b}, 
\ref{thm:supercritical_split} and~\ref{thm:noname} when $a=\frac12V'(\redge)>\acc$. Combining the results of the previous two sections, we have proved all theorems except for Theorems~\ref{thm:convex}\ref{enu:thm:convex:b}, \ref{thm:critical_traditional_split}\ref{enu:thm:critical_traditional_split:a} and~\ref{thm:rank_1_transit_crit}.

Theorem~\ref{thm:convex}\ref{enu:thm:convex:b} and Theorem~\ref{thm:critical_traditional_split}\ref{enu:thm:critical_traditional_split:a} share the same proof and this is given in  
Subsection~\ref{sec:4.2}.
The proof of Theorem~\ref{thm:rank_1_transit_crit} is in Subsection~\ref{subsection:Proof_of_Theorem1.4}.


\subsection{Proof of Theorem~\ref{thm:convex}(b) and~\ref{thm:critical_traditional_split}\ref{enu:thm:critical_traditional_split:a}}\label{sec:4.2}

Let $V$ be a potential such that $\acc=\frac12V'(\redge)$. 
We assume that $\acc\notin \mathcal{J}_V$. (This holds under the assumption of convexity of Theorem~\ref{thm:convex}(a).) Then $\Gfn(\redge;\acc)>\Gfn(x;\acc)$ for all $x>\redge$. 
We consider a double-scaling situation when 
\begin{equation}\label{eq:anearcrigen}
	a = \acc+\frac{\beta\alpha}{n^{1/3}},
\end{equation}
where $\alpha$ is a real number in a compact subset of $\realR$.

\subsubsection{Computation of $\bfGamma_{n-j}(a)$}  \label{sec:4.2.1}

\begin{lemma}\label{lem:GammaBQQ}
We have 
\begin{equation} \label{eq:estimation_of_tilde_gamma_critical}
	\bfGamma_{n-j}(a) =  \frac{Q_n}{\beta\sqrt{n}} e^{\alpha^3/3} (\B_{j, n}(\redge)+o(1)), 
\end{equation}
where 
\begin{equation} \label{eq:defn_of_C(alpha)}
	Q_n =Q_n(a):= e^{n(\Gfn(\redge; a)-\ell/2)}.
\end{equation}
Here $\B_{j,n}(z)$ is given in~\eqref{eq:psi_general}. Note that $\B_{j,n}(\redge)$ is in a compact subset of $(0,\infty)$ independent of $n$. 
\end{lemma}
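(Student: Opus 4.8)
The plan is to follow the strategy of Subsection~\ref{sec:GammalessV}, writing $\bfGamma_{n-j}(a)=\int_\R\varphi_{n-j}(y)e^{nay}\,dy$ and removing the oscillation of $\varphi_{n-j}$ by the Cauchy-transform device~\eqref{eq:division_of_psi_n_exponent}; but now the decisive new feature is that $a=\acc=\tfrac12 V'(\redge)$ forces $c(a)=\redge$, so the saddle point of $\Hfn(\cdot;a)$ — the point where we split the integral and which governs the Laplace estimate in Subsection~\ref{sec:GammalessV} — has merged with the hard edge $\redge$. Hence the exterior asymptotics~\eqref{eq:asy_of_varphi_n-j}--\eqref{eq:asy_of_Cvarphi_n-j} of $\varphi_{n-j}$ and $(C\varphi_{n-j})$ are invalid in the region that dominates and must be replaced there by the Airy (edge) asymptotics of $\psi_{n-j},\varphi_{n-j}$ and $(C\varphi_{n-j})$ near $\redge$ recorded in Section~\ref{section:Result_of_RHP} (\cf~\eqref{eq:psi_general}).

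Concretely, fix a small $\epsilon>0$ and the edge disk $D=\{|z-\redge|<\delta_0\}$, and write $\bfGamma_{n-j}(a)=\int_{-\infty}^{\redge+\epsilon}\varphi_{n-j}(y)e^{nay}\,dy+\int_{\redge+\epsilon}^\infty\varphi_{n-j}(y)e^{nay}\,dy$. Since $\acc=\tfrac12 V'(\redge)$ and $\acc\notin\mathcal J_V$, Lemma~\ref{lem:Gprop}\ref{lemma_enu:e}--\ref{lemma_enu:f} give $\Gfn(x;\acc)<\Gfn(\redge;\acc)$ for all $x>\redge$; together with Proposition~\ref{prop:asy_for_mult_cut}, the superlinear growth of $V$, and the fact that the Airy function already contributes exponential decay on $(\redge+\epsilon,\redge+\delta_0)$, the second integral is $o\bigl(Q_n/(\beta\sqrt n)\bigr)$, because $\Gfn(\redge;a)=\Gfn(\redge;\acc)+\beta\alpha n^{-1/3}\redge$ so the exponent of $Q_n$ is $n\Gfn(\redge;\acc)+O(n^{2/3})-n\ell/2$. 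For the first integral I proceed as in~\eqref{eq:division_of_psi_n_exponent}, writing it as $-\int_{\Gamma_+\cup\Gamma_-}(C\varphi_{n-j})(z)e^{naz}\,dz$, but I would deform $\Gamma_\pm$ inside $\mathbb C_\pm$ so that they pass through $D$ along steepest-descent arcs of $\Hfn$ for the saddle at $z=\redge$ before running off to $-\infty$. By~\eqref{eq:decreasing_counterclockwise_1}--\eqref{eq:decreasing_counterclockwise_2}, $\Re\Hfn$ strictly decreases along $\Gamma_\pm$ away from $\redge$, so on $\Gamma_\pm\setminus D$ the integrand (via~\eqref{eq:asy_of_Cvarphi_n-j}) is $O\bigl(e^{n(\Re\Hfn|_{\partial D}-\ell/2)}\bigr)=O\bigl(e^{n(\phi(\redge)-\delta_2)}\bigr)$ for some $\delta_2>0$, again negligible against $Q_n/(\beta\sqrt n)=e^{n\phi(\redge)}/(\beta\sqrt n)$, where $\phi(y):=ay-\tfrac12V(y)$ and $\phi(\redge)=\Gfn(\redge;a)-\ell/2$.

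The heart of the argument is the contribution of $\Gamma_\pm\cap D$ (together with the part of $\int_{\redge}^{\redge+\epsilon}$ inside $D$). I would rescale $z=\redge+\beta^{-1}n^{-2/3}\zeta$ and use $a=\acc+\beta\alpha n^{-1/3}$. The function $\phi$ is real-analytic with $\phi'(\redge)=a-\acc=\beta\alpha n^{-1/3}$, so $n\phi(z)=n\phi(\redge)+\alpha\zeta+O(n^{-1/3}\zeta^2)$; thus $\varphi_{n-j}(z)e^{naz}=\psi_{n-j}(z)e^{n\phi(z)}$ and $(C\varphi_{n-j})(z)e^{naz}$, upon inserting their Airy-parametrix form near $\redge$ (in which $\psi_{n-j}$ and $C\varphi_{n-j}$ are built from $\Ai$ and its rotations $\Ai(\omega^{\pm1}\,\cdot\,)$, $\omega=e^{2\pi i/3}$, of argument $\approx\zeta$ with common prefactor $n^{1/6}\B_{j,n}(\redge)$), become $n^{1/6}\B_{j,n}(\redge)Q_n\,\mathcal A(\zeta)e^{\alpha\zeta}(1+o(1))$ for the appropriate Airy profile $\mathcal A$. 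After $dz=\beta^{-1}n^{-2/3}d\zeta$ the edge contribution is $\dfrac{\B_{j,n}(\redge)}{\beta\sqrt n}\,Q_n\displaystyle\int_\gamma\mathcal A(\zeta)e^{\alpha\zeta}\,d\zeta\,(1+o(1))$ over the rescaled contour $\gamma$ through $\zeta=0$. Finally $\int_\gamma\mathcal A(\zeta)e^{\alpha\zeta}\,d\zeta=e^{\alpha^3/3}$: writing $F(\alpha)$ for this integral, two integrations by parts using $\Ai''(\zeta)=\zeta\Ai(\zeta)$ (the boundary terms vanish since $\mathcal A(\zeta)e^{\alpha\zeta}\to0$ at the ends of $\gamma$) give $F'(\alpha)=\alpha^2F(\alpha)$, while $F(0)$ is the usual Airy normalization $\int_\gamma\mathcal A=1$, whence $F(\alpha)=e^{\alpha^3/3}$. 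Collecting the pieces gives~\eqref{eq:estimation_of_tilde_gamma_critical}; that $\B_{j,n}(\redge)$ stays in a compact subset of $(0,\infty)$ uniformly in $n$ is part of the edge asymptotics of Section~\ref{section:Result_of_RHP}.

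The main obstacle is precisely this confluence of the saddle $c(a)$ with the hard edge $\redge$: one must match the Airy parametrix valid on $D$ to the exterior asymptotics valid on $\Gamma_\pm\setminus D$ along a single contour; propagate the merely $O(n^{-1/3})$ pointwise edge errors through the rescaled $\zeta$-integration with enough uniformity — in particular controlling $\mathcal A$ as $\Re\zeta\to+\infty$ (rapid decay) and in the direction into the support — to justify the $(1+o(1))$; and check that the constants produced by the Airy parametrix reproduce exactly $\B_{j,n}(\redge)/\beta$ and the factor $e^{\alpha^3/3}$. This last point is the confluent limit of~\eqref{eq:tilde_gamma_case_1_sub} as $c(a)\downarrow\redge$: one verifies that $\tilde M_{j,n}(c(a))/\sqrt{\Hfn''(c(a);a)}$ tends to the correctly normalized multiple of $\B_{j,n}(\redge)$ and that $\Hfn(c(a);a)=\phi(\redge)+\alpha^3/(3n)+o(n^{-1})$, which a direct computation using the square-root behavior of $\gfn'$ at $\redge$ (from $\Psi(x)\sim\beta^{3/2}\pi^{-1}\sqrt{\redge-x}$) confirms.
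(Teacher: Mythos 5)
Your overall strategy coincides with the paper's: convert $\int_{-\infty}^{c}\varphi_{n-j}(y)e^{nay}\,dy$ into a contour integral of $(C\varphi_{n-j})(z)e^{naz}$, exploit the fact that at $a\approx\acc=\tfrac12V'(\redge)$ the saddle $c(a)$ of $\Hfn$ merges with $\redge$, where $\Hfn(z;\acc)-\Hfn(\redge;\acc)\sim\kappa(z-\redge)^{3/2}$ forces descent rays at angles $\pm2\pi/3$, insert the Airy parametrix of Section~\ref{section:Result_of_RHP} after the $n^{-2/3}$ rescaling, and evaluate the resulting sum of Airy--Laplace transforms through the ODE $F'(\alpha)=\alpha^2F(\alpha)$. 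This is exactly the proof in Subsection~\ref{sec:4.2.1}, with $\Sigma_\pm$ in the role of your deformed contours and identity~\eqref{eq:one_consequence_of_Leach} producing $e^{\alpha^3/3}$.

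There is, however, a genuine gap in your decomposition: you split the real line at $\redge+\epsilon$ rather than at $\redge$. The contour $\Gamma_\pm$ replacing $\int_{-\infty}^{\redge+\epsilon}$ is necessarily anchored on the real axis at $\redge+\epsilon$, and since $\Hfn(\cdot\,;\acc)$ is increasing on $(\redge,\infty)$ (its minimum is at $c(\acc)=\redge$), we have $\Re\Hfn=\Hfn(\redge+\epsilon)>\Hfn(\redge)$ at the anchor. Consequently $(C\varphi_{n-j})(z)e^{naz}=\tilde M_{j,n}(z)e^{n(\Hfn(z;a)-\ell/2)}$ is exponentially \emph{larger} than $Q_n$ on the portion of $\Gamma_\pm$ joining $\redge+\epsilon$ to the saddle, so your claimed bound $O\bigl(e^{n(\phi(\redge)-\delta_2)}\bigr)$ on $\Gamma_\pm\setminus D$ fails there, and a naive steepest-descent estimate of the contour integral does not close. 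The individually large contributions of $\Gamma_+$ and $\Gamma_-$ over that doubled segment cancel only after recombining $C_+\varphi-C_-\varphi=\varphi$, which is precisely the paper's decomposition~\eqref{eq:evaluation_of_tilde_gamma_critical}: split at $\redge$ itself, retain $\int_\redge^\infty\varphi_{n-j}(y)e^{nay}\,dy$ as a real integral (it is \emph{not} negligible --- it supplies the unrotated term $\int_0^\infty\Ai(\xi)e^{\alpha\xi}\,d\xi$ in~\eqref{eq:integral_over_linear_critical}), and let $\Sigma_\pm$ leave $\redge$ directly along the rays of angle $\pm2\pi/3$. A smaller point: in your integration-by-parts argument the boundary terms at the common endpoint $\zeta=0$ of the three rays do not vanish individually; they cancel only because $1+\omega+\omega^2=0$ (equivalently~\eqref{eq:Airyidensum}), not because $\mathcal{A}(\zeta)e^{\alpha\zeta}$ decays there.
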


\begin{proof}
In the proof of Lemma~\ref{lem:tpsinew} when $a<\frac12 V'(\redge)$, we have taken the contour $\Gamma_{\pm}$ to pass the point $c=c(a)$ at which $\Re \Hfn(z;a)$, $z\in\Gamma_{\pm}$, takes its maximum (see~\eqref{eq:division_of_psi_n_exponent}). Near this point, we had $\Hfn(z;a)-\Hfn(c(a); a)\sim \kappa (z-c(a))^2$ for some constant $\kappa>0$. This quadratic term changes when $a=\acc=\frac12 V'(\redge)$. In this case, $c(\acc)=\redge$, and (note~\eqref{eq:properties_of_gfn})
\begin{equation}
	\Hfn'(z; \acc)= -\gfn'(z)+ \acc = -\gfn'(z)+\gfn'(\redge)
\end{equation}
vanishes at $z=\redge$. 
Now since $\gfn'(z)$ is the Cauchy transform of the equilibrium measure $\Psi(x)$ (see~\eqref{eq:eqmeasure}), which vanishes like a square root at $x=\redge$, we find that there is a constant $\kappa'>0$ such that 
$\Hfn'(z;\acc)\sim \kappa'(z-\redge)^{1/2}$ for $z$ near $\redge$ such that $z-\redge\notin\R_-$. Hence 
$\Hfn(z;\acc)-  \Hfn(\redge; \acc)\sim \kappa(z-\redge)^{3/2}$  for a constant $\kappa>0$. This $3/2$-order of vanishing implies that near $z=\redge$, $\Re\Hfn(z; \acc)$ decreases most rapidly in the direction of angle $2\pi/3$ and $-2\pi/3$ as $z$ travels away from $\redge$. 

With the above preliminary computation in mind, we define\footnote{Here, the exact shape and the angle of the contour from $z=\redge$ is not important. For example, we can use the contour that extends straightly upward from $z=\redge$ as in Figure~\ref{fig:Gamma_complex} with $c$ replaced by $\redge$. The local behavior $\Hfn(z;\acc)-  \Hfn(\redge; \acc)\sim \kappa(z-\redge)^{3/2}$ near $z=\redge$ shows that $\Re\Hfn(z)$ decays as $z$ travels vertically away from $\redge$ at least locally. One can check $\Re\Hfn(z)$ indeed decreases as $z$ moves away from $\redge$ along on the entire curve. Our choice of the contour $\Sigma$ is made for the convenience of the formulas that appear later.}
the contours $\Sigma_{\pm}$ as (see Figure~\ref{fig:Gamma_critical})
\begin{equation} \label{eq:decomposition_of_Gamma_crit}
\Sigma_+ =  \Sigma_{2+} \cup \Sigma_{2'+} \cup \Sigma_{3}, \quad \Sigma_- = \Sigma_{1} \cup \Sigma_{2'-} \cup \Sigma_{2-},
\end{equation} 
and
\begin{equation}\label{eq:decomposition_of_Gamma_crit11}
\begin{split}
\Sigma_{1} = & \{ \redge+\omega^2\epsilon-iC_{\crit} + t \mid t \leq 0 \}, \\
\Sigma_{2'-} = & \{ \redge+\omega^2\epsilon+it \mid -C_{\crit} \leq t \leq 0 \}, \\
\Sigma_{2-} = & \{ \redge-\omega^2 t \mid -\epsilon \leq t < 0 \}, \\
\Sigma_{2+} = & \{ \redge+\omega t \mid 0 < t \leq \epsilon \}, \\
\Sigma_{2'+} = & \{ \redge+\omega\epsilon+it \mid 0 \leq t \leq C_{\crit} \}, \\
\Sigma_{3} = & \{ \redge+\omega\epsilon+iC_{\crit} - t \mid t \geq 0 \},
\end{split}
\end{equation}
where $\omega:= e^{2\pi i/3}$.
Here $\epsilon$ is a fixed constant chosen to satisfy the condition~\eqref{eq:condition_of_epsilon_crit} below, and  $C_{\crit}$ is a positive fixed constant large enough, say, greater than $1/(2a)$. As in~\eqref{eq:division_of_psi_n_exponent}, we have
\begin{equation} \label{eq:evaluation_of_tilde_gamma_critical}
	\bfGamma_{n-j}(a) =  - \int_{\Sigma_+\cup\Sigma_-} (C\varphi_{n-j})(z)e^{naz} dz
+ \int^{\infty}_{\redge} \varphi_{n-j}(y)e^{nay} dy.
\end{equation}

\begin{figure}
\begin{center}
\includegraphics{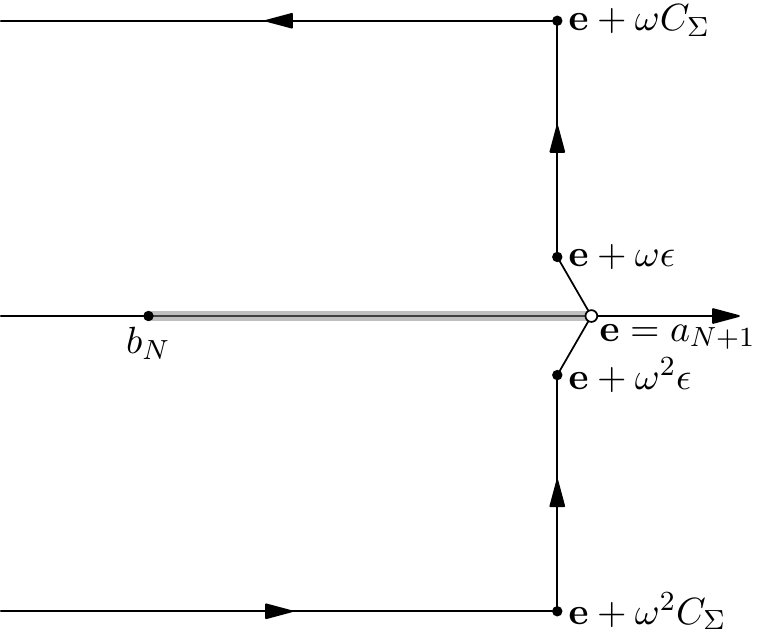}
\caption{The contours $\Sigma_+$ and $\Sigma_-$.}\label{fig:Gamma_critical} 
\end{center}
\end{figure}

Let
\begin{equation}
\Sigma^{\ess}_{2\pm} = \Sigma_{2\pm} \cap \{ z \in \compC \mid \lvert z-\redge \rvert < n^{-11/21} \}.
\end{equation}
We first consider the part of the first integral in~\eqref{eq:evaluation_of_tilde_gamma_critical} over $\Sigma^{\ess}_{2+}$. Inserting the asymptotics~\eqref{eq:Cpsi_upper_plan_general} for $(C\varphi_{n-j})(z)$, there are two terms, one involving $\Ai(\omega^2\Phi(z))$ and the other involving $\Ai'(\omega^2\Phi(z))$. We compute each of the integrals using the change of variables $z\mapsto\xi$ defined by 
\begin{equation} 
	z := \redge+\frac{\omega}{\beta n^{2/3}}\xi.
\end{equation}
This change of variables and  the double scaling~\eqref{eq:anearcrigen} imply that 
\begin{equation} \label{eq:simple_estimation_of_e^n(ax-V(x)/2)}
	e^{n(-\frac12V(z)+az)} = Q_n e^{\alpha\omega\xi} (1+o(1))
\end{equation}
uniformly in $z\in \Sigma^{\ess}_{2+}$, 
where (recall Lemma~\ref{fact:first}\ref{enu:fact:first:c})
\begin{equation}
	Q_n = e^{n(-\frac{1}2V(\redge) + a\redge)}=e^{n(\Hfn(\redge; a)-\ell/2)} = e^{n(\Gfn(\redge; a)-\ell/2)},
\end{equation}
as defined in~\eqref{eq:defn_of_C(alpha)}. 
Therefore, 
using the property~\eqref{eq:asy_pf_Phi_at_e} of $\Phi(z)$, and noting that $|z-\redge|\le n^{-11/21}$, the two integrals involving $\Ai(\omega^2\Phi(z))$ and $\Ai'(\omega^2\Phi(z))$ satisfy  
\begin{equation} \label{eq:the_first_asy_involving_Airy}
	\int_{\Sigma^{\ess}_{2+}} \Ai(\omega^2\Phi(z))B_{j,n}(z) e^{n(-\frac12 V(z)+az)}dz 
=\frac{\omega Q_n}{\beta n^{2/3}} \left( \B_{j,n}(\redge) \int^{\infty}_0 \Ai(\xi)e^{\omega\alpha\xi}d\xi  + o(1) \right)
\end{equation}
and
\begin{equation} \label{eq:the_first_asy_involving_Airy_like}
	\int_{\Sigma^{\ess}_{2+}} \Ai'(\omega^2\Phi(z))D_{j,n}(z) e^{n(-\frac12 V(z)+az)}dz
=\frac{\omega Q_n}{\beta n^{2/3}} \left( \D_{j,n}(\redge) \int^{\infty}_0 \Ai'(\xi)e^{\omega\alpha\xi}d\xi + o(1) \right).
\end{equation}
Observe that the integrals involving $\Ai(\xi)$ and $\Ai'(\xi)$ are convergent as these functions decay faster than exponential functions as $\xi\to+\infty$. 
From these, we find that 
\begin{equation} \label{eq:the_first_asy_involving_Airy00}
\int_{\Sigma^{\ess}_{2+}} (C\varphi_{n-j})(z) e^{naz}dz = -\frac{Q_n}{\beta\sqrt{n}}  \left( \B_{j,n}(\redge) \int^{\infty}_0 \Ai(\xi)e^{\omega\alpha\xi}d\xi  + o(1) \right).
\end{equation}

Now consider $\Sigma_{2+} \setminus \Sigma^{\ess}_{2+}$. By the property \eqref{eq:asy_pf_Phi_at_e} of $\Phi(z)$, we have that for $\lvert z - \redge \rvert < 1$, there exists $c_1 > 0$ such that
\begin{equation}
	\lvert n^{-2/3}\Phi(z) - \beta(z-\redge) \rvert < c_1 \lvert z-\redge \rvert^2.
\end{equation}
Hence the asymptotics of $\Ai(\xi)$ and $\Ai'(\xi)$ as $\xi\to\infty$ (\cite[10.4.59 and 10.4.61]{Abramowitz-Stegun64}) imply that $z \in \Sigma_{2+}$, if $|z-\redge|\le \beta(1 - (3/4)^{2/3})/c_1$, then
\begin{align} 
\Ai(\omega^2\Phi(z)) = & O((1+|\xi|)^{-1/4} e^{-\frac{1}{2}\xi^{3/2}}), \label{eq:est_of_Ai_Sigma_ess} \\
\Ai'(\omega^2\Phi(z)) = & O((1+|\xi|)^{1/4} e^{-\frac{1}{2}\xi^{3/2}}).\label{eq:est_of_Ai_prime_Sigma_ess}
\end{align}
Hence~\eqref{eq:Cpsi_upper_plan_general} implies that 
\begin{equation} \label{eq:est_of_Cphi_on_Sigma_ess}
	(C\varphi_{n-j})(z) e^{\frac{n}{2}V(z)} = O(n^{1/6} e^{-\frac{1}{2} \xi^{3/2}}).
\end{equation}
Also for $\lvert z - \redge \rvert < 1$, there exists $c_2 > 0$ such that (\cf\ \eqref{eq:simple_estimation_of_e^n(ax-V(x)/2)})
\begin{equation} \label{eq:est_of_exp_of_V-av_near_e}
	Q^{-1}_n e^{n(-\frac{1}{2}V(z) + az)} = O(e^{\alpha\omega\xi + n^{-1/3}c_2\xi^2}).
\end{equation}
Hence if we take $\epsilon$ in~\eqref{eq:decomposition_of_Gamma_crit11} small enough so that 
\begin{equation} \label{eq:condition_of_epsilon_crit}
\epsilon < 1, \quad \epsilon < \beta c^{-1}_1 \left( 1 -(3/4)^{2/3} \right) \quad \textnormal{and} \quad \epsilon < \beta^{-1} \left( \frac{1}{4} \right)^2 c^{-2}_2,
\end{equation}
then combining \eqref{eq:est_of_Cphi_on_Sigma_ess} and \eqref{eq:est_of_exp_of_V-av_near_e}, we have, for $n\ge (8\alpha)^{14}/\beta^7$, 
\begin{equation} \label{eq:estimate_on_Gamma_crit2_noness}
\begin{split}
& \int_{\Sigma_{2+} \setminus \Sigma^{\ess}_{2+}} (C\varphi_{n-j})(z)e^{naz} dz \\
= & Q_n \int_{\beta n^{1/7}}^{\epsilon \beta n^{2/3}} O \left( n^{1/6} e^{-\frac{1}{2} \xi^{3/2} + \alpha\omega\xi + n^{-1/3}c_2 \xi^2} \right) \frac{d\xi}{n^{2/3}} \\
= & Q_n \int_{\beta n^{1/7}}^{\epsilon \beta n^{2/3}} O \left( n^{-1/2} e^{-\frac{1}{2} \xi^{3/2} +\frac18\xi^{3/2}+ \frac14 \xi^{3/2}} \right) d\xi 
= Q_n O(e^{-\frac{\beta^{3/2}}{8} n^{3/14}}).
\end{split}
\end{equation}

For the rest of $\Sigma_+$, by a direct calculation as in the inequalities \eqref{eq:decreasing_counterclockwise_1} and \eqref{eq:decreasing_counterclockwise_2}, we find that $\Re (H(z;a))$ decreases strictly as $z$ travels away from $\redge+\omega\epsilon$ along  $\Sigma_{2'+} \cup \Sigma_3$. Also by direct calculation we verify that
\begin{equation}\label{eq:Hfnreald}
	 \bar{\epsilon}' := \Re \Hfn(\redge; \acc)- \Re \Hfn(\redge + \omega\epsilon; \acc)  > 0,
\end{equation}
where $\bar{\epsilon}'$ is a positive constant depending on $\epsilon$. Since $\Hfn(z;a)-\Hfn(z;\acc)= \frac{\beta\alpha}{n^{1/3}}z \to 0$ as $n\to\infty$ for a fixed $z$, the difference~\eqref{eq:Hfnreald} with $\acc$ replaced by $a$ is also bounded below by $\frac12 \bar{\epsilon}'$ for large enough $n$. Thus, from Proposition \ref{prop:asy_for_mult_cut}\ref{enu:prop:asy_for_mult_cut:b}, 
\begin{equation} \label{eq:estimate_on_Gamma_crit_3}
\begin{split}
\int_{\Sigma_{2'+} \cup \Sigma_{3}} (C\varphi_{n-j})(z)e^{naz} dz = & O\bigg( \int_{\Sigma_{2'+} \cup \Sigma_{3}} |z|^{j-1} e^{n(\Re\Hfn(z;a)-\ell/2)} d \lvert z \rvert \bigg)\\
= & e^{n(\Hfn(\redge; a)-\ell/2)}O(e^{-\frac14 n\bar{\epsilon}'}) \\
= & Q_n O(e^{-\frac14 n\bar{\epsilon}'}).
\end{split}
\end{equation}

Combining \eqref{eq:the_first_asy_involving_Airy00}, \eqref{eq:estimate_on_Gamma_crit2_noness} and \eqref{eq:estimate_on_Gamma_crit_3}, we obtain
\begin{equation} \label{eq:estimation_of_tilde_gamma_upper}
	\int_{\Sigma_+} (C\varphi_{n-j})(z)e^{naz} dz = -\frac{Q_n}{\beta\sqrt{n}} \left( \B_{j,n}(\redge) \int^{\infty}_0 \Ai(\xi)e^{\omega\alpha\xi}d\xi + o(1) \right).
\end{equation}

The integral over $\Sigma_-$ can be  evaluated in a similar way. Alternatively we can use the symmetry $(C\varphi_{n-j})(\bar{z})= -\overline{(C\varphi_{n-j})(z)}$. We have  
\begin{equation} \label{eq:estimation_of_tilde_gamma_lower}
\int_{\Sigma_-} (C\varphi_{n-j})(z)e^{naz} dz = -\frac{Q_n}{\beta\sqrt{n}}  \left( \B_{j,n}(\redge) \int^{\infty}_0 \Ai(\xi)e^{\omega^2\alpha\xi}d\xi + o(1) \right).
\end{equation}

\bigskip

For the integral over $(\redge, \infty)$ in~\eqref{eq:evaluation_of_tilde_gamma_critical}, we again consider three intervals $(\redge, \redge+n^{-11/21}]$, $(\redge+n^{-11/21}, \redge+\epsilon)$  and $[\redge+\epsilon, \infty)$, and proceed as before. We now use the asymptotics~\eqref{eq:psi_general} for $\varphi_{n-j}(z)$ in the first two intervals and \eqref{eq:asy_of_varphi_n-j} for the third one. 
Note the similarity of~\eqref{eq:psi_general} and~\eqref{eq:Cpsi_upper_plan_general}. The calculation is similar and we obtain 
\begin{equation} \label{eq:integral_over_linear_critical}
	\int^{\infty}_{\redge} \varphi_{n-j}(y)e^{nay}dy = \frac{Q_n}{\beta\sqrt{n}} \left( \B_{j,n}(\redge) \int^{\infty}_0 \Ai(\xi)e^{\alpha\xi}d\xi + o(1) \right).
\end{equation}

Combining~\eqref{eq:estimation_of_tilde_gamma_upper}, ~\eqref{eq:estimation_of_tilde_gamma_lower} and~\eqref{eq:integral_over_linear_critical}, we find
\begin{equation} 
	\bfGamma_{n-j}(a) = \frac{Q_n}{\beta\sqrt{n}} \left( \B_{j,n}(\redge) \int^{\infty}_0 \Ai(\xi)(e^{\alpha\xi}+ e^{\omega\alpha\xi}+e^{\omega^2\alpha\xi}) d\xi + o(1) \right).
\end{equation}
But it is easy to check that 
\begin{equation} \label{eq:one_consequence_of_Leach}
	\int^{\infty}_0 \Ai(\xi)\left( e^{\alpha\xi} + e^{\omega\alpha\xi} + e^{\omega^2\alpha\xi} \right) d\xi = e^{\alpha^3/3}
\end{equation}
for $\alpha\in\compC$. Indeed if we denote the left-hand-side of~\eqref{eq:one_consequence_of_Leach} by $f(\alpha)$, then by using $\Ai''(\xi)=\xi\Ai(\xi)$ and $1+\omega+\omega^2=0$, we find $f'(\alpha)=\alpha^2 f(\alpha)$. Recalling that $\int_0^\infty \Ai(\xi)d\xi = \frac13$ (see e.g. \cite[9.10.11]{Abramowitz-Stegun64}), we obtain~\eqref{eq:one_consequence_of_Leach}.
Thus we obtain~\eqref{eq:estimation_of_tilde_gamma_critical}.

\end{proof}

\subsubsection{Evaluation of $\tilde{\psi}_{n-j}(x)$}\label{sec:4.2.2}

We evaluate $\tilde{\psi}_{n-j}(x)$ for $x\in \Int=[\redge + \beta^{-1}n^{-2/3}T, \infty)$ as $n\to\infty$. Let $\delta_0$ be the constant in Section~\ref{section:Result_of_RHP}.

\begin{lemma}\label{lem:tildepsi43}
Let $0<\epsilon<2\delta_0$ be the constant in \eqref{eq:decomposition_of_Gamma_crit11}, satisfying the condition~\eqref{eq:condition_of_epsilon_crit}. For $x\in E_{T,\epsilon/2}:= [\redge + \beta^{-1}n^{-2/3}T, \redge+\epsilon/2]$, we have 
\begin{equation} \label{eq:estimation_of_gamma_psi_near}
\begin{split}
	\tilde{\psi}_{n-j}(x) 
	&= \frac{\beta \sqrt{n}}{\B_{j,n}(\redge)}  \left[ C_{-\alpha}(\xi) + e^{-\alpha^3/3} \bigg( \frac1{Q_n} e^{n(ax - V(x)/2)} - e^{\alpha\xi}\bigg) +o(1) \right] 
\end{split}
\end{equation}
where $\xi$ is defined by the relation $x= \redge+\beta^{-1}n^{-2/3}\xi$ as in \eqref{eq:scaling_of_x_around_a_N+1}, $Q_n$ is given in~\eqref{eq:defn_of_C(alpha)} and $C_\alpha(\xi)$ is defined in~\eqref{eq:Calphadef}.
For $x\ge \redge+\epsilon/2$, we have 
\begin{equation} \label{eq:alternative_def_of_s^(1)_far}
\begin{split}
	\tilde{\psi}_{n-j}(x)  
	&= \frac{\beta \sqrt{n} e^{-\alpha^3/3}}{\B_{j,n}(\redge)} e^{n(\Gfn(x;a)+\Hfn(x;a)-2\Gfn(\redge;a))/2} (1+o(1)).
\end{split}
\end{equation}
\end{lemma}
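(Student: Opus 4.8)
The plan is to mirror the proof of Lemma~\ref{lem:GammaBQQ}. First I would record the analogue of Lemma~\ref{lem:tpsinew} with the splitting point $c$ replaced by $\redge$ and the contours $\Gamma_{\pm}$ replaced by the contours $\Sigma_{\pm}$ of~\eqref{eq:decomposition_of_Gamma_crit}--\eqref{eq:decomposition_of_Gamma_crit11}; the Cauchy-transform computation behind Lemma~\ref{lem:tpsinew} carries over word for word (the deformation of $\int_{-\infty}^{\redge}$ onto $\Sigma_{\pm}$ uses only $a>0$, and the residue calculation still produces the indicator $1_{(\redge,\infty)}(x)$), giving
\[
	\bfGamma_{n-j}(a)\tilde{\psi}_{n-j}(x) = e^{n(ax-V(x)/2)}1_{(\redge,\infty)}(x) + \int_{\Sigma_+\cup\Sigma_-}\CK_{n-j,n}(x,z)e^{naz}\,dz - \int_{\redge}^{\infty}K_{n-j,n}(x,y)e^{n(ay-V(y)/2)}\,dy .
\]
Dividing through by the asymptotics $\bfGamma_{n-j}(a)=\frac{Q_n}{\beta\sqrt n}e^{\alpha^3/3}(\B_{j,n}(\redge)+o(1))$ of Lemma~\ref{lem:GammaBQQ} is the common first step for both estimates of the lemma. (When $\alpha<0$ the genuine saddle $c(a)$ lies slightly to the right of $\redge$, but $c(a)-\redge=O(n^{-2/3})$ is of microscopic size, so the contour through $\redge$ stays admissible and the discrepancy is absorbed in the error terms below, exactly as in the proof of Lemma~\ref{lem:GammaBQQ}.)

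For $x\ge\redge+\epsilon/2$ the argument of Lemma~\ref{eq:approx_of_tilde_psi_far}, together with its refinement in Section~\ref{subsection:generic_case_3} (where the sharper error $O((1+|x|)^{-j})$ relies on $|x-z|$ being bounded below for $z\in\Sigma_{\pm}$), applies with $c=\redge$: the leading term of $\bfGamma_{n-j}(a)\tilde{\psi}_{n-j}(x)$ is $e^{n(\Gfn(x)-\Hfn(x))/2}e^{n\Hfn(x)}$, with error $O\big(\sqrt n\,(1+|x|)^{-j}\big)e^{n(\Gfn(x)-\Hfn(x))/2}$. Since $\Hfn(\cdot\,;a)$ is increasing on $[\redge,\infty)$ with $\Hfn(\redge;a)=\Gfn(\redge;a)$ (Lemma~\ref{fact:first}), we have $\Hfn(x;a)-\Gfn(\redge;a)\ge\delta'>0$ for such $x$ and $n$ large, so the leading term dominates; inserting $Q_n=e^{n(\Gfn(\redge;a)-\ell/2)}$ and the asymptotics of $\bfGamma_{n-j}(a)$ yields~\eqref{eq:alternative_def_of_s^(1)_far}.

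The substantive case is $x\in E_{T,\epsilon/2}$, where I set $x=\redge+\beta^{-1}n^{-2/3}\xi$, $\xi\in[T,\tfrac12\epsilon\beta n^{2/3}]$, and treat the three terms of the displayed formula. Dividing $e^{n(ax-V(x)/2)}1_{(\redge,\infty)}(x)$ by $\bfGamma_{n-j}(a)$ directly reproduces the middle term $\frac{\beta\sqrt n}{\B_{j,n}(\redge)}e^{-\alpha^3/3}\tfrac1{Q_n}e^{n(ax-V(x)/2)}(1+o(1))$ of~\eqref{eq:estimation_of_gamma_psi_near}. For the contour integral I split $\Sigma_{\pm}$ into the essential parts $\Sigma^{\ess}_{2\pm}$ (within $n^{-11/21}$ of $\redge$) and the remainder; on the remainder the estimates~\eqref{eq:estimate_on_Gamma_crit2_noness} and~\eqref{eq:estimate_on_Gamma_crit_3} for the $(C\varphi_{n-j})$-integrals, combined with $|x-z|\ge c_0n^{-11/21}$ (so the Cauchy factor in $\CK_{n-j,n}$ costs at most $O(n^{11/21})$) and the edge bound $\psi_{n-j}(x),\psi_{n-j-1}(x)=O(n^{1/6}e^{-\factor|\xi|^{3/2}})$, show the contribution is $o(Q_n/\sqrt n)$. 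On $\Sigma^{\ess}_{2\pm}$ I substitute the edge asymptotics~\eqref{eq:psi_general} for $\psi_{n-j},\psi_{n-j-1}$ at the scaled $x$ and~\eqref{eq:Cpsi_upper_plan_general} for $(C\varphi_{n-j}),(C\varphi_{n-j-1})$ at $z=\redge+\frac{\omega^{\pm1}}{\beta n^{2/3}}\zeta$, together with the estimate $e^{n(az-V(z)/2)}=Q_ne^{\omega^{\pm1}\alpha\zeta}(1+o(1))$ of~\eqref{eq:simple_estimation_of_e^n(ax-V(x)/2)}: in the variables $(\xi,\zeta)$ the Christoffel--Darboux combination defining $\CK_{n-j,n}$ collapses to a scaled Airy kernel, and the $\zeta$-integral becomes $\frac{Q_n}{\beta\sqrt n}\B_{j,n}(\redge)$ times a contour integral of $\Ai$ against $e^{\omega^{\pm1}\alpha\zeta}/(\xi-\omega^{\pm1}\zeta)$. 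Treating $-\int_{\redge}^{\infty}K_{n-j,n}(x,y)e^{n(ay-V(y)/2)}\,dy$ by the same three-interval decomposition---edge asymptotics~\eqref{eq:psi_general} on $[\redge,\redge+n^{-11/21}]$, the analogous bound on the middle interval, and smallness on $[\redge+\epsilon,\infty)$ (using $\Gfn(x;\acc)<\Gfn(\redge;\acc)$ for $x>\redge$, which holds since $\acc\notin\mathcal{J}_V$ with $\acc=\tfrac12V'(\redge)$)---gives a similar expression with the integral over the positive real ray and an overall minus sign. Summing the contributions of the three rays $\{1,\omega,\omega^{2}\}$ and invoking the contour-deformation argument behind~\eqref{eq:one_consequence_of_Leach} together with the definition~\eqref{eq:Calphadef} of $C_{-\alpha}$ identifies the total with $\frac{Q_n}{\beta\sqrt n}\B_{j,n}(\redge)\big(C_{-\alpha}(\xi)-e^{-\alpha^3/3}e^{\alpha\xi}+o(1)\big)$; dividing once more by $\bfGamma_{n-j}(a)$ gives~\eqref{eq:estimation_of_gamma_psi_near}.

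I expect the last identification to be the main obstacle: showing that, after scaling, the three ray/line contributions assemble exactly into $C_{-\alpha}(\xi)$ while the elementary pieces $\tfrac1{Q_n}e^{n(ax-V(x)/2)}$ and $e^{\alpha\xi}$ split off with precisely the stated coefficients. This forces one to keep the subleading $V''(\redge)$-correction in the exponent $e^{n(ay-V(y)/2)}$ (so that the approximations $e^{n(ay-V(y)/2)}\approx Q_ne^{\alpha\eta}$ and $e^{n(az-V(z)/2)}\approx Q_ne^{\omega^{\pm1}\alpha\zeta}$ are used only on bounded scales while the exact exponentials survive over the full, growing interval $E_{T,\epsilon/2}$), to keep all estimates uniform in $\xi\in[T,\tfrac12\epsilon\beta n^{2/3}]$, and to track the $\omega^{\pm1}$-rotations of the Airy argument carefully so that the identity $\int_0^{\infty}\Ai(\xi)\big(e^{\mu\xi}+e^{\omega\mu\xi}+e^{\omega^{2}\mu\xi}\big)\,d\xi=e^{\mu^3/3}$ and its Cauchy-kernel refinement can be applied.
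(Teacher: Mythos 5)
Your overall architecture is the same as the paper's: write $\bfGamma_{n-j}(a)\tilde\psi_{n-j}(x)$ as an elementary term plus a $\CK_{n-j,n}$-integral over $\Sigma_\pm$ minus a $K_{n-j,n}$-integral over a half-line, split each integral into an essential microscopic piece, a middle piece and a far piece, insert the Airy asymptotics of Proposition~\ref{prop:asy_for_mult_cut}, and assemble the three rays $\{1,\omega,\omega^2\}$ via the integral representations of the (rotated) Airy kernels and the identity behind~\eqref{eq:one_consequence_of_Leach} to produce $C_{-\alpha}(\xi)$ through~\eqref{eq:alternative_def_of_s^(1)}. The far-regime argument and the treatment of the non-essential contour pieces are also in line with the paper.

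There is, however, one genuine gap: you take the splitting point to be $\redge$ itself, so your contours $\Sigma_\pm$ emanate from $\redge$ and your real-line integral starts at $\redge$. The paper instead translates everything by $C_n=\beta^{-1}n^{-2/3}\bar C$ with a fixed $\bar C<T$ (see~\eqref{eq:expression_of_gamma_psi_critical}), and this shift is not cosmetic. The kernel $\CK_{n-j,n}(x,z)$ has a genuine simple pole at $z=x$ (the Wronskian term $\frac{1}{2\pi i}\frac{e^{-nV(x)/2}}{x-z}$ in~\eqref{eq:Christoffel_Darboux_Cauchy} does not cancel), and $x\in E_{T,\epsilon/2}$ ranges over $\xi\in[T,\tfrac12\epsilon\beta n^{2/3}]$ with $T$ an arbitrary fixed real. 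If $T\le 0$, then $x$ can coincide with, or be microscopically close to, the point where your contour meets the real axis; your representation is then undefined at $x=\redge$ and the scaled Cauchy factor $1/(\xi-\omega^{\pm1}\zeta)$ is not uniformly bounded near $\zeta=0$, so your limits~$\int_0^{\omega\cdot\infty}$ are not uniformly controlled (and the representation \eqref{eq:second_Airy_property}, which needs $\Re\eta<\Re\xi$, fails at the endpoint). Moreover your indicator $1_{(\redge,\infty)}(x)$ vanishes for $x\le\redge$, which would delete exactly the elementary term that produces the $e^{\alpha\xi}$ (pole) part of $C_{-\alpha}(\xi)$ on part of the domain, so the final formula would come out wrong there. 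Choosing the splitting point $\redge+C_n$ with $\bar C<T$ fixes both problems at once: the indicator is identically $1$ on $\Int$, and the scaled distance from $\xi$ to the contour and to the lower limit of integration is bounded below by a positive constant $T-\bar C$, which is what makes all your estimates uniform in $\xi\ge T$. Your parenthetical remark about $c(a)-\redge=O(n^{-2/3})$ addresses a different (and harmless) issue and does not substitute for this shift.
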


\bigskip

Note that  $\xi\in [T,\infty)$. Let $\bar{C} < T$ be a real number, and set $C_n := \beta^{-1}n^{-2/3}\bar{C}$. 
For $x  \in \Int$ (hence $x>\redge+C_n$), we have, as in Lemma~\ref{lem:tpsinew},
\begin{multline} \label{eq:expression_of_gamma_psi_critical}
	\bfGamma_{n-j}(a) \tilde{\psi}_{n-j}(x) 
= e^{n( ax - V(x)/2)} + \int_{(\Sigma_++C_n)\cup(\Sigma_-+C_n)} \CK_{n-j,n}(x,z) e^{naz} dz\\
\phantom{e^{n( ax - V(x)/2)}} - \int^{\infty}_{\redge+C_n} K_{n-j,n}(x,y)  e^{n(ay-V(y)/2)} dy.
\end{multline}
Here $\Sigma_{\pm}+C_n$ denotes the contour $\Sigma_{\pm}$ translated by $C_n$. For example, $\Sigma_++C_n=(\Sigma_{2+}+C_n) \cup (\Sigma_{2'+}+C_n) \cup (\Sigma_{3}+C_n)$, \cf\ \eqref{eq:decomposition_of_Gamma_crit}. We divide the proof of Lemma \ref{lem:tildepsi43} into two parts.

\begin{proof}[Proof of~\eqref{eq:estimation_of_gamma_psi_near}]
First we consider the integral over $\Sigma_++C_n$ in~\eqref{eq:expression_of_gamma_psi_critical}. 
For $x\in E_{T, \epsilon/2}$ and $z\in\Sigma_{2+}+C_n$, from~\eqref{eq:psi_general} and~\eqref{eq:Cpsi_upper_plan_general},
\begin{equation}\label{eq:CKUVVU}
\begin{split}
\CK_{n-j,n}(x,z) = & \bigg[   n^{1/3} U_1(x,z) \Ai(\Phi(x))\Ai(\omega^2\Phi(z)) \\
	& + V_1(x,z) \Ai(\Phi(x))\omega^2\Ai'(\omega^2\Phi(z)) \\
	& + V_2(x,z) \Ai'(\Phi(x))\Ai(\omega^2\Phi(z)) \\
	& + n^{-1/3} U_2(x,z) \Ai'(\Phi(x))\omega^2\Ai'(\omega^2\Phi(z)) \bigg]e^{\pi i/3}  e^{-nV(z)/2},
\end{split}
\end{equation}
where
\begin{equation}
\begin{split}
	U_1(x,z)& = \frac{\gamma_{n-j-1}}{\gamma_{n-j}} \frac{B_{j,n}(x)B_{j+1,n}(z)-B_{j+1,n}(x)B_{j,n}(z)}{x-z} \\
	  V_1(x,z) & = \frac{\gamma_{n-j-1}}{\gamma_{n-j}} \frac{B_{j,n}(x)D_{j+1,n}(z)-B_{j+1,n}(x)D_{j,n}(z)}{x-z}=: \frac{W_1(x,z)}{x-z} \\
	V_2(x,z) & = \frac{\gamma_{n-j-1}}{\gamma_{n-j}} \frac{D_{j,n}(x)B_{j+1,n}(z)-D_{j+1,n}(x)B_{j,n}(z)}{x-z}=: \frac{W_2(x,z)}{x-z}  \\
	U_2(x,z)& = \frac{\gamma_{n-j-1}}{\gamma_{n-j}} \frac{D_{j,n}(x)D_{j+1,n}(z)-D_{j+1,n}(x)D_{j,n}(z)}{x-z} .
\end{split}
\end{equation}
We now show that the main contribution to $\CK_{n-j,n}(x,z)$ comes from the middle two terms on the right-hand side of~\eqref{eq:CKUVVU}.
Clearly, $U_1(x,z)=O(1)$ and $U_2(x,z)=O(1)$. 
Since  $W_1(x,x)=1=-W_2(x,x)$ from~\eqref{eq:cross_product_of_B_D}, it follows that 
\begin{equation}\label{eq:VV}
\begin{split}
	&V_1(x,z) \Ai(\Phi(x))\omega^2\Ai'(\omega^2\Phi(z))  + V_2(x,z) \Ai'(\Phi(x))\Ai(\omega^2\Phi(z)) \\
	= & \frac{ \Ai(\Phi(x))\omega^2\Ai'(\omega^2\Phi(z)) -\Ai'(\Phi(x))\Ai(\omega^2\Phi(z))}{x-z}\\
	& + \frac{W_1(x,z)-W_1(x,x)}{x-z} \Ai(\Phi(x))\omega^2\Ai'(\omega^2\Phi(z)) \\
	& + \frac{W_2(x,z)-W_2(x,x)}{x-z}  \Ai'(\Phi(x))\Ai(\omega^2\Phi(z)).
\end{split}
\end{equation}
Observe that $\frac{W_i(x,z)-W_i(x,x)}{x-z} =O(1)$, $i=1,2$.
For
\begin{equation}\label{eq:xzrange10}
	x=\redge+ \frac{\xi}{\beta n^{2/3}} \in E_{T, \epsilon/2}, \qquad z=\redge+ \frac{\eta}{\beta n^{2/3}} \in \Sigma_{2+} +C_n,
\end{equation}
using the estimates~\eqref{eq:est_of_Ai_Sigma_ess} and \eqref{eq:est_of_Ai_prime_Sigma_ess} for $\Ai(\omega^2\Phi(z))$ and $\Ai'(\omega^2\Phi(z))$, and analogous estimates for $\Ai(\Phi(x))$ and $\Ai'(\Phi(x))$, 
we find that (noting that the $\xi$ in \eqref{eq:est_of_Ai_Sigma_ess} and \eqref{eq:est_of_Ai_prime_Sigma_ess} are slightly different from the $\xi$ and $\eta$ in \eqref{eq:xzrange10})
\begin{equation}\label{eq:tem101}
	\CK_{n-j,n}(x,z)= 
\bigg[  \beta n^{2/3}  \frac{\omega^2\Ai(\xi)\Ai'(\omega^2\eta) - \Ai'(\xi)\Ai(\omega^2\eta)}{\xi-\eta} 
+ O(n^{1/3} e^{-\frac{1}{2} (|\xi|^{3/2}+|\eta|^{3/2})})  \bigg]e^{\pi i/3}  e^{-nV(z)/2}
\end{equation}
for $x$ and $z$ in~\eqref{eq:xzrange10}.
For $z\in (\Sigma_{2'+}+C_n)\cup(\Sigma_{3+}+C_n)$, noting that $|x-z|\ge 1/\epsilon$, a straightforward calculation using~\eqref{eq:psi_general} and~\eqref{eq:asy_of_Cvarphi_n-j} implies that 
\begin{equation}\label{eq:temp102}
\begin{split}
	\CK_{n-j,n}(x,z)
	&=  O\big( n^{1/6} e^{-\frac12|\xi|^{3/2}} (1+|z|)^j e^{n(-\gfn(z)+\ell/2)} \big).
\end{split}
\end{equation}
Hence we obtain, by noting~\eqref{eq:simple_estimation_of_e^n(ax-V(x)/2)}, \eqref{eq:est_of_exp_of_V-av_near_e} and the condition \eqref{eq:condition_of_epsilon_crit} satisfied by $\epsilon$, using $-\gfn(z)+\ell/2+az= \Hfn(z;a)-\ell/2$ and noting the calculation in~\eqref{eq:estimate_on_Gamma_crit_3}, that 
\begin{equation} \label{eq:expression_of_gamma_psi_criticalKKK}
\begin{split}
	 &\int_{\Sigma_{+}+C_n} \CK_{n-j,n}(x,z) e^{naz} dz \\
	 = & Q_n\left[ e^{\frac{\pi i}{3}} \int_{\bar{C}}^{\omega \cdot \infty} \frac{\Ai(\xi)\Ai'(\omega^2\eta)\omega^2 - \Ai'(\xi)\Ai(\omega^2\eta)}{\xi-\eta} e^{\alpha\eta} d\eta \right. + \left. \vphantom{\int_{\bar{C}}^{\omega \cdot \infty}} O(n^{-1/3}e^{-\frac{1}{2}\lvert \xi \rvert^{3/2}}) \right].
\end{split}
\end{equation}

\medskip

The estimates for the integral over the contour $\Sigma_-+C_n$ can be obtained either by Schwarz reflection principle or by a similar calculation. We find 
\begin{equation}\label{eq:estimation_of_gamma_psi_critical_lower}
\begin{split}
	& \int_{\Sigma_{-}+C_n} \CK_{n-j,n}(x,z) e^{naz} dz \\
	= & Q_n \left[ -e^{\frac{\pi i}{3}} \int^{\bar{C}}_{\omega^2 \cdot \infty} \frac{\Ai(\xi)\Ai'(\omega\eta) - \omega^2\Ai'(\xi)\Ai(\omega\eta)}{\xi-\eta} e^{\alpha\eta} d\eta \right.  + \left. \vphantom{\int^{\bar{C}}_{\omega^2 \cdot \infty}} O(n^{-1/3}e^{-\frac{1}{2}\lvert \xi \rvert^{3/2}}) \right].
\end{split}
\end{equation}

For the integral over $(\redge+C_n, \infty)$ in \eqref{eq:expression_of_gamma_psi_critical}, we need asymptotics of $K_{n-j,n}(x,y)$. 
For $x \in E_{T,\epsilon/2}$ and $y \in (\redge+C_n, \redge+\epsilon)$, setting $x = \redge + \beta^{-1}n^{-2/3}\xi$ and $y = \redge + \beta^{-1}n^{-2/3}\eta$, we have 
\begin{equation} \label{eq:est_of_K(xy)_near}
K_{n-j,n}(x,y) =  \beta n^{2/3}  \frac{\Ai(\xi)\Ai'(\eta) - \Ai'(\xi)\Ai(\eta)}{\xi-\eta} 
 + O(n^{1/3} e^{-\frac{1}{2} (|\xi|^{3/2}+|\eta|^{3/2})}).
\end{equation}
This follows from the analysis similar to that of~\eqref{eq:tem101}. A weaker estimate is in~\eqref{eq:KtoKairywitherror}, which is actually enough for our purpose. 
Hence for $x \in E_{T, \epsilon/2}$, 
\begin{equation}\label{eq:DDD4:1}
\begin{split}
& \int_{\redge+C_n}^{\redge+\epsilon} K_{n-j,n}(x,y) e^{n(ay-\frac12 V(y))}  dy \\
= & Q_n\bigg[ \int_{\bar{C}}^\infty   \frac{\Ai(\xi)\Ai'(\eta) - \Ai'(\xi)\Ai(\eta)}{\xi-\eta} e^{\alpha \eta} d\eta   +  O(n^{-1/3}e^{-\frac{1}{2}\lvert \xi \rvert^{3/2}}) \bigg].
\end{split}
\end{equation}
Now consider the integral over $(\redge+\epsilon, \infty)$. 
By the estimate \eqref{eq:lemma_enu:estimate_K_n-1,n:4} of $K_{n-j,n}(x,y)$ for $x \in E_{T, \epsilon/2}$ and $y \in (\redge+\epsilon, \infty)$ and the identity
\begin{equation} \label{eq:alternative_expression_of_ay-V(y)/2}
	-\frac{1}{2}V(y) + ay - \frac{1}{n}\log Q_n = \frac{\Gfn(y;a)+\Hfn(y;a)-2\Gfn(\redge;y)}{2},
\end{equation}
we obtain 
\begin{equation}\label{eq:DDD4:2}
\begin{split}
& \int_{\redge+\epsilon}^{\infty} K_{n-j,n}(x,y) e^{n(ay-\frac12 V(y))}  dy \\
= & Q_n \int_{\redge+\epsilon}^{\infty} O \left( n^{1/6}e^{-\factor \lvert \xi \rvert^{3/2}} (1+ \lvert y \rvert)^{-j}e^{n(\Gfn(y;a)-\Gfn(\redge;a))} \right) dy = Q_n O( e^{-\factor \lvert \xi \rvert^{3/2} - \frac{\bar{\epsilon}'}{2}n}),
\end{split}
\end{equation}
where 
\begin{equation}
\bar{\epsilon}' = \min_{y \geq \redge+\epsilon} \Gfn(\redge;a) - \Gfn(y;a) > 0.
\end{equation}
Combining \eqref{eq:DDD4:1} and \eqref{eq:DDD4:2}, we have
\begin{equation}\label{eq:DDD4}
\begin{split}
& \int_{\redge+C_n}^{\infty} K_{n-j,n}(x,y) e^{n(ay-\frac12 V(y))}  dy \\
= & Q_n\bigg[ \int_{\bar{C}}^\infty   \frac{\Ai(\xi)\Ai'(\eta) - \Ai'(\xi)\Ai(\eta)}{\xi-\eta} e^{\alpha \eta} d\eta  +  O(n^{-1/3}e^{-\factor\lvert \xi \rvert^{3/2}}) \bigg].
\end{split}
\end{equation}
\bigskip

We now insert the results~\eqref{eq:expression_of_gamma_psi_criticalKKK},~\eqref{eq:estimation_of_gamma_psi_critical_lower} and~\eqref{eq:DDD4} into~\eqref{eq:expression_of_gamma_psi_critical} and evaluate $\tilde{\psi}_{n-j}(x)$ for $x\in E_{T,\epsilon}$. The combination of the three integrals can be simplified. For this purpose, note that 
\begin{align}
	\frac{\Ai(\xi)\Ai'(\eta) - \Ai'(\xi)\Ai(\eta)}{\xi-\eta} = & \int^{\infty}_0 \Ai(\xi+t)\Ai(\eta+t) dt, \\
\frac{\omega^2\Ai(\xi)\Ai'(\omega^2\eta) - \Ai'(\xi)\Ai(\omega^2\eta)}{\xi-\eta} = & \int^{\infty}_0 \Ai(\xi+t)\Ai(\omega^2(\eta+t)) dt, \label{eq:second_Airy_property} \\
\frac{\omega\Ai(\xi)\Ai'(\omega\eta)-\Ai'(\xi)\Ai(\omega\eta)}{\xi-\eta} = &  \int^{\infty}_0 \Ai(\xi+t)\Ai(\omega(\eta+t)) dt, \label{eq:third_Airy_property} 
\end{align}
where we require $\Re\eta < \Re\xi$ in \eqref{eq:second_Airy_property} and \eqref{eq:second_Airy_property}. 
This can be verified by using $\Ai''(z)=z\Ai(z)$ and the asymptotics \cite[10.4.59 and 10.4.61]{Abramowitz-Stegun64} of $\Ai(z)$ and $\Ai'(z)$ as $z \to \infty$. Using the above Airy function identities, we find that 
\begin{equation}
\begin{split}
 \int^{\infty}_{\bar{C}} \frac{\Ai(\xi)\Ai'(\eta) - \Ai'(\xi)\Ai(\eta)}{\xi-\eta} e^{\alpha\eta}d\eta 
= & \int^{\infty}_{\bar{C}} \int^{\infty}_0 \Ai(\xi+t)\Ai(\eta+t) dt e^{\alpha\eta}d\eta \\
= & \int^{\infty}_0 \Ai(\xi+t) e^{-\alpha t}  \left( \int^{\infty}_{\bar{C}+t} \Ai(\bar{\eta})e^{\alpha\bar{\eta}} d\bar{\eta}\right) dt.
\end{split}
\end{equation}
Similarly, 
\begin{equation}
	\int^{\omega \cdot \infty}_{\bar{C}} \frac{\omega^2\Ai(\xi)\Ai'(\omega^2\eta) - \Ai'(\xi)\Ai(\omega^2\eta)}{\xi-\eta} e^{\alpha\eta}d\eta 
	=\int^{\infty}_0  \Ai(\xi+t) e^{-\alpha t} \left( \int^{\omega \cdot \infty}_{\bar{C}+t} \Ai(\omega^2\bar{\eta})e^{\alpha\bar{\eta}}  d\bar{\eta}\right) dt,
\end{equation}
and 
\begin{equation}
	\int^{\bar{C}}_{\omega^2 \cdot \infty} \frac{\Ai(\xi)\Ai'(\omega\eta) - \omega^2\Ai'(\xi)\Ai(\omega\eta)}{\xi-\eta} e^{\alpha\eta} d\eta 
	=\omega^2 \int^{\infty}_0 \Ai(\xi+t) e^{-\alpha t} \left( \int^{\bar{C}+t}_{\omega^2 \cdot \infty} \Ai(\omega \bar{\eta})e^{\alpha\bar{\eta}} d\bar{\eta} \right) dt.
\end{equation}
From these results and~\eqref{eq:expression_of_gamma_psi_critical},~\eqref{eq:expression_of_gamma_psi_criticalKKK},~\eqref{eq:estimation_of_gamma_psi_critical_lower} and~\eqref{eq:DDD4}, we find that for $x\in E_{T,\epsilon/2}$, 
\begin{multline}
\bfGamma_{n-j}(a) \tilde{\psi}_{n-j}(x) = 
e^{n( ax - V(x)/2)} 
-Q_n\bigg[ \int^{\infty}_0 \Ai(\xi+t) e^{-\alpha t} \bigg( \int^{\infty}_{\bar{C}+t} \Ai(\bar{\eta})e^{\alpha\bar{\eta}} d\bar{\eta} \\ 
+ \omega^2 \int^{\omega \cdot \infty}_{\bar{C}+t} \Ai(\omega^2\bar{\eta})e^{\alpha\bar{\eta}}  d\bar{\eta} + \omega \int_{\bar{C}+t}^{\omega^2 \cdot \infty} \Ai(\omega \bar{\eta})e^{\alpha\bar{\eta}} d\bar{\eta} \bigg) dt +o(1) \bigg].
\end{multline}
Now the sum of three integrals inside the parentheses equals $e^{\alpha^3/3}$ (cf.~\eqref{eq:one_consequence_of_Leach}), for all $t$. In order to see this, first note that the sum is independent of $t$ since its derivative with respect to $t$ equals $0$ from the Airy function identity 
\begin{equation}\label{eq:Airyidensum}
	\Ai(z)+\omega\Ai(\omega z)+\omega^2\Ai(\omega^2 z) = 0, \qquad z\in \compC.
\end{equation}
Then set $t=0$ and call the sum $S(\alpha)$. Taking the derivative of $S(\alpha)$ with respect to $\alpha$ and using~\eqref{eq:Airyidensum} and the differential equation for the Airy function, we find that 
$S'(\alpha)=\alpha^2 S(\alpha)$. Now by noting that $S(0)= 1$ since $\int_0^\infty \Ai(\bar{\eta})d\bar{\eta}= 1/3$, we obtain that $S(\alpha)= e^{\alpha^3/3}$.  
Hence
\begin{equation}\label{eq:Gt1}
	\bfGamma_{n-j}(a)\tilde{\psi}_{n-j}(x) = Q_n
	\bigg[ \frac1{Q_n} e^{ax - \frac12V(x)}- e^{\alpha^3/3} \int^{\infty}_0 \Ai(\xi+t)e^{-\alpha t}dt  + o(1) \bigg]
\end{equation}
uniformly for $x\in E_{T, \epsilon/2}$.
Therefore using~\eqref{eq:estimation_of_tilde_gamma_critical} we find that 
\begin{equation} \label{eq:estimation_of_gamma_psi_near-1}
\begin{split}
	\tilde{\psi}_{n-j}(x) &= \frac{\beta \sqrt{n}}{\B_{j,n}(\redge)}  \left[ e^{-\alpha^3/3} \frac1{Q_n} e^{ax - \frac12V(x)} - \int^{\infty}_0 \Ai(\xi+t)e^{-\alpha t}dt +o(1) \right] \\
	&= \frac{\beta \sqrt{n}}{\B_{j,n}(\redge)}  \left[ C_{-\alpha}(\xi) + e^{-\alpha^3/3} \bigg( \frac1{Q_n} e^{ax - \frac12V(x)} - e^{\alpha\xi}\bigg) +o(1) \right] 
\end{split}
\end{equation}
uniformly for $x=\redge+\beta^{-1}n^{-2/3}\xi \in E_{T, \epsilon/2}$. In the last line, we used the identity 
\begin{equation} \label{eq:alternative_def_of_s^(1)}
	e^{-\alpha^3/3} e^{\alpha\xi} - \int^{\infty}_0 \Ai(\xi+t)e^{-\alpha t}dt = C_{-\alpha}(\xi).
\end{equation}

\end{proof}

\begin{proof}[Proof of~\eqref{eq:alternative_def_of_s^(1)_far}]

Let $x\ge \redge+\epsilon/2$. Using~\eqref{eq:lemma_enu:estimate_K_n-1,n:3} and~\eqref{eq:est_of_exp_of_V-av_near_e}, a straightforward estimate implies that 
\begin{equation}\label{eq:1DDD3}
	\frac{1}{Q_n} \int_{\redge+C_n}^{\redge+\epsilon/4}K_{n-j,n}(x,y) e^{n(ay-\frac12 V(y))}  dy = 
O(n^{-1/2} (1 + \lvert x \rvert)^{-j} e^{n(\Gfn(x;a)-\Hfn(x;a))/2}).
\end{equation}
Similarly, we obtain using~\eqref{eq:lemma_enu:estimate_K_n-1,n:1}  
\begin{equation}\label{eq:1DDD01}
	\frac{1}{Q_n} \int_{\redge+\epsilon/4}^\infty K_{n-j,n}(x,y) e^{n(ay-\frac12 V(y))}  dy = 
O( (1 + \lvert x \rvert)^{-j} e^{n(\Gfn(x;a)-\Hfn(x;a))/2}).
\end{equation}

For  the integral on $\Sigma_{\pm}+C_n$, the calculation is easier than the proof of ~\eqref{eq:estimation_of_gamma_psi_near} since $|x-z|\ge \epsilon/2$. Straightforward estimates using Proposition~\ref{prop:asy_for_mult_cut} imply that 
\begin{equation}\label{eq:1DDE3}
	 \frac1{Q_n} \int_{\Sigma_{\pm}+C_n} \CK_{n-j,n}(x,z) e^{naz} dz= 
	 O(n^{-1/2} (1 + \lvert x \rvert)^{-j} e^{n(\Gfn(x;a)-\Hfn(x;a))/2}).
\end{equation}

Therefore, from~\eqref{eq:expression_of_gamma_psi_critical}, 
\begin{equation} \label{eq:alternative_def_of_s^(1)_far01}
\begin{split}
	\frac1{Q_n} \bfGamma_{n-j}(a) \tilde{\psi}_{n-j}(x)  
	&  =  \frac1{Q_n} e^{n(-\frac12V(x)+ax)}  + O((1 + \lvert x \rvert)^{-j} e^{n(\Gfn(x;a)-\Hfn(x;a))}) \\
	&= e^{n(\Gfn(x;a)+\Hfn(x;a)-2\Gfn(\redge;a))/2} \big(1+O((1 + \lvert x \rvert)^{-j}  e^{-n(\Hfn(x;a)-\Gfn(x;a))}) \big) \\
	&= e^{n(\Gfn(x;a)+\Hfn(x;a)-2\Gfn(\redge;a))/2} (1+o(1)).
\end{split}
\end{equation}
Hence using \eqref{eq:estimation_of_tilde_gamma_critical},
we obtain~\eqref{eq:alternative_def_of_s^(1)_far}.


\end{proof}

\subsubsection{Proof}\label{sec:4.2.3}

Recall the outline \ref{emu:sub_method_b2} in Section~\ref{subsection:outline_of_the_proof}.

From Lemma~\ref{lem:tildepsi43} and Proposition~\ref{prop:asy_for_mult_cut}, and using~\eqref{eq:est_of_exp_of_V-av_near_e} to estimate $\frac1{Q_n} e^{n(ax - V(x)/2)} - e^{\alpha\xi}$, 
we obtain
\begin{equation}\label{eq:p1}
\langle \tilde{\psi}_{n-j}, \psi_{n-j} \rangle_{\Int} \to \int_{T}^\infty C_{-\alpha}(\xi) \Ai(\xi) d\xi.
\end{equation}

\bigskip

Now we evaluate $u_{j,n}(\xi): =\frac1{\sqrt{n}} (K_{n-j,n} \chi_{\Int}\tilde{\psi}_{n-j})(\redge+\beta^{-1}n^{-2/3}\xi)$ defined in \eqref{eq:defu} asymptotically in $L^2([T, \infty))$. 
Using Lemma~\ref{lem:tildepsi43}, and then estimating as in Subsubsection \ref{sec:4.2.2}, we obtain 
\begin{equation}
	\int_{\Int} K_{n-j,n}(x,y) \tilde{\psi}_{n-j} dy = \frac{\beta\sqrt{n}}{\B_{j,n}(\redge)} (K_{\Airy}\chi_{[T,\infty)}C_{-\alpha})(\xi) + O(n^{-1/3}e^{-\frac{1}{2} \lvert \xi \rvert^{3/2}})
\end{equation}
for $x=\redge + \beta n^{2/3}\xi \in E_{T, \epsilon/2}$. 
For $x\ge \redge+\epsilon/2$, 
\begin{equation}
\int_{\Int} K_{n-j,n}(x,y) \tilde{\psi}_{n-j} dy = O((1 + \lvert x \rvert)^{-j} e^{n(\Gfn(x;a)-\Hfn(x;a))/2}).
\end{equation}
The calculation is similar to Subsubsection \ref{sec:4.2.2} and we skip the details. 
Thus 
\begin{equation}\label{eq:p2}
	u_{j,n} -\frac{\beta}{\B_{j,n}(\redge)} (K_{\Airy} \chi_{[T,\infty)} C_{-\alpha}) \to 0, 
	\qquad \text{in $L^2[T,\infty)$.}
\end{equation}

Therefore, from~\eqref{eq:maind3}, 
\begin{equation}\label{eq:trickyfinal}
\begin{split}
	\Prob_{n-j+1, n}(a; \Int) 
	&=  \FGUE(T)\cdot \big(1- \langle C_{-\alpha}, \Ai \rangle_{[T,\infty)} \\
	&\quad	- \langle (1-\chi_{[T,\infty)} K_{\Airy} \chi_{[T,\infty)} )^{-1}  K_{\Airy} \chi_{[T,\infty)}C_{-\alpha}, \Ai \rangle_{[T,\infty)} \big)\\
	&=\FGOE(T; -\alpha).  
\end{split}
\end{equation}
Hence Theorem~\ref{thm:convex}\ref{enu:thm:convex:b} and Theorem~\ref{thm:critical_traditional_split}\ref{enu:thm:critical_traditional_split:a} are proved. 


\subsection{Proof of Theorem~\ref{thm:rank_1_transit_crit}} \label{subsection:Proof_of_Theorem1.4}

Note that $\Gfn(x_0(\acc);\acc)=\Gfn(\redge; \acc)>\Gfn(x; \acc)$ for all $x\in (\redge, \infty)\setminus\{x_0(\acc)\}$. 
For $a$ given in either~\eqref{eq:aina} or~\eqref{eq:ainb}, let $x_0(a)$ be the point near $x_0(\acc)$ such that $\Gfn(x;a)$ achieves its local maximum. The point $x_0(a)$ is well defined as long as $\lvert a - \acc \rvert$ is small enough. Note that for $a\ge \acc$, $x_0(a)$ is same as in the definition of $x_0(a)$ in Lemma \ref{lem:x0}. However, for $a<\acc$, $x_0(a)$ is not defined in Lemma \ref{lem:x0}. We extend the definition of $x_0(a)$ here for $a<\acc$ when $a-\acc$ is small enough.

\subsubsection{Proof of Theorem~\ref{thm:rank_1_transit_crit}\ref{enu:thm:rank_1_transit_crit:a}}\label{sec:4.3.1}

We consider the double scaling situation
\begin{equation} \label{eq:two_double_scalings}
	a = \acc + \frac{\beta \alpha}{n^{1/3}} 
\end{equation}
where $a$ is in a compact subset of $(-\infty,0)$. Since we assume $G''(x_0(\acc);\acc)\neq 0$, we have $x_0(a) -x_0(\acc) =O(|a-\acc|)= O(n^{-1/3})$. We also have, as in~\eqref{eq:fluctuation_of_G(x_0,a)}, using $\frac{\partial}{\partial a} \Gfn(x;a)=x$, 
\begin{equation}\label{eq:tq2}
\begin{split}
	\Gfn(x_0(a);a) &= \Gfn(x_0(\acc);\acc) +  x_0(\acc) (a-\acc)+ O(|a-\acc|^2) \\
	 &= \Gfn(x_0(\acc);\acc) + \frac{\beta\alpha}{n^{1/3}} x_0(\acc)+ O(|a-\acc|^2).
\end{split}
\end{equation}
Hence since $\Gfn(\redge;a) = \Gfn(\redge;\acc) + (a-\acc)\redge$ by the definition of $\Gfn$, 
\begin{equation}\label{eq:tq3}
	\Gfn(x_0(a);a) - \Gfn(\redge; a) = \frac{\beta\alpha}{n^{1/3}} (x_0(\acc)-\redge)+O(n^{-2/3}).
\end{equation}

We first evaluate $\bfGamma_{n-j}(a)$ as in Lemma~\ref{lem:GammaBQQ}. 
Note that in Subsection~\ref{sec:4.2.1}, we used properties of $\Hfn(z;a)$ for the integrals over $\Sigma_{\pm}$ and properties of $\Gfn(x;a)$ for the integral over $(\redge, \infty)$. Since there is no change in the properties of $\Hfn(z;a)$,   the integrals over $\Sigma_+$ and $\Sigma_-$ are computed exactly the same as given by~\eqref{eq:estimation_of_tilde_gamma_upper} and~\eqref{eq:estimation_of_tilde_gamma_lower}. For the integral over $(\redge,\infty)$, note that the main contribution to~\eqref{eq:integral_over_linear_critical} was from the part of $y$ near $\redge$ since $\Gfn(y;a)$ takes its maximum for $y$ near $\redge$. However, now due to~\eqref{eq:tq3} we need to add a contribution from $y$ near $x_0(a)$. 
By using the standard Laplace's method as in~\eqref{eq:tilde_gamma_case_1_critical_II}, the contribution to the integral near $x_0(a)$ equals 
\begin{equation}\label{eq:contfromx0}
	\sqrt{\frac{2\pi}{-n\Gfn''(x_0(\acc); \acc))}} \M_{j,n}(x_0(\acc))  e^{n(\Gfn(x_0(a);a)-\ell/2)} (1+o(1)),
\end{equation}
Adding~\eqref{eq:contfromx0} and~\eqref{eq:integral_over_linear_critical}, 
\begin{multline}\label{eq:tq1}
	\int_{\redge}^\infty \varphi_{n-j}(y)e^{nay}dy 
	= \frac{Q_n}{\beta\sqrt{n}} \bigg( \B_{j,n}(\redge) \int_0^\infty \Ai(\xi) e^{\alpha \xi} d\xi  \\
	+ \beta \sqrt{\frac{2\pi}{-\Gfn''(x_0(\acc); \acc)}} \M_{j,n}(x_0(\acc))  e^{n(\Gfn(x_0(a);a)-\Gfn(\redge; a))} +o(1)\bigg).
\end{multline}
But~\eqref{eq:tq3} implies that 
\begin{equation}\label{eq:Gfndifftempa}
	n(\Gfn(x_0(a);a) - \Gfn(\redge; a)) = n^{2/3}\beta\alpha(x_0(\acc)-\redge) + O(n^{1/3}) \ll 0
\end{equation}
for all large enough $n$ since $\alpha<0$. Hence we find that 
\begin{equation}\label{eq:tm1}
\begin{split}
	&\int_{\redge}^\infty \varphi_{n-j}(y)e^{nay}dy 
	= \frac{Q_n}{\beta\sqrt{n}} \bigg( \B_{j,n}(\redge) \int_0^\infty \Ai(\xi) e^{\alpha \xi} d\xi +o(1)\bigg),
\end{split}
\end{equation}
as in Lemma~\ref{lem:GammaBQQ}.
Adding the integrals on the contours $\Sigma_{\pm}$, we obtain 
\begin{equation}
	\bfGamma_{n-j}(a) =  \frac{Q_n}{\beta\sqrt{n}} e^{\alpha^3/3} (\B_{j,n}(\redge)+o(1)),
\end{equation}
which is same as~\eqref{eq:estimation_of_tilde_gamma_critical}.

\bigskip

The evaluation of $\tilde{\psi}_{n-j}(x)$ is similar. We use the formula \eqref{eq:expression_of_gamma_psi_critical} as in Lemma~\ref{lem:tildepsi43}. 
The  evaluation of the integral over $(\Sigma_++C_n)\cup(\Sigma_-+C_n)$ is exactly the same as in Subsubsection \ref{sec:4.2.2}. For the evaluation of the integral over $(\redge+C_n, \infty)$ when $x \in E_{T,\epsilon/2}$, we find that \eqref{eq:DDD4:1} is unchanged. For the integral over $(\redge+\epsilon, \infty)$,   the contribution near $y=x_0(a)$ implies that \eqref{eq:DDD4:2} becomes
\begin{equation} \label{eq:1.4(a)_near_weakened}
\begin{split}
	&\int_{\redge+\epsilon}^{\infty} K_{n-j,n}(x,y) e^{n(ay-\frac12 V(y))} dy  = Q_n O(n^{1/6}e^{-\factor \lvert \xi \rvert^{3/2}}  \frac{e^{n(\Gfn(x_0(a);a)-\Gfn(\redge;a))}}{\sqrt{n}}).
\end{split}
\end{equation}
However, due to~\eqref{eq:Gfndifftempa}, this is again $Q_n O(n^{-1/3}e^{-\factor \lvert \xi \rvert^{3/2}})$ as in~\eqref{eq:DDD4:2} .
Therefore the result \eqref{eq:estimation_of_gamma_psi_near} still holds for $x\in E_{T,\epsilon/2}$. For $x \in (\redge+\epsilon/2, \infty)$, the estimates \eqref{eq:1DDE3}  and \eqref{eq:1DDD3}  hold without any change. Moreover, it is straightforward to check that \eqref{eq:1DDD01} still holds.
Therefore \eqref{eq:alternative_def_of_s^(1)_far} holds for $x\ge \redge+\epsilon/2$.
Therefore, Lemma~\ref{lem:tildepsi43} holds without any changes. 

\bigskip

Now we proceed as in Subsection~\ref{sec:4.2.3}.
In evaluating $\langle \tilde{\psi}_{n-j}, \psi_{n-j} \rangle_{\Int}$, the integral over $(\redge+\epsilon/2)$ becomes, due to the contribution near $x_0(a)$, 
\begin{equation}\label{eq:tq5}
\begin{split}
	& \langle \tilde{\psi}_{n-j}, \psi_{n-j} \rangle_{(\redge+\epsilon, \infty)}
	= \int_{\redge+\epsilon}^{\infty} O(\sqrt{n} e^{n(\Gfn(x;a)-\Gfn(\redge; a))}) dx = O (e^{n(\Gfn(x_0(a);a)-\Gfn(\redge; a))} ) \to 0
\end{split}
\end{equation}
by~\eqref{eq:Gfndifftempa}. This implies that~\eqref{eq:p1} holds without a change.
Similarly, it is straightforward to check that~\eqref{eq:p2} holds. Therefore, we obtain~\eqref{eq:trickyfinal} 
and Theorem~\ref{thm:rank_1_transit_crit}\ref{enu:thm:rank_1_transit_crit:a} is proved.

\subsubsection{Proof of Theorem~\ref{thm:rank_1_transit_crit}\ref{enu:thm:rank_1_transit_crit:b}}

We consider the  double scaling situation
\begin{equation} \label{eq:two_double_scalings02}
	a  = \acc + \frac{\alpha'}{n}
\end{equation}
where $\alpha'$ is in a compact subset of $\R$. 
In this case,~\eqref{eq:tq2} and~\eqref{eq:tq3} are changed to 
\begin{equation} \label{eq:tq6_pre}
	\Gfn(x_0(a); a)= \Gfn(x_0(\acc); \acc) + x_0(\acc) \frac{\alpha'}{n} + O(n^{-2}),
\end{equation}
and 
\begin{equation}\label{eq:tq6}
	\Gfn(x_0(a);a) -\Gfn(\redge; a)
	=  \frac{\alpha'}{n}  (x_0(\acc) -\redge) + O(n^{-2}).
\end{equation}

First we consider $\bfGamma_{n-j}(a)$. There are two changes from the previous subsubsection. The first is that since $\alpha$ defined in \eqref{eq:two_double_scalings} and $\alpha'$ defined in \eqref{eq:two_double_scalings02} are related as $\alpha=\beta^{-1}n^{-2/3}\alpha'$, we have $\alpha\to 0$ and hence in~\eqref{eq:estimation_of_tilde_gamma_upper}, \eqref{eq:estimation_of_tilde_gamma_lower} and~\eqref{eq:tq1}, we have $\alpha=0$ in the integrals involving the Airy function. The second is that~\eqref{eq:tm1} does not follows from~\eqref{eq:tq1} since~\eqref{eq:Gfndifftempa} no longer holds. Instead, due to~\eqref{eq:tq6},~\eqref{eq:tq1} implies that 
\begin{equation}
\begin{split}
	&\int_{\redge}^\infty \varphi_{n-j}(y)e^{nay}dy 
	= \frac{Q_n}{\beta\sqrt{n}} \bigg( \B_{j,n}(\redge) \int_0^\infty \Ai(\xi) d\xi  \\
	&\quad + \beta \sqrt{\frac{2\pi}{-\Gfn''(x_0(\acc); \acc))}} \M_{j,n}(x_0(\acc)) e^{\alpha'(x_0(\acc)-\redge)} +o(1)\bigg).
\end{split}
\end{equation}
Hence adding~\eqref{eq:estimation_of_tilde_gamma_upper} and~\eqref{eq:estimation_of_tilde_gamma_lower} (with $\alpha=0$), 
we obtain 
\begin{equation}\label{eq:tq8}
	\bfGamma_{n-j}(a) = \frac{Q_n}{\beta \sqrt{n}} \B_{j,n}(\redge) \left( \frac{D_0 + D_1(\alpha')}{D_0}+o(1) \right),
\end{equation}
where
\begin{equation}\label{eq:tq9}
	D_0(\alpha') = \frac{\B_{j,n}(\redge)}{\beta}, \quad D_1(\alpha') = \sqrt{\frac{2\pi}{-\Gfn''(x_0(\acc);\acc)}} \M_{j,n}(x_0(\acc)) e^{ \alpha'(x_0(\acc)-\redge)}.
\end{equation}

\bigskip

Now consider $\tilde{\psi}_{n-j}(x)$. Like in Subsubsection \ref{sec:4.3.1}, most of the estimates of Subsubsection \ref{sec:4.2.2} remain the same. The only changes are the contribution near $x_0(\acc)$. Substituting \eqref{eq:tq6} into \eqref{eq:1.4(a)_near_weakened} with $a = \beta^{-1}n^{-2/3}\alpha'$, we have 
\begin{equation}
\int_{\redge+\epsilon}^{\infty} K_{n-j,n}(x,y) e^{n(ay-\frac12 V(y))} dy = Q_n O(n^{-1/3} e^{-\factor \lvert \xi \rvert^{3/2}})
\end{equation}
for $x\in E_{T, \epsilon/2}$. This is similar to  \eqref{eq:DDD4:2} except that $e^{-\frac{\bar{\epsilon}'}2n}$ is replaced by $n^{-1/3}$. However, it is easy to check that this bound is enough for the rest of the analysis.  
The rest of the analysis  of  Subsubsection~\ref{sec:4.2.2} continues without changes and we obtain the asymptotics~\eqref{eq:Gt1} of $\bfGamma_{n-j}(a)\tilde{\psi}_{n-j}(x)$ for $x\in E_{T, \epsilon/2}$ and~\eqref{eq:alternative_def_of_s^(1)_far01} for $x\ge \redge+\epsilon/2$. 
Using~\eqref{eq:tq8}, we find that  
\begin{equation} \label{eq:tq10r}
\begin{split}
	\tilde{\psi}_{n-j}(x) 
	&= p_{j,n}(\alpha')\frac{\beta \sqrt{n}}{\B_{j,n}(\redge)}  \left[ C_{0}(\xi) + e^{-\alpha^3/3} \bigg( \frac1{Q_n} e^{an - \frac12V(x)} - e^{\alpha\xi}\bigg) +o(1) \right] 
\end{split}
\end{equation}
for $x\in E_{T, \epsilon/2}$ and 
\begin{equation} \label{eq:tq11}
\begin{split}
	\tilde{\psi}_{n-j}(x)  
	&= p_{j,n}(\alpha')\frac{\beta \sqrt{n}}{\B_{j,n}(\redge)} e^{n(\Gfn(x;a)+\Hfn(x;a)-2\Gfn(\redge;a))/2} (1+o(1))
\end{split}
\end{equation}
for $x\ge \redge+\epsilon/2$,
where
\begin{equation}\label{eq:tq30}
	p_{j,n}(\alpha'):= \frac{D_0}{D_0+D_1(\alpha')}.
\end{equation}
The formulas \eqref{eq:tq10r} and \eqref{eq:tq11} are different from \eqref{eq:estimation_of_gamma_psi_near} and \eqref{eq:alternative_def_of_s^(1)_far} only by the factor $p_{j,n}(\alpha')$.

\bigskip

We now prove the theorem. First, consider~\eqref{eq:ppaa}. From~\eqref{eq:tq11} and~\eqref{eq:psin101}, we obtain
\begin{equation}\label{eq:tq14}
\begin{split}
	\langle \tilde{\psi}_{n-j}, \psi_{n-j} \rangle_{\Intx(x_0(\acc))} 
	= & p_{j,n}(\alpha')\frac{\beta \sqrt{n}}{\B_{j,n}(\redge)}  \int_{\Intx(x_0(\acc))}
	M_{j,n}(x) e^{n(\Gfn(x;a)-\Gfn(\redge;a))}(1+o(1)) dx \\
	= & p_{j,n}(\alpha')\frac{\beta \sqrt{n}}{\B_{j,n}(\redge)} \M_{j,n}(x_0(\acc)) \frac{e^{n(\Gfn(x_0(a);a)-\Gfn(\redge;a))}}{\sqrt{-n\Gfn''(x_0(a);a)/2\pi}}\int_T^\infty e^{-\frac12 \xi^2}d\xi (1+o(1)) \\
	= & (1-p_{j,n}(\alpha')) (1-\erf(T))(1+o(1)),
\end{split}
\end{equation}
by using \eqref{eq:tq6_pre} and \eqref{eq:tq6}.
Also, for $x\in \Intx(x_0(\acc))$, by \eqref{eq:lemma_enu:estimate_K_n-1,n:1} and \eqref{eq:tq11}
\begin{equation}\label{eq:tq15}
\begin{split}
	(K_{n-j,n}\chi_{\Intx(x_0(\acc))} \tilde{\psi}_{n-j})(x)
	= &\int_{\Intx(x_0(\acc))} K_{n-j,n}(x,y)\tilde{\psi}_{n-j}(y)dy \\
	= & O(e^{n(\Gfn(x;a)-\Hfn(x;a))/2} )\int_{\Intx(x_0(\acc))} O(\sqrt{n}e^{n(\Gfn(y;a)-\Gfn(\redge;a))})dy \\
	= & O(e^{n(\Gfn(x;a)-\Hfn(x;a))/2} ).
\end{split}
\end{equation}
Hence $K_{n-j,n}\chi_{\Intx(x_0(\acc))} \tilde{\psi}_{n-j}\to 0$
in $L^2(\Intx(x_0(\acc)))$. Therefore, we find from~\eqref{eq:maind2} that  
\begin{equation}\label{eq:tq16}
\begin{split}
	\Prob_{n-j+1, n}(a; \Intx(x_0(\acc))) = 1 - (1-p_{j,n}(\alpha'))(1 - \erf(T))(1+o(1))
\end{split}
\end{equation}
and~\eqref{eq:ppaa} is proved. 

Second, consider~\eqref{eq:ppaarr}. We proceed as in 
the outline \ref{emu:sub_method_b2} in Section~\ref{subsection:outline_of_the_proof}. We first evaluate $\langle \tilde{\psi}_{n-j}, \psi_{n-j} \rangle_{\Int}=\langle \tilde{\psi}_{n-j}, \psi_{n-j}\rangle_{E_{T, \epsilon/2}}+\langle \tilde{\psi}_{n-j}, \psi_{n-j} \rangle_{(\redge+\epsilon/2,\infty)}$. 
For second term, a computation as in~\eqref{eq:tq14} yields that 
\begin{equation}\label{eq:tq13}
	\langle \tilde{\psi}_{n-j}, \psi_{n-j} \rangle_{(\redge+\epsilon/2, \infty)}=  (1-p_{j,n}(\alpha') )(1+o(1)).
\end{equation}
The first term is calculated as in Subsubsection~\ref{sec:4.3.1} with the only change that the prefactor $p_{j,n}(\alpha')$ is multiplied: 
\begin{equation}\label{eq:tq20}
	\langle \tilde{\psi}_{n-j}, \psi_{n-j} \rangle_{E_{T, \epsilon/2}}=  p_{j,n}(\alpha') \int_T^\infty C_{0}(\xi)\Ai(\xi)d\xi (1+o(1)).
\end{equation}
Hence
\begin{equation}\label{eq:tq21}
	\langle \tilde{\psi}_{n-j}, \psi_{n-j} \rangle_{\Int}=  p_{j,n}(\alpha') \int_T^\infty C_{0}(\xi)\Ai(\xi)d\xi 
	+  (1-p_{j,n}(\alpha') )+o(1).
\end{equation}
Now we evaluate $u_{j,n}(\xi) := n^{-1/2} K_{n-j,n}\chi_{\Int}\tilde{\psi}_{n-j}(x)$, $x = \redge + \beta^{-1}n^{-2/3}\xi \in \Int$. But it is straightforward to check that, as in Subsubsection~\ref{sec:4.3.1}, 
\begin{equation}\label{eq:tq24}
\begin{split}
	u_{j,n}
	- p_{j,n}(\alpha')\frac{\beta}{\B_{j,n}(\redge)} (K_{\Airy}\chi_{[T,\infty)} C_{0})
	\to 0 \quad \text{in $L^2([T, \infty))$.}
\end{split}
\end{equation}
Therefore, from~\eqref{eq:maind3}, 
\begin{equation}\label{eq:tq25}
\begin{split}
	\Prob_{n-j+1, n}(a; \Int) 
	= & \FGUE(T) \cdot \big( 1- (p_{j,n}(\alpha')\langle C_0, \Ai \rangle_{[T, \infty)} + 1-p_{j,n}(\alpha')) \\
	& - p_{j,n}(\alpha') \langle (1-\chi_{[T, \infty)}K_{\Airy}\chi_{[T, \infty)})^{-1} (K_{\Airy}\chi_{[T, \infty)}C_0), \Ai \rangle_{[T,\infty)} \big) +o(1)\\
	= & p_{j,n}(\alpha')\FGOE(T; 0)+o(1).
\end{split}
\end{equation}
Thus~\eqref{eq:ppaarr} is proved.


\section{Summary of asymptotics of orthogonal polynomials and the Christoffel-Darboux kernel} \label{section:Result_of_RHP}

Define the matrix-valued function 
\begin{equation} \label{eq:standard_RHP}
Y_k(z;n) = 
\begin{pmatrix}
\gamma^{-1}_k(n) p_k(z;n) & \gamma^{-1}_k(n) (C\varphi_k)(z;n)  \\
-2\pi i \gamma_{k-1}(n) p_{k-1}(z;n) & -2\pi i \gamma_{k-1}(n) (C\varphi_{k-1})(z;n)
\end{pmatrix},
\end{equation}
for $z \in \compC \setminus \realR$. 
Here $(C\varphi_k)(z;n)$ is the Cauchy transform of $\varphi_k(x;n) = p_k(x;n)e^{-nV(x)}$. 
The matrix $Y_{k}(z;n)$ is the solution to the Riemann-Hilbert problem for orthogonal polynomials with weight $e^{-nV(z)}$ (see \cite{Fokas-Its-Kitaev92}). We are interested in the asymptotics of $Y_{n-j}(z;n)$ when $j=O(1)$ and $n\to\infty$. We indicate the changes from the analysis of \cite{Deift-Kriecherbauer-McLaughlin-Venakides-Zhou99} and state the results. 
A similar derivation for the discrete weight can be found in \cite{Baik-Kriecherbauer-McLaughlin-Miller07}.

Fix $\delta>0$ small enough. Let (see Figure~\ref{fig:AB_delta})
\begin{align}
A_{\delta} := & \{ z \in \compC \mid \lvert z-\redge \rvert < \delta \}, \\
B_{\delta} := & \{ z \in \compC \mid \Re z \geq \redge \textnormal{ and } \lvert z - \redge \rvert > \delta \} 
 \cup \{ z \in \compC \mid \Re z < \redge \textnormal{ and } \lvert \Im z \rvert > \delta \},
\end{align}
where $\redge$ is the rightmost  end-point of the support of the equilibrium measure. 
Comparing with notations in \cite{Deift-Kriecherbauer-McLaughlin-Venakides-Zhou99}, $A_{\delta}$ is the circle $D_{\epsilon, a_{N+1}}$ with $\delta$ corresponding to the radius $\epsilon$. in \cite[Figure 1.4]{Deift-Kriecherbauer-McLaughlin-Venakides-Zhou99}. As in \cite[Figure 1.4]{Deift-Kriecherbauer-McLaughlin-Venakides-Zhou99}, $A_{\delta}$ is divided into four regions I, II, III and IV. Let $\Sigma_R$ be the contour in \cite[Figure 4.9]{Deift-Kriecherbauer-McLaughlin-Venakides-Zhou99}. We assume that the boundary of $A_{\delta}$ is a part of $\Sigma_R$ and $B_{\delta}$ is outside of the lens-shaped regions, \cf\ \cite[Formula (4.116)]{Deift-Kriecherbauer-McLaughlin-Venakides-Zhou99}.
\begin{figure}[htp] \label{fig:A_delta_and_B_delta}
\begin{center}
\includegraphics{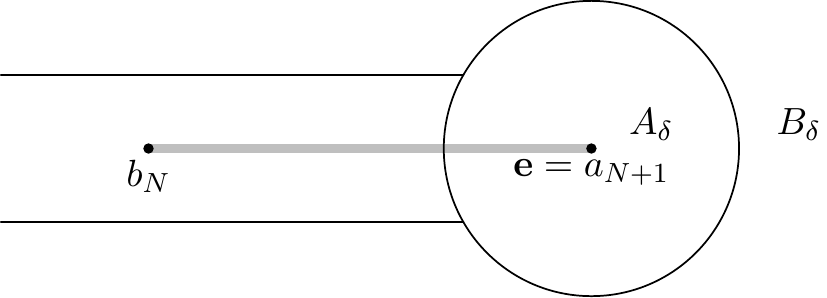}
\caption{$A_\delta$ and $B_{\delta}$.}\label{fig:AB_delta}
\end{center}
\end{figure}

Several notations from  \cite{Deift-Kriecherbauer-McLaughlin-Venakides-Zhou99} are used in this section, and we summarize  them in Table \ref{table:notations}. Other notations may be slightly different but should be clear.


\begin{table}[h] 
\centering
\begin{tabular}{|ll|} 
\hline
Notation & Definition in \cite{Deift-Kriecherbauer-McLaughlin-Venakides-Zhou99} \\
\hline \hline
$\vec{\Omega}$ & defined in Formula (1.21) \\
$\theta$ & defined in Formula (1.24) \\
$\gamma$ & defined in Formula (1.26) \\
$u$ & defined in Formula (1.29) \\
$d$ & defined in Formula (1.30) \\
$u_+(\infty)$ & explained below Theorem 1.1 \\
$\Phi = \Phi_{a_{N+1}}$ & defined in Formula (1.34) \\
$P$ & defined in Formulas (1.38)--(1.40) \\
$\sigma_3$ & mentioned between Formulas (1.40) and (1.41) \\
$\Sigma^{(1)}$ & shown in Figure 1.5 \\
$v^{(\infty)}$ & defined in Formulas (1.104)--(1.107) \\
$G$ & defined in Formula (3.44) \\
$\Sigma_R$ & shown in Figure 4.9 \\
$v_R$ & defined in Formula (4.108) \\
\hline
\end{tabular}
\caption{Notations taken from \cite{Deift-Kriecherbauer-McLaughlin-Venakides-Zhou99}}\label{table:notations}
\end{table}

By following the procedure of \cite{Deift-Kriecherbauer-McLaughlin-Venakides-Zhou99}, we find asymptotics of $Y$. Noting the symmetry
\begin{equation}
	Y(z)= \begin{pmatrix} 1&0\\0&-1\end{pmatrix} \overline{Y(\bar{z})} \begin{pmatrix} 1&0\\0&-1\end{pmatrix} ,
\end{equation}
it is enough to consider $z\in \compC_+$. 
The asymptotic  formulas  of $Y(z)$ are different in $B_\delta$ and $A_\delta$. 
For $z \in B_{\delta} \cap \compC_+$, we have (\cf\ \cite[Formula 4.116]{Deift-Kriecherbauer-McLaughlin-Venakides-Zhou99})
\begin{equation} \label{eq:RHP_out_of_lens}
Y_{n-j}(z;n) = e^{\frac{n\ell}{2} \sigma_3} R_{j,n}(z) M^{(\infty)}_{j,n}(z) e^{n(\gfn(z)-\frac{\ell}{2}) \sigma_3}.
\end{equation}
The remainder 
$R_{j,n}(z)$ solves the Riemann-Hilbert problem similar to the Riemann-Hilbert problem for $R$ in \cite[Subsection 4.6]{Deift-Kriecherbauer-McLaughlin-Venakides-Zhou99} (\cf\ \cite[Formulas (4.106)--(4.108)]{Deift-Kriecherbauer-McLaughlin-Venakides-Zhou99}) with the jump matrix which has the same estimate as in $v_R$ shown in \cite[Figure 1.4]{Deift-Kriecherbauer-McLaughlin-Venakides-Zhou99}. This implies that uniformly for all $z \in \compC \setminus \Sigma_R$ and $n$ (\cf\ \cite[Formula 4.115]{Deift-Kriecherbauer-McLaughlin-Venakides-Zhou99})
\begin{equation} \label{eq:leading_term_of_R_j}
R_{j,n}(z) = I + O(n^{-1}).
\end{equation}

The outer parametrix  $M^{(\infty)}_{j,n}(z)$ solves the Riemann-Hilbert problem (\cf\ \cite[Formulas (4.24)--(4.26)]{Deift-Kriecherbauer-McLaughlin-Venakides-Zhou99})
\begin{align}
M^{(\infty)}_{j,n}(z) & \textnormal{ is analytic in $\compC \setminus \Sigma^{(1)}$} \\
(M^{(\infty)}_{j,n})_+(z) & = (M^{(\infty)}_{j,n})_-(z) v^{(\infty)}(z), \quad z \in \Sigma^{(1)}, \\
M^{(\infty)}_{j,n}(z) & = \left( I + O \left( \frac{1}{z} \right) \right)
\begin{pmatrix}
z^{-j} & 0 \\
0 & z^j
\end{pmatrix},
\quad z \to \infty, \quad z \in \compC \setminus \realR.
\end{align}
Note that the dependence on $j$ in the asymptotics as $z\to\infty$. 
The solution of this Riemann-Hilbert problem can be solved as in \cite[Lemma 4.3]{Deift-Kriecherbauer-McLaughlin-Venakides-Zhou99}).
Setting
\begin{align}
\Theta(\infty) := & \frac{\theta(u_+(\infty)+d)}{\theta(u_+(\infty)-d)}, \\
\tilde{\Theta}(z) := & \frac{\gamma(z)-\gamma(z)^{-1}}{\gamma(z)+\gamma(z)^{-1}} \frac{\theta(u(z)+d)}{\theta(u(z)-d)},
\end{align}
it is straightforward to check that  for $z \in B_{\delta} \cap \compC_+$, 
\begin{equation} \label{eq:M^infty_j_above}
\begin{split}
M^{(\infty)}_{j,n}(z) = &\diag \left( \left( \frac{\sum^N_{l=0} a_{l+1}-b_l}{4} \right)^{-j}, \left( \frac{\sum^N_{l=0} a_{l+1}-b_l}{4} \right)^j \right) \\
&\times \diag \left( \frac{\Theta(\infty)^{-j} \theta(u_+(\infty)+d)}{\theta(u_+(\infty) - \frac{n}{2\pi}\Q - (2j-1)d)}, \frac{\Theta(\infty)^j \theta(u_+(\infty)+d)}{\theta(u_+(\infty) + \frac{n}{2\pi}\Q + (2j+1)d)} \right) \\
&\times
\begin{pmatrix}
	\tilde{\Theta}(z)^j \frac{\gamma+\gamma^{-1}}{2} \frac{\theta(u(z) - \frac{n}{2\pi}\Q - (2j-1)d)}{\theta(u(z)+d)} & \tilde{\Theta}(z)^{-j} \frac{\gamma-\gamma^{-1}}{-2i} \frac{\theta(u(z) + \frac{n}{2\pi}\Q + (2j-1)d)}{\theta(u(z)-d)} \\
	\tilde{\Theta}(z)^j \frac{\gamma-\gamma^{-1}}{2i} \frac{\theta(u(z) - \frac{n}{2\pi}\Q - (2j+1)d)}{\theta(u(z)-d)} & \tilde{\Theta}(z)^{-j} \frac{\gamma+\gamma^{-1}}{2} \frac{\theta(u(z) + \frac{n}{2\pi}\Q + (2j+1)d)}{\theta(u(z)+d)}
\end{pmatrix},
\end{split}
\end{equation}
and 
for $z \in B_{\delta} \cap \compC_-$,
\begin{equation} \label{eq:M^infty_j_below}
\begin{split}
M^{(\infty)}_{j,n}(z) = & \left( \left( \frac{\sum^N_{l=0} a_{l+1}-b_l}{4} \right)^{-j}, \left( \frac{\sum^N_{l=0} a_{l+1}-b_l}{4} \right)^j \right)  \\
& \times \diag \left( \frac{\Theta(\infty)^{-j} \theta(u_+(\infty)+d)}{\theta(u_+(\infty) - \frac{n}{2\pi}\Q - (2j-1)d)}, \frac{\Theta(\infty)^j \theta(u_+(\infty)+d)}{\theta(u_+(\infty) + \frac{n}{2\pi}\Q + (2j+1)d)} \right) \\
& \times
\begin{pmatrix}
\tilde{\Theta}(z)^{-j} \frac{\gamma-\gamma^{-1}}{-2i} \frac{\theta(u(z) + \frac{n}{2\pi}\Q + (2j-1)d)}{\theta(u(z)-d)} & \tilde{\Theta}(z)^j \frac{\gamma+\gamma^{-1}}{-2} \frac{\theta(u(z) - \frac{n}{2\pi}\Q - (2j-1)d)}{\theta(u(z)+d)} \\
\tilde{\Theta}(z)^{-j} \frac{\gamma+\gamma^{-1}}{2} \frac{\theta(u(z) + \frac{n}{2\pi}\Q + (2j+1)d)}{\theta(u(z)+d)} & \tilde{\Theta}(z)^j \frac{\gamma-\gamma^{-1}}{-2i} \frac{\theta(u(z) - \frac{n}{2\pi}\Q - (2j+1)d)}{\theta(u(z)-d)}
\end{pmatrix}.
\end{split}
\end{equation}
Note that both formulas \eqref{eq:M^infty_j_above} and \eqref{eq:M^infty_j_below} can be extended to $z \in B_{\delta} \cap \realR$.

The asymptotics~\eqref{eq:RHP_out_of_lens} especially implies the asymptotics of $\gamma_{n-j}$. 
Since (\cf\ \cite[Formulas (3.10) and (3.11)]{Deift-Kriecherbauer-McLaughlin-Venakides-Zhou99a})
\begin{equation}
\gamma_{n-j} = \left( -2\pi i \lim_{\lvert z \rvert \to \infty} z^{-n+j+1} (R_{j,n}(z)M^{(\infty)}_{j,n}(z))_{12} e^{-n(\gfn(z)+\ell)} \right)^{-1/2},
\end{equation}
we have from \eqref{eq:leading_term_of_R_j} and \eqref{eq:M^infty_j_above} that 
\begin{equation}
\gamma_{n-j} = \bfgamma_{n-j}(1+O(n^{-1})),
\end{equation}
where
\begin{multline} \label{eq:formula_of_gamma_n-j}
\bfgamma^2_{n-j} = \left( \frac{\pi \left( \sum^N_{l=0} a_{l+1}-b_l \right) }{2} \frac{\theta(u_+(\infty)+d) \theta(u_+(\infty) + \frac{n}{2\pi}\Q + (2j-1)d)}{\theta(u_+(\infty) - \frac{n}{2\pi}\Q - (2j-1)d) \theta(u_+(\infty)-d)} \right)^{-1} \\
\times \left( \frac{\sum^N_{l=0} a_{l+1}-b_l}{4} \Theta(\infty) \right)^{2j} e^{-n\ell}.
\end{multline}
See \cite[Formulas (1.62) and (1.63)]{Deift-Kriecherbauer-McLaughlin-Venakides-Zhou99} for the cases $j=1$ and $j=0$.

Now consider  $z \in A_{\delta}$. Then the analysis of the local parametrix as in  \cite[Section 4.3]{Deift-Kriecherbauer-McLaughlin-Venakides-Zhou99} implies that (cf. \cite[(4.119)--(4.121)]{Deift-Kriecherbauer-McLaughlin-Venakides-Zhou99} )
\begin{equation} \label{eq:RHP_in_A_!_IV}
Y_{n-j}(z;n) = e^{\frac{n\ell}{2}\sigma_3} R_{j,n}(z)(M_{j,n})_p(z) e^{n(\gfn(z)-\frac{\ell}{2})\sigma_3}
\end{equation}
for $z$ is in regions I and IV in \cite[Figure 1.4]{Deift-Kriecherbauer-McLaughlin-Venakides-Zhou99},
\begin{equation} \label{eq:RHP_in_A_!I}
Y_{n-j}(z;n) = e^{\frac{n\ell}{2}\sigma_3} R_{j,n}(z)(M_{j,n})_p(z) 
\begin{pmatrix}
1 & 0 \\
e^{-nG(z)} & 1
\end{pmatrix}
e^{n(\gfn(z)-\frac{\ell}{2})\sigma_3}
\end{equation}
for  $z$ is in regions II in \cite[Figure 1.4]{Deift-Kriecherbauer-McLaughlin-Venakides-Zhou99}, and
\begin{equation} \label{eq:RHP_in_A_!II}
Y_{n-j}(z;n) = e^{\frac{n\ell}{2}\sigma_3} R_{j,n}(z)(M_{j,n})_p(z) 
\begin{pmatrix}
1 & 0 \\
-e^{nG(z)} & 1
\end{pmatrix}
e^{n(\gfn(z)-\frac{\ell}{2})\sigma_3}
\end{equation}
for  $z$ is in regions III in \cite[Figure 1.4]{Deift-Kriecherbauer-McLaughlin-Venakides-Zhou99}. Here the local parametrix $(M_{j,n})_p$ is given by (\cf\ \cite[Formulas (4.75) and (4.76)]{Deift-Kriecherbauer-McLaughlin-Venakides-Zhou99})
\begin{equation}
(M_{j,n})_p(z) := M^{(\infty)}_{j,n}(z) \frac{1}{\sqrt{2i}}
\begin{pmatrix}
i & -i \\
1 & 1
\end{pmatrix}
(\Phi(z))^{\sigma_3/4} P(\Phi(z)).
\end{equation}
where $\Phi(z)$ denotes $\Phi_{a_{N+1}}(z)$ in \cite{Deift-Kriecherbauer-McLaughlin-Venakides-Zhou99}. We note that by definition
\begin{equation}
\Phi(z)  = \left[ -\frac{3n}{4} (2\gfn(z) - V(z) - \ell) \right]^{2/3}, 
\end{equation}
so that 
\begin{equation} \label{eq:asy_pf_Phi_at_e}
\Phi(z) = \beta n^{2/3} (z-\redge) (1 + O(\lvert z-\redge \rvert)) \quad \textnormal{as $z \to \redge$}
\end{equation}
with $\beta$ defined in~\eqref{eq:definition_of_beta} (see \cite[Equations (1.34), (1.35),  (4.74)]{Deift-Kriecherbauer-McLaughlin-Venakides-Zhou99}).


We now summarize the asymptotics the orthonormal polynomials and their Cauchy transformations. For notational convenience, we denote for $z \in B_{\delta}$
\begin{align}
\M_{j,n}(z) := & \bfgamma_{n-j}e^{\frac{n\ell}{2}}(M^{(\infty)}_{j,n})_{11}(z), \label{eq:def_of_slash_M} \\
\tilde{\M}_{j,n}(z) := & \bfgamma_{n-j}e^{\frac{n\ell}{2}}(M^{(\infty)}_{j,n})_{12}(z),
\label{eq:def_of_slash_Mhat}
\end{align}
and for $z \in A_{\delta}$
\begin{align}
\B_{j,n}(z) := & \sqrt{\pi}n^{-1/6} \bfgamma_{n-j}e^{\frac{n\ell}{2}} \left( (M^{(\infty)}_{j,n})_{11}(z) - i(M^{(\infty)}_{j,n})_{12}(z) \right) \Phi(z)^{1/4}, \label{asy_formula_of_B_jn} \\
\D_{j,n}(z) := & \sqrt{\pi}n^{1/6} \bfgamma_{n-j}e^{\frac{n\ell}{2}} \left( -(M^{(\infty)}_{j,n})_{11}(z) - i(M^{(\infty)}_{j,n})_{12}(z) \right) \Phi(z)^{-1/4}. \label{asy_formula_of_D_jn}
\end{align}
\begin{rmk} \label{rmk:when_N=0}
The formulas \eqref{eq:M^infty_j_above} and \eqref{eq:M^infty_j_below} contain the $N$-variable Theta function $\theta$. If $N=0$, \ie, the equilibrium measure is supported on one interval, then $\theta \equiv 1$, and the expressions of $\bfgamma_{n-j}$, $\M_{j,n}(z)$, $\M_{j,n}(z)$, $\B_{j,n}(z)$ and $\D_{j,n}(z)$ are much simplified. In particular $\M_{j,n}(z)$, $\M_{j,n}(z)$, $\B_{j,n}(z)$ and $\D_{j,n}(z)$ do not depend on $n$ when $N=0$. For example, for $z \in \compC_+$
\begin{align}
\M_{j,n}(z) = & \sqrt{\frac{2}{\pi(a_1-b_0)}} \frac{\gamma + \gamma^{-1}}{2} \left( \frac{\gamma - \gamma^{-1}}{\gamma + \gamma^{-1}} \right)^j, \\
\tilde{\M}_{j,n}(z) = & \sqrt{\frac{2}{\pi(a_1-b_0)}} \frac{\gamma - \gamma^{-1}}{-2i} \left( \frac{\gamma - \gamma^{-1}}{\gamma + \gamma^{-1}} \right)^{-j},
\end{align}
and
\begin{equation}
\B_{j,n}(\redge) = \sqrt{2} (a_1-b_0)^{-1/4}\beta^{1/4}.
\end{equation}
\end{rmk}

\begin{prop} \label{prop:asy_for_mult_cut}
There exists $\delta_0>0$ such that  for each fixed $\delta \in (0, \delta_0]$, the following holds as $n\to\infty$ and $j=O(1)$.

\begin{enumerate}[label=(\alph*)]
\item \label{enu:prop:asy_for_mult_cut:a}
For $z \in B_{\delta}$, 
\begin{equation} \label{eq:asy_of_varphi_n-j}
	\varphi_{n-j}(z;n) = M_{j,n}(z) e^{n(\gfn(z)-V(z)-\ell/2)},
\end{equation}
where $M_{j,n}(z)$ is an analytic function in $B_{\delta}$ and 
\begin{equation} \label{eq:relation_of_M_and_slash_M}
M_{j,n}(z) = \M_{j,n}(z) (1+O(n^{-1}))
\end{equation}
uniformly in $z$ and $n$. The function $M_{j,n}$ satisfies that (i) $M_{j,n}(z) = O(z^{-j})$ uniformly in $n$ as $z\to\infty$, (ii)  in any compact subset $K\in B_{\delta}$, $M_{j,n}(z)$, $M'_{j,n}(z)$  and $1/M_{j,n}(z)$ are $O(1)$ uniformly in $n$ and $z\in K$, and (iii) $\M_{j,n}(x)>0$ and $M_{j,n}(x)>0$ for all  real $x>\redge$. 

\item \label{enu:prop:asy_for_mult_cut:b}
For $z \in B_{\delta} $,
\begin{equation} \label{eq:asy_of_Cvarphi_n-j}
	(C\varphi_{n-j})(z;n) = \tilde{M}_{j,n}(z) e^{n(-\gfn(z)+\ell/2)},
\end{equation} 
where $\tilde{M}_{j,n}(z)$ is analytic in $B_{\delta}\setminus\R$, continuous up to the boundary, and
\begin{equation} \label{eq:asy_of_tilde_M_j_mult}
	\tilde{M}_{j,n}(z) = \tilde{\M}_{j,n}(z)(1+O(n^{-1}))
\end{equation}
uniformly in $z$ and $n$ where $\tilde{\M}_{j,n}(z)$ defined in~\eqref{eq:def_of_slash_Mhat}  is analytic in $B_{\delta}$. The function $\tilde{M}_{j,n}$ satisfies that (i) $\tilde{M}_{j,n}(z)=O(z^{j})$ uniformly in $n$ as $z\to\infty$, (ii) in any compact subset $K\in B_{\delta}$, $\tilde{M}_{j,n}(z)$, $\tilde{M}'_{j,n}(z)$ and $1/\tilde{M}_{j,n}(z)$  are $O(1)$ uniformly in $n$ and $z\in K \setminus \realR$ and (iii)  $-i\tilde{\M}_{j,n}(x)>0$ for $x>\redge$.


\item \label{enu:prop:asy_for_mult_cut:c}
For $z \in A_{\delta}$, 
\begin{equation} \label{eq:psi_general}
\varphi_{n-j}(z) = 
\left( n^{1/6} \Ai(\Phi(z))B_{j,n}(z) + n^{-1/6} \Ai'(\Phi(z))D_{j,n}(z) \right) e^{-\frac{n}{2}V(z)},
\end{equation}
where $B_{j,n}(z)$ and $D_{j,n}(z)$ are analytic functions in $A_{\delta}$ and
\begin{equation}
B_{j,n}(z) = \B_{j,n}(z)(1+O(n^{-1})), \quad D_{j,n}(z) = \D_{j,n}(z)(1+O(n^{-1}))
\end{equation}
uniformly in $z$ and $n$. The functions $B_{j,n}$ and $D_{j,n}$ satisfy (i) $B_{j,n}(z)$, $D_{j,n}(z)$, $B'_{j,n}(z)$, $D'_{j,n}(z)$, $1/B_{j,n}(z)$ and  $1/D_{j,n}(z)$ are $O(1)$ uniformly in $n$ and $z \in A_{\delta}$ and (ii) $\B_{j,n}(x) > 0$ and $B_{j,n}(x) > 0$ for $x \in (\redge-\delta, \redge+\delta)$.


\item \label{enu:prop:asy_for_mult_cut:d}
For $z \in A_{\delta} \cap \compC_+$, we have 
\begin{equation} \label{eq:Cpsi_upper_plan_general}
(C\varphi_{n-j})(z) = e^{\pi i/3}\left( n^{1/6} \Ai(\omega^2\Phi(z))B_{j,n}(z) \right. 
+ \left. n^{-1/6} \omega^2\Ai'(\omega^2\Phi(z))D_{j,n}(z) \right) e^{-\frac{n}{2}V(z)},
\end{equation}
and for $z \in A_{\delta} \cap \compC_-$, we have
\begin{equation} \label{eq:Cpsi_lower_plan_general}
(C\varphi_{n-j})(z) = -e^{\pi i/3}\left( n^{1/6} \omega^2\Ai(\omega\Phi(z))B_{j,n}(z) \right. 
+ \left. n^{-1/6} \Ai'(\omega\Phi(z))D_{j,n}(z) \right) e^{-\frac{n}{2}V(z)},
\end{equation}
where $\omega = e^{2\pi i/3}$ and $B_{j,n}(z)$ and $D_{j,n}(z)$ are the same functions in~\eqref{eq:psi_general}. The formulas~\eqref{eq:Cpsi_upper_plan_general} and~\eqref{eq:Cpsi_lower_plan_general} hold up to the boundary $z\in A_\delta\cap \R$. 
\end{enumerate}
\end{prop}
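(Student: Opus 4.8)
The plan is to carry out the Deift--Zhou steepest-descent analysis of the Riemann--Hilbert problem~\eqref{eq:standard_RHP} for $Y_{n-j}(z;n)$, following \cite{Deift-Kriecherbauer-McLaughlin-Venakides-Zhou99} (which treats $j=0,1$) while tracking the dependence on $j$; a closely parallel adaptation for discrete weights appears in \cite{Baik-Kriecherbauer-McLaughlin-Miller07}. First I would perform the standard chain of transformations $Y_{n-j}\mapsto T\mapsto S$. The $\gfn$-function substitution $T(z):=e^{-\frac{n\ell}{2}\sigma_3}Y_{n-j}(z;n)e^{-n(\gfn(z)-\frac{\ell}{2})\sigma_3}$ normalizes the problem at infinity: since $\gfn(z)=\log z+O(1/z)$ while $Y_{n-j}$ carries $z^{(n-j)\sigma_3}$, one gets $T(z)=(I+O(1/z))z^{-j\sigma_3}$ as $z\to\infty$. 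This is the only place where the shift by $j$ enters structurally, and it forces the outer parametrix $M^{(\infty)}_{j,n}$ of~\eqref{eq:M^infty_j_above}--\eqref{eq:M^infty_j_below} to have $z^{-j\sigma_3}$ behavior at infinity and, via the Abel map, its theta-function arguments to be translated by the integer multiples $(2j\pm1)d$ of the half-period $d$. Opening lenses around each band of $J$ converts the oscillatory jump on $J$ into exponentially small jumps on the lens boundaries, producing $S$; this uses the regularity of $V$ ($\Psi>0$ in the interior of $J$, square-root vanishing at the endpoints) and the strict inequality in~\eqref{eq:second_eq_characterize_dmu}, exactly as in \cite{Deift-Kriecherbauer-McLaughlin-Venakides-Zhou99}.

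Next I would construct the parametrices and estimate the error. Away from the endpoints one uses $M^{(\infty)}_{j,n}$, solved explicitly as in \cite[Lemma 4.3]{Deift-Kriecherbauer-McLaughlin-Venakides-Zhou99} in terms of the Szeg\H{o}-type function $\gamma(z)$ and Riemann theta functions, yielding~\eqref{eq:M^infty_j_above}--\eqref{eq:M^infty_j_below}. At each endpoint, in particular at $\redge=a_{N+1}$, one builds the Airy local parametrix $(M_{j,n})_p$ on the disk $A_\delta$ as in \cite[Section 4.3]{Deift-Kriecherbauer-McLaughlin-Venakides-Zhou99} using the conformal map $\Phi$ with the expansion~\eqref{eq:asy_pf_Phi_at_e}; the $j$-dependence enters only through the matching factor $M^{(\infty)}_{j,n}$, since the Airy construction is local. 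Setting $R_{j,n}:=S\cdot(\textnormal{parametrix})^{-1}$, the jumps of $R_{j,n}$ on $\partial A_\delta$ and on the lens boundaries are $I+O(n^{-1})$ uniformly, and the small-norm theorem gives~\eqref{eq:leading_term_of_R_j}, $R_{j,n}(z)=I+O(n^{-1})$. Finally, unravelling $Y_{n-j}=e^{\frac{n\ell}{2}\sigma_3}R_{j,n}(\textnormal{parametrix})e^{n(\gfn(z)-\frac{\ell}{2})\sigma_3}$ and using $\varphi_{n-j}=\gamma_{n-j}^{-1}(Y_{n-j})_{11}e^{-nV}$, $(C\varphi_{n-j})=\gamma_{n-j}^{-1}(Y_{n-j})_{12}$, together with the asymptotics of $\gamma_{n-j}$ in~\eqref{eq:formula_of_gamma_n-j} (obtained as in \cite[(3.10)--(3.11)]{Deift-Kriecherbauer-McLaughlin-Venakides-Zhou99a}), yields~\eqref{eq:asy_of_varphi_n-j} and~\eqref{eq:asy_of_Cvarphi_n-j} on $B_\delta$ and~\eqref{eq:psi_general},~\eqref{eq:Cpsi_upper_plan_general},~\eqref{eq:Cpsi_lower_plan_general} on $A_\delta$, once the prefactor $e^{\pi i/3}$ and the rotations $\omega^2,\omega$ are tracked through $\Phi$ into the adjacent sheets. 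The bounds (i)--(ii) on $M_{j,n},M'_{j,n},1/M_{j,n}$ and the analogues for $\tilde M_{j,n},B_{j,n},D_{j,n}$ follow from the uniform control of $M^{(\infty)}_{j,n}$ and $R_{j,n}$ just obtained, and the positivity statements (iii) are checked directly from the explicit theta formulas~\eqref{eq:M^infty_j_above} and~\eqref{asy_formula_of_B_jn} by evaluating on the real axis to the right of the support, where $\gamma(z)>1$ is real and the relevant theta ratios are positive.

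The main obstacle is the uniform-in-$n$ control of the $j$-shifted outer parametrix, and hence of the error $R_{j,n}=I+O(n^{-1})$. Because the theta arguments $\pm\frac{n}{2\pi}\Q+(2j\pm1)d$ move quasi-periodically with $n$, one must show that they remain a fixed distance (in the Jacobian) from the theta divisor uniformly in $n$, so that $\|M^{(\infty)}_{j,n}\|$ and $\|(M^{(\infty)}_{j,n})^{-1}\|$ on the matching contours are bounded independently of $n$; this is exactly what makes the small-norm estimate for $R_{j,n}$, and with it all the $O(n^{-1})$ error terms and the claimed uniformity, hold. Everything else is a routine adaptation of the analysis in \cite{Deift-Kriecherbauer-McLaughlin-Venakides-Zhou99}, so I would only record the necessary changes rather than repeat the full steepest-descent argument.
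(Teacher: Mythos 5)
Your proposal follows essentially the same route as the paper: both adapt the Deift--Zhou analysis of \cite{Deift-Kriecherbauer-McLaughlin-Venakides-Zhou99} by tracking the shift $n\mapsto n-j$ through the outer parametrix (the $z^{-j\sigma_3}$ normalization and the theta arguments translated by $(2j\pm1)d$), reuse the Airy local parametrix at $\redge$, conclude $R_{j,n}=I+O(n^{-1})$ by small-norm estimates, and then read off the bounds and positivity from the explicit formulas \eqref{eq:M^infty_j_above}--\eqref{eq:M^infty_j_below}. The uniformity issue you flag (theta arguments staying away from the divisor) is exactly the point the paper disposes of by appeal to ``the properties of the theta function and the definition of $d$,'' so your treatment matches the paper's in both substance and level of detail.
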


\begin{proof}
By formulas \eqref{eq:M^infty_j_above} and \eqref{eq:M^infty_j_below} of $M^{(\infty)}_{j,n}(z)$, the properties of the theta function $\theta$ and the definition of $d$, we have that for $p,q = 1,2$, the functions $(M^{(\infty)}_{j,n})_{pq}(z)$ and $(M^{(\infty)}_{j,n})_{pq}(z)^{-1}$ are uniformly bounded for $z$ in any compact subset $K \subset B_{\delta}$ and the functions
\begin{gather*}
\left( i(M^{(\infty)}_{j,n})_{p1}(z) + (M^{(\infty)}_{j,n})_{p2}(z) \right) (z-\redge)^{1/4}, \\
\left( i(M^{(\infty)}_{j,n})_{p1}(z) + (M^{(\infty)}_{j,n})_{p2}(z) \right)^{-1} (z-\redge)^{-1/4}, \\
\left( -i(M^{(\infty)}_{j,n})_{p1}(z) + (M^{(\infty)}_{j,n})_{p2}(z) \right) (z-\redge)^{-1/4}, \\
\left( i(M^{(\infty)}_{j,n})_{p1}(z) + (M^{(\infty)}_{j,n})_{p2}(z) \right)^{-1} (z-\redge)^{1/4}
\end{gather*}
are uniformly bounded for $z \in A_{\delta}$ and $n$. We note that although $M^{(\infty)}_{j,n}(z)$ is not well defined on $J$, the functions considered above are well defined on $A_{\delta} \cap J$. Plugging in these estimates and the estimate \eqref{eq:leading_term_of_R_j} of $R_{j,n}(z)$ into \eqref{eq:RHP_out_of_lens}, \eqref{eq:RHP_in_A_!_IV}, \eqref{eq:RHP_in_A_!I} and \eqref{eq:RHP_in_A_!II} we obtain the estimates of $M_{j,n}(z)$, $1/M_{j,n}(z)$, $\tilde{M}_{j,n}(z)$, $1/\tilde{M}_{j,n}(z)$, $B_{j,n}(z)$, $1/B_{j,n}(z)$, $D_{j,n}(z)$ and $1/D_{j,n}(z)$. With the help of the Cauchy's integral formula, we further obtain the estimates of $M'_{j,n}(z)$, $\tilde{M}'_{j,n}(z)$, $B'_{j,n}(z)$ and $D'_{j,n}(z)$. From the formula of $M^{(\infty)}_{j,n}(z)$ we also derive the positivity of $\M_{j,n}(x)$, $M_{j,n}(x)$, $-i\tilde{\M}_{j,n}(x)$, $\B_{j,n}(x)$ and $B_{j,n}(x)$.
\end{proof}

\begin{rmk}
We use the following identity in the analysis. 
It is straightforward to derive from the Riemann-Hilbert problem of $Y_k(z;n)$ that $\det Y_k(z;n)\equiv 1$. This implies that 
\begin{equation} \label{eq:algebraic_property_of_RHP}
\frac{\gamma_k}{\gamma_{k-1}} = \frac{-1}{2\pi i} (p_k(z)C\varphi_{k-1}(z) - p_{k-1}(z)C\varphi_k(z)).
\end{equation}
Taking $k=n-j$ and using asymptotic formulas \eqref{eq:psi_general}, \eqref{eq:Cpsi_upper_plan_general} and \eqref{eq:Cpsi_lower_plan_general} in \eqref{eq:algebraic_property_of_RHP}, with the help of \cite[10.4.11 and 10.4.12]{Abramowitz-Stegun64}
\begin{align}
\Ai(z)\omega\Ai'(\omega z) - \Ai'(z)\Ai(\omega z) = & \frac{e^{-\pi i/6}}{2\pi}, \\
\Ai(z)\omega^2\Ai'(\omega^2 z) - \Ai'(z)\Ai(\omega^2 z) = & \frac{e^{\pi i/6}}{2\pi},
\end{align}
we find
\begin{equation} \label{eq:cross_product_of_B_D}
B_{j,n}(z)D_{j+1,n}(z) - B_{j+1,n}(z)D_{j,n}(z) = \frac{\gamma_{n-j}}{\gamma_{n-j-1}}.
\end{equation}
\end{rmk}

Proposition \ref{prop:asy_for_mult_cut} implies the following asymptotic properties of $\psi_{n-j}$. These are used in the main analysis extensively.

\begin{cor} \label{cor:only_one_last_sect}
Fix $T\in \R$. 
There exists $\delta_0>0$ such that for each fixed $\epsilon\in (0, \delta_0]$, the following holds
as $n\to\infty$ and $j=O(1)$.

\begin{enumerate}[label=(\alph*)]
\item \label{enu:cor:only_one_last_sect:a}
For $x\ge \redge+\epsilon$,
\begin{equation}\label{eq:psin101}
\begin{split}
\psi_{n-j}(x) = & M_{j,n}(x) e^{n(\Gfn(x)-\Hfn(x))/2}= O(e^{n(\Gfn(x)-\Hfn(x))/2} (1 + \lvert x \rvert)^{-j}),
\end{split}
\end{equation}	
for $M_{j,n}(x)$ in~\eqref{eq:asy_of_varphi_n-j}.

\item \label{enu:cor:only_one_last_sect:b}
Let $E_{T,\epsilon}:= [\redge + \beta^{-1}n^{-2/3}T, \redge+\epsilon]$ be the interval defined in \eqref{eq:defn_of_E_T_epsilon}.
For $x\in E_{T,\epsilon}$,
\begin{equation}\label{eq:psin102}
\begin{split}
	\psi_{n-j}(x) 
	&= O(n^{1/6} e^{-\factor |\xi|^{3/2}} ), \qquad \xi := \beta n^{2/3}(x-\redge).
\end{split}
\end{equation}	

\item \label{enu:cor:only_one_last_sect:c}
Let $\Int:=\left[ \redge + \beta^{-1}n^{-2/3}T, \infty \right)$ be the interval defined in \eqref{eq:interval}.
Then 
\begin{equation} \label{eq:L^2_norm_of_psi_1}
	\lVert \psi_{n-j} \rVert_{L^2(\Int)}  = O(n^{-1/6}).
\end{equation}
Also for every $\bar{x}$ in $(\redge + \epsilon, \infty)$, there is $\epsilon'>0$ such that 
\begin{equation} \label{eq:psiL20}
	 \lVert \psi_{n-j} \lVert_{L^2([\bar{x}, \infty))} = O(e^{-\epsilon'n}).
\end{equation}

\item \label{enu:cor:only_one_last_sect:d}
As $n\to\infty$,  $v_{j,n}(\xi) := \psi_{n-j}(\redge+\beta^{-1}n^{-2/3}\xi)$  satisfies 
\begin{equation}\label{eq:psiL2}
	v_{j,n}(\xi)  \to \B_{j,n}(\redge)\Ai(\xi), \quad \text{in $L^2([T, \infty))$.}
\end{equation}

\end{enumerate}
\end{cor}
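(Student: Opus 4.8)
The plan is to read off all four assertions from Proposition~\ref{prop:asy_for_mult_cut}, using the elementary relation $\psi_{n-j}(x)=\varphi_{n-j}(x;n)e^{\frac{n}{2}V(x)}$ from~\eqref{eq:definition_of_psi_and_varphi} together with the identities $2\gfn(z)-V(z)-\ell=\Gfn(z;a)-\Hfn(z;a)$ and $\gfn(z)-\frac12V(z)-\frac12\ell=\frac12\bigl(\Gfn(z;a)-\Hfn(z;a)\bigr)$, both of which follow from the definition~\eqref{eq:definition_of_GH} and do not depend on $a$. Throughout I would fix $\epsilon\in(0,\delta_0]$ small enough that $[\redge+\epsilon,\infty)\subset B_\delta$, that $[\redge-\epsilon,\redge+\epsilon]\subset A_\delta$, and that the local expansion~\eqref{eq:asy_pf_Phi_at_e} of $\Phi$ yields $\frac23\Phi(x)^{3/2}\ge\factor\,|\xi|^{3/2}+c|\xi|^{3/2}$ for a constant $c>0$ whenever $x=\redge+\beta^{-1}n^{-2/3}\xi\in[\redge,\redge+\epsilon]$; this is possible because $\frac23\Phi(x)^{3/2}=\frac23\xi^{3/2}\bigl(1+O(\epsilon)\bigr)$ and $\frac23>\factor$.

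\textbf{Parts \ref{enu:cor:only_one_last_sect:a} and \ref{enu:cor:only_one_last_sect:b}.} On $[\redge+\epsilon,\infty)\subset B_\delta$, formula~\eqref{eq:asy_of_varphi_n-j} gives directly $\psi_{n-j}(x)=M_{j,n}(x)e^{n(\gfn(x)-V(x)/2-\ell/2)}=M_{j,n}(x)e^{n(\Gfn(x)-\Hfn(x))/2}$, and properties (i)--(ii) of $M_{j,n}$ in Proposition~\ref{prop:asy_for_mult_cut}\ref{enu:prop:asy_for_mult_cut:a} show $M_{j,n}(x)=O((1+|x|)^{-j})$ uniformly in $n$ on that half-line, which is~\eqref{eq:psin101}. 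On $E_{T,\epsilon}\subset A_\delta$, formula~\eqref{eq:psi_general} gives $\psi_{n-j}(x)=n^{1/6}\Ai(\Phi(x))B_{j,n}(x)+n^{-1/6}\Ai'(\Phi(x))D_{j,n}(x)$; inserting the classical asymptotics $\Ai(\zeta)=O((1+|\zeta|)^{-1/4}e^{-\frac23\zeta^{3/2}})$ and $\Ai'(\zeta)=O((1+|\zeta|)^{1/4}e^{-\frac23\zeta^{3/2}})$ for $\zeta\ge0$ (and their boundedness for $\zeta$ in a compact set or $\zeta<0$) from~\cite{Abramowitz-Stegun64}, together with $\Phi(x)=\xi(1+O(|x-\redge|))$, the choice of $\epsilon$ above, and the uniform boundedness of $B_{j,n},D_{j,n}$ from Proposition~\ref{prop:asy_for_mult_cut}\ref{enu:prop:asy_for_mult_cut:c}, the $\Ai$-term dominates and yields $\psi_{n-j}(x)=O(n^{1/6}e^{-\factor|\xi|^{3/2}})$, the surplus $c|\xi|^{3/2}$ in the exponent absorbing all polynomial prefactors; this is~\eqref{eq:psin102}.

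\textbf{Parts \ref{enu:cor:only_one_last_sect:c} and \ref{enu:cor:only_one_last_sect:d}.} Split $\Int=E_{T,\epsilon}\cup[\redge+\epsilon,\infty)$. On $E_{T,\epsilon}$, the change of variables $x=\redge+\beta^{-1}n^{-2/3}\xi$ and~\eqref{eq:psin102} give $\int_{E_{T,\epsilon}}|\psi_{n-j}|^2\,dx=\int_{T}^{\beta n^{2/3}\epsilon}O(n^{1/3}e^{-\twofactor|\xi|^{3/2}})\beta^{-1}n^{-2/3}\,d\xi=O(n^{-1/3})$. For any $\bar x>\redge$, the function $\Gfn(x)-\Hfn(x)=2\gfn(x)-V(x)-\ell$ is continuous and strictly negative on $(\redge,\infty)$ by regularity (see~\eqref{eq:consequence_or_regularity}) and tends to $-\infty$ at least linearly as $x\to\infty$ (since $\gfn(x)=O(\log x)$ while $V(x)\to+\infty$), so $\sup_{[\bar x,\infty)}(\Gfn-\Hfn)=:-\epsilon'<0$; writing $e^{n(\Gfn-\Hfn)}\le e^{-(n-1)\epsilon'}e^{\Gfn-\Hfn}$ in~\eqref{eq:psin101} and using that $(1+|x|)^{-2j}e^{\Gfn(x)-\Hfn(x)}$ is integrable on $[\bar x,\infty)$, one gets $\|\psi_{n-j}\|_{L^2([\bar x,\infty))}^2=O(e^{-\epsilon' n})$, which is~\eqref{eq:psiL20}; taking $\bar x=\redge+\epsilon$ and combining with the previous estimate gives $\|\psi_{n-j}\|_{L^2(\Int)}^2=O(n^{-1/3})$, i.e.~\eqref{eq:L^2_norm_of_psi_1}. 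For \ref{enu:cor:only_one_last_sect:d}, for $\xi$ in a compact subset of $[T,\infty)$ and $x=\redge+\beta^{-1}n^{-2/3}\xi$, expansion~\eqref{eq:asy_pf_Phi_at_e} gives $\Phi(x)=\xi+O(n^{-2/3})$, hence $\Ai(\Phi(x))=\Ai(\xi)+o(1)$, while $B_{j,n}(x)=\B_{j,n}(x)(1+O(n^{-1}))=\B_{j,n}(\redge)+o(1)$ (using boundedness of $B_{j,n}$ and its derivative) and $D_{j,n}=O(1)$; so $n^{-1/6}\psi_{n-j}(x)=\B_{j,n}(\redge)\Ai(\xi)+o(1)$ pointwise on $[T,\infty)$. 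Since $n^{-1/3}|\psi_{n-j}(x)|^2=O(e^{-\twofactor|\xi|^{3/2}})$ on $E_{T,\epsilon}$ by~\eqref{eq:psin102} and is exponentially small beyond $\redge+\epsilon$ by~\eqref{eq:psiL20}, dominated convergence yields $n^{-1/6}v_{j,n}\to\B_{j,n}(\redge)\Ai$ in $L^2([T,\infty))$, which is the content of~\eqref{eq:psiL2} (with the $n^{-1/6}$ normalization used in Section~\ref{subsection:outline_of_the_proof}).

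\textbf{Main obstacle.} The only genuinely delicate point is the uniformity in part~\ref{enu:cor:only_one_last_sect:b}: one must choose $\epsilon$ (equivalently $\delta_0$) small enough that the nonlinear correction in $\Phi(x)=\xi(1+O(|x-\redge|))$ does not erode the Airy exponent $\frac23\Phi(x)^{3/2}$ below $\factor|\xi|^{3/2}$ over the whole range $0\le\xi\le\beta n^{2/3}\epsilon$ (with a small positive margin left over to absorb the polynomial prefactors of the Airy asymptotics). Everything else is bookkeeping with the uniform bounds of Proposition~\ref{prop:asy_for_mult_cut}, remembering only that the ``script'' quantities $\M_{j,n},\B_{j,n},\D_{j,n}$ depend on $n$ but are $O(1)$, with $\B_{j,n}(\redge)$ bounded away from $0$.
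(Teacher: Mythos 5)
Your proposal is correct and follows essentially the same route as the paper: all four parts are read off from Proposition~\ref{prop:asy_for_mult_cut}, with the edge estimate obtained from the Airy asymptotics after shrinking $\epsilon$ so that the exponent $\frac23\Phi(x)^{3/2}$ stays above $\factor|\xi|^{3/2}$ with a margin, and the $L^2$ bounds obtained by splitting $\Int$ at $\redge+\epsilon$. You also correctly supply the $n^{-1/6}$ normalization in part \ref{enu:cor:only_one_last_sect:d} that the corollary's statement omits but that the paper uses in~\eqref{eq:maind3}.
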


\begin{proof}
The result \ref{enu:cor:only_one_last_sect:a} follows from~\eqref{eq:asy_of_varphi_n-j} and noting that $2\gfn(z)-V(z)-\ell=\Gfn(z;a)-\Hfn(z;a)$.

For \ref{enu:cor:only_one_last_sect:b}, note that $T\le \xi\le \epsilon n^{2/3}$. Thus, $|\Ai(\xi)|\le Ce^{-\frac23 |\xi|^{3/2}}$and  $|\Ai'(\xi)|\le C(|\xi|^{1/4}+1)e^{-\frac23 |\xi|^{3/2}}
\le C'n^{1/6} e^{-\frac23 |\xi|^{3/2}}$  for some constants $C, C'>0$. From~\eqref{eq:psi_general} and the behavior of $\Phi(z)$ in $A_{\epsilon}$, we obtain the estimate with the factor $\frac23$ changed to a smaller constant which can be made arbitrarily close to $\frac23$ if we take $\epsilon$ smaller. To be definite, we fix this constant as $\factor$. 

For \ref{enu:cor:only_one_last_sect:c}, by the asymptotics \eqref{eq:psin101} and \eqref{eq:psin102} of $\psi_{n-1}$, we find
\begin{equation} \label{eq:L2_norm_pf_psi_sub}
\begin{split}
 \| \psi_{n-j} \|_{L^2(\Int)}^2 \leq & 2 \left( \lVert \psi_{n-j} \rVert^2_{L^2(E_{T,\epsilon})} + \lVert \psi_{n-j} \rVert^2_{L^2([\redge+\epsilon, \infty))} \right) \\
= & 2 \left[ \int^{\beta n^{2/3}\epsilon}_T O \left( n^{1/3} e^{-2\factor \lvert \xi \rvert^{3/2}} \right) \frac{d\xi}{\beta n^{2/3}} \right. 
 \left.  \vphantom{\int^{\beta n^{2/3}\epsilon}_T} 
+   \int^{\infty}_{\redge+\epsilon} O \left( e^{n(\Gfn(x)-\Hfn(x))} (1
     + \lvert x \rvert)^{-2j} \right) dx \right] \\
= & O(n^{-1/3}).
\end{split}
\end{equation}
The estimate~\eqref{eq:psiL20} is similar.

Item \ref{enu:cor:only_one_last_sect:d} follows from Proposition~\ref{prop:asy_for_mult_cut} (c). 
\end{proof}


\bigskip

The above asymptotics for $\psi_{n-j}$ yields the asymptotics for the Christoffel-Darboux kernel $K_{n-j,n}(x,y)$. 

\begin{cor} \label{lemma:various_estimates_of_K_n-1,n}
Let $T \in \realR$ be fixed. There exists $\delta_0>0$ such that for each fixed $\epsilon\in (0, \delta_0]$, the followings hold  as $n\to\infty$ and $j=O(1)$.
Let $E_{T,\epsilon}$ be the interval defined in \eqref{eq:defn_of_E_T_epsilon}.
\begin{enumerate}[label=(\alph*)]
\item \label{lemma_enu:estimate_K_n-1,n:1} 
For $x,y \in (\redge + \epsilon/2, \infty)$,
\begin{equation} \label{eq:lemma_enu:estimate_K_n-1,n:1}
	K_{n-j,n}(x,y) = O(e^{n(\Gfn(x)-\Hfn(x) + \Gfn(y)-\Hfn(y))/2} (1 + \lvert x \rvert)^{-j}(1 + \lvert y \rvert)^{-j}).
\end{equation}

\item \label{lemma_enu:estimate_K_n-1,n:2} 
For $x,y \in E_{T,\epsilon}$,
\begin{equation} \label{eq:lemma_enu:estimate_K_n-1,n:2}
	K_{n-j,n}(x,y) = O(n^{2/3}e^{-\factor(\lvert \xi \rvert^{3/2} + \lvert \eta \rvert^{3/2})}),
\end{equation}
where $\xi := \beta n^{2/3}(x-\redge)$ and $\eta := \beta n^{2/3}(y-\redge)$.

\item \label{lemma_enu:estimate_K_n-1,n:3} For $x \in (\redge + \epsilon,  \infty)$ and $y \in E_{T,\epsilon/2}$,
\begin{equation} \label{eq:lemma_enu:estimate_K_n-1,n:3}
K_{n-j,n}(x,y) = O(n^{1/6} e^{n(\Gfn(x)-\Hfn(x))/2} e^{-\factor\lvert \eta \rvert^{3/2}} (1 + \lvert x \rvert)^{-j}).
\end{equation}

\item \label{lemma_enu:estimate_K_n-1,n:4} 
For $x \in E_{T,\epsilon/2}$ and $y \in (\redge + \epsilon, \infty)$,
\begin{equation} \label{eq:lemma_enu:estimate_K_n-1,n:4}
K_{n-j,n}(x,y) = O(n^{1/6} e^{-\factor\lvert \xi \rvert^{3/2}} e^{n(\Gfn(y)-\Hfn(y))/2} (1 + \lvert y \rvert)^{-j}).
\end{equation}
\end{enumerate}
All estimates above are uniform in $x, y$ in their domains and in $n$.
\end{cor}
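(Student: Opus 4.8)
The plan is to deduce all four estimates from the Christoffel--Darboux identity
\begin{equation*}
K_{n-j,n}(x,y) = \frac{\gamma_{n-j-1}}{\gamma_{n-j}}\,\frac{\psi_{n-j}(x)\psi_{n-j-1}(y) - \psi_{n-j-1}(x)\psi_{n-j}(y)}{x-y}
\end{equation*}
together with Proposition~\ref{prop:asy_for_mult_cut} and Corollary~\ref{cor:only_one_last_sect}. First I would record that $\gamma_{n-j-1}/\gamma_{n-j}=O(1)$ and is bounded below, which is immediate from the theta-quotient formula~\eqref{eq:formula_of_gamma_n-j}. The structural point that makes everything go through is that in $B_\delta$ the exponent in $\psi_{n-k}(z)=M_{k,n}(z)e^{n(\Gfn(z)-\Hfn(z))/2}$ is independent of $k$, and in $A_\delta$ the argument $\Phi(z)$ of the Airy functions in $\psi_{n-k}(z)=n^{1/6}\Ai(\Phi(z))B_{k,n}(z)+n^{-1/6}\Ai'(\Phi(z))D_{k,n}(z)$ is independent of $k$. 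Hence for $x$ (resp.\ $y$) in $B_\delta$ the factor $e^{n(\Gfn(x)-\Hfn(x))/2}$ (resp.\ $e^{n(\Gfn(y)-\Hfn(y))/2}$) pulls out of the numerator entirely, and the remaining bracket is a bilinear ``Wronskian-type'' combination of the slowly varying coefficients. Throughout one may take $\delta=\delta_0$ (shrinking $\delta_0$ once and for all), handling the portions of the stated domains with $|x-\redge|<\delta_0$ by the $A_{\delta_0}$-formulas and those with $|x-\redge|\ge\delta_0$ by the $B_{\delta_0}$-formulas, the two matching at the boundary.

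For parts~(a), (c) and~(d) this is routine. In~(a), with $x,y\in(\redge+\epsilon/2,\infty)$, the numerator equals $e^{n(\Gfn(x)-\Hfn(x)+\Gfn(y)-\Hfn(y))/2}\big(M_{j,n}(x)M_{j+1,n}(y)-M_{j+1,n}(x)M_{j,n}(y)\big)$, and since this bracket vanishes on $x=y$ one divides by $x-y$ using $M_{k,n}=O((1+|\cdot|)^{-k})$ together with $M'_{k,n}=O((1+|\cdot|)^{-k-1})$ (the latter from Cauchy's formula): for $|x-y|\le 1$ use the mean value estimate, for $|x-y|\ge 1$ bound the two terms separately; either way one gets $O((1+|x|)^{-j}(1+|y|)^{-j})$, which is~\eqref{eq:lemma_enu:estimate_K_n-1,n:1}. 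In~(c) and~(d) one variable is at distance $\ge\epsilon/2$ from $\redge$ while the other lies in $E_{T,\epsilon/2}$, so $|x-y|\ge\epsilon/2$ is bounded below and no cancellation is needed; one plugs the $B_\delta$-formula into one slot and the $A_\delta$-formula into the other, using $|\Ai(\Phi(y))|=O(e^{-\factor|\eta|^{3/2}})$ and $|\Ai'(\Phi(y))|=O((1+|\eta|^{1/4})e^{-\factor|\eta|^{3/2}})=O(n^{1/6}e^{-\factor|\eta|^{3/2}})$ on $E_{T,\epsilon/2}$ (where $|\eta|\le Cn^{2/3}$, with the constant $\tfrac23$ relaxed to $\factor$ to absorb the error in $\Phi(z)=\beta n^{2/3}(z-\redge)(1+O(|z-\redge|))$ for $\epsilon$ small); the powers of $n$ collect to $n^{1/6}$, giving~\eqref{eq:lemma_enu:estimate_K_n-1,n:3}--\eqref{eq:lemma_enu:estimate_K_n-1,n:4}.

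Part~(b), with $x,y\in E_{T,\epsilon}\subset A_\delta$, is the delicate case and I expect it to be the main obstacle: here $|x-y|$ is not bounded below, and expanding the numerator with the Airy asymptotics produces four terms whose naive sizes reach $n^{1/3}\cdot n^{2/3}=n$ after division by $x-y\asymp\beta^{-1}n^{-2/3}(\xi-\eta)$, whereas the claimed bound is only $O(n^{2/3})$. The way out is to regroup: the $\Ai\,\Ai$ and $\Ai'\,\Ai'$ terms, and the ``coefficient-difference'' parts of the two mixed terms, all carry prefactors of the form $B_{j,n}(x)B_{j+1,n}(y)-B_{j+1,n}(x)B_{j,n}(y)$, etc., that vanish on the diagonal, so after division by $x-y$ they are $O(1)$ (using $B',D'=O(1)$) and contribute only $O(n^{1/3})$ times the exponential; the genuinely $O(n^{2/3})$ piece is
\begin{equation*}
\Big(B_{j,n}(x)D_{j+1,n}(x)-B_{j+1,n}(x)D_{j,n}(x)\Big)\,\frac{\Ai(\Phi(x))\Ai'(\Phi(y))-\Ai'(\Phi(x))\Ai(\Phi(y))}{x-y},
\end{equation*}
whose first factor equals $\gamma_{n-j}/\gamma_{n-j-1}$ by~\eqref{eq:cross_product_of_B_D}, cancelling the prefactor $\gamma_{n-j-1}/\gamma_{n-j}$ in the Christoffel--Darboux formula, and whose second factor, via $x-y=\beta^{-1}n^{-2/3}(\xi-\eta)(1+o(1))$, $\Phi(x)=\xi(1+o(1))$, and the identity $\frac{\Ai(\xi)\Ai'(\eta)-\Ai'(\xi)\Ai(\eta)}{\xi-\eta}=\int_0^\infty \Ai(\xi+t)\Ai(\eta+t)\,dt$, is $O\big(n^{2/3}e^{-\factor(|\xi|^{3/2}+|\eta|^{3/2})}\big)$ for $\xi,\eta\ge T$. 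Collecting the pieces gives $K_{n-j,n}(x,y)=O\big(n^{2/3}e^{-\factor(|\xi|^{3/2}+|\eta|^{3/2})}\big)$, which is~\eqref{eq:lemma_enu:estimate_K_n-1,n:2}. The remaining bookkeeping is to check that the $1+O(n^{-1})$ relative errors from Proposition~\ref{prop:asy_for_mult_cut} and the $\Phi(z)\approx\xi$ substitutions are uniform in $x,y,n$, which holds on the compact-in-rescaled-variable part and by exponential decay on the rest.
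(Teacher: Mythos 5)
Your proposal is correct and follows essentially the same route as the paper: Christoffel--Darboux plus the asymptotics of Proposition~\ref{prop:asy_for_mult_cut}, with the diagonal cancellation of the $M$-bracket handling part~\ref{lemma_enu:estimate_K_n-1,n:1} and the bounded separation $|x-y|$ handling parts~\ref{lemma_enu:estimate_K_n-1,n:3} and~\ref{lemma_enu:estimate_K_n-1,n:4}. For part~\ref{lemma_enu:estimate_K_n-1,n:2} the paper simply cites the direct calculation of \cite[Formula (3.8)]{Deift-Gioev07a}, and your regrouping via \eqref{eq:cross_product_of_B_D} and the Airy-kernel integral identity is precisely that calculation (it is the same manipulation the paper itself carries out for $\CK_{n-j,n}$ in \eqref{eq:CKUVVU}--\eqref{eq:tem101}).
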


\begin{proof}
Item \ref{lemma_enu:estimate_K_n-1,n:3} and \ref{lemma_enu:estimate_K_n-1,n:4} follow directly the asymptotics \eqref{eq:psin101} and \eqref{eq:psin102} of $\psi_{n-j}$, and the Christoffel-Darboux formula~\eqref{eq:Christoffel_Darboux_K_n-1,n} of $K_{n-j,n}(x,y)$, noting that $x-y$ never vanishes.

For $x,y \in (\redge + \epsilon/2, \infty)$,~\eqref{eq:psin101} implies that 
\begin{equation} \label{eq:lemma_enu:estimate_K_n-1,n:1_prepare}
K_{n-j,n}(x,y) = 
e^{n(\Gfn(x)-\Hfn(x) + \Gfn(y)-\Hfn(y))/2} \frac{M_{j,n}(x)M_{j+1,n}(y) - M_{j+1,n}(x)M_{j,n}(y)}{x-y}.
\end{equation}
Since $M_{j,n}$ and its derivatives are uniformly bounded, we obtain \ref{lemma_enu:estimate_K_n-1,n:1} . 

Item \ref{lemma_enu:estimate_K_n-1,n:2} follows from a similar calculation but using the asymptotics~\eqref{eq:psi_general}. The calculation is direct and is the same as \cite[Formula (3.8)]{Deift-Gioev07a}.
\end{proof}

We also need the following results for the Christoffel-Darboux kernel.

\begin{cor} \label{lemma:trace_norm_convergence_K_n-1,n}
Fix $T\in\R$ and let $\Int$ be the interval defined in \eqref{eq:interval}. Then we have the following:
\begin{enumerate}[label=(\alph*)]

\item \label{enu:lemma:trace_norm_convergence_K_n-1,n:a}
For any fixed $C$, we have 
\begin{equation}\label{eq:KtoKairywitherror}
	\frac1{\beta n^{2/3}}K_{n-j, n}(x,y)=K_{\Airy} (\xi, \eta) + o(e^{-C ( |\xi|+|\eta|)})
\end{equation}
for all $x,y \in \Int$ as $n \to \infty$, where $\xi:= (x-\redge)\beta n^{2/3}$ and $\eta:= (y-\redge)\beta n^{2/3}$.

\item \label{enu:lemma:trace_norm_convergence_K_n-1,n:b}
Define the operator $\K_{n-j, n}$ by kernel 
\begin{equation}\label{eq:Kscaled}
	\K_{n-j, n}(\xi, \eta):= \frac{1}{\beta n^{2/3}} K_{n-j, n}\big(
	\redge+ \frac{\xi}{\beta n^{2/3}}, \redge+ \frac{\eta}{\beta n^{2/3}}\big).
\end{equation}
Then
\begin{equation}\label{eq:Kinverse3}
	\big( 1 - \chi_{[T, \infty)} \K_{n-j,n} \chi_{[T, \infty)} \big)^{-1}
	\to 
	\big( 1 - \chi_{[T,\infty)} K_{\Airy} \chi_{[T,\infty)} \big)^{-1}
\end{equation}
in trace norm as $n \to \infty$. 

\item \label{enu:lemma:uniformbddinverse_K_n-1,n:c}
The operator norms of $\big( 1 - \chi_{\Int} K_{n-j,n} \chi_{\Int} \big)^{-1}$ are bounded uniformly in $n$. As a corollary, 
The operator norms of 
\begin{equation} \label{eq:Kinverse2}
	 \big( 1 - \chi_{[\bar{x}_n,\infty)} K_{n-j,n} \chi_{[\bar{x}_n,\infty)} \big)^{-1},
\end{equation}
are also bounded uniformly in $n$ and in $\bar{x}_n$ as long as $\bar{x}_n$ are in a compact subset of $(\redge, \infty)$.

\item \label{enu:lemma:trace_norm_convergence_K_n-1,n:c}
We have 
\begin{equation} \label{eq:Fred_det_of_K_n-1,n_1}
	\lim_{n \to \infty}  \det \left( 1 - \chi_{\Int} K_{n-j,n} \chi_{\Int} \right)
	= \FGUE(T),
\end{equation}
and
\begin{equation} \label{eq:Fred_det_of_K_n-1,n_2}
	\lim_{n \to \infty}  \det \left( 1 - \chi_{[\bar{x},\infty)} K_{n-j,n} \chi_{[\bar{x},\infty)} \right)
	= 1.
\end{equation}
for  any $\bar{x}$ is in a compact subset of $(\redge, \infty)$.

\end{enumerate}
\end{cor}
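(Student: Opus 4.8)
The plan is to establish the local edge estimate~\ref{enu:lemma:trace_norm_convergence_K_n-1,n:a} first, and then to deduce the other three items from it by soft operator-theoretic arguments that are, modulo bookkeeping for $j=O(1)$, the same as those proving that a regular ensemble with no external source has the GUE Tracy--Widom limit; see \cite{Deift-Gioev07a,Deift-Kriecherbauer-McLaughlin-Venakides-Zhou99}.

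For~\ref{enu:lemma:trace_norm_convergence_K_n-1,n:a} I would split $x,y\in\Int$ into three regimes according to the sizes of $\xi=\beta n^{2/3}(x-\redge)$ and $\eta=\beta n^{2/3}(y-\redge)$. (i) When $\xi,\eta$ stay in a bounded set I would substitute the Airy-type asymptotics~\eqref{eq:psi_general} for $\varphi_{n-j}$ and $\varphi_{n-j-1}$ into the Christoffel--Darboux formula~\eqref{eq:Christoffel_Darboux_K_n-1,n}, use the normalization $\frac{\gamma_{n-j-1}}{\gamma_{n-j}}\bigl(B_{j,n}(z)D_{j+1,n}(z)-B_{j+1,n}(z)D_{j,n}(z)\bigr)=1$ coming from~\eqref{eq:cross_product_of_B_D}, and replace $\Phi(x),\Phi(y)$ by $\xi,\eta$ via~\eqref{eq:asy_pf_Phi_at_e}; since $B_{j,n},D_{j,n}$ and their derivatives are $O(1)$ uniformly with $B_{j,n}=\B_{j,n}(1+O(n^{-1}))$ and likewise for $D_{j,n}$, this produces $\frac1{\beta n^{2/3}}K_{n-j,n}(x,y)=K_{\Airy}(\xi,\eta)+O(n^{-1/3})\,e^{-\frac12(|\xi|^{3/2}+|\eta|^{3/2})}$, which is $o(e^{-C(|\xi|+|\eta|)})$ for every fixed $C$. (ii) When $\xi$ or $\eta$ is large but still $|x-\redge|,|y-\redge|<\delta$, both $\frac1{\beta n^{2/3}}K_{n-j,n}(x,y)$ and $K_{\Airy}(\xi,\eta)$ are $O\bigl(e^{-c(|\xi|^{3/2}+|\eta|^{3/2})}\bigr)$ for a fixed $c>0$ (from the Airy asymptotics and~\eqref{eq:psin102}, with $\delta$ taken small), and decay in $|\xi|^{3/2}$ beats any $e^{-C|\xi|}$, so the difference is again $o(e^{-C(|\xi|+|\eta|)})$. (iii) When $x-\redge\ge\delta$ or $y-\redge\ge\delta$ I would bound $K_{n-j,n}$ by Corollary~\ref{lemma:various_estimates_of_K_n-1,n}\ref{lemma_enu:estimate_K_n-1,n:1}--\ref{lemma_enu:estimate_K_n-1,n:4}: since $\Gfn(x)-\Hfn(x)=2\gfn(x)-V(x)-\ell$ vanishes to order $3/2$ at $\redge$ and is $\le -c_\delta<0$ for $x\ge\redge+\delta$ by~\eqref{eq:consequence_or_regularity} and~\eqref{eq:properties_of_gfn}, this is $O(e^{-c_\delta n})$ times at most a power of $n$ and $e^{-c|\eta|^{3/2}}$, which, together with $K_{\Airy}(\xi,\eta)=O(e^{-cn})$ when $\xi\gtrsim n^{2/3}$, again dominates $e^{-C(|\xi|+|\eta|)}$.

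To obtain~\ref{enu:lemma:trace_norm_convergence_K_n-1,n:b} and~\ref{enu:lemma:uniformbddinverse_K_n-1,n:c}, put $B_n:=\chi_{[T,\infty)}\K_{n-j,n}\chi_{[T,\infty)}$ and $B_\infty:=\chi_{[T,\infty)}K_{\Airy}\chi_{[T,\infty)}$; both are non-negative since $\K_{n-j,n}$ and $K_{\Airy}$ are kernels of orthogonal projections on $L^2(\R)$, and $B_n$ is finite-rank. From~\ref{enu:lemma:trace_norm_convergence_K_n-1,n:a}, $B_n\to B_\infty$ in Hilbert--Schmidt norm (the bound $|\K_{n-j,n}(\xi,\eta)-K_{\Airy}(\xi,\eta)|=o(1)\,e^{-C(|\xi|+|\eta|)}$ is square integrable over $[T,\infty)^2$), hence strongly, and $\Tr B_n=\int_T^\infty\K_{n-j,n}(\xi,\xi)\,d\xi\to\int_T^\infty K_{\Airy}(\xi,\xi)\,d\xi=\Tr B_\infty$ by the same estimate on the diagonal; by the standard fact that strong convergence plus convergence of traces implies trace-norm convergence for non-negative operators (Gr\"umm's theorem), $B_n\to B_\infty$ in trace norm. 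As $\det(1-B_\infty)=\FGUE(T)\neq 0$, no eigenvalue of $B_\infty$ equals $1$, so $\|B_\infty\|<1$ and $1-B_\infty$ is invertible; operator-norm convergence then makes $(1-B_n)^{-1}$ exist with $\|(1-B_n)^{-1}\|\le 2\|(1-B_\infty)^{-1}\|$ for all large $n$, while for each of the remaining finitely many $n$ the operator $\chi_{\Int}K_{n-j,n}\chi_{\Int}$ is a strict contraction, since a unit eigenvalue would force a nonzero function $p(x)e^{-nV(x)/2}$ with $\deg p<n-j$ to vanish a.e.\ outside $\Int$. Since $\chi_{\Int}K_{n-j,n}\chi_{\Int}$ is unitarily equivalent to $B_n$ via the change of variables in~\eqref{eq:Kscaled}, this gives~\ref{enu:lemma:uniformbddinverse_K_n-1,n:c} for $\Int$; the version for $[\bar x_n,\infty)$ with $\bar x_n$ in a compact subset of $(\redge,\infty)$ is easier because $\|\chi_{[\bar x_n,\infty)}K_{n-j,n}\chi_{[\bar x_n,\infty)}\|_1=O(e^{-c_0 n})\to 0$ uniformly in $\bar x_n$ by~\eqref{eq:lemma_enu:estimate_K_n-1,n:1}. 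Part~\ref{enu:lemma:trace_norm_convergence_K_n-1,n:b} is then the resolvent identity $(1-B_n)^{-1}-(1-B_\infty)^{-1}=(1-B_n)^{-1}(B_n-B_\infty)(1-B_\infty)^{-1}$, whose right-hand side tends to $0$ in trace norm since the outer factors are uniformly operator-bounded and $B_n-B_\infty\to 0$ in trace norm. Finally, the last item~\ref{enu:lemma:trace_norm_convergence_K_n-1,n:c}: continuity of the Fredholm determinant on the trace ideal gives $\det(1-\chi_{\Int}K_{n-j,n}\chi_{\Int})=\det(1-B_n)\to\det(1-B_\infty)=\FGUE(T)$, and $\det(1-\chi_{[\bar x,\infty)}K_{n-j,n}\chi_{[\bar x,\infty)})\to 1$ because that operator tends to $0$ in trace norm by the estimate above, using $2\gfn(\bar x)-V(\bar x)-\ell<0$.

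The hard part will be the passage in parts~\ref{enu:lemma:trace_norm_convergence_K_n-1,n:b} and~\ref{enu:lemma:uniformbddinverse_K_n-1,n:c} from the pointwise edge estimate of~\ref{enu:lemma:trace_norm_convergence_K_n-1,n:a} to convergence in trace norm: one must use the super-exponential decay estimates of Proposition~\ref{prop:asy_for_mult_cut} and Corollary~\ref{lemma:various_estimates_of_K_n-1,n} uniformly in $n$ and over the whole half-line, including the tail $x-\redge\gtrsim\delta$ handled via regularity of $V$, and must verify the strict-contraction and invertibility facts that furnish the uniform bound on the resolvents. The precise edge computation in regime (i), carried out as in \cite[\S3]{Deift-Gioev07a}, is routine but has to be done with the $j$-dependence tracked through Proposition~\ref{prop:asy_for_mult_cut}.
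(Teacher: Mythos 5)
Your argument is correct and follows the same route as the paper, which simply defers to the edge-universality analysis of \cite{Deift-Gioev07a} (pointwise kernel convergence with super-exponential weight from the Airy parametrix, then trace-norm convergence and the resolvent/determinant consequences); you have essentially written out the details that the paper leaves to that reference, including the needed $j$-dependence and the strict-contraction argument for the uniform resolvent bound. The only nitpick is cosmetic: $\K_{n-j,n}$ restricted to $[T,\infty)$ is a compression of a projection rather than a projection itself, but non-negativity is all your Gr\"umm/trace-norm step actually uses, so nothing breaks.
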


\begin{proof}
The proof of a result similar to \ref{enu:lemma:trace_norm_convergence_K_n-1,n:a} for the non-varying weight is given in \cite[Formula (3.8)]{Deift-Gioev07a} . The varying weight case is proved in the same way. Note that our $C$ is the $c$ in \cite[Formula (3.8)]{Deift-Gioev07a}, which can be assumed to be an arbitrarily large positive number.

The proof of a result similar to \ref{enu:lemma:trace_norm_convergence_K_n-1,n:b} for the non-varying weight is given in the proof of the $\beta = 2$ case in \cite[Corollary 1.4]{Deift-Gioev07a} . The varying weight case is proved in the same way.

Items \ref{enu:lemma:uniformbddinverse_K_n-1,n:c} and \ref{enu:lemma:trace_norm_convergence_K_n-1,n:c} follow from \ref{enu:lemma:trace_norm_convergence_K_n-1,n:b}. 

\end{proof}

\def\cydot{\leavevmode\raise.4ex\hbox{.}}



\end{document}